\documentclass[11pt,a4paper]{article}

\usepackage{fontspec}
\usepackage[titletoc,title,header]{appendix}

\usepackage{amsmath}
\usepackage{amssymb}
\usepackage{amsthm}
\usepackage{mathtools}
\usepackage{xparse}    

\usepackage[margin=1in]{geometry}

\usepackage{graphicx}
\usepackage{xcolor}

\usepackage{hyperref}
\hypersetup{
  colorlinks=true,
  linkcolor=blue,
  citecolor=blue,
  urlcolor=blue,
  pdfauthor={Yoshitsugu Sekine},
  pdftitle={Mean-Field Bose--Einstein Condensation and Condensate Ideals in the Resolvent Algebra}
}

\usepackage[]{natbib}
\theoremstyle{plain}
\newtheorem{thm}{Theorem}[section]
\AtEndEnvironment{thm}{\qed}
\newtheorem{lem}[thm]{Lemma}
\AtEndEnvironment{lem}{\qed}
\newtheorem{prop}[thm]{Proposition}
\AtEndEnvironment{prop}{\qed}

\AtEndEnvironment{cor}{\qed}

\AtEndEnvironment{conj}{\qed}

\AtEndEnvironment{fact}{\qed}

\theoremstyle{definition}
\newtheorem{defn}[thm]{Definition}
\AtEndEnvironment{defn}{\qed}

\AtEndEnvironment{ex}{\qed}

\theoremstyle{remark}
\newtheorem{rem}[thm]{Remark}
\AtEndEnvironment{rem}{\qed}

\makeatletter
\providecommand*{\dashv}{\mathrel{\mathpalette\@Dashv\vDash}}
\newcommand*{\@dashv}[2]{\reflectbox{$\m@th#1#2$}}
\makeatother
\newcommand{\absvol}[1]{\abs{#1}} 
\newcommand{\abs}[1]{\left| #1 \right|} 
\NewDocumentCommand{\weaknorm}{O{\dbk} m}{#1{#2}} 
\newcommand{\norm}[1]{\left\Vert #1 \right\Vert} 
\newcommand{\twonorm}[1]{\norm{#1}_2}
\newcommand{\bkt}[2]{\left\langle #1,\,#2 \right\rangle} 
\newcommand{\rbkt}[2]{\left( #1,\,#2 \right)} 
\newcommand{\isomto}{\mathrel{\rightarrowtail\kern-1.9ex\twoheadrightarrow}} 
\newcommand{\cbk}[1]{\left\{ #1 \right\}} 
\newcommand{\dbk}[1]{\left\langle #1 \right\rangle} 
\newcommand{\rbk}[1]{\left( #1 \right)} 
\newcommand{\sqbk}[1]{\left[ #1 \right]} 
\newcommand{\funcond}[3]{\fun{#1}{#2 \middle| #3}} 
\newcommand{\fun}[2]{#1 \rbk{#2}} 
\newcommand{\sqfun}[2]{#1 \sqbk{#2}} 
\newcommand{\closedinterval}[2]{\sqbk{#1,\,#2}} 
\newcommand{\openinterval}[2]{\rbk{#1,\,#2}} 
\newcommand{\commutator}[2]{\sqbk{#1,\,#2}} 
\newcommand{\ket}[1]{\left| #1 \right\rangle} 

\DeclareMathOperator{\sgn}{sgn} 
\NewDocumentCommand{\imunit}{O{\mathsf{i}}}{#1} 
\NewDocumentCommand{\placeholder}{O{\bullet}}{#1} 
\NewDocumentCommand{\trace}{O{\operatorname{Tr}}}{#1} 
\newcommand{\Ker}{\operatorname{Ker}} 
\newcommand{\bigmiddleslash}[2]{\left. #1 \middle/ #2 \right.} 
\newcommand{\clos}[1]{\overline{#1}} 
\newcommand{\cmpconj}[1]{\overline{#1}} 
\newcommand{\diracdelta}{\delta} 
\newcommand{\eqcsq}[1]{\sqbk{#1}} 
\newcommand{\fml}[2]{\cbk{#1}_{#2}} 
\newcommand{\invrbk}[1]{\rbk{#1}^{-1}} 
\newcommand{\kroneckerdelta}{\delta} 
\newcommand{\laplacian}{\Delta} 
\newcommand{\napiernum}{\mathsf{e}} 
\newcommand{\onehalf}{\frac{1}{2}} 
\newcommand{\oneoverfour}{\frac{1}{4}} 
\newcommand{\opchern}[1]{\operatorname{ch}} 
\newcommand{\opideal}{\operatorname{Ideal}} 
\newcommand{\opimag}{\operatorname{Im}} 
\newcommand{\opreal}{\operatorname{Re}} 
\newcommand{\setSymbolDownLeft}[2]{{\vphantom{#2}}_{#1}{#2}} 
\newcommand{\setSymbolUpLeft}[2]{{\vphantom{#2}}^{#1}{#2}} 
\DeclareMathOperator{\eqalgisom}{\cong} 
\NewDocumentCommand{\agvariety}{O{\mathcal}}{#1} 
\NewDocumentCommand{\cmdrel}{O{\omega}}{#1} 
\NewDocumentCommand{\dfsp}{O{A}}{#1} 
\NewDocumentCommand{\eqcpointed}{O{\eqcsq} m}{#1{#2}_{\ast}} 
\NewDocumentCommand{\fnheaviside}{O{H}}{#1} 
\NewDocumentCommand{\fthol}{O{\mathcal{O}}}{#1} 
\NewDocumentCommand{\ftmero}{O{\mathcal{M}}}{#1} 
\NewDocumentCommand{\grcentralizer}{O{Z}}{#1} 
\NewDocumentCommand{\grmetform}{O{2} m m}{\grmet[#1] \! \rbkt{#2}{#3}} 
\NewDocumentCommand{\grmet}{O{2}}{\setSymbolDownLeft{#1}{g}} 
\NewDocumentCommand{\grnormalizer}{O{N}}{#1} 
\NewDocumentCommand{\gropasym}{O{A}}{#1} 
\NewDocumentCommand{\gropsym}{O{S}}{#1} 
\NewDocumentCommand{\grpermorderedpair}{O{\mathcal{P}}}{#1} 
\NewDocumentCommand{\grsym}{O{\mathfrak{S}} m}{#1_{#2}} 
\NewDocumentCommand{\gtbase}{O{\mathcal}}{#1} 
\NewDocumentCommand{\gtfilter}{O{\mathcal}}{#1} 
\NewDocumentCommand{\gtfmlclosed}{O{\mathcal}}{#1} 
\NewDocumentCommand{\gtfmlopen}{O{\mathcal}}{#1} 
\NewDocumentCommand{\gtopenball}{O{U}}{#1} 
\NewDocumentCommand{\gtopencover}{O{\mathcal}}{#1} 
\NewDocumentCommand{\gtopennbh}{O{\mathcal}}{#1} 
\NewDocumentCommand{\gtpreopencover}{O{\mathcal}}{#1} 
\NewDocumentCommand{\gtsubbase}{O{\mathcal}}{#1} 
\NewDocumentCommand{\gtvicinity}{O{\mathcal}}{#1} 
\NewDocumentCommand{\lasp}{O{\mathcal}}{#1} 
\NewDocumentCommand{\latprightrbk}{O{\top} m}{\rbk{#2}^{#1}} 
\NewDocumentCommand{\latpright}{O{\top} m}{#2^{#1}} 
\NewDocumentCommand{\latp}{O{t} m}{\setSymbolUpLeft{#1}{#2}} 
\NewDocumentCommand{\lpdistribution}{O{\mu} m}{#2_{\ast,#1}} 
\NewDocumentCommand{\lpmollifier}{O{\rho}}{#1} 
\NewDocumentCommand{\lpofpositive}{O{\chi}}{#1} 
\NewDocumentCommand{\manliederiv}{O{L}}{#1} 
\NewDocumentCommand{\mansmoothnbh}{O{\mathcal}}{#1} 
\NewDocumentCommand{\mblfmldsysgenerated}{O{d} m}{\fun{#1}{#2}} 
\NewDocumentCommand{\mblfmlgenerated}{O{\sigma} m}{\fun{#1}{#2}} 
\NewDocumentCommand{\oacorrfn}{O{\Gamma}}{#1} 
\NewDocumentCommand{\oagnsvector}{O{\Omega}}{#1} 
\NewDocumentCommand{\oaideal}{O{\mathcal}}{#1} 
\NewDocumentCommand{\oanumberoperator}{O{A}}{#1} 
\NewDocumentCommand{\oaposcone}{O{\mathcal{P}}}{#1} 
\NewDocumentCommand{\oapressure}{O{P}}{#1} 
\NewDocumentCommand{\oarepn}{O{\pi}}{#1} 
\NewDocumentCommand{\oaspnormalstate}{O{N}}{#1} 
\NewDocumentCommand{\oasppurestate}{O{P}}{#1} 
\NewDocumentCommand{\oaspstate}{O{E}}{#1} 
\NewDocumentCommand{\oastatevector}{O{\Omega}}{#1} 
\NewDocumentCommand{\oastate}{O{\omega}}{#1} 
\NewDocumentCommand{\opdilation}{O{\delta}}{#1} 
\NewDocumentCommand{\opdmat}{O{\rho}}{#1} 
\NewDocumentCommand{\opfockan}{O{a}}{#1} 
\NewDocumentCommand{\opfockcran}{O{a}}{#1^{\#}} 
\NewDocumentCommand{\opfockcrdagger}{O{a}}{#1^{\dagger}} 
\NewDocumentCommand{\opfockcr}{O{a}}{#1^{\ast}} 
\NewDocumentCommand{\opfocknumber}{O{N}}{#1} 
\NewDocumentCommand{\opfocksegalconj}{O{\pi}}{#1} 
\NewDocumentCommand{\opfocksegal}{O{\phi}}{#1} 
\NewDocumentCommand{\opspecmeas}{O{E}}{#1} 
\NewDocumentCommand{\opspec}{O{} m}{\fun{\sigma_{#1}}{#2}} 
\NewDocumentCommand{\optransl}{O{\tau}}{#1} 
\NewDocumentCommand{\physaction}{O{\mathcal{A}}}{#1} 
\NewDocumentCommand{\physcharge}{O{e}}{\mathrm{#1}} 
\NewDocumentCommand{\physcplconst}{O{\mathsf{g}}}{#1} 
\NewDocumentCommand{\physelectrostaticcapasity}{O{\mathrm{Cap}}}{#1} 
\NewDocumentCommand{\physenergy}{O{E}}{#1} 
\NewDocumentCommand{\physgse}{O{E}}{#1_{0}} 
\NewDocumentCommand{\physham}{O{H}}{#1} 
\NewDocumentCommand{\physlagdensity}{O{\mathcal{L}}}{#1} 
\NewDocumentCommand{\physlag}{O{L}}{#1} 
\NewDocumentCommand{\physliouvilean}{O{L}}{#1} 
\NewDocumentCommand{\physmass}{O{m}}{#1} 
\NewDocumentCommand{\prbbrownmv}{O{B}}{#1} 
\NewDocumentCommand{\prbcharfun}{O{\chi}}{#1} 
\NewDocumentCommand{\prbdist}{O{\mathcal{P}}}{#1} 
\NewDocumentCommand{\prbgaussianmeasure}{O{\msrcal{N}}}{#1} 
\NewDocumentCommand{\prbnormaldist}{O{N}}{#1} 
\NewDocumentCommand{\prbpoissonprocess}{O{N}}{#1} 
\NewDocumentCommand{\prbprocess}{O{X}}{#1} 
\NewDocumentCommand{\prbqspace}{O{\mathcal{Q}}}{#1} 
\NewDocumentCommand{\prbspsample}{O{\Omega}}{#1} 
\NewDocumentCommand{\psh}{O{\mathfrak}}{#1} 
\NewDocumentCommand{\qtquantumchannel}{O{\mathcal{L}}}{#1} 
\NewDocumentCommand{\repn}{O{\pi}}{#1} 
\NewDocumentCommand{\schattencls}{O{\mathbb{K}}}{#1} 
\NewDocumentCommand{\setfmlcylinder}{O{\mathcal{C}}}{#1} 
\NewDocumentCommand{\setfml}{O{\mathcal}}{#1} 
\NewDocumentCommand{\setindex}{O{\mathcal} m}{#1{#2}} 
\NewDocumentCommand{\setlattice}{O{\Gamma}}{#1} 
\NewDocumentCommand{\setspecial}{O{\mathcal} m}{#1{#2}} 
\NewDocumentCommand{\shdiffform}{O{\sheaf{A}}}{#1} 
\NewDocumentCommand{\sheaf}{O{\mathfrak}}{#1} 
\NewDocumentCommand{\smchemicalpotential}{O{\mu}}{#1} 
\NewDocumentCommand{\smenergydensity}{O{\varrho}}{#1} 
\NewDocumentCommand{\smfluctuationwithdmat}{O{\beta} m}{\smuncertaintywithdmat[#1]{#2}^2} 
\NewDocumentCommand{\sminvtemperature}{O{\beta}}{#1} 
\NewDocumentCommand{\smlocaldensityoperator}{O{\rho}}{#1} 
\NewDocumentCommand{\smmicrocanonicalstate}{O{\beta} m}{\physmean{#2}_{#1}} 
\NewDocumentCommand{\smnumberdensity}{O{\rho}}{#1} 
\NewDocumentCommand{\smooth}{O{\mathcal{E}}}{#1} 
\NewDocumentCommand{\smparticlenumber}{O{N}}{#1} 
\NewDocumentCommand{\smpressure}{O{p}}{#1} 
\NewDocumentCommand{\smspecificfreeenergy}{O{\bar{f}}}{#1} 
\NewDocumentCommand{\smthermalvac}{O{\beta}}{\Omega_{#1}} 
\NewDocumentCommand{\smuncertaintywithdmat}{O{\beta} m}{\rbk{\triangle #2}_{#1}} 
\NewDocumentCommand{\sphilb}{O{\mathcal}}{#1} 
\NewDocumentCommand{\splowerhalf}{O{\mathbb{H}}}{#1_{\txtneg}} 
\NewDocumentCommand{\spupperhalf}{O{\mathbb{H}}}{#1_{\txtnonneg}} 
\NewDocumentCommand{\topmetric}{O{d}}{#1} 
\NewDocumentCommand{\vaoutnormal}{O{\widehat}}{#1} 
\newcommand{\category}[1]{\mathop{\mathsf{#1}}} 

\newcommand{\catpresheaf}[1]{\category{PSh}} 
\newcommand{\faadj}[1]{#1^{\ast}} 
\newcommand{\faftr}[1]{\widehat{#1}} 
\newcommand{\fldcmp}{\fld{C}} 
\newcommand{\fldreal}{\fld{R}} 
\newcommand{\fld}[1]{\mathbb{#1}} 
\newcommand{\fndef}[1]{\boldsymbol{1}_{#1}} 
\newcommand{\fnexp}[1]{\fun{\exp}{#1}} 
\newcommand{\fnrestr}[2]{\left. #1 \right|_{#2}} 
\newcommand{\gtclos}[1]{\overline{#1}} 
\newcommand{\lp}{L} 
\newcommand{\mblfmlborel}{\mblfml{B}} 
\newcommand{\mblfml}[1]{\mathcal{#1}} 
\newcommand{\monnat}{\mathbb{N}} 
\newcommand{\msrbb}[1]{\mathbb{#1}} 
\newcommand{\msrcal}[1]{\mathcal{#1}} 

\newcommand{\oacenter}{Z} 
\newcommand{\oacstar}{C^{\ast}} 
\newcommand{\oadoublecommutant}[1]{#1^{\prime \prime}} 
\newcommand{\oaresolventalgebra}{\oa{R}} 
\newcommand{\oaresolvent}{R} 
\newcommand{\oa}[1]{\mathcal{#1}} 
\newcommand{\opdmsr}[1]{\mathop{d #1}} 
\newcommand{\opfnresolvent}[1]{\invrbk{#1}} 
\newcommand{\opfocksndqntdiff}{d \Gamma} 
\newcommand{\opfocksndqnt}{\Gamma} 
\newcommand{\opfockweyl}{W} 
\newcommand{\opformdomain}{\mathop{Q}} 
\newcommand{\opform}[1]{\mathsf{#1}} 
\newcommand{\opspecint}[1]{\mathcal{E}} 
\newcommand{\physmean}[1]{\dbk{#1}} 
\newcommand{\prbcov}{\mathrm{Cov}} 
\newcommand{\prbexp}{\mathbb{E}} 
\newcommand{\ringratint}{\mathbb{Z}} 

\newcommand{\seq}[2]{\if\relax\detokenize{#1}\relax \rbk{#1} \else \rbk{#1}_{#2} \fi} 
\newcommand{\setisomorphism}[1]{\operatorname{Iso}} 
\newcommand{\setone}[1]{\cbk{#1}}
\newcommand{\setquot}[2]{\bigmiddleslash{#1}{#2}} 
\newcommand{\set}[2]{\left\{#1 \, \middle| \, #2\right\}}
\newcommand{\smfreeenergy}{F} 
\newcommand{\smpartitionfunc}{Z} 
\newcommand{\spfock}{\mathcal{F}} 
\newcommand{\splinspan}{\operatorname{span}} 
\newcommand{\txtbec}{\mathrm{BEC}} 
\newcommand{\txtbsn}{\mathrm{b}} 
\newcommand{\txtcritical}{\mathrm{c}} 
\newcommand{\txteff}{\mathrm{eff}} 
\newcommand{\txtfock}{\mathrm{F}} 
\newcommand{\txtfr}{\mathrm{fr}} 
\newcommand{\txtloc}{\mathrm{loc}} 
\newcommand{\txtloop}{\mathrm{loop}} 
\newcommand{\txtmeanfield}{\mathrm{mf}} 
\newcommand{\txtneg}{\mathrm{-}} 
\newcommand{\txtnonneg}{\mathrm{+}} 
\newcommand{\txtnonzero}{\mathrm{nz}} 
\newcommand{\txtparticle}{\mathrm{p}} 
\newcommand{\txtphys}{\mathrm{phys}} 
\newcommand{\txtprob}{\mathrm{Prob}} 
\newcommand{\txtreal}{\mathrm{real}} 
\newcommand{\txtregular}{\mathrm{reg}} 
\newcommand{\txtsym}{\mathrm{s}} 

\title{Mean-Field Bose--Einstein Condensation and Condensate Ideals in the Resolvent Algebra}

\author{%
Yoshitsugu Sekine\\{\small\texttt{4429sekine@gmail.com}}%
}

\date{\today}

\begin{document}

\maketitle

\begin{abstract}
This paper studies the imperfect Bose gas after the Kac density law and the mean-field Euler equations have selected a condensed density with positive zero-mode excess. In this BEC regime the selected chemical potential cancels the mean-field shift, so the selected one-particle Hamiltonian is exactly the free one. The resulting zero-mode covariance defines a mean-field BEC ideal in the resolvent algebra, while the nonregular quotient and the direct-integral center record distinct representation-theoretic data. Occupation-number and Brownian-loop formulations recover the same density selection, excess density, ODLRO data, local tests, and the separation between finite-density BEC and Buchholz's stricter infinite-occupation proper-condensate criterion.

\noindent\textbf{Keywords:} resolvent algebra, mean-field Bose gas, BEC, Brownian loops, proper condensates
\end{abstract}

\setcounter{tocdepth}{3}
\tableofcontents

\section{Introduction}\label{introduction}

This paper studies the imperfect Bose gas as a test case for what the resolvent algebra records once Bose--Einstein condensation is already supplied by the thermodynamic model. The finite-volume Gibbs states give an occupation-number probability law. The Kac density law selects the limiting density \(\bar{\smnumberdensity}_{\txtbsn}\), and the Euler equations identify the condensed alternative by the positive zero-mode excess \(\smnumberdensity_{\txtbsn,0}(\sminvtemperature)>0\). The aim is not to derive this phase transition from the resolvent algebra alone, but to locate precisely the algebraic data corresponding to that BEC input.

In the condensed case the selected chemical potential satisfies \(\smchemicalpotential_{\mathrm{sel}}
=\lambda\bar{\smnumberdensity}_{\txtbsn}\). Consequently the selected one-particle Hamiltonian \(\physham[h]_{\bar{\smnumberdensity}_{\txtbsn}}\) reduces exactly to the free one-particle Hamiltonian; see \eqref{expedition0012694}. This reduction is the basis for the component KMS states used in the resolvent-algebra analysis.

The resolvent-algebra formulation starts from the zero-mode point-mass covariance \eqref{expedition0012440}. It defines the mean-field BEC ideal \eqref{expedition0019398}, identifies that ideal as the kernel of the nonregular BEC representation in Proposition \ref{expedition0012642}, and separates this quotient picture from the represented center of the direct-integral state in Proposition \ref{expedition0012651}. Thus finite-density BEC, quotient regularity, and represented central phase parameters are distinct pieces of structure.

The same model is then rewritten in two probabilistic languages. The occupation-number formulation proves the same density selection, zero-mode excess, direct-integral component law, ODLRO covariance, local condensate tests, and density-fluctuation remainder without using the ideal structure. The Brownian-loop formulation gives the corresponding winding-number picture: total winding collapses to the selected density, macroscopic winding is the zero-mode excess, and open paths recover the usual one-particle density-matrix criterion.

Theorem \ref{expedition0019672} summarizes these equivalences. Sections \ref{expedition0012743}, \ref{expedition0019615}, and \ref{expedition0019600} prove the resolvent-algebra, occupation-number, and Brownian-loop formulations, respectively. The comparison also separates finite-density BEC from Buchholz's stricter proper-condensate criterion: the latter requires local occupation to escape to infinity in a primary component or in a limiting family. The same strategy will be applied later to the further concrete examples in \cite{AndreVerbeure001}; after those cases are analyzed, the ideal structures extracted from them can be compared systematically.

\section{Main Results}\label{expedition0012133}

The mean-field Bose gas is studied through its finite-volume Gibbs states, Kac density selection, and the resulting zero-mode singular covariance. The common resolvent-algebra, finite-volume, and local notation is fixed first. The main conclusions are then summarized for three equivalent viewpoints: the resolvent-algebra ideal and represented center, the occupation-number probability law, and the Brownian-loop formulation.

\subsection{Space Setting and Basic Operators}\label{space-setting-and-basic-operators}

In this paper the spatial dimension is mainly \(d\geq3\), unless a model-specific restriction is stated explicitly. The one-particle Hilbert space is \[\sphilb{H}_{\txtparticle}
=
\fun{\lp^{2}}{\fldreal^{d}},
\quad
\sphilb{H}_{\txtparticle,\txtreal}
=
\fun{\lp_{\txtreal}^{2}}{\fldreal^{d}}
=
\fun{\lp^{2}}{\fldreal^{d};\fldreal}.\] On this space the basic particle Hamiltonian has the form \begin{equation}\label{expedition0019396}
\begin{aligned}
\physham[h]_{\txtparticle}
=
\physham[h]_{\txtparticle,V}
=
\physham[h]_{\txtparticle,0} + V,
\quad
\physham[h]_{\txtparticle,0}
=
-\onehalf\laplacian.
\end{aligned}
\end{equation}

The real symplectic space is denoted by \((X,\sigma)\), and in the Bose particle applications below \(X\) is the realification \(\sphilb{H}_{\txtparticle,\txtreal}\). The symplectic form is \[\sigma(f,g)=\opimag\bkt{f}{g}_{\sphilb{H}_{\txtparticle}},\] and the real Hilbert inner product is \(\opreal\bkt{f}{g}_{\sphilb{H}_{\txtparticle}}\).

For inverse temperature \(\sminvtemperature>0\) and a chemical potential in the usual free trace-class domain, the free Bose one-particle covariance is recorded by the operator \(K_{\sminvtemperature,\smchemicalpotential}
=
\coth
\frac{\sminvtemperature
\rbk{h_{\txtparticle}-\smchemicalpotential}}
{2}\). The corresponding nonzero-mode quasi-bilinear form is denoted by \(\opform{q}_{\txtbsn,\txtnonzero,\sminvtemperature,\smchemicalpotential}\). The zero-mode singular form and the regular nonzero-mode form are kept separate, because the condensate ideals below are generated exactly by resolvents whose test functions have positive value under the singular form. The bosonic Fock space over \(\sphilb{H}_{\txtparticle}\) is \[\spfock_{\txtbsn}
=
\fun{\spfock_{\txtbsn}}{\sphilb{H}_{\txtparticle}}
=
\bigoplus_{n=0}^{\infty}
\bigotimes_{\txtsym}^{n}\sphilb{H}_{\txtparticle}.\] For \(f\in\sphilb{H}_{\txtparticle}\), let \(\opfockcr_{\txtfock}(f)\) and \(\opfockan_{\txtfock}(f)\) be the creation and annihilation operators, and set \[\opfocksegal_{\txtfock}(f)
=
\frac{1}{\sqrt{2}}\rbk{\opfockcr_{\txtfock}(f)+\opfockan_{\txtfock}(f)},
\quad
\opfockweyl_{\txtfock}(f)
=
\napiernum^{\imunit\opfocksegal_{\txtfock}(f)}.\] For a non-expansive one-particle operator \(T\), the second quantization of the second kind is \(\fun{\opfocksndqnt_{\txtbsn}}{T}\). The second quantization of the first kind is denoted by \(\fun{\opfocksndqntdiff_{\txtbsn}}
{\physham[h]_{\txtparticle}}\), and the free Bose Hamiltonian is \begin{equation}\label{expedition0019395}
\begin{aligned}
\physham_{\txtbsn,\txtfr}
=
\fun{\opfocksndqntdiff_{\txtbsn}}{\physham[h]_{\txtparticle}},
\quad
\physham_{\txtbsn,\txtfr,0}
=
\fun{\opfocksndqntdiff_{\txtbsn}}{\physham[h]_{\txtparticle,0}}.
\end{aligned}
\end{equation} With this chemical-potential shift, \[\physham_{\txtbsn,\txtfr,\smchemicalpotential}
=
\fun{\opfocksndqntdiff_{\txtbsn}}
{\physham[h]_{\txtparticle} - \smchemicalpotential},
\quad
\physham_{\txtbsn,\txtfr,0,\smchemicalpotential}
=
\fun{\opfocksndqntdiff_{\txtbsn}}
{\physham[h]_{\txtparticle,0} - \smchemicalpotential}.\]

\subsection{Resolvent Algebra}\label{expedition0012083}

Following Buchholz--Grundling \cite{BuchholzGrundling2}, let \((X,\sigma)\) be a real symplectic space. Let \(\oaresolventalgebra_{0}\) be the universal unital \(\ast\)-algebra generated by \(\set{\oaresolvent(z,f)}
{z \in \fldcmp \setminus \imunit \fldreal, f\in X}\) subject to the resolvent relations \begin{align}
\oaresolvent(z,0)
&=
-\frac{\imunit}{z}, \\
\faadj{\oaresolvent(z,f)}
&=
\oaresolvent(-\cmpconj{z},f), \\
\nu\oaresolvent(\nu z,\nu f)
&=
\oaresolvent(z,f),
\quad
\nu\in\fldreal\setminus\setone{0}, \\
\oaresolvent(z,f)-\oaresolvent(w,f)
&=
\imunit(w-z)
\oaresolvent(z,f)\oaresolvent(w,f), \\
\commutator{\oaresolvent(z,f)}{\oaresolvent(w,g)}
&=
\imunit
\sigma(f,g)
\oaresolvent(z,f)
\oaresolvent(w,g)^2
\oaresolvent(z,f), \label{expedition0012052} \\
\oaresolvent(z,f)\oaresolvent(w,g)
&=
\oaresolvent(z+w,f+g)
\rbk{\oaresolvent(z,f)
+\oaresolvent(w,g)
+\imunit\sigma(f,g)\oaresolvent(z,f)^2\oaresolvent(w,g)},
\end{align} where the last relation is read for \(z+w\ne0\). The \(\oacstar\)-completion with the Buchholz--Grundling norm is the resolvent algebra and is denoted by \(\oaresolventalgebra(X,\sigma)\). For the generators one has \[\norm{\oaresolvent(z,f)}=\frac{1}{\abs{z}}.\]

By \cite{BuchholzVanNuland1}, the linear span of the resolvent generators is norm dense in the resolvent algebra. The dense linear subspace of generators used below is therefore \[\operatorname{span}_{\fldcmp}
\set{\oaresolvent(z,f)}
{z\in\fldcmp\setminus\fldreal, f\in X}.\] The real Buchholz--Grundling parameter and the complex resolvent parameter are related by \[\oaresolvent(z,f)=\opfnresolvent{\imunit z - \Phi(f)},
\quad
\lambda\in\fldreal\setminus\setone{0},
\quad
\oaresolvent(z,f)=\invrbk{z-\Phi(f)},
\quad
z\in\fldcmp\setminus\fldreal.\] Thus \(\oaresolvent(\lambda,f)\) is the special case with complex parameter \(z=\imunit\lambda\). If the represented field is shifted as \(\Phi(f)\mapsto \Phi(f)+c\) with \(c\in\fldreal\), then the complex-parameter notation gives \(\oaresolvent(z,f)\mapsto \oaresolvent(z-c,f)\). In the Buchholz--Grundling real-parameter notation this is the same statement after the embedding \(\lambda\mapsto\imunit\lambda\). When the state formulas below write an imaginary parameter shift, the functional appearing there is the corresponding resolvent-parameter shift functional; the sign convention is fixed by the scalar limit computed in Proposition \ref{expedition0012652}. When the first variable becomes visually long, the same generator may be written as \(\oaresolvent(\lambda;f)\).

For a representation \(\oarepn\) of \(\oaresolventalgebra(X,\sigma)\) and a subspace \(S\subset X\), the representation is regular on \(S\) if \[\Ker\oarepn\rbk{\oaresolvent(1,f)}=\setone{0}
\quad
\rbk{f\in S}.\] A state is regular on \(S\) when its GNS representation is regular on \(S\).

\begin{prop}[Finite-dimensional subalgebras and faithful regular representations]\label{expedition0011838}
For a symplectic space $(X,\sigma)$ and a non-degenerate finite-dimensional subspace $S\subset X$, the finite-dimensional resolvent algebra $\oaresolventalgebra(S,\sigma)$ embeds isometrically into $\oaresolventalgebra(X,\sigma)$.
The full resolvent algebra is the inductive limit of these finite-dimensional subalgebras, and every regular representation of $\oaresolventalgebra(X,\sigma)$ is faithful.
In particular, the center of the full resolvent algebra is trivial.
\end{prop}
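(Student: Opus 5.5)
The plan is to reduce everything to standard results of Buchholz--Grundling \cite{BuchholzGrundling2} on resolvent algebras over finite-dimensional symplectic spaces, and then pass to the inductive limit. For the embedding, I will start from a non-degenerate finite-dimensional $S\subset X$. The inclusion $S\hookrightarrow X$ is symplectic, so by universality the generators $\oaresolvent(z,f)$, $f\in S$, satisfy the defining relations inside $\oaresolventalgebra(X,\sigma)$. This gives a $\ast$-homomorphism $\iota_S:\oaresolventalgebra(S,\sigma)\to\oaresolventalgebra(X,\sigma)$.

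To show that $\iota_S$ is isometric, I will use non-degeneracy of $S$ to split $X=S\oplus S^{\sigma}$, where $S^{\sigma}$ is the symplectic complement. I will then take a regular representation of $\oaresolventalgebra(X,\sigma)$, for instance the Fock representation, or a Schrödinger representation of $S$ tensored with a regular representation of $S^{\sigma}$. Its restriction to $S$ is, by the Stone--von Neumann uniqueness theorem, a multiple of the Schrödinger representation. That representation is faithful on $\oaresolventalgebra(S,\sigma)$: this is the finite-dimensional result of \cite{BuchholzGrundling2}, which I take as known. Since an injective $\ast$-homomorphism between $\oacstar$-algebras is isometric, $\iota_S$ is isometric.

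Next, the family $\{S\}$ of non-degenerate finite-dimensional subspaces is directed, because any finite-dimensional subspace lies in a non-degenerate one: enlarge it by partners of a basis of its radical. Every generator $\oaresolvent(z,f)$ lies in some $\iota_S(\oaresolventalgebra(S,\sigma))$, so the union of these images is a dense $\ast$-subalgebra, and $\oaresolventalgebra(X,\sigma)$ is their inductive limit.

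For faithfulness of a regular representation $\oarepn$, I will use the following fact: a representation of an inductive limit is faithful if and only if it is faithful on each building block. This holds because $\Ker\oarepn$ is a closed ideal, and in an inductive limit of $\oacstar$-algebras every closed ideal is the closure of the union of its intersections with the building blocks. The restriction of $\oarepn$ to $\oaresolventalgebra(S,\sigma)$ is again regular. By the finite-dimensional structure result of \cite{BuchholzGrundling2}, a regular representation of $\oaresolventalgebra(S,\sigma)$ is a multiple of the Schrödinger representation and is therefore faithful. Hence $\oarepn$ is faithful.

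For the trivial center, I will take a central element $C$ and represent the algebra faithfully in the Fock representation, which is regular and irreducible. Since $C$ commutes with the whole represented algebra, $\oarepn(C)$ lies in the commutant, which is $\fldcmp\mathbf{1}$ by irreducibility. Faithfulness then gives $C\in\fldcmp\mathbf{1}$.

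I expect the main obstacle to be the step where regularity on $S$ forces the restriction to be a multiple of the Schrödinger representation. Regularity makes the fields $\Phi(f)$, $f\in S$, self-adjoint, and the resolvent commutation relations then give Weyl relations for $\napiernum^{\imunit\Phi(f)}$. Stone--von Neumann applies only after this translation, and the translation needs care with domains. The ideal-intersection property of inductive limits is standard and should cause no difficulty.
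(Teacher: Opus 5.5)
The paper does not prove this proposition. It is quoted as background from Buchholz--Grundling \cite{BuchholzGrundling2}, so there is no internal argument to compare with. Your outline is essentially their route, and those steps are sound given the finite-dimensional inputs, which the paper also imports:
\begin{itemize}
\item faithfulness of the Schr\"odinger representation in finite dimensions;
\item Stone--von Neumann, via the correspondence between regular resolvent and regular Weyl representations;
\item directedness of the non-degenerate finite-dimensional subspaces;
\item the fact that a closed ideal of an inductive limit is the closure of its intersections with the building blocks.
\end{itemize}
Your enlargement of a subspace by partners of a basis of its radical does give a non-degenerate subspace; this uses non-degeneracy of $X$.

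\textbf{The gap.} The step that fails at the stated generality is the existence of a regular representation of $\mathcal{R}(X,\sigma)$. You use it twice: to show that $\iota_S$ is injective, and, in the stronger form of an irreducible regular representation, to show that the center is trivial.

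A Fock representation needs a compatible complex structure, or at least a real inner product $s$ with $\sigma(f,g)^2\le s(f,f)\,s(g,g)$. An abstract symplectic space need not admit one. Take $X=V\oplus V^{*}$ with $V$ of countably infinite dimension, $V^{*}$ its algebraic dual, and $\sigma((v,\phi),(v',\phi'))=\phi'(v)-\phi(v')$. Then every $\phi\in V^{*}$ would have to be $s$-bounded on $V$, which is impossible.

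Your fallback, a Schr\"odinger representation of $S$ tensored with a regular representation of $S^{\sigma}$, is circular, because $S^{\sigma}$ is again an infinite-dimensional symplectic space. For the spaces the paper actually uses (realified complex pre-Hilbert spaces with $\sigma=\operatorname{Im}\langle\cdot,\cdot\rangle$), the Fock representation exists and your argument goes through as written.

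\textbf{The repair for general $(X,\sigma)$.} Use a non-regular representation instead.
\begin{enumerate}
\item Write $X=S\oplus S^{\sigma}$ and let $\pi_S$ be the Schr\"odinger representation. Define $\pi^{(S)}(R(\lambda,f))=\pi_S(R(\lambda,f))$ for $f\in S$ and $\pi^{(S)}(R(\lambda,f))=0$ for $f\notin S$. This is $\pi_S$ tensored with the character of $\mathcal{R}(S^{\sigma},\sigma)$ that annihilates every $R(\lambda,g)$ with $g\neq0$.
\item The defining relations hold case by case. If $f\notin S$ or $g\notin S$, both sides of every relation vanish. In the product relation, note that $f+g\notin S$ whenever exactly one of $f,g$ lies outside $S$.
\item Since $\pi^{(S)}\circ\iota_S=\pi_S$ is faithful, $\iota_S$ is isometric.
\item For the center, the range of $\pi^{(S)}$ is $\pi_S(\mathcal{R}(S,\sigma))$, which is irreducible. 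Hence a central $C$ satisfies $\pi^{(S)}(C)=c_S\mathbf{1}$.
\item Your ideal-intersection lemma shows that $\{\pi^{(S)}\}_{S\supseteq S_0}$ is jointly faithful for every $S_0$.
\item Choose an approximant $C_0\in\mathcal{R}(S_0,\sigma)$ with $\|C-C_0\|<\varepsilon$. Comparing $\pi^{(S)}$ and $\pi^{(S')}$ on $C_0$ gives $|c_S-c_{S'}|<2\varepsilon$ for all $S,S'\supseteq S_0$.
\item Hence $c=\lim_S c_S$ exists and $\|C-c\mathbf{1}\|\le 2\varepsilon$ for every $\varepsilon>0$, so $C=c\mathbf{1}$.
\end{enumerate}
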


\subsection{Settings for the Bose Field at Finite Temperature}\label{expedition0011278}

For inverse temperature \(\sminvtemperature>0\), the zero-mode covariance form of the homogeneous Bose particles is \begin{equation}\label{expedition0012440}
\begin{aligned}
\opform{q}_{\txtbsn,0,\sminvtemperature}(f)
=
2 (2 \pi)^d \smnumberdensity_{\txtbsn,0}(\sminvtemperature)
\abs{\faftr{f}(0)}^{2},
\quad
\opformdomain(\opform{q}_{\txtbsn,0,\sminvtemperature})
=
\fun{\lp^{1}}{\fldreal^{d}}
\cap
\fun{\lp^{2}}{\fldreal^{d}}.
\end{aligned}
\end{equation} The number \(\smnumberdensity_{\txtbsn,0}(\sminvtemperature)\) is the zero mode density, and defined by \begin{equation}\label{expedition0020735}
\begin{aligned}
\smnumberdensity_{\txtbsn,0}(\sminvtemperature)
=
\bar{\smnumberdensity}_{\txtbsn}
+\smnumberdensity_{\txtbsn,\txtcritical}(\sminvtemperature),
\end{aligned}
\end{equation} where \(\bar{\smnumberdensity}_{\txtbsn}\) is a fixed total density and \(\smnumberdensity_{\txtbsn,\txtcritical}(\sminvtemperature)\) is the free critical density defined by \begin{equation}\label{expedition0019397}
\begin{aligned}
\smnumberdensity_{\txtbsn,\txtcritical}(\sminvtemperature)
=
\frac{1}{(2\pi)^d}
\int_{\fldreal^d}
\frac{1}
{\napiernum^{\sminvtemperature\physham[h]_{\txtparticle,0}(k)}-1}
\opdmsr{k}
\end{aligned}
\end{equation} for the free laplacian \(\physham[h]_{\txtparticle,0}\) \eqref{expedition0019396}. The convention is the real Segal-field covariance convention: \(\opform{q}(f)
=
\opform{q}(f,f)\) is the covariance of the self-adjoint Segal field associated with the realification of the complex one-particle vector \(f\). Hermitian normal-ordered two-point functions differ by the usual polarization and creation-annihilation normalization factors. The covariance form \eqref{expedition0012440} gives the zero-mode condition used in the BEC ideals. The notation \(\opform{q}_{\txtbsn,\txtbec,\sminvtemperature}
=
\opform{q}_{\txtbsn,0,\sminvtemperature}
+\opform{q}_{\txtbsn,\txtnonzero,\sminvtemperature}\) is used when the regular and singular components are treated together. The corresponding zero-mode domain and physical one-particle test space are \[\sphilb{D}_{\txtbsn,0,\sminvtemperature}
=
\opformdomain(\opform{q}_{\txtbsn,0,\sminvtemperature})
\cap
\sphilb{H}_{\txtparticle,\sminvtemperature},
\quad
\sphilb{D}_{\txtbsn,\txtphys,\sminvtemperature}
=
\sphilb{D}_{\txtbsn,0,\sminvtemperature}.\]

\subsection{Bounded System Setting}\label{expedition0012145}

The bounded system is the finite-volume regularization of the full space \(\fldreal^{d}\). For \(L>0\), put \[I_L
=
\closedinterval{-\frac{L}{2}}{\frac{L}{2}},
\quad
I_L^d
=
\closedinterval{-\frac{L}{2}}{\frac{L}{2}}^{d},
\quad
V_L
=
\absvol{I_L^d}
=
L^d.\] Periodic boundary conditions are used for the Bose one-particle Hamiltonian. The momentum lattice is \[\setlattice_L
=
\frac{2\pi}{L}\ringratint,
\quad
\setlattice_L^d
=
\rbk{\frac{2\pi}{L}\ringratint}^{d}.\] The finite-volume one-particle and Fock spaces are \[\sphilb{H}_{\txtparticle,L}
=
\fun{\lp^{2}}{I_L^{d}},
\quad
\spfock_{\txtbsn,L}
=
\fun{\spfock_{\txtbsn}}{\sphilb{H}_{\txtparticle,L}}.\]

Let \(P_L:\sphilb{H}_{\txtparticle}\to\sphilb{H}_{\txtparticle,L}\) be the finite-volume projection. Whenever a test function \(f\) or a source is first defined on \(\fldreal^d\), the finite-volume object is obtained by replacing it with \(P_L f\) or the corresponding projected source. The finite-volume boson resolvent algebra is \begin{equation}\label{expedition0012761}
\oaresolventalgebra_L
=
\oaresolventalgebra(P_L\sphilb{H}_{\txtparticle,\txtreal},\sigma).
\end{equation} The finite-volume one-particle Hamiltonian is \begin{equation}\label{expedition0019756}
\physham[h]_{\txtparticle,0,L}
=
P_L \physham[h]_{\txtparticle,0} P_L .
\end{equation} The free Bose Hamiltonian and number operator are \begin{equation}\label{expedition0019757}
\physham_{\txtbsn,\txtfr,0,L}
=
\fun{\opfocksndqntdiff_{\txtbsn}}{\physham[h]_{\txtparticle,0,L}},
\quad
\opfocknumber_{\txtbsn,L}
=
\fun{\opfocksndqntdiff_{\txtbsn}}{P_L}.
\end{equation}

For a fixed total density \(\bar{\smnumberdensity}_{\txtbsn}>0\), define \(y_L>1\) by \[\frac{1}{L^d}
\sum_{k\in\setlattice_L^d}
\frac{1}{y_L\napiernum^{\sminvtemperature \physham[h]_{\txtparticle,0,L}(k)}-1}
=
\bar{\smnumberdensity}_{\txtbsn}.\]

\subsection{Quasilocal Structure}\label{quasilocal-structure}

The periodic finite-volume algebras \(\oaresolventalgebra_L\) in \eqref{expedition0012761} are approximating algebras. They are not treated as an increasing family by themselves, because the periodic spaces \(\lp^2(I_L^d)\) do not carry canonical inclusions for different values of \(L\). The thermodynamic local algebra is instead built from bounded regions. For a bounded open region \(\mathcal{O}\subset\fldreal^d\), set \[
\oaresolventalgebra(\mathcal{O})
=
\fun{\oaresolventalgebra}{\lp^2(\mathcal{O};\fldreal),\sigma}.
\] If \(\mathcal{O}_1\subset\mathcal{O}_2\), the embedding \(\oaresolventalgebra(\mathcal{O}_1)\to\oaresolventalgebra(\mathcal{O}_2)\) is defined by extension of test functions by zero. Thus \[
\oaresolventalgebra_{\txtloc}
=
\bigcup_{\mathcal{O}\Subset\fldreal^d}\oaresolventalgebra(\mathcal{O}),
\quad
\oaresolventalgebra
=
\gtclos{\oaresolventalgebra_{\txtloc}}
=
\oaresolventalgebra(\sphilb{D}_{\txtbsn,\sminvtemperature,\smchemicalpotential},\sigma).
\] For the finite-volume Gibbs approximants, a local test function is first embedded in the full one-particle space and then projected to the periodic box by \(P_L\). This is the relation between the local thermodynamic algebra and the finite-volume regularization. For \(z\in\fldcmp\setminus\fldreal\) and a real test function \(f\) in a local one-particle space, the corresponding complex resolvent notation is \(\oaresolvent(z,f)\).

\subsection{Main Theorem}\label{main-theorem}

The main theorem collects the BEC information after the Kac-selected density and the Euler equations have fixed the zero-mode excess.

\begin{thm}[Mean-field BEC, condensate ideal, and equivalent probabilistic descriptions]\label{expedition0019672}
For the finite-volume mean-field Bose gas \eqref{expedition0012600} with
$\lambda>0$ and $\smchemicalpotential\in\fldreal$, Theorem
\ref{expedition0012647} selects the limiting density
$\bar{\smnumberdensity}_{\txtbsn}$.
Assume the condensed alternative \eqref{expedition0012606}, equivalently
$\smnumberdensity_{\txtbsn,0}(\sminvtemperature)>0$.
Then the following assertions hold.
\begin{enumerate}
\item
The selected chemical potential satisfies
$\smchemicalpotential_{\mathrm{sel}}
=\lambda\bar{\smnumberdensity}_{\txtbsn}$.
Hence the selected one-particle Hamiltonian
$\physham[h]_{\bar{\smnumberdensity}_{\txtbsn}}$ reduces exactly to the free
one-particle Hamiltonian in \eqref{expedition0012694}, and Proposition
\ref{expedition0012653} gives the KMS property for the corresponding
component dynamics.
\item
The zero-mode point-mass covariance \eqref{expedition0012440} gives the
mean-field BEC ideal \eqref{expedition0019398}.
Proposition \ref{expedition0012642} identifies the regular quotient obtained
from the nonregular BEC representation, while Proposition
\ref{expedition0012651} identifies the represented center of the
direct-integral state \eqref{expedition0012609}.
\item
The occupation-number probability formulation gives the same density collapse
in Proposition \ref{expedition0019755}, the same zero-mode excess in
Proposition \ref{expedition0019803}, and the corresponding direct-integral
component law, zero-mode ODLRO covariance, local condensate tests, and
density-fluctuation remainder.
\item
The Brownian-loop formulation gives the same density selection as total-winding
collapse in Proposition \ref{expedition0019610}, identifies the zero-mode
excess with macroscopic winding in Proposition \ref{expedition0019611}, gives
the loop form of zero-mode ODLRO in Proposition \ref{expedition0019612}, and
recovers the standard one-particle density-matrix BEC criterion in Proposition
\ref{expedition0019671}.
\end{enumerate}
At fixed finite condensate density, the ideal \eqref{expedition0019398} and the
zero-mode ODLRO statement do not imply Buchholz's primary-state
number-resolvent criterion \eqref{expedition0019101}; see Proposition
\ref{expedition0012895}.
The strict proper-condensate conclusion is obtained only from families
satisfying
$\smnumberdensity_{\txtbsn,0}^{(\sigma)}(\sminvtemperature)
\absvol{\mathcal{O}}\to\infty$, as in Theorem \ref{expedition0012897}.
\end{thm}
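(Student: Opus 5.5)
This theorem is a summary, so the plan is to assemble it from the results it cites, adding only the short arguments that link them. Those cited results (Theorem \ref{expedition0012647}, Propositions \ref{expedition0012653}, \ref{expedition0012642}, \ref{expedition0012651}, \ref{expedition0019755}, \ref{expedition0019803}, \ref{expedition0019610}--\ref{expedition0019671}, \ref{expedition0012895} and Theorem \ref{expedition0012897}) are proved in the later sections. I will take them as established and check two things: that they apply under the stated hypotheses, and that they all refer to the same selected objects.

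\textbf{Item 1.} I start from Theorem \ref{expedition0012647}, which gives the Kac-selected density $\bar{\smnumberdensity}_{\txtbsn}$ as the maximizer of the mean-field variational (Euler) problem. In the condensed alternative \eqref{expedition0012606}, the free part of the density saturates at $\smnumberdensity_{\txtbsn,\txtcritical}(\sminvtemperature)$, and the excess is carried by the zero mode. The Euler equation then forces the effective chemical potential $\smchemicalpotential_{\mathrm{sel}}-\lambda\bar{\smnumberdensity}_{\txtbsn}$ to sit at the bottom of the spectrum of $\physham[h]_{\txtparticle,0}$, which is $0$. This gives $\smchemicalpotential_{\mathrm{sel}}=\lambda\bar{\smnumberdensity}_{\txtbsn}$. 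If the effective potential were strictly negative, the free Bose density would stay below critical and there would be no excess, contradicting $\smnumberdensity_{\txtbsn,0}(\sminvtemperature)>0$.

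Substituting into the mean-field one-particle Hamiltonian gives $\physham[h]_{\bar{\smnumberdensity}_{\txtbsn}}=\physham[h]_{\txtparticle,0}$, which is \eqref{expedition0012694}. The KMS property then follows directly from Proposition \ref{expedition0012653}, applied to the free dynamics on each component.

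\textbf{Item 2.} This item is mostly a matter of matching definitions. The singular form \eqref{expedition0012440} is nonzero exactly because $\smnumberdensity_{\txtbsn,0}(\sminvtemperature)>0$. The ideal \eqref{expedition0019398} is generated by the resolvents $\oaresolvent(z,f)$ with $\opform{q}_{\txtbsn,0,\sminvtemperature}(f)>0$, that is, with $\faftr{f}(0)\neq0$. Proposition \ref{expedition0012642} identifies this ideal as the kernel of the nonregular BEC representation. Proposition \ref{expedition0012651} identifies the center of the direct integral \eqref{expedition0012609}; that center is nontrivial, whereas the algebra itself has trivial center by Proposition \ref{expedition0011838}.

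\textbf{Items 3 and 4.} The key requirement is consistency: the occupation-number and loop formulations must select the same $\bar{\smnumberdensity}_{\txtbsn}$ and the same excess as Item 1. I will check that Propositions \ref{expedition0019755} and \ref{expedition0019610} reproduce the Kac selection, by comparing their large-deviation rate functions with the mean-field free energy. I will also check that Propositions \ref{expedition0019803} and \ref{expedition0019611} yield the identical excess density $\bar{\smnumberdensity}_{\txtbsn}-\smnumberdensity_{\txtbsn,\txtcritical}(\sminvtemperature)$. The remaining assertions (component law, ODLRO, local tests, density-fluctuation remainder, and the one-particle density-matrix criterion) are then quoted from Propositions \ref{expedition0019612} and \ref{expedition0019671} and their occupation-number counterparts.

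\textbf{Final assertion.} At fixed finite condensate density, the expected local occupation in $\mathcal{O}$ is of order $\smnumberdensity_{\txtbsn,0}(\sminvtemperature)\absvol{\mathcal{O}}<\infty$. So the number resolvents have expectation bounded away from zero, and Buchholz's criterion \eqref{expedition0019101} fails; this is Proposition \ref{expedition0012895}. Conversely, when $\smnumberdensity_{\txtbsn,0}^{(\sigma)}(\sminvtemperature)\absvol{\mathcal{O}}\to\infty$, Theorem \ref{expedition0012897} gives the strict proper-condensate conclusion.

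\textbf{Main obstacle.} I expect the hardest part to be the sign and normalization bookkeeping that makes the formulations literally agree. This includes the shift convention for the resolvent parameter fixed in Proposition \ref{expedition0012652}, the factor $2(2\pi)^d$ in \eqref{expedition0012440}, and the matching between winding-number density and zero-mode occupation. I would handle it by computing the scalar limit of $\oaresolvent(z,f)$ in each formulation for a single $f$ with $\faftr{f}(0)\neq0$, and comparing the three results.
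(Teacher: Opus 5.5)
Your plan follows the paper's route. The paper gives no separate argument for Theorem \ref{expedition0019672}; it is justified only by cross-reference to the cited results. You also correctly locate the one piece of real work, namely checking that those results apply under the stated hypotheses. That check fails, so there is a genuine gap. The defect is inherited from the cited results, but your plan explicitly claims to verify applicability, and this is where it breaks.

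Theorem \ref{expedition0012647} rests on Proposition \ref{expedition0012645}, which needs the free law $K_{\beta,0,\mu,L}$ to exist, i.e.\ $\mu<0$. Three places in the paper show this. The $k=0$ factor of the free weight is $\sum_m e^{\beta\mu m}$. Section \ref{expedition0018010} picks $q_0\in(0,-\beta\mu)$. Section \ref{expedition0019600} assumes $\mu<\min_k h_{\mathrm p,0,L}(k)=0$.

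For such $\mu$ the condensed alternative is impossible:
\begin{itemize}
\item The rate function satisfies $I_{\mathrm{fr},\beta,\mu}\ge 0$ and vanishes at $r_*=(2\pi)^{-d}\int(e^{\beta(h_{\mathrm p,0}(k)-\mu)}-1)^{-1}\,dk$, and $r_*<\rho_{\mathrm c}(\beta)$ strictly.
\item Put $J(r)=I_{\mathrm{fr},\beta,\mu}(r)+\beta\lambda r^2/2$. Then $J(r)>J(r_*)$ for every $r>r_*$.
\item Hence the Kac minimizer obeys $\bar\rho_{\mathrm b}\le r_*<\rho_{\mathrm c}(\beta)$, and $\bar\rho_{\mathrm b}-\rho_{\mathrm c}(\beta)>0$ never occurs.
\end{itemize}
The sign in \eqref{expedition0020735} must be read as $-$. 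With the printed $+$ the hypothesis would hold automatically, and \eqref{expedition0012606} would force $\rho_{\mathrm c}(\beta)=0$.

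Your proposed comparison of the rate function with the mean-field free energy makes the problem visible. The minimizer solves $\bar\rho=(2\pi)^{-d}\int(e^{\beta(h_{\mathrm p,0}(k)-\mu+\lambda\bar\rho)}-1)^{-1}\,dk$. This agrees with \eqref{expedition0012607} only if $\mu_{\mathrm{sel}}=\mu$. Then $\mu_{\mathrm{sel}}=\lambda\bar\rho_{\mathrm b}>\lambda\rho_{\mathrm c}(\beta)>0$ contradicts $\mu<0$. The paper avoids seeing this only by refusing to identify $\mu_{\mathrm{sel}}$ with $\mu$.

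The imperfect Bose gas condenses only for $\mu>\lambda\rho_{\mathrm c}(\beta)$. In that regime none of the following is available: the free law $K_{\beta,0,\mu,L}$, Proposition \ref{expedition0012645}, the tilting arguments behind Propositions \ref{expedition0019755}, \ref{expedition0019803} and \ref{expedition0019610}, and the loop activities $e^{\ell\beta\mu}/\ell$ in \eqref{expedition0019605}. So, as it stands, your assembly is either vacuous (for $\mu<0$) or rests on results not proved where they are invoked (for $\mu>\lambda\rho_{\mathrm c}$).

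The missing step is a Kac selection valid for $\mu>0$. Tilt a free law at a reference $\mu'<0$ by $\exp\{-V_L\beta(\lambda r^2/2-(\mu-\mu')r)\}$. Equivalently, use that the free canonical free-energy density is constant on $[\rho_{\mathrm c}(\beta),\infty)$. The minimizer is then $\bar\rho_{\mathrm b}=\mu/\lambda$, which exceeds $\rho_{\mathrm c}(\beta)$ exactly when $\mu>\lambda\rho_{\mathrm c}(\beta)$. Item 1 then follows directly, with $\mu_{\mathrm{sel}}=\mu=\lambda\bar\rho_{\mathrm b}$. The occupation-number and loop statements of Items 3 and 4 must be re-derived in that regime. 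There the zero mode, whose free weight $e^{\beta\mu m}$ is no longer summable, has to be controlled by the quadratic mean-field term rather than by the free law.
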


\section{Resolvent Algebraic Approach}\label{expedition0012743}

The imperfect Bose gas of \cite[Section 4.4]{AndreVerbeure001} gives a concrete mean-field model in which the Kac density law and the Euler equations determine the zero-mode excess density. The resolvent-algebra construction, in the notation of \cite{YoshitsuguSekine004}, then records the corresponding zero-mode point-mass covariance by the ideal \eqref{expedition0019398}. A nonregular representation gives the regular nonzero-mode quotient, whereas direct integral of the component states gives the represented central variable. The GNS role of the ideal is fixed below by the nonregular representation and the direct-integral representation.

\subsection{Finite-Volume Mean-Field Model}\label{finite-volume-mean-field-model}

The finite-volume mean-field Hamiltonian uses the one-particle and free Fock Hamiltonians defined in \eqref{expedition0019756} and \eqref{expedition0019757}. For \(\lambda>0\) and for a chemical-potential value for which the finite-volume Gibbs trace is defined, define \begin{equation}\label{expedition0012600}
\physham_{\txtbsn,\txtmeanfield,\smchemicalpotential,L}
=
\physham_{\txtbsn,\txtfr,0,L}
-\smchemicalpotential \opfocknumber_{\txtbsn,L}
+\frac{\lambda}{2V_L}\opfocknumber_{\txtbsn,L}^{2},
\end{equation} where \(\physham_{\txtbsn,\txtfr,0,L}\) is the finite-volume free Hamiltonian defined in \eqref{expedition0019757}. This is the finite-volume mean-field Bose gas Hamiltonian. Since the operator sum \(\frac{\lambda}{2V_L}\opfocknumber_{\txtbsn,L}^{2}
-\smchemicalpotential \opfocknumber_{\txtbsn,L}\) is bounded below for any \(\smchemicalpotential \in \fldreal\), the Hamiltonian \eqref{expedition0012600} is bounded below on \(\spfock_{\txtbsn,L}\). Its Friedrichs extension is denoted by the same symbol, and this operator is treated as self-adjoint. Furthermore the free Hamiltonian and the number operator strongly commute and are simultaneously diagonalizable. Clearly its heat operator \(\napiernum^{-\sminvtemperature \physham_{\txtbsn,\txtmeanfield,\smchemicalpotential,L}}\) is a trace class operator for any \(\sminvtemperature > 0\). This formula is the Hamiltonian \cite[Eq. (4.51)]{AndreVerbeure001} written in the notation of the bounded Bose field used here. The same parameter \(\smchemicalpotential\) is used in the finite-volume Gibbs functional \eqref{expedition0012601}, in the density law \eqref{expedition0012728}, and in the Euler equations \eqref{expedition0012604}--\eqref{expedition0012607}. The Euler--Lagrange multiplier after density selection is denoted separately by \(\smchemicalpotential_{\mathrm{sel}}\).

\begin{prop}[Finite-volume super-stability of the mean-field Hamiltonian]\label{expedition0012664}
Assume $\lambda>0$ and $\smchemicalpotential\in\fldreal$.
The finite-volume Hamiltonian \eqref{expedition0012600} satisfies the following estimate:
\begin{equation}\label{expedition0012733}
\physham_{\txtbsn,\txtmeanfield,\smchemicalpotential,L}
\geq
-\frac{V_L\smchemicalpotential^{2}}{2\lambda}.
\end{equation}
In particular the energy is bounded from below in finite volume by the
quadratic density term, which is the super-stability mechanism of Verbeure's
imperfect Bose gas \cite[Eq. (4.51) and Section 4.4]{AndreVerbeure001}.
\end{prop}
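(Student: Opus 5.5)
The bound follows from positivity of the free part together with completing the square in the number operator. Since $H_{\mathrm{b,fr},0,L}=d\Gamma_{\mathrm{b}}(h_{\mathrm{p},0,L})$ and $h_{\mathrm{p},0,L}=-\tfrac12\Delta$ with periodic boundary conditions is nonnegative, the second quantization is nonnegative on its form domain. The free Hamiltonian and $N_{\mathrm{b},L}$ strongly commute, so the inequality can be checked on the joint spectral decomposition, i.e.\ on the occupation-number basis indexed by $k\in\Gamma_L^d$.

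On the joint eigenvector with occupations $(n_k)$ and $n=\sum_k n_k$, the Hamiltonian \eqref{expedition0012600} acts as multiplication by
\[
\sum_k h_{\mathrm{p},0,L}(k)\,n_k-\mu n+\frac{\lambda}{2V_L}n^2 .
\]
The first sum is at least $0$. For the remaining part, complete the square:
\[
\frac{\lambda}{2V_L}n^2-\mu n=\frac{\lambda}{2V_L}\Bigl(n-\frac{\mu V_L}{\lambda}\Bigr)^2-\frac{V_L\mu^2}{2\lambda}\ \ge\ -\frac{V_L\mu^2}{2\lambda},
\]
using $\lambda>0$. This is the pointwise minimum of a real quadratic in $n$ over $n\in\mathbb R$, so it holds for every $\mu\in\mathbb R$, irrespective of sign.

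The step needing the most care is the passage from this eigenvalue bound to the operator inequality for the Friedrichs extension. The functional-calculus identity
\[
\tfrac{\lambda}{2V_L}N^2-\mu N=\phi(N),\qquad \phi(t)=\tfrac{\lambda}{2V_L}t^2-\mu t,
\]
gives $\phi(N)\ge\inf\phi=-V_L\mu^2/(2\lambda)$ by the spectral theorem. On the form core of finite-particle vectors with finitely many occupied modes, the quadratic form of \eqref{expedition0012600} is a sum of commuting nonnegative pieces plus $\phi(N)$, so the lower bound holds on that core. A semibounded form and its closure have the same lower bound, and the Friedrichs extension's form is exactly that closure. Hence the inequality passes to the Friedrichs extension, which gives \eqref{expedition0012733}.
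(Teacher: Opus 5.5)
Your proposal is correct and follows the paper's proof: nonnegativity of $d\Gamma_{\mathrm{b}}(h_{\mathrm{p},0,L})$ plus completing the square in the number operator. Your additional argument for passing the bound to the Friedrichs extension, via the joint occupation-number diagonalization and the form-closure argument, supplies a detail the paper's two-line proof leaves implicit.
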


\begin{proof}
The free Hamiltonian $\physham_{\txtbsn,\txtfr,0,L}$ is clearly nonnegative.
Completing the square gives
$$\frac{\lambda}{2V_L}\opfocknumber_{\txtbsn,L}^{2}
-\smchemicalpotential\opfocknumber_{\txtbsn,L}
=
\frac{\lambda}{2V_L}
\rbk{\opfocknumber_{\txtbsn,L}
-\frac{V_L\smchemicalpotential}{\lambda}}^{2}
-\frac{V_L\smchemicalpotential^{2}}{2\lambda}.$$
Combining this identity with \eqref{expedition0012600} gives
\eqref{expedition0012733}.
\end{proof}

\begin{rem}[Chemical-potential convention]\label{expedition0012900}
The symbol $\smchemicalpotential\in\fldreal$ denotes the finite-volume
chemical-potential parameter in \eqref{expedition0012600}.
After density selection, the Euler--Lagrange multiplier is
$\smchemicalpotential_{\mathrm{sel}}$, and the effective nonzero-mode chemical
potential is
$$
\smchemicalpotential_{\mathrm{eff}}
=
\smchemicalpotential_{\mathrm{sel}}
-\lambda\bar{\smnumberdensity}_{\txtbsn}.
$$
In the condensed case \eqref{expedition0012606},
$\smchemicalpotential_{\mathrm{eff}}=0$, equivalently
$\smchemicalpotential_{\mathrm{sel}}
=\lambda\bar{\smnumberdensity}_{\txtbsn}$.
No separate reference chemical potential is introduced in the finite-volume
mean-field construction.
\end{rem}

The finite-volume mean-field Gibbs functional is represented on \(\oaresolventalgebra_L\) by \begin{equation}\label{expedition0012601}
\fun{\oastate[\psi_{\txtmeanfield,\sminvtemperature,\smchemicalpotential,L}]}{A}
=
\frac{\sqfun{\trace_{\spfock_{\txtbsn,L}}}
{\napiernum^{-\sminvtemperature \physham_{\txtbsn,\txtmeanfield,\smchemicalpotential,L}} A}}
{\sqfun{\trace_{\spfock_{\txtbsn,L}}}
{\napiernum^{-\sminvtemperature \physham_{\txtbsn,\txtmeanfield,\smchemicalpotential,L}}}},
\quad
A\in\oaresolventalgebra_L.
\end{equation} The finite-volume mean-field Heisenberg action is represented on \(\oaresolventalgebra_L\) by \begin{equation}\label{expedition0012602}
\fun{\alpha_{\txtmeanfield,\smchemicalpotential,L,t}}{A}
=
\napiernum^{\imunit t\physham_{\txtbsn,\txtmeanfield,\smchemicalpotential,L}}
A
\napiernum^{-\imunit t\physham_{\txtbsn,\txtmeanfield,\smchemicalpotential,L}},
\quad
A\in\oaresolventalgebra_L.
\end{equation}

\begin{prop}[Locally normal finite-volume action on creation and annihilation operators]
Work in the regular Fock representation of
$\oaresolventalgebra_L$ on
$\spfock_{\txtbsn,L}$.
For $f
\in P_L \sphilb{H}_{\txtparticle}$, the locally normal finite-volume dynamics induced by
\eqref{expedition0012602} acts on the unbounded field algebra generated by
$\opfockcr_{\txtfock}(f)$ and $\opfockan_{\txtfock}(f)$ as follows:
\begin{equation}\label{expedition0012727}
\begin{aligned}
\napiernum^{\imunit t\physham_{\txtbsn,\txtmeanfield,\smchemicalpotential,L}}
\opfockcr_{\txtfock}(f)
\napiernum^{-\imunit t\physham_{\txtbsn,\txtmeanfield,\smchemicalpotential,L}}
&=
\fun{\opfockcr_{\txtfock}}
{\napiernum^{\imunit t(\physham[h]_{\txtparticle,0,L}-\smchemicalpotential)} f}
\napiernum^{\imunit \frac{\lambda t}{2V_L}(2\opfocknumber_{\txtbsn,L}+1)},
\\ 
\napiernum^{\imunit t\physham_{\txtbsn,\txtmeanfield,\smchemicalpotential,L}}
\opfockan_{\txtfock}(f)
\napiernum^{-\imunit t\physham_{\txtbsn,\txtmeanfield,\smchemicalpotential,L}}
&=
\fun{\opfockan_{\txtfock}}
{\napiernum^{\imunit t(\physham[h]_{\txtparticle,0,L}-\smchemicalpotential)} f}
\napiernum^{-\imunit \frac{\lambda t}{2V_L}(2\opfocknumber_{\txtbsn,L}-1)}.
\end{aligned}
\end{equation}
\end{prop}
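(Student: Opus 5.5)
The plan is to reduce \eqref{expedition0012727} to two commuting one-parameter unitary groups and to treat each by an elementary intertwining relation on the finite-particle vectors. Since $\physham_{\txtbsn,\txtfr,0,L}=\fun{\opfocksndqntdiff_{\txtbsn}}{\physham[h]_{\txtparticle,0,L}}$ and $\opfocknumber_{\txtbsn,L}=\fun{\opfocksndqntdiff_{\txtbsn}}{P_L}$ strongly commute, as noted after \eqref{expedition0012600}, functional calculus gives the factorization
$$
\napiernum^{\imunit t\physham_{\txtbsn,\txtmeanfield,\smchemicalpotential,L}}
=
\napiernum^{\imunit t\fun{\opfocksndqntdiff_{\txtbsn}}{\physham[h]_{\txtparticle,0,L}-\smchemicalpotential}}\,
\napiernum^{\imunit c\opfocknumber_{\txtbsn,L}^{2}},
\qquad
c=\frac{\lambda t}{2V_L}.
$$
Here $\smchemicalpotential\opfocknumber_{\txtbsn,L}$ is absorbed into the second quantization of $\physham[h]_{\txtparticle,0,L}-\smchemicalpotential$ on $\spfock_{\txtbsn,L}$. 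Both factors commute with $\opfocknumber_{\txtbsn,L}$ and hence preserve every $n$-particle sector.

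I will work on the finite-particle subspace $\spfock_{\txtbsn,L}^{\mathrm{fin}}$ of vectors with only finitely many nonzero $n$-particle components. It is a common core for $\opfockcr_{\txtfock}(f)$, $\opfockan_{\txtfock}(f)$ and every function of $\opfocknumber_{\txtbsn,L}$. It is invariant under both unitary factors, and the creation and annihilation operators map it into itself. All identities will be verified there and then extended by closure, since both sides of each formula in \eqref{expedition0012727} are closable operators with this core.

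For the free factor I will use the standard second-quantization identity
$$
\fun{\opfocksndqnt_{\txtbsn}}{U}\opfockcr_{\txtfock}(f)\fun{\opfocksndqnt_{\txtbsn}}{U}^{\ast}=\opfockcr_{\txtfock}(Uf),
$$
and its analogue for $\opfockan_{\txtfock}(f)$, with the unitary $U=\napiernum^{\imunit t(\physham[h]_{\txtparticle,0,L}-\smchemicalpotential)}$. On the $n$-particle sector this is just symmetrized tensor multiplication by $Uf$ or contraction with $Uf$. Since $f\in P_L\sphilb{H}_{\txtparticle}$ and $U$ acts on $\sphilb{H}_{\txtparticle,L}$, the vector $Uf$ stays in the finite-volume one-particle space.

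For the quadratic factor I will use the pull-through relations $\opfocknumber_{\txtbsn,L}\opfockcr_{\txtfock}(f)=\opfockcr_{\txtfock}(f)(\opfocknumber_{\txtbsn,L}+1)$ and $\opfocknumber_{\txtbsn,L}\opfockan_{\txtfock}(f)=\opfockan_{\txtfock}(f)(\opfocknumber_{\txtbsn,L}-1)$. Sector by sector these give $g(\opfocknumber_{\txtbsn,L})\opfockcr_{\txtfock}(f)=\opfockcr_{\txtfock}(f)\,g(\opfocknumber_{\txtbsn,L}+1)$ for bounded Borel $g$, and the analogous relation for $\opfockan_{\txtfock}(f)$ with a shift by $-1$. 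Taking $g(n)=\napiernum^{\imunit cn^{2}}$ yields
$$
\napiernum^{\imunit c\opfocknumber_{\txtbsn,L}^{2}}\opfockcr_{\txtfock}(f)\napiernum^{-\imunit c\opfocknumber_{\txtbsn,L}^{2}}
=\opfockcr_{\txtfock}(f)\,\napiernum^{\imunit c\left((\opfocknumber_{\txtbsn,L}+1)^{2}-\opfocknumber_{\txtbsn,L}^{2}\right)}
=\opfockcr_{\txtfock}(f)\,\napiernum^{\imunit c(2\opfocknumber_{\txtbsn,L}+1)}.
$$
Similarly $(\opfocknumber_{\txtbsn,L}-1)^{2}-\opfocknumber_{\txtbsn,L}^{2}=-(2\opfocknumber_{\txtbsn,L}-1)$ gives the annihilation phase $\napiernum^{-\imunit c(2\opfocknumber_{\txtbsn,L}-1)}$. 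On the vacuum the operator $\opfockan_{\txtfock}(f)$ vanishes, so the shift to $\opfocknumber_{\txtbsn,L}-1$ causes no problem.

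Composing the two conjugations gives \eqref{expedition0012727}. I will first conjugate by the $\opfocknumber_{\txtbsn,L}^{2}$-factor, then by the free factor, using that the free factor commutes with $\napiernum^{\imunit c(2\opfocknumber_{\txtbsn,L}\pm1)}$. The only delicate point is the unbounded-operator bookkeeping: the intertwining relations have to hold as operator identities and not merely on a dense set. I will handle this by the sector-by-sector computation on $\spfock_{\txtbsn,L}^{\mathrm{fin}}$ together with closure. This works because every operator involved either preserves the $\opfocknumber_{\txtbsn,L}$-grading or shifts it by exactly one, so no domain questions arise beyond the core argument.
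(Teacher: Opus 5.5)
Your proposal is correct and follows essentially the same route as the paper. Both arguments factor the evolution using the strong commutation of $\physham_{\txtbsn,\txtfr,0,L}-\smchemicalpotential\opfocknumber_{\txtbsn,L}$ with $\opfocknumber_{\txtbsn,L}$, apply the standard second-quantized intertwining to the free part, and compute the $\opfocknumber_{\txtbsn,L}^{2}$-conjugation sector by sector, which gives the phases $\napiernum^{\imunit s(2\opfocknumber_{\txtbsn,L}+1)}$ and $\napiernum^{-\imunit s(2\opfocknumber_{\txtbsn,L}-1)}$ with $s=\lambda t/(2V_L)$. Your finite-particle core and closure argument only makes explicit the domain bookkeeping that the paper summarizes as ``on the finite-particle space.''
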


\begin{proof}
The operators
$\physham_{\txtbsn,\txtfr,0,L}
-\smchemicalpotential\opfocknumber_{\txtbsn,L}$
and
$\opfocknumber_{\txtbsn,L}$ strongly commute.
The free part gives the standard second-quantized relations
$$\napiernum^{\imunit t(\physham_{\txtbsn,\txtfr,0,L}
-\smchemicalpotential\opfocknumber_{\txtbsn,L})}
\opfockcr_{\txtfock}(f)
\napiernum^{-\imunit t(\physham_{\txtbsn,\txtfr,0,L}
-\smchemicalpotential\opfocknumber_{\txtbsn,L})}
=
\fun{\opfockcr_{\txtfock}}
{\napiernum^{\imunit t(\physham[h]_{\txtparticle,0,L}-\smchemicalpotential)} f},$$
and the analogous formula for $\opfockan_{\txtfock}(f)$.
For the mean-field part, we obtain
$$\begin{aligned}
\napiernum^{\imunit s\opfocknumber_{\txtbsn,L}^2}
\opfockcr_{\txtfock}(f)
\napiernum^{-\imunit s\opfocknumber_{\txtbsn,L}^2}
&=
\opfockcr_{\txtfock}(f)\napiernum^{\imunit s(2\opfocknumber_{\txtbsn,L}+1)},
\\ 
\napiernum^{\imunit s\opfocknumber_{\txtbsn,L}^2}
\opfockan_{\txtfock}(f)
\napiernum^{-\imunit s\opfocknumber_{\txtbsn,L}^2}
&=
\opfockan_{\txtfock}(f)\napiernum^{-\imunit s(2\opfocknumber_{\txtbsn,L}-1)}
\end{aligned}$$
on the finite-particle space.
Substituting $s=\frac{\lambda t}{2V_L}$ gives \eqref{expedition0012727}.
\end{proof}

\begin{prop}[Finite-volume KMS identity for the approximating functional]\label{expedition0012663}
For fixed $A,B\in\oaresolventalgebra_{\txtloc}$ and all sufficiently large $L$, the functional
$\oastate[\psi_{\txtmeanfield,\sminvtemperature,\smchemicalpotential,L}]$
on $\oaresolventalgebra_L$ satisfies the $\sminvtemperature$-KMS identity with respect to
$\alpha_{\txtmeanfield,\smchemicalpotential,L,t}$.
\end{prop}

\begin{proof}
For fixed $A,B\in\oaresolventalgebra_{\txtloc}$, choose $L$ so large that both elements are represented in $\oaresolventalgebra_L$.
Formula \eqref{expedition0012601} then gives a normal finite-volume functional, and the standard Gibbs KMS identity holds for the Heisenberg evolution \eqref{expedition0012602}.
\end{proof}

\begin{prop}[Local weak-$\ast$ cluster states from the finite-volume Gibbs data]\label{expedition0012674}
The set of finite-volume functionals
$\fml{\oastate[\psi_{\txtmeanfield,\sminvtemperature,\smchemicalpotential,L}]}
{L > 0}$
has local weak-$\ast$ cluster points.
The resulting local cluster functional is a state on
$\oaresolventalgebra_{\txtloc}$ and is denoted by
$\oastate[\psi_{\txtmeanfield,\txtloc,\sminvtemperature,\smchemicalpotential}]$
when the chosen subnet is fixed.
\end{prop}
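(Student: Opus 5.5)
The plan is a compactness argument on the state space of each local algebra, followed by a diagonal extraction over an exhausting sequence of regions.

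\textbf{Step 1 (uniform bounds).} Fix a bounded open region $\mathcal{O}\Subset\fldreal^d$. For all $L$ large enough that $I_L^d\supset\mathcal{O}$, a real test function $f\in\lp^2(\mathcal{O};\fldreal)$ is mapped isometrically into $P_L\sphilb{H}_{\txtparticle,\txtreal}$ by extension by zero. This map preserves $\sigma$, so by Proposition \ref{expedition0011838} it induces an isometric embedding $\oaresolventalgebra(\mathcal{O})\hookrightarrow\oaresolventalgebra_L$. Composing $\oastate[\psi_{\txtmeanfield,\sminvtemperature,\smchemicalpotential,L}]$ with this embedding gives a functional on $\oaresolventalgebra(\mathcal{O})$. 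It is positive and normalized, because \eqref{expedition0012601} is a normalized trace against the positive trace-class operator $\napiernum^{-\sminvtemperature\physham_{\txtbsn,\txtmeanfield,\smchemicalpotential,L}}$; trace-class-ness follows from Proposition \ref{expedition0012664} together with the trace-class property of the free heat operator. Hence each restricted functional has norm exactly one and lies in the state space of the unital $\oacstar$-algebra $\oaresolventalgebra(\mathcal{O})$.

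\textbf{Step 2 (compactness and diagonal subnet).} The state space of $\oaresolventalgebra(\mathcal{O})$ is weak-$\ast$ compact by Banach--Alaoglu, since it is a weak-$\ast$ closed subset of the unit ball. Choose an increasing exhausting sequence of cubes $\mathcal{O}_n$, for instance $(-n,n)^d$, and form the product $\prod_n S(\oaresolventalgebra(\mathcal{O}_n))$. This product is compact by Tychonoff, so the net indexed by $L\to\infty$ has a convergent subnet in it. If metrizability is wanted, one can instead restrict to countable dense sets of generators $\oaresolvent(z,f)$ with $z\in\fldcmp\setminus\fldreal$ rational and $f$ from a countable dense set, and then extract a diagonal subsequence $L_k$.

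\textbf{Step 3 (compatibility and extension).} The limit on $\oaresolventalgebra(\mathcal{O}_n)$ restricts to the limit on $\oaresolventalgebra(\mathcal{O}_m)$ for $m<n$. This holds because, for each fixed large $L$, the embeddings commute: extending by zero from $\mathcal{O}_m$ to $\mathcal{O}_n$ and then into the periodic box equals extending from $\mathcal{O}_m$ directly. The limits therefore define a single positive, normalized, linear functional on $\oaresolventalgebra_{\txtloc}=\bigcup_n\oaresolventalgebra(\mathcal{O}_n)$, since any bounded region lies in some $\mathcal{O}_n$. By boundedness it extends to a state on the closure $\oaresolventalgebra$.

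\textbf{Main obstacle.} The delicate point is the identification of the local algebra with a subalgebra of $\oaresolventalgebra_L$. One must check that extension by zero composed with $P_L$ is injective and symplectic on $\lp^2(\mathcal{O})$ once $\mathcal{O}$ fits in the box, so that norms computed in $\oaresolventalgebra_L$ agree with those in $\oaresolventalgebra(\mathcal{O})$. I would handle this by invoking Proposition \ref{expedition0011838} for the nondegenerate subspace. If a potential degeneracy caused trouble, I would use the universal property of the resolvent algebra to get a $\ast$-homomorphism, which is automatically contractive; that suffices, since pulling a state back along a unital contractive homomorphism still gives a state.
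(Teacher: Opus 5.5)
Your argument is correct, but it follows a different route from the paper. The paper's entire proof is the sentence that the claim ``is due to the Hahn--Banach theorem.'' The most plausible reading of that sentence runs as follows:
\begin{itemize}
\item restrict each $\omega_L$ to the algebra of the open box;
\item extend it by Hahn--Banach to a state on the full quasilocal algebra (a norm-one extension taking the value $1$ at the unit is automatically a state);
\item take weak-$\ast$ cluster points in the compact global state space. On a fixed $\mathcal{R}(\mathcal{O})$ the values do not depend on the chosen extension once $I_L^d\supset\mathcal{O}$.
\end{itemize}
That route is shorter and lands directly in the state space of the full algebra. However, it leaves the compactness step implicit, and Hahn--Banach by itself produces no cluster points.

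You avoid any extension theorem. You use Banach--Alaoglu on each local state space and Tychonoff over an exhausting sequence of cubes; pad the coordinates that are undefined for small $L$ with arbitrary states, which does not affect cluster points. You then use compatibility of the zero-extension embeddings to glue the coordinate limits, and extend from the dense union by continuity. This isolates compactness as the real mechanism. It also makes explicit a structural input the paper never states: the map $\mathcal{R}(\mathcal{O})\to\mathcal{R}_L$ induced by extension by zero when $\mathcal{O}\subset I_L^d$, and the reason the limit is consistent across regions.

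Your universal-property fallback is in fact the operative argument. With the paper's literal $L^2(\mathcal{O};\mathbb{R})$ one has $\sigma(f,g)=\operatorname{Im}\langle f,g\rangle=0$, so Proposition~\ref{expedition0011838}, which concerns nondegenerate finite-dimensional subspaces, does not apply. Pulling a state back along a unital $\ast$-homomorphism works in every case.

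Two side remarks need repair.

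First, your trace-class justification does not work as written. Since $h_{\mathrm{p},0,L}$ has eigenvalue $0$ at $k=0$, the free heat operator $e^{-\beta(H_{\mathrm{fr},0,L}-\mu N_L)}$ is not trace class for $\mu\ge 0$. A constant lower bound such as Proposition~\ref{expedition0012664} cannot supply trace class either. Complete the square with a shifted coefficient instead: $\frac{\lambda}{2V_L}N_L^2-\mu N_L\ge N_L-\frac{V_L(\mu+1)^2}{2\lambda}$. All operators are diagonal in the occupation basis, so $e^{-\beta H}\le e^{\beta V_L(\mu+1)^2/(2\lambda)}\,e^{-\beta(H_{\mathrm{fr},0,L}+N_L)}$. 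The right-hand side is the free heat operator at chemical potential $-1$, which is trace class.

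Second, drop the metrizable diagonal-subsequence alternative. The map $f\mapsto R(z,f)$ is not norm continuous: in a regular representation $\|R(z,tf)-R(z,0)\|$ does not tend to $0$ as $t\to 0$. Hence $\mathcal{R}(\mathcal{O})$ is non-separable, and convergence on a countable set of generators does not propagate to the whole local algebra. Making it propagate would require $L$-uniform continuity estimates for $f\mapsto\omega_L(\prod_j R(z_j,f_j))$, which you have not proved. The net argument needs none of this.
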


\begin{proof}
This is due to the Hahn--Banach theorem.
\end{proof}

No later argument in this section depends on a canonical thermodynamic Gibbs state on the full abstract algebra; only the local restriction and the selected-density construction after Theorem \ref{expedition0012647} are used.

\begin{rem}[No naive quasilocal limit of the finite-volume dynamics]\label{expedition0018016}
The finite-volume dynamics
$\alpha_{\txtmeanfield,\smchemicalpotential,L,t}$
does not form a compatible inductive system on the local resolvent algebras.
Indeed, \eqref{expedition0012727} shows that the evolution of a fixed local creation or annihilation operator contains the full-volume number operator
$\opfocknumber_{\txtbsn,L}$
through the exponential factor
$\napiernum^{\pm\imunit\lambda t(2\opfocknumber_{\txtbsn,L}\pm1)/(2V_L)}$.
If the same local observable is embedded into a larger box, this factor is changed by the particles in the added volume.
Thus there is no volume-independent local automorphism obtained by the prescription
$A\mapsto\alpha_{\txtmeanfield,\smchemicalpotential,L,t}(A)$
for all sufficiently large $L$.

Proposition \ref{expedition0012663} is therefore only a finite-volume KMS statement, and Proposition \ref{expedition0012674} is only a state-extension and compactness statement.
The mean-field automorphism group for the equilibrium state is obtained only
after the density has been selected.
Once the Kac-measure collapse in Theorem \ref{expedition0012647} has selected
$\bar{\smnumberdensity}_{\txtbsn}$,
the factor
$\opfocknumber_{\txtbsn,L}/V_L$
in local equilibrium correlation functions is replaced by this scalar selected density.
Consequently the one-particle multiplier in the selected-density mean-field
automorphism group is
$$\napiernum^{\imunit t(\physham[h]_{\txtparticle,0}(k)-\smchemicalpotential_{\mathrm{sel}}+\lambda\bar{\smnumberdensity}_{\txtbsn})},$$
which is exactly the multiplier used in the selected-density mean-field
automorphism group
\eqref{expedition0012675}.
The symbol for the mean-field automorphism group is used only after this
identification, namely as
$\alpha_{\txtmeanfield,\bar{\smnumberdensity}_{\txtbsn},t}$
in \eqref{expedition0012675}.
\end{rem}

\subsection{Density Decomposition and Kac Measure}\label{density-decomposition-and-kac-measure}

In \cite[Eq. (4.50) and Section 4.4]{AndreVerbeure001}, the Kac-function is introduced at the thermodynamic-limit level as the kernel which decomposes a grand-canonical equilibrium state into canonical equilibrium states of fixed density. For the mean-field Bose gas of \cite[Eq. (4.51)]{AndreVerbeure001} it is stated to be a delta-function. The finite-volume object introduced below is not a separate definition printed in \cite{AndreVerbeure001}. It is the finite-volume density law whose weak limit realizes the Kac-function statement.

Let the local number density be \(\smnumberdensity_{L}
= \frac{\opfocknumber_{\txtbsn,L}}{V_L}\) and define \(K_{\sminvtemperature,\lambda,\smchemicalpotential,L}\) to be the probability measure on \(\fldreal_{\geq0}\) obtained as the spectral distribution of \(\smnumberdensity_L\) in the finite-volume grand-canonical Gibbs representation of the Hamiltonian \(\physham_{\txtbsn,\txtmeanfield,\smchemicalpotential,L}\). Let \begin{equation}\label{expedition0019646}
\begin{aligned}
\mathcal{O}_L
=
\set{\mathsf{n} = \seq{\mathsf{n}_k}{k\in\setlattice_L^d}}
{\mathsf{n}_k \in \monnat,
\sum_{k\in\setlattice_L^d}
\mathsf{n}_k < \infty}.
\end{aligned}
\end{equation}

\begin{prop}[Finite-volume density law as a spectral distribution]\label{expedition0012679}
For every Borel set $B\subset\fldreal_{\geq0}$,
$$K_{\sminvtemperature,\lambda,\smchemicalpotential,L}(B)
=
\fun{\oastate[\psi_{\txtmeanfield,\sminvtemperature,\smchemicalpotential,L}]}
{\fndef{B}(\smnumberdensity_L)}
=
\frac{\sqfun{\trace_{\spfock_{\txtbsn,L}}}
{\napiernum^{-\sminvtemperature\physham_{\txtbsn,\txtmeanfield,\smchemicalpotential,L}}
\fndef{B}(\smnumberdensity_L)}}
{\sqfun{\trace_{\spfock_{\txtbsn,L}}}
{\napiernum^{-\sminvtemperature\physham_{\txtbsn,\txtmeanfield,\smchemicalpotential,L}}}}.$$
Equivalently,
for the Dirac measure $\diracdelta_x$,
it holds that
\begin{equation}\label{expedition0012728}
\begin{aligned}
K_{\sminvtemperature,\lambda,\smchemicalpotential,L}
&=
\frac{1}{\smpartitionfunc_{\sminvtemperature,\lambda,\smchemicalpotential,L}}
\sum_{\mathsf{n}\in\mathcal{O}_L}
\fnexp{-\sminvtemperature
\rbk{\sum_{k\in\setlattice_L^d}
\rbk{\physham[h]_{\txtparticle,0,L}(k)-\smchemicalpotential}\mathsf{n}_k
+\frac{\lambda}{2V_L}
\rbk{\sum_{k\in\setlattice_L^d}\mathsf{n}_k}^{2}}}
\delta_{\frac{1}{V_L}
\sum_{k\in\setlattice_L^d}\mathsf{n}_k},
\\ 
\smpartitionfunc_{\sminvtemperature,\lambda,\smchemicalpotential,L}
&=
\sum_{\mathsf{n}\in\mathcal{O}_L}
\fnexp{-\sminvtemperature
\rbk{\sum_{k\in\setlattice_L^d}
\rbk{\physham[h]_{\txtparticle,0,L}(k)-\smchemicalpotential}\mathsf{n}_k
+\frac{\lambda}{2V_L}
\rbk{\sum_{k\in\setlattice_L^d}\mathsf{n}_k}^{2}}}.
\end{aligned}
\end{equation}
\end{prop}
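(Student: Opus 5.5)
The plan is to diagonalize the Gibbs density operator in the occupation-number basis of the periodic box. The first formula then follows from the spectral theorem, and the Dirac-measure formula from evaluating the trace on that basis.

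\emph{Step 1 (joint eigenbasis).} On the periodic box $I_L^d$ the operator $\physham[h]_{\txtparticle,0,L}$ has the plane-wave eigenbasis $e_k(x)=L^{-d/2}\napiernum^{\imunit k\cdot x}$, $k\in\setlattice_L^d$, with eigenvalues $\physham[h]_{\txtparticle,0,L}(k)=\abs{k}^2/2$. The symmetric Fock space $\spfock_{\txtbsn,L}$ therefore has an orthonormal basis of occupation-number vectors $\ket{\mathsf{n}}$, $\mathsf{n}\in\mathcal{O}_L$, as in \eqref{expedition0019646}. In this basis
\[
\physham_{\txtbsn,\txtfr,0,L}\ket{\mathsf{n}}=\sum_k\physham[h]_{\txtparticle,0,L}(k)\mathsf{n}_k\ket{\mathsf{n}},
\qquad
\opfocknumber_{\txtbsn,L}\ket{\mathsf{n}}=\Big(\sum_k\mathsf{n}_k\Big)\ket{\mathsf{n}}.
\]
Because $\physham_{\txtbsn,\txtfr,0,L}$ and $\opfocknumber_{\txtbsn,L}$ strongly commute, the Friedrichs extension of \eqref{expedition0012600} is the self-adjoint operator diagonal in the same basis. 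Its eigenvalue on $\ket{\mathsf{n}}$ is the exponent $E_L(\mathsf{n})=\sum_k(\physham[h]_{\txtparticle,0,L}(k)-\smchemicalpotential)\mathsf{n}_k+\frac{\lambda}{2V_L}(\sum_k\mathsf{n}_k)^2$ appearing in \eqref{expedition0012728}. The only point needing care here is that the Friedrichs extension agrees with this diagonal operator. This holds because the finite linear span of the $\ket{\mathsf{n}}$ is a core, and the diagonal operator is already self-adjoint and bounded below by Proposition \ref{expedition0012664}.

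\emph{Step 2 (trace evaluation).} The functional calculus gives $\fndef{B}(\smnumberdensity_L)\ket{\mathsf{n}}=\fndef{B}(V_L^{-1}\sum_k\mathsf{n}_k)\ket{\mathsf{n}}$. Hence $\napiernum^{-\sminvtemperature\physham_{\txtbsn,\txtmeanfield,\smchemicalpotential,L}}\fndef{B}(\smnumberdensity_L)$ is diagonal with nonnegative entries $\napiernum^{-\sminvtemperature E_L(\mathsf{n})}\fndef{B}(\cdot)$. Its trace is the sum of these entries, and it is bounded by the trace of the heat operator, which was noted to be finite. Tonelli's theorem justifies summing in any order. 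Dividing by $\smpartitionfunc_{\sminvtemperature,\lambda,\smchemicalpotential,L}=\sum_{\mathsf{n}}\napiernum^{-\sminvtemperature E_L(\mathsf{n})}$ gives $K_{\sminvtemperature,\lambda,\smchemicalpotential,L}(B)=\smpartitionfunc^{-1}\sum_{\mathsf{n}}\napiernum^{-\sminvtemperature E_L(\mathsf{n})}\delta_{V_L^{-1}\sum_k\mathsf{n}_k}(B)$, which is \eqref{expedition0012728} evaluated on $B$. The first equality in the statement is just the definition of $K$ as the spectral distribution of $\smnumberdensity_L$ in the Gibbs state \eqref{expedition0012601}. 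This is read with $\fndef{B}(\smnumberdensity_L)$ understood in the normal extension of the functional to bounded operators, since this operator need not lie in $\oaresolventalgebra_L$.

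\emph{Main obstacle.} The substantive point is finiteness of $\smpartitionfunc$ for every real $\smchemicalpotential$, including $\smchemicalpotential>0$, where the free grand-canonical sum diverges. I would bound $E_L(\mathsf{n})\ge\sum_k\physham[h]_{\txtparticle,0,L}(k)\mathsf{n}_k+\frac{\lambda}{4V_L}N^2-C_L$ with $N=\sum_k\mathsf{n}_k$, using the completing-the-square argument of Proposition \ref{expedition0012664} with half of the quadratic term. I would then estimate the sum by splitting it into two parts. Summing over $N$ with weight $\napiernum^{-\sminvtemperature\lambda N^2/(4V_L)}$ controls the zero mode, where the number of zero-mode configurations is at most one per $N$. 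The nonzero modes are controlled by the convergent product $\prod_{k\ne0}(1-\napiernum^{-\sminvtemperature\abs{k}^2/2})^{-1}$. This product converges because $\sum_{k\ne0}\napiernum^{-\sminvtemperature\abs{k}^2/2}<\infty$ on the lattice.
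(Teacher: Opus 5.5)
Your proposal is correct and follows essentially the same route as the paper. Both diagonalize $\physham_{\txtbsn,\txtmeanfield,\smchemicalpotential,L}$ and $\fndef{B}(\smnumberdensity_L)$ in the occupation-number basis and read off the cut trace as a positive subseries of the Gibbs sum. Your extra checks go beyond what the paper writes out: the identification of the Friedrichs extension with the diagonal operator, the normal-extension reading of the functional on $\fndef{B}(\smnumberdensity_L)$, and the bound giving $\smpartitionfunc_{\sminvtemperature,\lambda,\smchemicalpotential,L}<\infty$ for every real $\smchemicalpotential$ by splitting off half the quadratic term. The paper simply takes these as clear.
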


\begin{proof}
The finite-volume Fock space $\spfock_{\txtbsn,L}$ has the occupation-number orthonormal basis
$\fml{\ket{\mathsf{n}}}{\mathsf{n}\in\mathcal{O}_L}$.
On this basis we obtain
$$\opfocknumber_{\txtbsn,L}\ket{\mathsf{n}}
=
\rbk{\sum_{k\in\setlattice_L^d}\mathsf{n}_k}\ket{\mathsf{n}},
\quad
\smnumberdensity_L\ket{\mathsf{n}}
=
\frac{1}{V_L}
\rbk{\sum_{k\in\setlattice_L^d}\mathsf{n}_k}\ket{\mathsf{n}}.$$
The finite-volume operator
$\physham_{\txtbsn,\txtfr,0,L}
-\smchemicalpotential\opfocknumber_{\txtbsn,L}$
and the number operator commute and are diagonal in the same basis:
$$\rbk{\physham_{\txtbsn,\txtfr,0,L}
-\smchemicalpotential\opfocknumber_{\txtbsn,L}}\ket{\mathsf{n}}
=
\rbk{\sum_{k\in\setlattice_L^d}
\rbk{\physham[h]_{\txtparticle,0,L}(k)-\smchemicalpotential}\mathsf{n}_k}
\ket{\mathsf{n}}.$$
Therefore it holds that
$$\physham_{\txtbsn,\txtmeanfield,\smchemicalpotential,L}\ket{\mathsf{n}}
=
\rbk{\sum_{k\in\setlattice_L^d}
\rbk{\physham[h]_{\txtparticle,0,L}(k)-\smchemicalpotential}\mathsf{n}_k
+\frac{\lambda}{2V_L}
\rbk{\sum_{k\in\setlattice_L^d}\mathsf{n}_k}^{2}}
\ket{\mathsf{n}}.$$
The spectral projection $\fndef{B}(\smnumberdensity_L)$ is also diagonal and acts by multiplication with
$\fndef{B}\rbk{\frac{1}{V_L}\sum_{k\in\setlattice_L^d}\mathsf{n}_k}$.
For the fixed Borel set $B$, put
$$\mathcal{O}_L(B)
=
\set{\mathsf{n}\in\mathcal{O}_L}
{\frac{1}{V_L}\sum_{k\in\setlattice_L^d}\mathsf{n}_k\in B}.$$
Then the trace cut by this spectral projection is the subseries
$$\begin{aligned}
&\sqfun{\trace_{\spfock_{\txtbsn,L}}}
{\napiernum^{-\sminvtemperature\physham_{\txtbsn,\txtmeanfield,\smchemicalpotential,L}}
\fndef{B}(\smnumberdensity_L)}
\\ 
&=
\sum_{\mathsf{n}\in\mathcal{O}_L(B)}
\fnexp{-\sminvtemperature
\rbk{\sum_{k\in\setlattice_L^d}
\rbk{\physham[h]_{\txtparticle,0,L}(k)-\smchemicalpotential}\mathsf{n}_k
+\frac{\lambda}{2V_L}
\rbk{\sum_{k\in\setlattice_L^d}\mathsf{n}_k}^{2}}}
\\ 
&=
\sum_{\mathsf{n}\in\mathcal{O}_L}
\fnexp{-\sminvtemperature
\rbk{\sum_{k\in\setlattice_L^d}
\rbk{\physham[h]_{\txtparticle,0,L}(k)-\smchemicalpotential}\mathsf{n}_k
+\frac{\lambda}{2V_L}
\rbk{\sum_{k\in\setlattice_L^d}\mathsf{n}_k}^{2}}}
\fndef{B}\rbk{\frac{1}{V_L}
\sum_{k\in\setlattice_L^d}\mathsf{n}_k}.
\end{aligned}$$
Since the full Gibbs trace is finite, this positive subseries is well-defined for every Borel set $B$.
The last expression is exactly the value on $B$ of the weighted sum of point masses in \eqref{expedition0012728}.
Taking the trace in the occupation-number basis gives the formula for
$K_{\sminvtemperature,\lambda,\smchemicalpotential,L}(B)$.
Since the denominator is the same trace with $B=\fldreal_{\geq0}$, the total mass is one.
Hence this trace formula is exactly the spectral distribution of $\smnumberdensity_L$ in the finite-volume grand-canonical Gibbs state.
\end{proof}

The corresponding free density law obtained by deleting the quadratic mean-field weight from \eqref{expedition0012728} is explicitly \begin{equation}\label{expedition0018007}
\begin{aligned}
K_{\sminvtemperature,0,\smchemicalpotential,L}
&=
\frac{1}{\smpartitionfunc_{\sminvtemperature,0,\smchemicalpotential,L}}
\sum_{\mathsf{n}\in\mathcal{O}_L}
\fnexp{-\sminvtemperature
\sum_{k\in\setlattice_L^d}
\rbk{\physham[h]_{\txtparticle,0,L}(k)-\smchemicalpotential}\mathsf{n}_k}
\delta_{\frac{1}{V_L}
\sum_{k\in\setlattice_L^d}
\mathsf{n}_k},
\\ 
\smpartitionfunc_{\sminvtemperature,0,\smchemicalpotential,L}
&=
\sum_{\mathsf{n}\in\mathcal{O}_L}
\fnexp{-\sminvtemperature
\sum_{k\in\setlattice_L^d}
\rbk{\physham[h]_{\txtparticle,0,L}(k)-\smchemicalpotential}\mathsf{n}_k}.
\end{aligned}
\end{equation}

\begin{prop}[Exact finite-volume density weighting]\label{expedition0012680}
The mean-field density law is obtained from the free density law by
\begin{equation}\label{expedition0012603}
\opdmsr{K_{\sminvtemperature,\lambda,\smchemicalpotential,L}(r)}
=
\frac{\fnexp{-\frac{\sminvtemperature \lambda V_L r^2}{2}}}
{\int_{\fldreal_{\geq0}}
\fnexp{-\frac{\sminvtemperature \lambda V_L s^2}{2}}
\opdmsr{K_{\sminvtemperature,0,\smchemicalpotential,L}(s)}}
\opdmsr{K_{\sminvtemperature,0,\smchemicalpotential,L}(r)}.
\end{equation}
\end{prop}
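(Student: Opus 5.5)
The identity will follow by comparing the two explicit point-mass expansions in \eqref{expedition0012728} and \eqref{expedition0018007}. The key observation is that the mean-field weight depends on the occupation configuration $\mathsf{n}$ only through the total density $r(\mathsf{n})=\frac{1}{V_L}\sum_{k\in\setlattice_L^d}\mathsf{n}_k$. Indeed,
$$\frac{\lambda}{2V_L}\rbk{\sum_{k\in\setlattice_L^d}\mathsf{n}_k}^{2}=\frac{\lambda V_L}{2}\,r(\mathsf{n})^{2}.$$
Hence each summand of \eqref{expedition0012728} factors as the free Boltzmann weight times $\fnexp{-\frac{\sminvtemperature\lambda V_L}{2}r(\mathsf{n})^2}$, and this factor is constant on the support point $r(\mathsf{n})$ of the Dirac mass $\delta_{r(\mathsf{n})}$.

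The proof then proceeds in three steps. First, I show that for every nonnegative Borel function $g$ on $\fldreal_{\geq0}$,
$$\sum_{\mathsf{n}\in\mathcal{O}_L}w_0(\mathsf{n})\,g\rbk{r(\mathsf{n})}\,\delta_{r(\mathsf{n})}=g\cdot\sum_{\mathsf{n}\in\mathcal{O}_L}w_0(\mathsf{n})\,\delta_{r(\mathsf{n})}$$
as measures, where $w_0$ is the free weight. This follows because $g(x)\,\delta_x = g\,\delta_x$ for each point mass, and a countable sum of positive measures may be multiplied termwise by $g$ (monotone convergence). Applying this with $g(r)=\fnexp{-\sminvtemperature\lambda V_L r^2/2}$ gives
$$\smpartitionfunc_{\sminvtemperature,\lambda,\smchemicalpotential,L}\,K_{\sminvtemperature,\lambda,\smchemicalpotential,L}=g\cdot\smpartitionfunc_{\sminvtemperature,0,\smchemicalpotential,L}\,K_{\sminvtemperature,0,\smchemicalpotential,L}.$$
Second, I evaluate both sides on all of $\fldreal_{\geq0}$. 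This yields
$$\smpartitionfunc_{\sminvtemperature,\lambda,\smchemicalpotential,L}=\smpartitionfunc_{\sminvtemperature,0,\smchemicalpotential,L}\int g\,\opdmsr{K_{\sminvtemperature,0,\smchemicalpotential,L}}.$$
Dividing the measure identity by this relation gives exactly \eqref{expedition0012603}.

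The main obstacle is that the free law \eqref{expedition0018007} is only defined when $\smpartitionfunc_{\sminvtemperature,0,\smchemicalpotential,L}<\infty$, which holds only for $\smchemicalpotential<0=\min\physham[h]_{\txtparticle,0,L}$. The mean-field trace, by contrast, is finite for every $\smchemicalpotential\in\fldreal$ by Proposition \ref{expedition0012664}. For $\smchemicalpotential<0$ the argument above applies verbatim, and the normalizing integral is strictly positive because $g>0$ and the free law is a probability measure.

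For $\smchemicalpotential\geq0$, I plan to read \eqref{expedition0012603} in unnormalized form. I replace $\opdmsr{K_{\sminvtemperature,0,\smchemicalpotential,L}}$ by the $\sigma$-finite measure $\sum_{\mathsf{n}}w_0(\mathsf{n})\delta_{r(\mathsf{n})}$. This measure is finite on bounded sets, since $\set{\mathsf{n}}{r(\mathsf{n})\leq R}$ is a finite set. The normalizing integral of $g$ against this measure equals $\smpartitionfunc_{\sminvtemperature,\lambda,\smchemicalpotential,L}<\infty$. The ratio is then well defined, and the same computation goes through. I will note explicitly that the $\smpartitionfunc_{\sminvtemperature,0,\smchemicalpotential,L}$ factors cancel between numerator and denominator, so the formula does not depend on the finiteness of the free partition function.
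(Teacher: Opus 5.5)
Your argument is correct and is the paper's argument in measure-theoretic form. The paper uses that $\physham_{\txtbsn,\txtfr,0,L}-\smchemicalpotential\opfocknumber_{\txtbsn,L}$ and $\opfocknumber_{\txtbsn,L}$ are simultaneously diagonal in the occupation basis, so the mean-field term is the scalar $\fnexp{-\sminvtemperature\lambda V_L r^2/2}$ on each level set $\cbk{\opfocknumber_{\txtbsn,L}/V_L=r}$; it then sums over configurations with the same $r$ and normalizes, exactly as you do with the point-mass expansions.

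Your extra treatment of $\smchemicalpotential\geq0$, where the free law is indeed undefined and which the paper does not address, needs one correction. The set $\set{\mathsf{n}}{r(\mathsf{n})\leq R}$ is not finite, because $\setlattice_L^d$ is infinite. You do not need that claim, however: $\int g\,d\nu=\smpartitionfunc_{\sminvtemperature,\lambda,\smchemicalpotential,L}<\infty$ with $g>0$ already makes the ratio well defined. That finiteness follows from bounding each fixed-$N$ sum by $\napiernum^{\sminvtemperature\smchemicalpotential N}\rbk{\sum_{k}\napiernum^{-\sminvtemperature\abs{k}^2/2}}^N$, not from the lower bound in Proposition~\ref{expedition0012664}, which alone does not give trace class.
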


\begin{proof}
The operators
$\physham_{\txtbsn,\txtfr,0,L}
-\smchemicalpotential\opfocknumber_{\txtbsn,L}$
and $\opfocknumber_{\txtbsn,L}$ commute,
hence they are simultaneously diagonal in the occupation number basis.
On the spectral subspace where $\frac{\opfocknumber_{\txtbsn,L}}{V_L}
= r$, the mean-field perturbation contributes the scalar factor
$\fnexp{-\frac{\sminvtemperature \lambda V_L r^2}{2}}$.
Taking the trace over all occupation configurations with the same value of $r$ gives \eqref{expedition0012603}.
\end{proof}

\begin{prop}[Large deviation principle for the free finite-volume density]\label{expedition0012645}
The free density laws $K_{\sminvtemperature,0,\smchemicalpotential,L}$ satisfy a large-deviation principle on $\fldreal_{\geq0}$
with speed $V_L$ and good rate function given by
$$\begin{aligned}
I_{\txtfr,\sminvtemperature,\smchemicalpotential}(r)
&=
\sup_{q<-\sminvtemperature\smchemicalpotential}
\rbk{q r
-\Lambda_{\txtfr,\sminvtemperature,\smchemicalpotential}(q)},
\\ 
\Lambda_{\txtfr,\sminvtemperature,\smchemicalpotential}(q)
&=
\lim_{L\to\infty}
\frac{1}{V_L}
\log
\int_{\fldreal_{\geq0}}
\napiernum^{qV_L r}
\opdmsr{K_{\sminvtemperature,0,\smchemicalpotential,L}(r)}
=
\lim_{L\to\infty}
\frac{1}{V_L}
\log
\frac{\smpartitionfunc_{\sminvtemperature,0,\smchemicalpotential+\frac{q}{\sminvtemperature},L}}
{\smpartitionfunc_{\sminvtemperature,0,\smchemicalpotential,L}},
\\ 
&=
\frac{1}{(2\pi)^d}
\int_{\fldreal^d}
\fun{\log}
{\frac{1-\napiernum^{-\sminvtemperature(\physham[h]_{\txtparticle,0}(k)-\smchemicalpotential)}}
{1-\napiernum^{-\sminvtemperature(\physham[h]_{\txtparticle,0}(k)-\smchemicalpotential)+q}}}
\opdmsr{k},
\quad
q < - \sminvtemperature \smchemicalpotential.
\end{aligned}$$
More explicitly, for every closed set $F\subset\fldreal_{\geq0}$,
the upper inequality holds:
$$\limsup_{L\to\infty}
\frac{1}{V_L}
\log
\fun{K_{\sminvtemperature,0,\smchemicalpotential,L}}{F}
\leq
-\inf_{r\in F}
I_{\txtfr,\sminvtemperature,\smchemicalpotential}(r).$$
On the other hand,
for every open set $G\subset\fldreal_{\geq0}$,
the lower inequality follows:
$$\liminf_{L\to\infty}
\frac{1}{V_L}
\log
\fun{K_{\sminvtemperature,0,\smchemicalpotential,L}}{G}
\geq
-\inf_{r\in G}
I_{\txtfr,\sminvtemperature,\smchemicalpotential}(r).$$
\end{prop}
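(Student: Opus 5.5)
The plan is to apply the Gärtner--Ellis theorem to the free density laws $K_{\sminvtemperature,0,\smchemicalpotential,L}$ at speed $V_L$, and then to repair the step that Gärtner--Ellis cannot supply because the limiting cumulant generating function is not steep.

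\textbf{Step 1: exact finite-volume cumulant.} For $q<-\sminvtemperature\smchemicalpotential$, the factorization of \eqref{expedition0018007} over the momentum lattice gives the exact identity
$$\int_{\fldreal_{\geq0}}\napiernum^{qV_L r}\opdmsr{K_{\sminvtemperature,0,\smchemicalpotential,L}(r)}=\frac{\smpartitionfunc_{\sminvtemperature,0,\smchemicalpotential+q/\sminvtemperature,L}}{\smpartitionfunc_{\sminvtemperature,0,\smchemicalpotential,L}}=\prod_{k\in\setlattice_L^d}\frac{1-\napiernum^{-\sminvtemperature(\physham[h]_{\txtparticle,0,L}(k)-\smchemicalpotential)}}{1-\napiernum^{-\sminvtemperature(\physham[h]_{\txtparticle,0,L}(k)-\smchemicalpotential)+q}}.$$
Here $\smchemicalpotential<0$ is understood, since the free trace is otherwise infinite at $k=0$. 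Because $q<-\sminvtemperature\smchemicalpotential$, every geometric series converges, including the $k=0$ one. Taking $\frac{1}{V_L}\log$ turns the product into a Riemann sum over $\setlattice_L^d$ with mesh $(2\pi/L)^d$. The summand is continuous and bounded, and it decays like $\napiernum^{-\sminvtemperature\physham[h]_{\txtparticle,0}(k)}$ at infinity, so the sum converges to the stated integral $\Lambda_{\txtfr,\sminvtemperature,\smchemicalpotential}(q)$. For $q\geq-\sminvtemperature\smchemicalpotential$ the $k=0$ factor diverges, so $\Lambda=+\infty$ there.

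\textbf{Step 2: apply Gärtner--Ellis.} The function $\Lambda$ is finite on $(-\infty,q_0)$ with $q_0=-\sminvtemperature\smchemicalpotential>0$, so $0$ lies in the interior of its effective domain. It is convex and differentiable on that interval. This gives the upper bound for closed sets, with rate function the Legendre transform $I_{\txtfr,\sminvtemperature,\smchemicalpotential}$, which is good because $0$ is interior. The exponential tightness needed for a full LDP follows from the Chebyshev bound with some $0<q<q_0$. The lower bound holds on the exposed points, namely $r=\Lambda'(q)$ for $q<q_0$. This covers $r\in(0,r_{\max})$, where
$$r_{\max}=\lim_{q\uparrow q_0}\Lambda'(q)=\frac{1}{(2\pi)^d}\int_{\fldreal^d}\frac{1}{\napiernum^{\sminvtemperature\physham[h]_{\txtparticle,0}(k)}-1}\opdmsr{k},$$
which is finite for $d\geq3$.

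\textbf{Step 3: the main obstacle, $r\geq r_{\max}$.} For these $r$, $\Lambda$ is not steep, and $I$ is affine with slope $q_0$, i.e. $I(r)=q_0 r-\Lambda(q_0)$. Gärtner--Ellis gives no lower bound here, so I would prove it by hand through zero-mode condensation. Fix $r>r_{\max}$ and $\varepsilon>0$. Consider configurations with $\mathsf{n}_0=\lfloor V_L(r-r_{\max})\rfloor$, with the nonzero modes distributed by their own free product law. By Step 1 applied without the $k=0$ factor, and by the law of large numbers (the tilted Gärtner--Ellis mean at $q\uparrow q_0$), the nonzero-mode density concentrates near $r_{\max}$ with probability bounded below. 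The cost of the zero mode is
$$\napiernum^{\sminvtemperature\smchemicalpotential\mathsf{n}_0}\rbk{1-\napiernum^{\sminvtemperature\smchemicalpotential}}=\napiernum^{-q_0V_L(r-r_{\max})+o(V_L)}.$$
The nonzero-mode block at tilt close to $q_0$ has a log-probability matching $-I(r_{\max})V_L$. Combining the two gives
$$\liminf_{L\to\infty}\frac{1}{V_L}\log K_{\sminvtemperature,0,\smchemicalpotential,L}\rbk{(r-\varepsilon,r+\varepsilon)}\geq -q_0(r-r_{\max})-I(r_{\max})=-I(r).$$
The care needed is in controlling the tilted nonzero-mode density uniformly as $q\uparrow q_0$. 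For this I would use a tilt $q_0-\delta$ and let $\delta\to0$ after $L\to\infty$.
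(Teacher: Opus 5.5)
Your proposal is correct and follows essentially the same route as the paper. Both compute the exact cumulant via the chemical-potential shift and mode factorization, and apply G\"artner--Ellis on $(-\infty,-\beta\mu)$ with exponential tightness from a Chebyshev bound. Both then prove the missing lower bound beyond the critical density by hand: about $V_L(r-r_{\max})$ particles go into the independent zero mode at cost $e^{\beta\mu m}$, while the nonzero modes sit near $r_{\max}$ at rate $I(r_{\max})$, obtained from tilts $q\uparrow-\beta\mu$.

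Two points need tidying. First, $r=r_{\max}$ itself should be handled explicitly, either by nearby exposed points and continuity of $I$ as the paper does, or by your construction with $\mathsf{n}_0=0$. Second, $\Lambda(q_0)$ in your affine formula for $I$ must mean the left limit $\lim_{q\uparrow q_0}\Lambda(q)$, since you set $\Lambda(q_0)=+\infty$ in Step 1.
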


The function \(\Lambda_{\txtfr,\sminvtemperature,\smchemicalpotential}\) in Proposition \ref{expedition0012645} is the difference of free Bose pressures. This is proved in the pressure calculation \eqref{expedition0018014} in Section \ref{expedition0018010}.

\begin{proof}
For the probabilistic discussion, see Section \ref{expedition0018010}.
\end{proof}

\begin{thm}[Density-law realization of the mean-field Kac-function collapse]\label{expedition0012647}
Let $I_{\txtfr,\sminvtemperature,\smchemicalpotential}$ be the good rate function
from Proposition \ref{expedition0012645}.
Then the function
$$I_{\txtmeanfield,\sminvtemperature,\smchemicalpotential,\lambda}(r)
=
I_{\txtfr,\sminvtemperature,\smchemicalpotential}(r)
+\frac{\sminvtemperature\lambda r^2}{2}
-\inf_{s\geq0}
\rbk{I_{\txtfr,\sminvtemperature,\smchemicalpotential}(s)
+\frac{\sminvtemperature\lambda s^2}{2}}$$
has a unique minimizer,
and denote it by $\bar{\smnumberdensity}_{\txtbsn}$.
Furthermore $K_{\sminvtemperature,\lambda,\smchemicalpotential,L}$
weakly converges to the delta measure $\delta_{\bar{\smnumberdensity}_{\txtbsn}}$.
\end{thm}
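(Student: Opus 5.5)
The plan is to combine the large-deviation principle of Proposition \ref{expedition0012645} with the exact Gibbs tilting of Proposition \ref{expedition0012680}, using Varadhan's lemma in its tilted-LDP form. Write $F(r)=\frac{\sminvtemperature\lambda r^2}{2}$. Then \eqref{expedition0012603} says that $K_{\sminvtemperature,\lambda,\smchemicalpotential,L}$ is the $\napiernum^{-V_L F}$-tilt of $K_{\sminvtemperature,0,\smchemicalpotential,L}$. Since $F$ is continuous and bounded below (it is nonnegative), the tilted family satisfies an LDP with speed $V_L$ and rate function $I_{\txtfr}+F-\inf(I_{\txtfr}+F)$, which is exactly $I_{\txtmeanfield,\sminvtemperature,\smchemicalpotential,\lambda}$. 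The standard proof applies: split the upper bound on closed sets into a compact part and a tail using goodness of $I_{\txtfr}$, and obtain the lower bound on open sets by local continuity of $F$. For the normalising constant, Varadhan's lemma gives $\frac{1}{V_L}\log\int \napiernum^{-V_LF}\opdmsr{K_{\sminvtemperature,0,\smchemicalpotential,L}}\to-\inf(I_{\txtfr}+F)$. Only the upper-tail (moment) condition is needed, and it is trivial because $-F\leq 0$.

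Next, show that $I_{\txtmeanfield}$ is a good rate function with a unique zero. Goodness follows because $I_{\txtfr}$ is good and $F\geq0$ is continuous, so sublevel sets of $I_{\txtfr}+F$ are closed subsets of compact sets. Existence of a minimiser then follows from lower semicontinuity and compact sublevel sets. For uniqueness, observe that $I_{\txtfr}$ is a Legendre transform, given as a supremum of affine functions of $r$, and is therefore convex. Since $\lambda>0$, $F$ is strictly convex, so the sum $I_{\txtfr}+F$ is strictly convex on $\fldreal_{\geq0}$. A strictly convex function has at most one minimiser, and this minimiser is $\bar{\smnumberdensity}_{\txtbsn}$.

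Weak convergence to $\delta_{\bar{\smnumberdensity}_{\txtbsn}}$ is then routine. For $\varepsilon>0$ take the closed set $C_\varepsilon=\set{r\geq0}{\abs{r-\bar{\smnumberdensity}_{\txtbsn}}\geq\varepsilon}$. By goodness, lower semicontinuity and uniqueness of the zero, $\inf_{C_\varepsilon}I_{\txtmeanfield}=:c_\varepsilon>0$: a minimising sequence lies in a compact sublevel set, so the infimum is attained at some point of $C_\varepsilon$, where the value is positive. The LDP upper bound gives $K_{\sminvtemperature,\lambda,\smchemicalpotential,L}(C_\varepsilon)\leq \napiernum^{-V_L c_\varepsilon/2}$ for large $L$, which tends to $0$. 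Hence $\int g\,\opdmsr{K_{\sminvtemperature,\lambda,\smchemicalpotential,L}}\to g(\bar{\smnumberdensity}_{\txtbsn})$ for every bounded continuous $g$.

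The main obstacle I expect is the delicate part of Proposition \ref{expedition0012645} itself, which I am allowed to assume. Specifically, $\Lambda_{\txtfr}$ is finite only for $q<-\sminvtemperature\smchemicalpotential$ (and only for $\smchemicalpotential<0$ in the free trace-class regime), so it is not steep. The Gärtner--Ellis theorem therefore gives the lower bound only on exposed points, and $I_{\txtfr}$ becomes affine beyond the critical density $\smnumberdensity_{\txtbsn,\txtcritical}(\sminvtemperature)$. Because I take the LDP as given, this issue does not enter my argument. I need only two properties of $I_{\txtfr}$: convexity, which comes from the sup-formula, and goodness, which requires $I_{\txtfr}(r)\to\infty$ as $r\to\infty$. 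The latter holds when $\smchemicalpotential<0$, since then $q>0$ is admissible and $I_{\txtfr}(r)\geq qr-\Lambda_{\txtfr}(q)$. If goodness were in doubt, I would instead rely on the quadratic tilt, since $F(r)\to\infty$ superlinearly, to obtain exponential tightness of the tilted laws directly. That tightness, together with strict convexity, suffices for the argument.
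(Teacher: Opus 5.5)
Your proposal is correct and follows essentially the same route as the paper. Both arguments obtain existence of the minimizer from goodness and lower semicontinuity, and uniqueness from strict convexity of $I_{\txtfr,\sminvtemperature,\smchemicalpotential}(r)+\frac{\sminvtemperature\lambda r^2}{2}$. Both obtain the tilted LDP from \eqref{expedition0012603} together with the Laplace/Varadhan asymptotics of the normalizer (Proposition \ref{expedition0018015}), and weak convergence from the closed-set upper bound on the complement of a neighbourhood of $\bar{\smnumberdensity}_{\txtbsn}$. The one detail you leave implicit is that $I_{\txtfr,\sminvtemperature,\smchemicalpotential}+\frac{\sminvtemperature\lambda r^2}{2}$ is finite somewhere, which is also what makes $\inf_{s\geq0}(\cdots)$ finite. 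The paper checks this explicitly via $I_{\txtfr,\sminvtemperature,\smchemicalpotential}(r_\ast)=0$ at $r_\ast=\Lambda_{\txtfr,\sminvtemperature,\smchemicalpotential}'(0)$.
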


In the mean-field Bose gas of \cite[Section 4.4]{AndreVerbeure001}, this is the statement that the Kac function is a delta function and that the canonical and grand-canonical equilibrium states are equivalent in the thermodynamic limit.

\begin{proof}
See Section \ref{expedition0018011}.
\end{proof}

From this point on, \(\bar{\smnumberdensity}_{\txtbsn}\) denotes the density selected by the Kac collapse at the parameter \(\smchemicalpotential\). The Euler--Lagrange multiplier in the density-reduced variational problem is denoted by \(\smchemicalpotential_{\mathrm{sel}}\). No statement below identifies \(\smchemicalpotential\) with \(\smchemicalpotential_{\mathrm{sel}}\). The equality \(\smchemicalpotential_{\mathrm{sel}}=\lambda\bar{\smnumberdensity}_{\txtbsn}\) is only the complementary condition in the condensed case \eqref{expedition0012606}. In other words, the Kac collapse selects only the scalar total density. It does not by itself imply BEC, determine a zero-mode Dirac point mass in the covariance, or define a resolvent ideal. The zero-mode covariance form \eqref{expedition0012440} becomes nonzero only after the condensed case \eqref{expedition0012606} has been selected and the nonzero-mode density has been compared with its saturated value.

\subsection{Equilibrium Equations}\label{equilibrium-equations}

Two chemical-potential symbols are used in the mean-field section. The parameter \(\smchemicalpotential\) is the finite-volume parameter in the density law. The parameter \(\smchemicalpotential_{\mathrm{sel}}\) is the Euler--Lagrange multiplier after the density has been selected. With this convention the selected nonzero-mode energy is \(\physham[h]_{\txtparticle,0}(k)-\smchemicalpotential_{\mathrm{sel}}+\lambda\bar{\smnumberdensity}_{\txtbsn}\), and in the condensed case \eqref{expedition0012606} the complementarity equation gives \(\smchemicalpotential_{\mathrm{sel}}=\lambda\bar{\smnumberdensity}_{\txtbsn}\).

The term Euler equations means the first-order necessary conditions for the density-reduced mean-field free-energy variational problem after the Kac-measure collapse in Theorem \ref{expedition0012647}. The admissible variables are the zero-mode density \(\rho_0\geq0\) and the nonzero-mode occupation density \(n(k)\geq0\). For nonzero modes, we vary \(n\) by compactly supported real test functions \(\eta(k)\) in regions where \(n(k)>0\), with \(\abs{\epsilon}\) small enough that \(n+\epsilon\eta\geq0\). At such an interior minimizer these variations give vanishing first variation of the mean-field free-energy density \(\smfreeenergy_{\txtmeanfield}\) introduced in Proposition \ref{expedition0012649}. In the notation of that proposition this is the stationarity condition \(\delta\smfreeenergy_{\txtmeanfield}/\delta n(k)=0\). For the zero mode, the admissible set is the half-line \(\closedinterval{0}{\infty}\), so the Euler condition is the one-sided variational inequality \[\frac{\partial\smfreeenergy_{\txtmeanfield}}{\partial\rho_0}\geq0,
\quad
\rho_0\frac{\partial\smfreeenergy_{\txtmeanfield}}{\partial\rho_0}=0.\] The mean-field interaction does not change the Bose entropy term; it shifts the one-particle energy by the selected density \(\bar{\smnumberdensity}_{\txtbsn}\). Thus the nonzero modes are governed by the effective energy \(\physham[h]_{\txtparticle,0}(k)-\smchemicalpotential_{\mathrm{sel}}+\lambda\bar{\smnumberdensity}_{\txtbsn}\), while the zero mode satisfies a complementary condition for the condensate density.

\begin{prop}[Mean-field free-energy density and complementarity condition]\label{expedition0012649}
Let $\rho_0\geq0$ be the zero-mode density, let $n(k)\geq0$ be the nonzero-mode occupation density,
and set
$$\rho
=
\rho_0
+\frac{1}{(2\pi)^d}\int_{\fldreal^d}n(k)\opdmsr{k}.$$
Up to constants independent of $\rho_0$ and $n$, the mean-field free-energy density is
$$\begin{aligned}
\smfreeenergy_{\txtmeanfield}(\rho_0,n)
&=
\frac{1}{(2\pi)^d}
\int_{\fldreal^d}
\rbk{\physham[h]_{\txtparticle,0}(k)-\smchemicalpotential_{\mathrm{sel}}}n(k)\opdmsr{k}
-\smchemicalpotential_{\mathrm{sel}}\rho_0
+\frac{\lambda}{2}
\rbk{\rho_0
+\frac{1}{(2\pi)^d}\int_{\fldreal^d}n(k)\opdmsr{k}}^2
\\ 
&\quad
-\frac{1}{\sminvtemperature}
\frac{1}{(2\pi)^d}
\int_{\fldreal^d}
\rbk{\rbk{1+n(k)}
\fun{\log}{1+n(k)}
-n(k) \log n(k)}
\opdmsr{k}.
\end{aligned}$$
At the density $\bar{\smnumberdensity}_{\txtbsn}$ selected by Theorem \ref{expedition0012647},
we evaluate the Euler equations at $\rho=\bar{\smnumberdensity}_{\txtbsn}$.
The first variation in $n(k)$ is
\begin{equation}\label{expedition0012729}
\frac{\delta\smfreeenergy_{\txtmeanfield}}{\delta n(k)}
=
\physham[h]_{\txtparticle,0}(k)-\smchemicalpotential_{\mathrm{sel}}+\lambda\bar{\smnumberdensity}_{\txtbsn}
-\frac{1}{\sminvtemperature}
\log\frac{1+n(k)}{n(k)}.
\end{equation}
Thus the stationarity equation $\frac{\delta\smfreeenergy_{\txtmeanfield}}{\delta n(k)}=0$ gives
$$n(k)
=
\frac{1}
{\napiernum^{\sminvtemperature
\rbk{\physham[h]_{\txtparticle,0}(k)-\smchemicalpotential_{\mathrm{sel}}+\lambda\bar{\smnumberdensity}_{\txtbsn}}}-1}.$$
For the zero mode, the one-sided variation is
$$\frac{\partial\smfreeenergy_{\txtmeanfield}}{\partial\rho_0}
=
\lambda\bar{\smnumberdensity}_{\txtbsn}-\smchemicalpotential_{\mathrm{sel}}.$$
Since the constraint is $\rho_0\geq0$, minimization over the half-line gives the complementarity condition
\begin{equation}\label{expedition0018044}
\rho_0\rbk{\lambda\bar{\smnumberdensity}_{\txtbsn}-\smchemicalpotential_{\mathrm{sel}}}=0,
\quad
\lambda\bar{\smnumberdensity}_{\txtbsn}-\smchemicalpotential_{\mathrm{sel}}\geq0.
\end{equation}
\end{prop}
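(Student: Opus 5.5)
The plan is a direct variational computation with the functional $\smfreeenergy_{\txtmeanfield}$ as written. The only nonlinear coupling between $\rho_0$ and $n$ is the quadratic term $\frac{\lambda}{2}\rho^{2}$, with $\rho=\rho_0+(2\pi)^{-d}\int n(k)\opdmsr{k}$. First I will record the derivatives of this term. For a compactly supported real $\eta$, the directional derivative of $\frac{\lambda}{2}\rho^{2}$ at $n$ in direction $\eta$ is $\lambda\rho\,(2\pi)^{-d}\int\eta(k)\opdmsr{k}$. Its partial derivative in $\rho_0$ is $\lambda\rho$. Evaluating at the point $\rho=\bar{\smnumberdensity}_{\txtbsn}$ selected by Theorem \ref{expedition0012647} turns both into the scalar shift $\lambda\bar{\smnumberdensity}_{\txtbsn}$. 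The linear terms contribute $\physham[h]_{\txtparticle,0}(k)-\smchemicalpotential_{\mathrm{sel}}$ to the $n$-variation and $-\smchemicalpotential_{\mathrm{sel}}$ to the $\rho_0$-derivative.

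Next I differentiate the entropy integrand. Pointwise, $\frac{d}{dn}\bigl[(1+n)\log(1+n)-n\log n\bigr]=\log(1+n)-\log n$. With the prefactor $-\frac{1}{\sminvtemperature}$ this gives the term $-\frac{1}{\sminvtemperature}\log\frac{1+n(k)}{n(k)}$ in \eqref{expedition0012729}. To justify interchanging derivative and integral, I restrict to $\eta$ supported in a compact set where $n$ is bounded away from $0$ and $\infty$; the admissible-variation convention of the preceding subsection provides exactly this. Dominated convergence then applies to the difference quotient in $\epsilon$, since the integrand's $n$-derivative is uniformly bounded on that set for small $\abs{\epsilon}$. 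Collecting terms and using the fundamental lemma of the calculus of variations yields \eqref{expedition0012729} as a pointwise identity. Setting it to zero, exponentiating and solving $\frac{1+n}{n}=\napiernum^{\sminvtemperature E(k)}$, with $E(k)=\physham[h]_{\txtparticle,0}(k)-\smchemicalpotential_{\mathrm{sel}}+\lambda\bar{\smnumberdensity}_{\txtbsn}$, gives the Bose--Einstein formula. This requires $E(k)>0$ on the support, which is automatic whenever a positive finite solution $n(k)$ exists.

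For the zero mode I compute $\partial_{\rho_0}\smfreeenergy_{\txtmeanfield}=-\smchemicalpotential_{\mathrm{sel}}+\lambda\rho$. At $\rho=\bar{\smnumberdensity}_{\txtbsn}$ this equals $\lambda\bar{\smnumberdensity}_{\txtbsn}-\smchemicalpotential_{\mathrm{sel}}$. Since $\rho_0$ ranges over $\closedinterval{0}{\infty}$, a minimizer satisfies the one-sided (KKT) conditions. If $\rho_0>0$, two-sided variations force the derivative to vanish. If $\rho_0=0$, only nonnegative increments are allowed, so the derivative must be $\geq0$. Together these give \eqref{expedition0018044}.

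The main obstacle is conceptual rather than computational. The derivatives must be evaluated with $\rho$ frozen at the Kac-selected value $\bar{\smnumberdensity}_{\txtbsn}$, while $\smchemicalpotential_{\mathrm{sel}}$ is treated as the multiplier of the density-reduced problem rather than as the finite-volume parameter $\smchemicalpotential$. I will state explicitly that the quadratic term is differentiated first and then evaluated at the selected density, which is legitimate because the selected minimizer has total density $\bar{\smnumberdensity}_{\txtbsn}$. The regularity caveat near points where $n(k)\to0$ or $n(k)\to\infty$ (e.g.\ $k\to0$ in the condensed case) is handled by the compact-support restriction on $\eta$ described above.
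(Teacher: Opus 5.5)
Your proposal is correct and follows essentially the same route as the paper: differentiate the linear terms, the quadratic term $\frac{\lambda}{2}\rho^{2}$ and the Bose entropy integrand $(1+x)\log(1+x)-x\log x$, evaluate at $\rho=\bar{\smnumberdensity}_{\txtbsn}$, solve for the Bose occupation, and split the half-line constraint on $\rho_0$ into the cases $\rho_0=0$ and $\rho_0>0$. The paper additionally motivates the form of $\smfreeenergy_{\txtmeanfield}$ by assembling the kinetic, chemical-potential, mean-field and Bose-entropy contributions, which you take as given; conversely, you supply the compact-support and dominated-convergence justification of the first variation that the paper leaves implicit.
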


\begin{proof}
The free Bose gas contributes the energy density
$\frac{1}{(2\pi)^d}
\int_{\fldreal^d}
\physham[h]_{\txtparticle,0}(k) n(k)
\opdmsr{k}$
from the nonzero modes.
The zero mode has zero kinetic energy, and the grand-canonical chemical-potential term contributes
$-\smchemicalpotential_{\mathrm{sel}}\rho
=
-\smchemicalpotential_{\mathrm{sel}}\rho_0
-\frac{\smchemicalpotential_{\mathrm{sel}}}{(2\pi)^d}
\int_{\fldreal^d}n(k)\opdmsr{k}$.
The mean-field interaction contributes
$$\frac{\lambda}{2}\rho^2
=
\frac{\lambda}{2}
\rbk{\rho_0+\frac{1}{(2\pi)^d}\int_{\fldreal^d}n(k)\opdmsr{k}}^2.$$
For the bosonic occupation density $n(k)$, the entropy density is
$$\frac{1}{(2\pi)^d}
\int_{\fldreal^d}
\rbk{\rbk{1+n(k)}
\fun{\log}{1+n(k)}
-n(k)\log n(k)}
\opdmsr{k}.$$
The free energy is energy minus $\sminvtemperature^{-1}$ times entropy, so the formula for $\smfreeenergy_{\txtmeanfield}$ follows.

The derivative of the entropy integrand
$\rbk{1+x}
\fun{\log}{1+x} - x \log x$
is $\log\frac{1+x}{x}$.
Taking the first variation of $\smfreeenergy_{\txtmeanfield}$ with respect to
$n(k)$ and then inserting $\rho=\bar{\smnumberdensity}_{\txtbsn}$ gives
\eqref{expedition0012729}.
Solving $\frac{\delta\smfreeenergy_{\txtmeanfield}}{\delta n(k)}=0$ gives the Bose occupation formula.

For the zero mode, differentiating $\smfreeenergy_{\txtmeanfield}$ with respect to $\rho_0$ gives
$\frac{\partial\smfreeenergy_{\txtmeanfield}}{\partial\rho_0}
=
\lambda\rho-\smchemicalpotential_{\mathrm{sel}}$.
At the selected density this becomes
$\lambda\bar{\smnumberdensity}_{\txtbsn}-\smchemicalpotential_{\mathrm{sel}}$.
The admissible set for $\rho_0$ is the half-line $\closedinterval{0}{\infty}$.
If the minimum is attained at $\rho_0=0$, the one-sided derivative must be nonnegative:
$$\lambda\bar{\smnumberdensity}_{\txtbsn}-\smchemicalpotential_{\mathrm{sel}}\geq0
\quad\text{at }\rho_0=0.$$
If the minimum is attained at an interior point $\rho_0>0$, the ordinary derivative must vanish:
$$\lambda\bar{\smnumberdensity}_{\txtbsn}-\smchemicalpotential_{\mathrm{sel}}=0
\quad\text{when }\rho_0>0.$$
These two cases are equivalent to the complementarity condition.
\end{proof}

\begin{thm}[Mean-field equilibrium equations]
Let $\bar{\smnumberdensity}_{\txtbsn}$ be the limiting density selected
by Theorem \ref{expedition0012647},
and let $\smnumberdensity_{\txtbsn,0}(\sminvtemperature)$ be the condensate density
from \eqref{expedition0020735}.
Then the equilibrium equations are
\begin{equation}\label{expedition0012604}
\smnumberdensity_{\txtbsn,0}(\sminvtemperature)
\rbk{\smchemicalpotential_{\mathrm{sel}}-\lambda\bar{\smnumberdensity}_{\txtbsn}}=0
\end{equation}
and, for $k\ne0$,
\begin{equation}\label{expedition0012605}
E_{\bar{\smnumberdensity}_{\txtbsn}}(k)=\physham[h]_{\txtparticle,0}(k)-\smchemicalpotential_{\mathrm{sel}}+\lambda\bar{\smnumberdensity}_{\txtbsn},
\quad
n_{\bar{\smnumberdensity}_{\txtbsn}}(k)=
\frac{1}{\napiernum^{\sminvtemperature E_{\bar{\smnumberdensity}_{\txtbsn}}(k)}-1}.
\end{equation}
In the condensed or singular case,
i.e. $\smnumberdensity_{\txtbsn,0}(\sminvtemperature)>0$,
it holds that
\begin{equation}\label{expedition0012606}
\smchemicalpotential_{\mathrm{sel}}=\lambda\bar{\smnumberdensity}_{\txtbsn},
\quad
\bar{\smnumberdensity}_{\txtbsn}
=\smnumberdensity_{\txtbsn,0}(\sminvtemperature)
+\frac{1}{(2\pi)^d}
\int_{\fldreal^d}
\frac{1}
{\napiernum^{\sminvtemperature\physham[h]_{\txtparticle,0}(k)}-1}
\opdmsr{k}.
\end{equation}
In the normal case,
i.e. $\smnumberdensity_{\txtbsn,0}(\sminvtemperature)=0$,
it holds that
\begin{equation}\label{expedition0012607}
\bar{\smnumberdensity}_{\txtbsn}
=
\frac{1}{(2\pi)^d}
\int_{\fldreal^d}
\frac{1}
{\napiernum^{\sminvtemperature(\physham[h]_{\txtparticle,0}(k)-\smchemicalpotential_{\mathrm{sel}}+\lambda\bar{\smnumberdensity}_{\txtbsn})}-1}
\opdmsr{k}.
\end{equation}
\end{thm}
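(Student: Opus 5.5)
The proof will be read off directly from Proposition \ref{expedition0012649}, with the selected density $\bar{\smnumberdensity}_{\txtbsn}$ from Theorem \ref{expedition0012647} inserted throughout. Since $\bar{\smnumberdensity}_{\txtbsn}$ is the unique minimizer of the density-reduced rate function, the pair $(\rho_0,n)$ minimizing $\smfreeenergy_{\txtmeanfield}$ under the constraint $\rho=\bar{\smnumberdensity}_{\txtbsn}$ satisfies the Euler conditions derived there. I would identify $\rho_0$ with $\smnumberdensity_{\txtbsn,0}(\sminvtemperature)$ via \eqref{expedition0020735}. This identification should be stated explicitly: the zero-mode density is the excess of the total density over the integrated nonzero-mode occupation.

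\textbf{Step 1.} Equation \eqref{expedition0012604} is the first half of the complementarity condition \eqref{expedition0018044}, with $\rho_0=\smnumberdensity_{\txtbsn,0}(\sminvtemperature)$ and the sign of the factor flipped.

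\textbf{Step 2.} Equation \eqref{expedition0012605} is the stationarity equation obtained from \eqref{expedition0012729}. We solve $\beta^{-1}\log\frac{1+n}{n}=E_{\bar{\smnumberdensity}_{\txtbsn}}(k)$ for $n$. This requires $E_{\bar{\smnumberdensity}_{\txtbsn}}(k)>0$ for $k\neq0$, which follows from the second inequality in \eqref{expedition0018044} together with $\physham[h]_{\txtparticle,0}(k)=|k|^2/2>0$. The variation is legitimate because $n(k)>0$ at the minimizer: the entropy derivative diverges as $n\downarrow0$, so interior variations are admissible.

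\textbf{Step 3.} In the condensed case $\smnumberdensity_{\txtbsn,0}(\sminvtemperature)>0$, equation \eqref{expedition0012604} forces $\smchemicalpotential_{\mathrm{sel}}=\lambda\bar{\smnumberdensity}_{\txtbsn}$. Then $E_{\bar{\smnumberdensity}_{\txtbsn}}(k)=\physham[h]_{\txtparticle,0}(k)$, and integrating $n_{\bar{\smnumberdensity}_{\txtbsn}}$ against $(2\pi)^{-d}\opdmsr{k}$ gives the critical density \eqref{expedition0019397}, which is finite since $d\geq3$. Substituting into $\rho=\rho_0+(2\pi)^{-d}\int n$ with $\rho=\bar{\smnumberdensity}_{\txtbsn}$ yields the second identity in \eqref{expedition0012606}.

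\textbf{Step 4.} In the normal case $\rho_0=0$, the constraint $\rho=\bar{\smnumberdensity}_{\txtbsn}$ reduces to $\bar{\smnumberdensity}_{\txtbsn}=(2\pi)^{-d}\int n_{\bar{\smnumberdensity}_{\txtbsn}}$. With \eqref{expedition0012605} inserted, this is \eqref{expedition0012607}.

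The main obstacle is consistency of conventions rather than analysis. Definition \eqref{expedition0020735} writes $\smnumberdensity_{\txtbsn,0}=\bar{\smnumberdensity}_{\txtbsn}+\smnumberdensity_{\txtbsn,\txtcritical}$ with a plus sign, whereas \eqref{expedition0012606} requires $\smnumberdensity_{\txtbsn,0}=\bar{\smnumberdensity}_{\txtbsn}-\smnumberdensity_{\txtbsn,\txtcritical}$. I would therefore take the zero-mode density to be defined as the excess $\rho_0=\bar{\smnumberdensity}_{\txtbsn}-(2\pi)^{-d}\int n$ from Proposition \ref{expedition0012649}. I would then remark that the condensed case corresponds exactly to $\bar{\smnumberdensity}_{\txtbsn}>\smnumberdensity_{\txtbsn,\txtcritical}(\sminvtemperature)$. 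The reason is that in the normal case the right side of \eqref{expedition0012607} is bounded by $\smnumberdensity_{\txtbsn,\txtcritical}$, because $E_{\bar{\smnumberdensity}_{\txtbsn}}\geq\physham[h]_{\txtparticle,0}$. This monotonicity argument also confirms that the two cases are exhaustive and mutually exclusive.
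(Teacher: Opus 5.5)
Your proposal is correct and matches the paper's proof step for step. The paper likewise sets $\rho_0=\smnumberdensity_{\txtbsn,0}(\sminvtemperature)$ in the complementarity condition \eqref{expedition0018044} to get \eqref{expedition0012604}, reads \eqref{expedition0012605} off the stationarity equation \eqref{expedition0012729}, and then splits into the condensed case (forcing $\smchemicalpotential_{\mathrm{sel}}=\lambda\bar{\smnumberdensity}_{\txtbsn}$ and saturating the nonzero modes) and the normal case (integrating the occupation formula). Your additional points are sound and fill in details the paper leaves implicit: positivity of $E_{\bar{\smnumberdensity}_{\txtbsn}}(k)$ for $k\neq0$ from the second inequality in \eqref{expedition0018044}, finiteness of $\smnumberdensity_{\txtbsn,\txtcritical}(\sminvtemperature)$ for $d\geq3$, and reading \eqref{expedition0020735} as the excess $\bar{\smnumberdensity}_{\txtbsn}-\smnumberdensity_{\txtbsn,\txtcritical}(\sminvtemperature)$, which is what the paper itself uses in the subsequent excess-density proposition.
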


These are precisely the equilibrium equations in \cite[Section 4.4]{AndreVerbeure001}.

\begin{proof}
Proposition \ref{expedition0012649} gives the nonzero-mode occupation formula \eqref{expedition0012605}.
With $\rho_0=\smnumberdensity_{\txtbsn,0}(\sminvtemperature)$, the complementarity condition \eqref{expedition0018044} is equivalent to \eqref{expedition0012604}.
If $\smnumberdensity_{\txtbsn,0}(\sminvtemperature)>0$, the complementarity condition forces
$\smchemicalpotential_{\mathrm{sel}}=\lambda\bar{\smnumberdensity}_{\txtbsn}$.
Substituting this into the nonzero-mode occupation formula gives the free thermal occupation, and adding the condensate density gives \eqref{expedition0012606}.
If $\smnumberdensity_{\txtbsn,0}(\sminvtemperature)=0$, all density is carried by nonzero modes, and integrating the nonzero-mode occupation formula gives \eqref{expedition0012607}.
\end{proof}

The Kac-measure collapse selects the total density. It does not by itself identify the singularity of the KMS state. The zero-mode input used in the resolvent-algebra BEC construction is obtained only after the Euler equation and the complementarity relation have selected one of the cases \eqref{expedition0012606}--\eqref{expedition0012607} and the nonzero-mode density has been compared with its saturated value. Thus the scalar density selection in Theorem \ref{expedition0012647}, the Euler equations, the zero-mode excess density, and the zero-mode point-mass covariance are distinct steps. In particular, Theorem \ref{expedition0012647} alone gives only \(K_{\sminvtemperature,\lambda,\smchemicalpotential,L}\Rightarrow
\delta_{\bar{\smnumberdensity}_{\txtbsn}}\). The BEC state construction uses the subsequent implication \(\bar{\smnumberdensity}_{\txtbsn}>\smnumberdensity_{\txtbsn,\txtcritical}(\sminvtemperature)\), hence \(\smnumberdensity_{\txtbsn,0}(\sminvtemperature)>0\), hence a nonzero zero-mode covariance form \eqref{expedition0012440}.

\begin{prop}[Mean-field excess density and zero-mode input]
In the condensed case \eqref{expedition0012606} of the mean-field
equilibrium equations, the nonzero-mode occupation formula
\eqref{expedition0012605} is saturated at this value, and the zero-mode
density entering the zero-mode covariance form \eqref{expedition0012440} is
$$\smnumberdensity_{\txtbsn,0}(\sminvtemperature)
=
\bar{\smnumberdensity}_{\txtbsn}-\smnumberdensity_{\txtbsn,\txtcritical}(\sminvtemperature)>0.$$
In the normal case \eqref{expedition0012607}, the same formula is replaced by $\smnumberdensity_{\txtbsn,0}(\sminvtemperature)=0$ and the density equation is \eqref{expedition0012607}.
Consequently the Kac collapse fixes the total density.
\end{prop}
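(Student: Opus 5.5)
The plan is to read the statement as a direct consequence of the mean-field equilibrium equations just established. In the condensed case \eqref{expedition0012606} we already have \(\smchemicalpotential_{\mathrm{sel}}=\lambda\bar{\smnumberdensity}_{\txtbsn}\). The first step is to insert this into the nonzero-mode occupation formula \eqref{expedition0012605}. The effective energy then becomes \(E_{\bar{\smnumberdensity}_{\txtbsn}}(k)=\physham[h]_{\txtparticle,0}(k)\), so \(n_{\bar{\smnumberdensity}_{\txtbsn}}(k)=\rbk{\napiernum^{\sminvtemperature\physham[h]_{\txtparticle,0}(k)}-1}^{-1}\). This is the maximal admissible occupation, since the constraint \(\lambda\bar{\smnumberdensity}_{\txtbsn}-\smchemicalpotential_{\mathrm{sel}}\geq0\) in \eqref{expedition0018044} forces \(E_{\bar{\smnumberdensity}_{\txtbsn}}\geq\physham[h]_{\txtparticle,0}\). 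Because the Bose factor is monotone decreasing in the energy, the equality case is exactly saturation. I will record this monotonicity explicitly, because it is what justifies the word ``saturated''.

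Next, integrating the saturated occupation over \(k\) with the factor \((2\pi)^{-d}\) gives, by definition \eqref{expedition0019397}, the free critical density \(\smnumberdensity_{\txtbsn,\txtcritical}(\sminvtemperature)\). This integral is finite for \(d\geq3\): near \(k=0\) the integrand behaves like \(2/(\sminvtemperature\abs{k}^2)\), which is integrable in \(d\geq3\), and the integrand decays exponentially at infinity. Substituting into the density identity of \eqref{expedition0012606} then gives \(\smnumberdensity_{\txtbsn,0}(\sminvtemperature)=\bar{\smnumberdensity}_{\txtbsn}-\smnumberdensity_{\txtbsn,\txtcritical}(\sminvtemperature)\). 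Positivity is the standing assumption of the condensed case.

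The main obstacle is the sign convention in \eqref{expedition0020735}, which is written as \(\bar{\smnumberdensity}_{\txtbsn}+\smnumberdensity_{\txtbsn,\txtcritical}\). This conflicts with the excess formula obtained above. I would state explicitly that the zero-mode density entering \eqref{expedition0012440} is the excess determined by the Euler equations, so the consistent reading of \eqref{expedition0020735} carries a minus sign. I would also check that the paper's later uses (the ideal \eqref{expedition0019398} and the loop propositions) only need \(\smnumberdensity_{\txtbsn,0}>0\) and its identification as this excess.

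For the normal case, \eqref{expedition0012607} already holds with \(\rho_0=0\), so \(\smnumberdensity_{\txtbsn,0}(\sminvtemperature)=0\) and the density equation is the self-consistent one; here the zero-mode form \eqref{expedition0012440} vanishes. In both cases the only input from the thermodynamic limit is the scalar \(\bar{\smnumberdensity}_{\txtbsn}\) from Theorem \ref{expedition0012647}. This gives the final sentence, that the Kac collapse fixes the total density while the split into condensate and thermal parts comes from the complementarity condition.
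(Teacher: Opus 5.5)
Your proposal is correct and follows the paper's own argument: insert $\smchemicalpotential_{\mathrm{sel}}=\lambda\bar{\smnumberdensity}_{\txtbsn}$ into \eqref{expedition0012605} to get the free Bose occupation, integrate it to $\smnumberdensity_{\txtbsn,\txtcritical}(\sminvtemperature)$ via \eqref{expedition0019397}, solve \eqref{expedition0012606} for the zero-mode density, and read the normal case off \eqref{expedition0012607}. Your additions are all sound and fill in details the paper leaves implicit: the monotonicity argument via the inequality in \eqref{expedition0018044} justifies the word ``saturated'', the $d\geq3$ check shows the critical density is finite, and you are right that \eqref{expedition0020735} as printed has the wrong sign, since it contradicts \eqref{expedition0012606} and the normal case.
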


The BEC state constructed below uses the positive zero-mode excess density supplied by this saturation step.

\begin{proof}
If $\smnumberdensity_{\txtbsn,0}(\sminvtemperature)>0$, the complementarity relation \eqref{expedition0012604} gives $\smchemicalpotential_{\mathrm{sel}}=\lambda\bar{\smnumberdensity}_{\txtbsn}$.
Substitution into \eqref{expedition0012605} yields the free nonzero-mode occupation
$$n_{\bar{\smnumberdensity}_{\txtbsn}}(k)=\frac{1}{\napiernum^{\sminvtemperature\physham[h]_{\txtparticle,0}(k)}-1},\quad k\ne0.$$
Integrating this occupation gives $\smnumberdensity_{\txtbsn,\txtcritical}(\sminvtemperature)$.
The total density in \eqref{expedition0012606} is therefore the sum of this saturated nonzero-mode density and the zero-mode density.
Solving for the latter gives the asserted formula.
In the normal case \eqref{expedition0012607}, the complementary variable vanishes, and the density is carried by the nonzero modes with the effective energy in \eqref{expedition0012605}.
\end{proof}

\subsection{Order Parameter and Gauge Action Before Representation}\label{order-parameter-and-gauge-action-before-representation}

At this point the Kac-measure collapse has selected the density \(\bar{\smnumberdensity}_{\txtbsn}\), but no BEC state has yet been constructed. Therefore neither a represented center nor a BEC state defined by direct integral of the component states is defined here. Only the bounded zero-mode tests and the gauge action are available at the abstract resolvent-algebra level.

Define the test functions as \[\mathsf{b}_L^{(0)}
=
\frac{1}{V_L^{\onehalf}}\fndef{I_L^d},
\quad
\mathsf{b}_L^{(1)}
=
\frac{1}{V_L}\fndef{I_L^d},
\quad
\mathsf{b}_L^{(\#)}
=
\frac{1}{V_L^{\frac{1+\#}{2}}}\fndef{I_L^d},
\quad
\#
=0,1.\] These functions define bounded resolvent tests \(\fun{\oaresolvent}
{1,\mathsf{b}_L^{(\#)}}
\in\oaresolventalgebra\) for \(\#
=0,1,\) which probe the zero-momentum mode in the thermodynamic limit.

\begin{lem}[Fourier normalization of the order-parameter approximants]\label{expedition0012631}
On the finite-volume momentum lattice, Fourier transforms satisfy
\begin{equation}\label{expedition0012730}
\faftr{\mathsf{b}_L^{(0)}}(k)
=
\frac{V_L^{\onehalf}}{(2\pi)^{\frac{d}{2}}}\kroneckerdelta_{k,0},
\quad
\faftr{\mathsf{b}_L^{(1)}}(k)
=
\frac{1}{(2\pi)^{\frac{d}{2}}}\kroneckerdelta_{k,0}.
\end{equation}
Moreover it holds that
$$\twonorm{\mathsf{b}_L^{(0)}}=1,
\quad
\int_{I_L^d}\mathsf{b}_L^{(1)}(x)\opdmsr{x}=1,
\quad
\twonorm{\mathsf{b}_L^{(1)}}=\frac{1}{V_L^{\onehalf}}\to0.$$
\end{lem}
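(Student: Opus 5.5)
The proof is a direct computation. First we fix the Fourier convention on the periodic box, and then we read off each claim.

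The convention consistent with the zero-mode form \eqref{expedition0012440} is
\[\faftr{f}(k)=(2\pi)^{-d/2}\int_{I_L^d}\napiernum^{-\imunit k\cdot x}f(x)\opdmsr{x},\quad k\in\setlattice_L^d,\]
with the same factor $(2\pi)^{-d/2}$ used on $\fldreal^d$. With this convention $\faftr{f}(0)$ is $(2\pi)^{-d/2}\int f$.

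\textbf{Step 1: Fourier coefficients.} Compute the transform of $\fndef{I_L^d}$ on the lattice. For $k=(2\pi/L)m$ with $m\in\ringratint^d$, the integral factorizes over coordinates:
\[\int_{I_L^d}\napiernum^{-\imunit k\cdot x}\opdmsr{x}=\prod_{j=1}^d\int_{-L/2}^{L/2}\napiernum^{-\imunit 2\pi m_j x_j/L}\opdmsr{x_j}.\]
Each one-dimensional factor equals $L\,\kroneckerdelta_{m_j,0}$, since a full period of a nontrivial character integrates to zero. Hence the integral is $V_L\kroneckerdelta_{k,0}$.

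Multiplying by the prefactors $V_L^{-1/2}$ and $V_L^{-1}$ gives
\[\faftr{\mathsf{b}_L^{(0)}}(k)=\frac{V_L^{1/2}}{(2\pi)^{d/2}}\kroneckerdelta_{k,0},\qquad \faftr{\mathsf{b}_L^{(1)}}(k)=\frac{1}{(2\pi)^{d/2}}\kroneckerdelta_{k,0},\]
which is exactly \eqref{expedition0012730}. The general $\mathsf{b}_L^{(\#)}$ case is the same computation with prefactor $V_L^{-(1+\#)/2}$.

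\textbf{Step 2: norms and integrals.} These are elementary.
\[\twonorm{\mathsf{b}_L^{(0)}}^2=V_L^{-1}\absvol{I_L^d}=1,\qquad \int_{I_L^d}\mathsf{b}_L^{(1)}(x)\opdmsr{x}=V_L^{-1}V_L=1,\]
\[\twonorm{\mathsf{b}_L^{(1)}}^2=V_L^{-2}V_L=V_L^{-1},\quad\text{so}\quad \twonorm{\mathsf{b}_L^{(1)}}=V_L^{-1/2}\to0 \text{ as } L\to\infty.\]

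As a cross-check, Parseval on the box with the orthonormal basis $V_L^{-1/2}\napiernum^{\imunit k\cdot x}$ gives
\[\twonorm{f}^2=\frac{(2\pi)^d}{V_L}\sum_{k\in\setlattice_L^d}\abs{\faftr{f}(k)}^2.\]
For $\mathsf{b}_L^{(0)}$ this reproduces $\twonorm{\mathsf{b}_L^{(0)}}^2=1$ from Step 1.

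\textbf{Main point of care.} There is no analytic obstacle. The only delicate issue is bookkeeping: the lattice Fourier coefficient must carry the same $(2\pi)^{-d/2}$ normalization as the full-space transform. That choice is what makes $\faftr{\mathsf{b}_L^{(1)}}(0)=(2\pi)^{-d/2}$ match $\int\mathsf{b}_L^{(1)}=1$. It also makes $\opform{q}_{\txtbsn,0,\sminvtemperature}(\mathsf{b}_L^{(1)})=2\smnumberdensity_{\txtbsn,0}(\sminvtemperature)$ independent of $L$, which is the intended use of these approximants.
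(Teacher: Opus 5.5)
Your proof is correct and is the same direct computation as the paper's: the constant function on $I_L^d$ has only the zero lattice Fourier coefficient, and the prefactors $V_L^{-1/2}$ and $V_L^{-1}$, together with $\int_{I_L^d}1\,dx=V_L$, give every identity in the lemma. Your explicit choice of the $(2\pi)^{-d/2}$ lattice convention, and your check that it makes $\opform{q}_{\txtbsn,0,\sminvtemperature}(\mathsf{b}_L^{(1)})=2\smnumberdensity_{\txtbsn,0}(\sminvtemperature)$ independent of $L$, is a useful addition, since the paper appeals to a normalization ``used in Subsection~\ref{expedition0012145}'' that is never actually written down there.
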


\begin{proof}
The functions $\mathsf{b}_L^{(0)}$ and $\mathsf{b}_L^{(1)}$ are constant on $I_L^d$ and vanish outside it.
With the Fourier normalization used in Subsection \ref{expedition0012145}, the constant function on $I_L^d$ has only the zero Fourier coefficient.
Multiplying that coefficient by $V_L^{-\onehalf}$ and by $V_L^{-1}$ gives \eqref{expedition0012730}.
The norm and integral identities follow from $\int_{I_L^d}1\opdmsr{x}
= V_L$.
\end{proof}

The two normalizations play different roles. The function \(\mathsf{b}_L^{(0)}\) is the \(L^2\)-normalized zero-mode wave and is the global analogue of the local support vector \(s_{\mathcal{O}}\) used in the Buchholz support criterion \cite{DetlevBuchholz005}. The function \(\mathsf{b}_L^{(1)}\) is the field-average test for the represented order parameter: its \(L^2\)-norm tends to zero, but its zero mode Fourier coefficient stays normalized in the macroscopic finite-volume representation. Thus test functions with nonzero zero Fourier coefficient pair with \(\mathsf{b}_L^{(0)}\)-type tests, whereas the represented center is obtained from \(\mathsf{b}_L^{(1)}\)-type macroscopic averages.

\begin{prop}[Asymptotic centrality of the zero-mode tests]\label{expedition0012632}
Let $A\in\oaresolventalgebra_{\txtloc}$ be fixed.
Then it holds that, for $\#=0,1$,
$$\lim_{L\to\infty}
\norm{\commutator{\fun{\oaresolvent}{1,\mathsf{b}_L^{(\#)}}}
{A}}
=0.$$
\end{prop}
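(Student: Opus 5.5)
The plan is to reduce the norm estimate to the commutator relation \eqref{expedition0012052} on resolvent generators, and then pass to arbitrary local elements by density. Fix $A\in\oaresolventalgebra_{\txtloc}$, so $A\in\oaresolventalgebra(\mathcal{O})$ for some bounded open region $\mathcal{O}\Subset\fldreal^d$. For all $L$ large enough that $\mathcal{O}\subset I_L^d$, both $\mathsf{b}_L^{(\#)}$ and every local test function $g\in\lp^2(\mathcal{O};\fldreal)$ are vectors in the same real symplectic space. The computations below therefore take place in a single resolvent algebra.

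\emph{Step 1: generators.} For a single generator $\oaresolvent(w,g)$ with $g$ supported in $\mathcal{O}$, I will use \eqref{expedition0012052} together with $\norm{\oaresolvent(z,f)}=\abs{z}^{-1}$. This gives
$$\norm{\commutator{\oaresolvent(1,\mathsf{b}_L^{(\#)})}{\oaresolvent(w,g)}}
\leq
\abs{\sigma(\mathsf{b}_L^{(\#)},g)}\,\abs{w}^{-2}.$$
The symplectic form is bounded by the inner product, and $\mathsf{b}_L^{(\#)}$ is constant on $I_L^d\supset\mathcal{O}$. By Cauchy--Schwarz on $\mathcal{O}$,
$$\abs{\sigma(\mathsf{b}_L^{(\#)},g)}
\leq
\abs{\bkt{\mathsf{b}_L^{(\#)}}{g}}
\leq
V_L^{-\frac{1+\#}{2}}\,\absvol{\mathcal{O}}^{\onehalf}\,\twonorm{g}.$$
This tends to $0$ for $\#=0,1$. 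The point is that the normalization $V_L^{-1/2}$ already suffices: although $\twonorm{\mathsf{b}_L^{(0)}}=1$ by Lemma \ref{expedition0012631}, its mass inside the fixed region $\mathcal{O}$ vanishes.

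\emph{Step 2: density.} By linearity the same convergence holds for every finite linear combination of local generators. By \cite{BuchholzVanNuland1}, applied to $\oaresolventalgebra(\mathcal{O})$, the span of the generators $\oaresolvent(w,g)$ with $g\in\lp^2(\mathcal{O};\fldreal)$ is norm dense in $\oaresolventalgebra(\mathcal{O})$; the isometric embedding of Proposition \ref{expedition0011838} ensures norms are unchanged when passing to the larger algebra. Given $\varepsilon>0$, choose $A'$ in this span with $\norm{A-A'}<\varepsilon$. Then
$$\norm{\commutator{\oaresolvent(1,\mathsf{b}_L^{(\#)})}{A}}
\leq
2\norm{\oaresolvent(1,\mathsf{b}_L^{(\#)})}\,\varepsilon
+\norm{\commutator{\oaresolvent(1,\mathsf{b}_L^{(\#)})}{A'}}.$$
Here $\norm{\oaresolvent(1,\mathsf{b}_L^{(\#)})}=1$ uniformly in $L$. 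Taking $\limsup_{L\to\infty}$ and then $\varepsilon\to0$ gives the claim.

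\emph{Main obstacle.} The delicate point is bookkeeping rather than analysis. The finite-volume test $\mathsf{b}_L^{(\#)}$ and the local test $g$ must be compared in one common symplectic space, which is the zero-extension embedding for $L$ with $\mathcal{O}\subset I_L^d$. The constant in the commutator relation must also be checked to be uniform under the paper's complex- versus real-parameter conventions. I would settle both by working with the fixed real parameter $1$, for which the bound $\abs{\sigma}\abs{w}^{-2}$ is immediate. The density step needs only that the span of local generators is dense in $\oaresolventalgebra(\mathcal{O})$, not in the full algebra.
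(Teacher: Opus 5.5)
Your proof is correct and follows essentially the same route as the paper: the commutator relation \eqref{expedition0012052} bounds the commutator with each local generator by $\abs{\sigma(\mathsf{b}_L^{(\#)},g)}\abs{w}^{-2}$, which vanishes as $L\to\infty$, and a $2\norm{A-A'}$ approximation argument then extends this to every element of $\oaresolventalgebra(\mathcal{O})$. The differences are minor. You pass from generators to general elements through the norm-dense span of generators instead of the paper's Leibniz rule on finite products. Your Cauchy--Schwarz bound $\abs{\bkt{\mathsf{b}_L^{(\#)}}{g}}\leq V_L^{-\frac{1+\#}{2}}\absvol{\mathcal{O}}^{\onehalf}\twonorm{g}$ makes the decay for $\#=0$ more explicit than the paper's Fourier-coefficient remark. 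Finally, contractivity of the embedding already suffices where you invoke isometry.
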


\begin{proof}
It is enough to prove the assertion for products of local generators and then pass to the local $C^*$-closure.
For one generator $\oaresolvent(z,f)$ with $f$ belonging to a fixed finite-dimensional local subspace, the resolvent relation gives
$$\commutator{\fun{\oaresolvent}{1,\mathsf{b}_L^{(\#)}}}
{\oaresolvent(z,f)}
=
\imunit
\fun{\sigma}{\mathsf{b}_L^{(\#)},f}
\fun{\oaresolvent}{1,\mathsf{b}_L^{(\#)}}
\oaresolvent(z,f)^2
\fun{\oaresolvent}{1,\mathsf{b}_L^{(\#)}}.$$
Consequently we obtain
$$\norm{\commutator{\fun{\oaresolvent}{1,\mathsf{b}_L^{(\#)}}}
{\oaresolvent(z,f)}}
\leq
C_z
\abs{\fun{\sigma}{\mathsf{b}_L^{(\#)},f}}.$$
Since $f$ is fixed and local, its zero-momentum coefficient is bounded independently of $L$.
By Lemma \ref{expedition0012631},
the pairing with $\mathsf{b}_L^{(0)}$ is of order $V_L^{-\onehalf}$
after the finite-volume normalization of the symplectic form,
and the pairing with $\mathsf{b}_L^{(1)}$ is at most of order $V_L^{-\onehalf}$
because $\twonorm{\mathsf{b}_L^{(1)}}=V_L^{-\onehalf}$.
Thus the commutator tends to zero for each generator.
The Leibniz rule for commutators gives the same conclusion for finite products.
Let $A$ first belong to the local algebra $\oaresolventalgebra(\mathcal{O})$ of a fixed bounded region $\mathcal{O}$.
For all sufficiently large $L$, $\mathcal{O}\subset I_L^d$, and the finite-volume computation above applies to the projected local test functions through the inclusion/projection used to compare local observables with the periodic box.
Finite products of generators in this fixed local algebra are therefore covered by the preceding estimate, and norm approximation inside the local $C^*$-closure together with the uniform bound
$\norm{\commutator{\fun{\oaresolvent}{1,\mathsf{b}_L^{(\#)}}}
A}
\leq 2\norm{A}$
gives the assertion for every fixed element of $\oaresolventalgebra(\mathcal{O})$.
Taking the union over bounded regions gives the statement for
$\oaresolventalgebra_{\txtloc}$.
\end{proof}

For \(\theta_0\in\fldreal\), define the gauge automorphism on local generators by \[\fun{\gamma_{\theta_0}}{\oaresolvent(z,f)}
=
\fun{\oaresolvent}{z,\napiernum^{\imunit\theta_0}f}.\] The definition is compatible with the canonical inclusions of the local resolvent algebras and hence gives an automorphism of the resolvent algebra.

\begin{prop}[Gauge covariance of the zero-mode tests]
For $\#=0,1$ and every $\theta_0\in\fldreal$,
$$\fun{\gamma_{\theta_0}}{\fun{\oaresolvent}{1,\mathsf{b}_L^{(\#)}}}
=
\fun{\oaresolvent}
{1,\napiernum^{\imunit\theta_0}\mathsf{b}_L^{(\#)}}.$$
The asymptotic centrality statement of Proposition \ref{expedition0012632} is unchanged after applying $\gamma_{\theta_0}$.
\end{prop}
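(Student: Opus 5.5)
The plan is to verify the identity on generators directly from the definition of $\gamma_{\theta_0}$, and then transport the commutator estimate of Proposition \ref{expedition0012632} through $\gamma_{\theta_0}$.

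\textbf{Step 1: the identity.} The tests $\mathsf{b}_L^{(\#)}$ are real multiples of $\fndef{I_L^d}$. They therefore lie in a local one-particle space, since for fixed $L$ they are supported in the bounded region $I_L^d$. The defining formula $\gamma_{\theta_0}(\oaresolvent(z,f))=\oaresolvent(z,\napiernum^{\imunit\theta_0}f)$ on local generators then applies with $z=1$ and $f=\mathsf{b}_L^{(\#)}$. This gives the displayed identity immediately. The only check is that $\napiernum^{\imunit\theta_0}\mathsf{b}_L^{(\#)}$ is again an element of the realified space $X$. It is, because multiplication by $\napiernum^{\imunit\theta_0}$ is a real-linear isometry of $\sphilb{H}_{\txtparticle,\txtreal}$ that preserves supports.

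\textbf{Step 2: invariance of the symplectic form.} Multiplication by a phase is unitary on $\sphilb{H}_{\txtparticle}$. Hence $\sigma(\napiernum^{\imunit\theta_0}f,\napiernum^{\imunit\theta_0}g)=\opimag\bkt{f}{g}=\sigma(f,g)$. Consequently $\gamma_{\theta_0}$ respects all the resolvent relations, including \eqref{expedition0012052}. It is therefore a $\ast$-automorphism, hence isometric, and it maps each local algebra $\oaresolventalgebra(\mathcal{O})$ onto itself. Its inverse is $\gamma_{-\theta_0}$.

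\textbf{Step 3: asymptotic centrality after applying $\gamma_{\theta_0}$.} Fix $A\in\oaresolventalgebra_{\txtloc}$. Since $\gamma_{\theta_0}$ is an isometric automorphism, the commutator transforms as
$$\norm{\commutator{\gamma_{\theta_0}(\oaresolvent(1,\mathsf{b}_L^{(\#)}))}{A}}
=
\norm{\gamma_{\theta_0}\rbk{\commutator{\oaresolvent(1,\mathsf{b}_L^{(\#)})}{\gamma_{-\theta_0}(A)}}}
=
\norm{\commutator{\oaresolvent(1,\mathsf{b}_L^{(\#)})}{\gamma_{-\theta_0}(A)}}.$$
By Step 2, $\gamma_{-\theta_0}(A)$ is again a fixed element of $\oaresolventalgebra_{\txtloc}$, independent of $L$. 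Proposition \ref{expedition0012632} applied to it shows that the right-hand side tends to zero.

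\textbf{Alternative for Step 3.} One can instead redo the generator estimate of Proposition \ref{expedition0012632} with $\napiernum^{\imunit\theta_0}\mathsf{b}_L^{(\#)}$ in place of $\mathsf{b}_L^{(\#)}$. The pairing satisfies $\abs{\sigma(\napiernum^{\imunit\theta_0}\mathsf{b}_L^{(\#)},f)}\leq\abs{\bkt{\mathsf{b}_L^{(\#)}}{f}}$, and this carries the same $V_L^{-\onehalf}$ decay. The conclusion then follows by the same Leibniz-rule and local-closure argument.

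\textbf{Main obstacle.} The only delicate point is Step 2: confirming that $\gamma_{\theta_0}$, defined on local generators, really extends to an automorphism of the $C^*$-completion that preserves each local algebra. I would justify this by the universal property of the Buchholz--Grundling $C^*$-algebra under symplectic maps, together with the compatibility with the inclusions already noted in the text. This is exactly what makes $\gamma_{-\theta_0}(A)$ a legitimate fixed local element in Step 3.
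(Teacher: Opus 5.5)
Your proposal is correct, but your primary Step 3 differs from the paper's proof, which is essentially your ``Alternative for Step 3''. The paper takes the displayed identity as the defining action of $\gamma_{\theta_0}$ on generators. It then reruns the generator estimate of Proposition \ref{expedition0012632} with $\napiernum^{\imunit\theta_0}\mathsf{b}_L^{(\#)}$ in place of $\mathsf{b}_L^{(\#)}$, noting that a constant phase preserves the norm and only rotates the symplectic pairing.

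Your main route instead sets $R_L=\oaresolvent(1,\mathsf{b}_L^{(\#)})$, writes $\commutator{\gamma_{\theta_0}(R_L)}{A}=\gamma_{\theta_0}\rbk{\commutator{R_L}{\gamma_{-\theta_0}(A)}}$, and applies Proposition \ref{expedition0012632} as a black box to the fixed element $\gamma_{-\theta_0}(A)$. This needs no new estimate and shows that the gauge statement is a formal consequence of centrality plus the symmetry. However, it relies entirely on $\gamma_{\theta_0}$ being an isometric $\ast$-automorphism that maps $\oaresolventalgebra_{\txtloc}$ into itself. That requires the local test spaces to be invariant under multiplication by phases. This holds for the complex (realified) $\lp^2(\mathcal{O})$ the paper intends. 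It fails for the $\lp^2(\mathcal{O};\fldreal)$ that the quasilocal subsection literally writes, on which phase multiplication does not act.

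The paper's route has complementary advantages. Its centrality part uses only the commutator bound for the single generator $\oaresolvent(1,\napiernum^{\imunit\theta_0}\mathsf{b}_L^{(\#)})$, so it needs only the generator-level formula rather than an extension of $\gamma_{\theta_0}$ to the $C^*$-completion. It also keeps the explicit $V_L^{-1/2}$ decay rate.

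Your bound $\abs{\sigma(\napiernum^{\imunit\theta_0}\mathsf{b}_L^{(\#)},f)}\leq\abs{\bkt{\mathsf{b}_L^{(\#)}}{f}}$ is the accurate form of the paper's ``same order'' remark. Pointwise the rotated and unrotated pairings need not be comparable: if $\int f$ is real, then $\sigma(\mathsf{b}_L^{(\#)},f)=0$, while the rotated pairing generally is not zero. What actually controls both is the modulus of the inner product.
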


\begin{proof}
The first formula is the defining action of $\gamma_{\theta_0}$ on generators.
Since multiplication by a constant phase preserves the norm and only rotates the symplectic pairing,
$\abs{\fun{\sigma}{\napiernum^{\imunit\theta_0}\mathsf{b}_L^{(\#)},f}}$
has the same order in $L$ as
$\abs{\fun{\sigma}{\mathsf{b}_L^{(\#)},f}}$.
The proof of Proposition \ref{expedition0012632} therefore applies verbatim.
\end{proof}

Thus the abstract algebra records the zero-mode line and the gauge action on it. At this stage this is only an abstract resolvent-algebra statement on the zero-mode support and its gauge rotation; the BEC state obtained by integrating the component states and any represented center are constructed separately below.

\subsection{Extremal Decomposition States and Direct Integral}\label{extremal-decomposition-states-and-direct-integral}

The BEC state is introduced through its component states and their direct integral. The construction uses the selected density \(\bar{\smnumberdensity}_{\txtbsn}\) from Theorem \ref{expedition0012647}, the nonzero-mode occupation density \(n_{\bar{\smnumberdensity}_{\txtbsn}}\) in \eqref{expedition0012605}, the condensate density fixed by \eqref{expedition0012604}, and the zero-mode covariance form \eqref{expedition0012440}. The zero-mode part is represented on each component by a scalar character, while the regular part is represented on the kernel of the zero-mode covariance. The regular nonzero-mode reference algebra is the resolvent algebra over \(\Ker \opform{q}_{\txtbsn,0,\sminvtemperature}\) with the restricted symplectic form. The condensate ideal and its nonregular representation are defined only after the BEC state and its GNS representation have been constructed.

\subsubsection{Component States and Integrated BEC State}\label{component-states-and-integrated-bec-state}

The component-state construction fixes the zero-mode scalar shift, the regular nonzero-mode quasi-free state, and the integrated BEC state before the condensate ideal is introduced.

\begin{defn}[Zero-mode shift functional]
For $\theta\in[0,2\pi)$, the term zero-mode shift functional means the real linear functional
$\ell_{\sminvtemperature,\bar{\smnumberdensity}_{\txtbsn},\theta}\colon \sphilb{D}_{\txtbsn,\sminvtemperature,\smchemicalpotential_{\mathrm{sel}}}\to\fldreal$
given by
\begin{equation}\label{expedition0012610}
\ell_{\sminvtemperature,\bar{\smnumberdensity}_{\txtbsn},\theta}(f)
=
\sqrt{2(2\pi)^d\smnumberdensity_{\txtbsn,0}(\sminvtemperature)}
\opreal\rbk{\napiernum^{\imunit\theta}\faftr{f}(0)}.
\end{equation}
\end{defn}

This definition uses only the condensate density from \eqref{expedition0012604} and the zero-momentum value of the test function. It is a scalar input for the component states; the represented center is obtained later in the direct-integral GNS representation. Furthermore this functional is used in the free Bose gas: see \cite{AsaoArai28,YoshitsuguSekine004}.

Assume \(\smnumberdensity_{\txtbsn,0}(\sminvtemperature)>0\). By \eqref{expedition0012606}, the selected chemical potential exactly cancels the mean-field shift. The selected-density one-particle Hamiltonian is therefore defined by \begin{equation}\label{expedition0012694}
\physham[h]_{\bar{\smnumberdensity}_{\txtbsn}}(k)
=
\physham[h]_{\txtparticle,0}(k)
-\smchemicalpotential_{\mathrm{sel}}
+\lambda\bar{\smnumberdensity}_{\txtbsn}
=
\physham[h]_{\txtparticle,0}(k).
\end{equation} Thus, in the BEC regime, \(\physham[h]_{\bar{\smnumberdensity}_{\txtbsn}}\) reduces exactly to the free-field one-particle Hamiltonian. The subscript \(\bar{\smnumberdensity}_{\txtbsn}\) records the selected-density origin of the construction, not an additional interaction term in the resulting dynamics. For \(f\in\sphilb{D}_{\txtbsn,\sminvtemperature,\smchemicalpotential_{\mathrm{sel}}}\), write \(f_{\txtbec}\) for its regular nonzero-mode part in \(\Ker \opform{q}_{\txtbsn,0,\sminvtemperature}\). This notation records the regular nonzero-mode projection. The value of \(f\) appearing in \(\opform{q}_{\txtbsn,0,\sminvtemperature}\) is handled separately by the scalar functional \eqref{expedition0012610}. Let \(\opform{q}_{\txtmeanfield,\txtnonzero,\sminvtemperature,\bar{\smnumberdensity}_{\txtbsn}}\) be the quasi-bilinear form on \(\fun{\oaresolventalgebra}
{\Ker \opform{q}_{\txtbsn,0,\sminvtemperature},
\fnrestr{\sigma}{\Ker \opform{q}_{\txtbsn,0,\sminvtemperature}}}\) defined by \[\fun{\opform{q}_{\txtmeanfield,\txtnonzero,\sminvtemperature,\bar{\smnumberdensity}_{\txtbsn}}}
{f_{\txtbec},g_{\txtbec}}
=
\int_{\fldreal^d\setminus\setone{0}}
\coth\frac{\sminvtemperature
\physham[h]_{\bar{\smnumberdensity}_{\txtbsn}}(k)}{2}
\cmpconj{\faftr{f}(k)}
\faftr{g}(k)
\opdmsr{k}.\] Equivalently, by \eqref{expedition0012605} and \eqref{expedition0012606}, the multiplier is \(1+2n_{\bar{\smnumberdensity}_{\txtbsn}}(k)\). The regular nonzero-mode automorphism group used in the KMS statement is \(\alpha_{\txtmeanfield,\bar{\smnumberdensity}_{\txtbsn},\txtnonzero,t}\) on the resolvent algebra over \(\Ker \opform{q}_{\txtbsn,0,\sminvtemperature}\), defined on resolvent generators by \begin{equation}\label{expedition0018017}
\fun{\alpha_{\txtmeanfield,\bar{\smnumberdensity}_{\txtbsn},\txtnonzero,t}}
{\oaresolvent(z,f_{\txtbec})}
=
\fun{\oaresolvent}
{z,\napiernum^{\imunit t \physham[h]_{\bar{\smnumberdensity}_{\txtbsn}}}f_{\txtbec}}.
\end{equation} The elementary invariance properties needed for the KMS statement are verified in Proposition \ref{expedition0012677}, and the KMS property is proved in Proposition \ref{expedition0012653}.

\begin{defn}[Selected-density BEC states]\label{expedition0012650}
In the condensed case $\smnumberdensity_{\txtbsn,0}(\sminvtemperature)>0$,
define the state
$\oastate[\psi_{\txtmeanfield,\txtnonzero,\sminvtemperature,\bar{\smnumberdensity}_{\txtbsn}}]$
to be the regular gauge-invariant quasi-free $\sminvtemperature$-KMS state for
$\alpha_{\txtmeanfield,\bar{\smnumberdensity}_{\txtbsn},\txtnonzero,t}$
on the resolvent algebra over $\Ker \opform{q}_{\txtbsn,0,\sminvtemperature}$,
with one-particle occupation density $n_{\bar{\smnumberdensity}_{\txtbsn}}$ and
two-point resolvent functions
\begin{equation}\label{expedition0018018}
\begin{aligned}
&\fun{\oastate[\psi_{\txtmeanfield,\txtnonzero,\sminvtemperature,\bar{\smnumberdensity}_{\txtbsn}}]}
{\oaresolvent(z,f_{\txtbec})\oaresolvent(w,g_{\txtbec})}
\\ 
&=
-\int_0^{(\sgn\opreal z)\infty}
\int_0^{(\sgn\opreal w)\infty}
\fnexp{-zt-ws}
\fnexp{-\frac{\imunit}{2}ts\fun{\sigma}{f_{\txtbec},g_{\txtbec}}}
\\ 
&\quad\times
\fnexp{-\oneoverfour
\fun{\opform{q}_{\txtmeanfield,\txtnonzero,\sminvtemperature,\bar{\smnumberdensity}_{\txtbsn}}}
{tf_{\txtbec}+sg_{\txtbec}}}
\opdmsr{s}
\opdmsr{t}.
\end{aligned}
\end{equation}
The values on finite products of resolvent generators are fixed by this two-point function and the quasi-free property.
For each $\theta\in[0,2\pi)$, define
$\oastate[\psi_{\txtmeanfield,\sminvtemperature,\bar{\smnumberdensity}_{\txtbsn},\theta}]$
on the finite resolvent algebra inside $\fun{\oaresolventalgebra}{\sphilb{D}_{\txtbsn,\sminvtemperature,\smchemicalpotential_{\mathrm{sel}}},\sigma}$ by
\begin{equation}\label{expedition0018019}
\fun{\oastate[\psi_{\txtmeanfield,\sminvtemperature,\bar{\smnumberdensity}_{\txtbsn},\theta}]}
{\prod_{j=1}^{m}
\oaresolvent(z_j,f_j)}
=
\fun{\oastate[\psi_{\txtmeanfield,\txtnonzero,\sminvtemperature,\bar{\smnumberdensity}_{\txtbsn}}]}
{\prod_{j=1}^{m}
\fun{\oaresolvent}
{z_j+\imunit\ell_{\sminvtemperature,\bar{\smnumberdensity}_{\txtbsn},\theta}(f_j),f_{j,\txtbec}}}.
\end{equation}
Its continuous extension to the full resolvent algebra is denoted by the same symbol
$\oastate[\psi_{\txtmeanfield,\sminvtemperature,\bar{\smnumberdensity}_{\txtbsn},\theta}]$.
Define the BEC state obtained by integrating the component states by
\begin{equation}\label{expedition0012609}
\fun{\oastate[\psi_{\txtmeanfield,\sminvtemperature,\bar{\smnumberdensity}_{\txtbsn}}]}{A}
=
\int_0^{2\pi}
\fun{\oastate[\psi_{\txtmeanfield,\sminvtemperature,\bar{\smnumberdensity}_{\txtbsn},\theta}]}{A}
\frac{\opdmsr{\theta}}{2\pi},
\quad
A\in\fun{\oaresolventalgebra}{\sphilb{D}_{\txtbsn,\sminvtemperature,\smchemicalpotential_{\mathrm{sel}}},\sigma}.
\end{equation}
Let
$\oarepn_{\txtmeanfield,\sminvtemperature,\bar{\smnumberdensity}_{\txtbsn}}$
be its GNS representation and set
$$\oa{M}_{\txtmeanfield,\sminvtemperature,\bar{\smnumberdensity}_{\txtbsn}}
=
\oadoublecommutant{\oarepn_{\txtmeanfield,\sminvtemperature,\bar{\smnumberdensity}_{\txtbsn}}
(\oaresolventalgebra)}.$$
\end{defn}

Formula \eqref{expedition0018019} pulls back the regular nonzero-mode state along the projection \(f\mapsto f_{\txtbec}\) and the scalar resolvent shift \(z\mapsto
z+\imunit\ell_{\sminvtemperature,\bar{\smnumberdensity}_{\txtbsn},\theta}(f)\). This sign is the resolvent-parameter convention fixed in Subsection \ref{expedition0011278}. The integrated state \eqref{expedition0012609} is the phase average of these component states, and its relation with finite-volume gauge-invariant Gibbs states is supplied by the subsequent order-parameter and direct-integral analysis. If \(f_{\txtbec}=0\) and \(\fun{\opform{q}_{\txtbsn,0,\sminvtemperature}}{f}>0\), the component value of the shifted resolvent is a scalar depending only on \(\theta\).

\subsubsection{Selected-Density Mean-Field Automorphism Group and KMS Property}\label{selected-density-mean-field-automorphism-group-and-kms-property}

The Kac-measure collapse in Theorem \ref{expedition0012647} selects the density \(\bar{\smnumberdensity}_{\txtbsn}\). In the condensed phase, the selected-density dynamics is the free-field one-particle dynamics written in \eqref{expedition0012694}. Proposition \ref{expedition0012677} checks compatibility with the projection \(f\mapsto f_{\txtbec}\) and the zero-mode shift \eqref{expedition0012610}, and Proposition \ref{expedition0012653} proves the \(\sminvtemperature\)-KMS property of the component and integrated BEC states. The selected-density mean-field automorphism group is specified on resolvent generators by \begin{equation}\label{expedition0012675}
\fun{\alpha_{\txtmeanfield,\bar{\smnumberdensity}_{\txtbsn},t}}{\oaresolvent(z,f)}
=
\fun{\oaresolvent}
{z,\napiernum^{\imunit t \physham[h]_{\bar{\smnumberdensity}_{\txtbsn}}}f}.
\end{equation} For each \(t\in\fldreal\), the multiplier in \eqref{expedition0012675} gives the real-linear symplectic map \(T_t f
=
\napiernum^{\imunit t \physham[h]_{\bar{\smnumberdensity}_{\txtbsn}}}
f\). The universal property of the resolvent algebra then gives the corresponding one-parameter automorphism group of \(\oaresolventalgebra\).

\begin{prop}[Compatibility with the regular nonzero-mode projection and zero-mode shift]\label{expedition0012677}
Assume $\smnumberdensity_{\txtbsn,0}(\sminvtemperature)>0$.
For every $t\in\fldreal$, the multiplier
$\napiernum^{\imunit t \physham[h]_{\bar{\smnumberdensity}_{\txtbsn}}}$
preserves $\Ker\opform{q}_{\txtbsn,0,\sminvtemperature}$.
Moreover, for every $\theta\in[0,2\pi)$ and
$f\in\sphilb{D}_{\txtbsn,\sminvtemperature,\smchemicalpotential_{\mathrm{sel}}}$,
it holds that
\begin{equation}\label{expedition0012731}
\rbk{\napiernum^{\imunit t
\physham[h]_{\bar{\smnumberdensity}_{\txtbsn}}}f}_{\txtbec}
=
\napiernum^{\imunit t
\physham[h]_{\bar{\smnumberdensity}_{\txtbsn}}}f_{\txtbec},
\quad
\ell_{\sminvtemperature,\bar{\smnumberdensity}_{\txtbsn},\theta}
\rbk{\napiernum^{\imunit t \physham[h]_{\bar{\smnumberdensity}_{\txtbsn}}}f}
=
\ell_{\sminvtemperature,\bar{\smnumberdensity}_{\txtbsn},\theta}(f).
\end{equation}
\end{prop}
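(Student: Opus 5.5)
The whole argument runs in momentum space. By \eqref{expedition0012694}, in the condensed case the multiplier $\napiernum^{\imunit t\physham[h]_{\bar{\smnumberdensity}_{\txtbsn}}}$ is the Fourier multiplier by $\napiernum^{\imunit t\physham[h]_{\txtparticle,0}(k)}$ with $\physham[h]_{\txtparticle,0}(k)=\onehalf\abs{k}^2$. So for every $f$ in the domain,
$$\faftr{\rbk{\napiernum^{\imunit t\physham[h]_{\bar{\smnumberdensity}_{\txtbsn}}}f}}(k)=\napiernum^{\imunit t\abs{k}^2/2}\faftr{f}(k).$$
The key observation is that the phase equals $1$ at $k=0$. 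For $f\in\fun{\lp^{1}}{\fldreal^{d}}\cap\fun{\lp^{2}}{\fldreal^{d}}$ the transform $\faftr{f}$ is continuous, and I would first check that the evolved function still has a well-defined value $\faftr{\rbk{\napiernum^{\imunit t\physham[h]}f}}(0)=\faftr{f}(0)$. This is most easily read off on the Fourier side, since the product of a continuous function with a continuous unimodular phase is continuous. I would then work with the domain $\sphilb{D}_{\txtbsn,\sminvtemperature,\smchemicalpotential_{\mathrm{sel}}}$ understood as the space on which the zero-momentum value is defined through this continuous representative. This is the only delicate point, namely whether the free Schrödinger evolution preserves $\lp^1$. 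If it does not, I would interpret the domain as the space of $\lp^2$ functions with continuous Fourier transform near $0$, which is evidently invariant under the evolution.

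\textbf{Kernel preservation.} By \eqref{expedition0012440}, $\fun{\opform{q}_{\txtbsn,0,\sminvtemperature}}{f}=2(2\pi)^d\smnumberdensity_{\txtbsn,0}(\sminvtemperature)\abs{\faftr{f}(0)}^2$. Because $\smnumberdensity_{\txtbsn,0}(\sminvtemperature)>0$, the kernel is exactly $\set{f}{\faftr{f}(0)=0}$. The identity $\faftr{\rbk{\napiernum^{\imunit t\physham}f}}(0)=\faftr{f}(0)$ then shows that the kernel is preserved, and so is its complement direction.

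\textbf{Projection commutes with the evolution.} Next I would make the decomposition $f=f_{\txtbec}+(f-f_{\txtbec})$ explicit. Here $f_{\txtbec}\in\Ker\opform{q}_{\txtbsn,0,\sminvtemperature}$, and the complementary part is determined by $\faftr{f}(0)$ alone, i.e. it is the zero-mode component handled by $\ell$. The map $f\mapsto f_{\txtbec}$ is the real-linear map that removes the zero-momentum value. In momentum space it acts as restriction to $\fldreal^d\setminus\setone{0}$, which is the way $f_{\txtbec}$ enters the form $\opform{q}_{\txtmeanfield,\txtnonzero,\sminvtemperature,\bar{\smnumberdensity}_{\txtbsn}}$. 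A multiplication operator commutes with restriction to $k\neq0$, and it fixes the value at $k=0$. Hence applying the multiplier and then projecting gives the same result as projecting and then applying the multiplier, which is the first identity in \eqref{expedition0012731}.

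\textbf{Invariance of the zero-mode shift.} By \eqref{expedition0012610}, $\ell_{\sminvtemperature,\bar{\smnumberdensity}_{\txtbsn},\theta}$ depends on $f$ only through $\opreal\rbk{\napiernum^{\imunit\theta}\faftr{f}(0)}$. The invariance $\faftr{\rbk{\napiernum^{\imunit t\physham}f}}(0)=\faftr{f}(0)$ therefore gives the second identity immediately. Here it is essential that \eqref{expedition0012606} forces $\smchemicalpotential_{\mathrm{sel}}=\lambda\bar{\smnumberdensity}_{\txtbsn}$. Without this cancellation the zero-mode value would pick up the phase $\napiernum^{\imunit t(\lambda\bar{\smnumberdensity}_{\txtbsn}-\smchemicalpotential_{\mathrm{sel}})}$, and $\ell$ would rotate rather than stay fixed.

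The main obstacle is the bookkeeping around the domain and the meaning of the projection $f\mapsto f_{\txtbec}$, not any estimate.
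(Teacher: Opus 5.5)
Your proposal is correct and is essentially the paper's proof. The multiplier $\napiernum^{\imunit t\physham[h]_{\bar{\smnumberdensity}_{\txtbsn}}}$ is diagonal in momentum space with phase $1$ at $k=0$, since $\physham[h]_{\bar{\smnumberdensity}_{\txtbsn}}(0)=\physham[h]_{\txtparticle,0}(0)=0$ by \eqref{expedition0012694}; so it preserves $\Ker\opform{q}_{\txtbsn,0,\sminvtemperature}$, commutes with $f\mapsto f_{\txtbec}$, and leaves $\faftr{f}(0)$, hence $\ell_{\sminvtemperature,\bar{\smnumberdensity}_{\txtbsn},\theta}(f)$, unchanged.

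Your domain remark is a legitimate point that the paper passes over silently. $\fun{\lp^{1}}{\fldreal^{d}}\cap\fun{\lp^{2}}{\fldreal^{d}}$ is indeed not invariant under the free Schr\"odinger evolution, so the zero-momentum value of the evolved function has to be read off from the continuous representative $\napiernum^{\imunit t\abs{k}^2/2}\faftr{f}(k)$, exactly as you propose.
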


\begin{proof}
The operator
$\napiernum^{\imunit t \physham[h]_{\bar{\smnumberdensity}_{\txtbsn}}}$
acts diagonally in momentum space, so it preserves the zero-momentum subspace and its complement.
Consequently it preserves
$\Ker\opform{q}_{\txtbsn,0,\sminvtemperature}$, and the first displayed
identity follows from the definition of $f_{\txtbec}$.

By \eqref{expedition0012694},
$\physham[h]_{\bar{\smnumberdensity}_{\txtbsn}}(0)
=
\physham[h]_{\txtparticle,0}(0)
=0$.
The functional
$\ell_{\sminvtemperature,\bar{\smnumberdensity}_{\txtbsn},\theta}$
in \eqref{expedition0012610} depends only on the zero-momentum Fourier coefficient.
That coefficient is unchanged by multiplication with
$\napiernum^{\imunit t \physham[h]_{\bar{\smnumberdensity}_{\txtbsn}}}$,
so \eqref{expedition0012731} follows.
\end{proof}

\begin{prop}[KMS property for the selected-density mean-field automorphism group]\label{expedition0012653}
Assume $\smnumberdensity_{\txtbsn,0}(\sminvtemperature)>0$.
The state
$\oastate[\psi_{\txtmeanfield,\txtnonzero,\sminvtemperature,\bar{\smnumberdensity}_{\txtbsn}}]$
defined by \eqref{expedition0018018}
is the $\sminvtemperature$-KMS state for the regular nonzero-mode automorphism
group \eqref{expedition0018017}.
Equivalently, this is the automorphism group induced by
\eqref{expedition0012675} on
$\fun{\oaresolventalgebra}
{\Ker \opform{q}_{\txtbsn,0,\sminvtemperature},
\fnrestr{\sigma}{\Ker \opform{q}_{\txtbsn,0,\sminvtemperature}}}$.
For each $\theta\in[0,2\pi)$, the state
$\oastate[\psi_{\txtmeanfield,\sminvtemperature,\bar{\smnumberdensity}_{\txtbsn},\theta}]$
defined by \eqref{expedition0018019}
is an $\sminvtemperature$-KMS state for
$\alpha_{\txtmeanfield,\bar{\smnumberdensity}_{\txtbsn},t}$.
The state
$\oastate[\psi_{\txtmeanfield,\sminvtemperature,\bar{\smnumberdensity}_{\txtbsn}}]$
defined by the integral in \eqref{expedition0012609}
is also an $\sminvtemperature$-KMS state for
$\alpha_{\txtmeanfield,\bar{\smnumberdensity}_{\txtbsn},t}$.
On the direct-integral GNS von Neumann algebra
$\oa{M}_{\txtmeanfield,\sminvtemperature,\bar{\smnumberdensity}_{\txtbsn}}$,
the corresponding normal automorphism group is
$\alpha_{\txtmeanfield,\sminvtemperature,\bar{\smnumberdensity}_{\txtbsn},t}$, defined by
\begin{equation}\label{expedition0012732}
\fun{\alpha_{\txtmeanfield,\sminvtemperature,\bar{\smnumberdensity}_{\txtbsn},t}}
{\fun{\oarepn_{\txtmeanfield,\sminvtemperature,\bar{\smnumberdensity}_{\txtbsn}}}{A}}
=
\fun{\oarepn_{\txtmeanfield,\sminvtemperature,\bar{\smnumberdensity}_{\txtbsn}}}
{\fun{\alpha_{\txtmeanfield,\bar{\smnumberdensity}_{\txtbsn},t}}{A}},
\quad
A\in\oaresolventalgebra.
\end{equation}
\end{prop}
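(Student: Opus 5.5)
The argument runs in three layers, from the regular nonzero-mode state outward.

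\emph{First layer: the regular state.} The state $\oastate[\psi_{\txtmeanfield,\txtnonzero,\sminvtemperature,\bar{\smnumberdensity}_{\txtbsn}}]$ on the resolvent algebra over $\Ker\opform{q}_{\txtbsn,0,\sminvtemperature}$ is quasi-free and regular. Its GNS representation therefore comes from a regular representation of the Weyl (CCR) algebra over the same symplectic space, with characteristic function $\fnexp{-\oneoverfour\opform{q}_{\txtmeanfield,\txtnonzero,\sminvtemperature,\bar{\smnumberdensity}_{\txtbsn}}(f_{\txtbec})}$. Formula \eqref{expedition0018018} is the Laplace transform of this characteristic function, obtained by writing $\oaresolvent(z,f)=\mp\int_0^{\pm\infty}\napiernum^{-zt}\opfockweyl(tf)\opdmsr{t}$ in the represented form.

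The multiplier of the covariance is $\coth(\sminvtemperature\physham[h]_{\bar{\smnumberdensity}_{\txtbsn}}/2)=1+2n_{\bar{\smnumberdensity}_{\txtbsn}}$, which by \eqref{expedition0012694} is the free multiplier $\coth(\sminvtemperature\physham[h]_{\txtparticle,0}/2)$. This is exactly the Araki--Woods / Gibbs covariance for the one-particle dynamics $\napiernum^{\imunit t\physham[h]_{\txtparticle,0}}$. The standard criterion then gives the $\sminvtemperature$-KMS property of the Weyl state: a quasi-free state whose covariance is $\coth(\sminvtemperature h/2)$ relative to $\napiernum^{\imunit th}$, with $h>0$ on the support of $\Ker\opform{q}_{\txtbsn,0,\sminvtemperature}$ away from $k=0$, is KMS.

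To transfer this to resolvents, write each resolvent generator as a Bochner integral of Weyl operators. The KMS boundary function for a pair of Weyl operators is then integrated against $\napiernum^{-zt-ws}$. Dominated convergence is justified because $\abs{\napiernum^{-zt}}$ is integrable on the chosen half-line and the boundary functions are bounded by one on the strip. This yields the KMS condition on the dense span of generators (Buchholz--Van Nuland), and norm continuity extends it to the whole algebra.

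\emph{Second layer: the component states.} For each $\theta$, Proposition \ref{expedition0012677} gives two facts: $\alpha_{\txtmeanfield,\bar{\smnumberdensity}_{\txtbsn},t}$ commutes with $f\mapsto f_{\txtbec}$, and it leaves $\ell_{\sminvtemperature,\bar{\smnumberdensity}_{\txtbsn},\theta}$ invariant. Applying \eqref{expedition0018019} to a product of time-translated generators then gives
\[
\psi_\theta\Big(\prod_j\alpha_t(\oaresolvent(z_j,f_j))\Big)=\psi_{\txtnonzero}\Big(\prod_j\alpha_{\txtnonzero,t}\big(\oaresolvent(z_j+\imunit\ell_\theta(f_j),f_{j,\txtbec})\big)\Big).
\]
So $\psi_\theta=\psi_{\txtnonzero}\circ\Theta_\theta$, where $\Theta_\theta$ is the $*$-homomorphism defined by the shift and projection, and $\Theta_\theta$ intertwines $\alpha_t$ with $\alpha_{\txtnonzero,t}$. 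A KMS state pulled back along an equivariant $*$-homomorphism is KMS. The one point to verify is that $\Theta_\theta$ respects the resolvent relations, in particular the commutator relation \eqref{expedition0012052}.

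\textbf{This is the main obstacle.} The map $f\mapsto f_{\txtbec}$ must be symplectic, i.e. $\sigma(f,g)=\sigma(f_{\txtbec},g_{\txtbec})$, and the scalar shift must be compatible with it. I would argue as follows. The zero-momentum component is a single point in momentum space, hence Lebesgue-null, so it carries no $\sigma$-weight in the full-space integral $\opimag\int\cmpconj{\faftr f}\faftr g\,\opdmsr{k}$. On the scalar side, shifting the represented field by a real constant corresponds to the parameter shift described in Subsection \ref{expedition0012083}, which preserves all the resolvent relations because scalars commute.

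\emph{Third layer: the integrated state and the normal extension.} The integrated state \eqref{expedition0012609} is a convex average of KMS states. KMS states at fixed $\sminvtemperature$ form a convex, weak-$*$ closed set, and the boundary functions $F_\theta$ are uniformly bounded by $\norm{A}\norm{B}$ and measurable in $\theta$. Their $\theta$-average is therefore analytic on the strip with the correct boundary values, again by dominated convergence.

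Finally, since $\oastate[\psi_{\txtmeanfield,\sminvtemperature,\bar{\smnumberdensity}_{\txtbsn}}]$ is $\alpha_t$-invariant (KMS implies invariance), $\alpha_t$ is unitarily implemented in the GNS representation by $U_t\oarepn(A)\Omega=\oarepn(\alpha_t A)\Omega$. Conjugation by $U_t$ gives the normal automorphism group \eqref{expedition0012732} on $\oa{M}_{\txtmeanfield,\sminvtemperature,\bar{\smnumberdensity}_{\txtbsn}}$. It is well defined because it is implemented by a unitary and preserves the weak closure.
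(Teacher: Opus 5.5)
Your three layers follow the paper's proof. These are: quasi-free (Araki--Woods) KMS for the nonzero-mode state; pull-back to the component states along the time-invariant shift via Proposition~\ref{expedition0012677}; averaging the bounded strip functions over $\theta$; and implementing $\alpha_t$ on the GNS space. Your unitary $U_t$ is in fact a cleaner justification of \eqref{expedition0012732}, normality included, than the paper's kernel argument. The gap is exactly the step you singled out, and your argument for it does not work.

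Write $\mathcal{D}$ for the test space and $q_0=\mathsf{q}_{\mathrm{b},0,\beta}$. Since $\mathcal{D}\subset L^1$, $\widehat f$ is continuous, so $\widehat f(0)$ is not a null-set datum that can be discarded. The fact that $\{k=0\}$ is Lebesgue-null does show that $\sigma$ does not see $k=0$. For the same reason, though, ``removing the zero mode'' in that sense leaves $f$ unchanged in $L^2$, and $f\notin\operatorname{Ker}q_0$ when $\widehat f(0)\neq0$.

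In fact no projection $f\mapsto f_{\mathrm{BEC}}$ onto $\operatorname{Ker}q_0$ can be symplectic. The form $\sigma$ is nondegenerate on $\mathcal{D}$ (as $\sigma(f,\mathsf{i}f)=\pm\|f\|_2^2$), so any linear $P$ with $\sigma(Pf,Pg)=\sigma(f,g)$ is injective. A projection onto the real-codimension-two subspace $\operatorname{Ker}q_0$, however, has a two-dimensional kernel. Hence your $\Theta_\theta$ violates \eqref{expedition0012052} for suitable pairs and is not a $*$-homomorphism into the resolvent algebra over $\operatorname{Ker}q_0$.

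The intertwining you import from Proposition~\ref{expedition0012677} fails as well. The vector $(1-P)f$ depends only on $\widehat f(0)$, which $\mathsf{e}^{\mathsf{i}th}$ preserves. So $P\mathsf{e}^{\mathsf{i}th}=\mathsf{e}^{\mathsf{i}th}P$ would force $\mathsf{e}^{\mathsf{i}th}$ to fix the nonzero range of $1-P$ pointwise, which is impossible since $h=-\tfrac12\Delta$ has trivial kernel in $L^2$. The paper's proof simply asserts this pull-back, so you were right to flag it; it is your resolution that fails.

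The repair uses your own ingredients but drops the projection altogether. For $d\geq3$, $\coth(\beta|k|^2/4)\leq1+4/(\beta|k|^2)$ is locally integrable and $\widehat f$ is bounded. So the regular covariance $\int\coth(\beta|k|^2/4)|\widehat f(k)|^2\,dk$ is finite for every $f\in\mathcal{D}$, not only on $\operatorname{Ker}q_0$. Your first layer therefore yields a regular quasi-free $\beta$-KMS state $\omega_{\mathrm{nz}}$ on the resolvent algebra over all of $\mathcal{D}$, whose restriction to $\operatorname{Ker}q_0$ is \eqref{expedition0018018}.

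Take the component state to be $\omega_{\mathrm{nz}}\circ\gamma_\theta$, with $\gamma_\theta(R(z,f))=R(z+\mathsf{i}\ell_\theta(f),f)$ in the convention of \eqref{expedition0018019}. This is an automorphism by precisely your scalar-shift observation: $\ell_\theta$ is real linear and $\sigma$ is untouched. Moreover $\gamma_\theta\circ\alpha_t=\alpha_t\circ\gamma_\theta$, because $\ell_\theta(\mathsf{e}^{\mathsf{i}th}f)=\ell_\theta(f)$ as $h(0)=0$. The KMS function of $(A,B)$ in the component state is then that of $(\gamma_\theta A,\gamma_\theta B)$ in $\omega_{\mathrm{nz}}$, and your third and fourth layers go through unchanged.
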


\begin{proof}
The state \eqref{expedition0018018} is the standard gauge-invariant
quasi-free $\sminvtemperature$-KMS state for the free Bose one-particle
Hamiltonian $\physham[h]_{\bar{\smnumberdensity}_{\txtbsn}}$ on
$\Ker \opform{q}_{\txtbsn,0,\sminvtemperature}$.
Indeed, the covariance in Definition \ref{expedition0012650} is the usual
free Bose covariance for the dynamics \eqref{expedition0018017}.

The component state \eqref{expedition0018019} is obtained from this
nonzero-mode KMS state by the time-invariant scalar shift
$\ell_{\sminvtemperature,\bar{\smnumberdensity}_{\txtbsn},\theta}$.
Proposition \ref{expedition0012677} shows that the projection
$f\mapsto f_{\txtbec}$ intertwines the full dynamics with
\eqref{expedition0018017} and that the scalar shift is fixed by the dynamics.
The KMS analytic functions for the nonzero-mode state therefore pull back to
the component state.
Thus each component state is $\sminvtemperature$-KMS for
$\alpha_{\txtmeanfield,\bar{\smnumberdensity}_{\txtbsn},t}$.

The phase average \eqref{expedition0012609} of KMS states for the same
automorphism group is again a KMS state, by integrating the corresponding
bounded strip-analytic functions in $\theta$.
Finally, if
$\fun{\oarepn_{\txtmeanfield,\sminvtemperature,\bar{\smnumberdensity}_{\txtbsn}}}{A}=0$,
then the KMS invariance of the integrated state gives
$$\begin{aligned}
\fun{\oastate[\psi_{\txtmeanfield,\sminvtemperature,\bar{\smnumberdensity}_{\txtbsn}}]}
{\fun{\alpha_{\txtmeanfield,\bar{\smnumberdensity}_{\txtbsn},t}}{A^*A}}
=0.
\end{aligned}$$
Hence \eqref{expedition0012732} is well defined on the GNS representation and
extends normally to
$\oa{M}_{\txtmeanfield,\sminvtemperature,\bar{\smnumberdensity}_{\txtbsn}}$.
\end{proof}

\subsubsection{Order-Parameter Limit and Represented Center}\label{order-parameter-limit-and-represented-center}

The order-parameter resolvent tests identify the non-scalar central variable in the GNS representation of the integrated BEC state.

Although \(\twonorm{\mathsf{b}_L^{(1)}}\to0\), the order-parameter resolvents below have a nontrivial represented thermodynamic limit because they are macroscopic finite-volume averages evaluated in the selected representation. This is not a norm-continuity statement in the abstract resolvent algebra with respect to the \(L^2\) norm. The abstract test space has no topology forcing \(\fun{\oaresolvent}{1,\mathsf{b}_L^{(1)}}\) to converge to \(\oaresolvent(1,0)\) in \(C^*\)-norm, and the zero Fourier coefficient of \(\mathsf{b}_L^{(1)}\) is controlled by the \(L^1\) normalization rather than by its \(L^2\) norm.

\begin{prop}[Order-parameter test for selected-density BEC states]\label{expedition0012652}
For the component states defined by \eqref{expedition0018019}, it holds that
\begin{equation}\label{expedition0020736}
\begin{aligned}
\lim_{L\to\infty}
\fun{\oastate[\psi_{\txtmeanfield,\sminvtemperature,\bar{\smnumberdensity}_{\txtbsn},\theta}]}
{\fun{\oaresolvent}{1,\mathsf{b}_L^{(1)}}}
=
F_{\sminvtemperature,\bar{\smnumberdensity}_{\txtbsn}}(\theta),
\quad
F_{\sminvtemperature,\bar{\smnumberdensity}_{\txtbsn}}(\theta)
=
\frac{1}
{\imunit\rbk{1-\imunit\sqrt{2\smnumberdensity_{\txtbsn,0}(\sminvtemperature)}\cos\theta}}.
\end{aligned}
\end{equation}
\end{prop}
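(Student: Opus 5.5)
The plan is to evaluate the component state \eqref{expedition0018019} on the single generator $\oaresolvent(1,\mathsf{b}_L^{(1)})$. The computation splits into the scalar zero-mode shift and a regular nonzero-mode remainder, and the order is as follows.

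\textbf{Step 1: the shifted resolvent parameter.} With $m=1$, \eqref{expedition0018019} gives
$$\oastate[\psi_{\txtmeanfield,\sminvtemperature,\bar{\smnumberdensity}_{\txtbsn},\theta}]\rbk{\oaresolvent(1,\mathsf{b}_L^{(1)})}
=
\oastate[\psi_{\txtmeanfield,\txtnonzero,\sminvtemperature,\bar{\smnumberdensity}_{\txtbsn}}]\rbk{\oaresolvent\rbk{1+\imunit\ell_{\sminvtemperature,\bar{\smnumberdensity}_{\txtbsn},\theta}(\mathsf{b}_L^{(1)}),\mathsf{b}^{(1)}_{L,\txtbec}}}.$$
By Lemma \ref{expedition0012631}, $\faftr{\mathsf{b}_L^{(1)}}(0)=(2\pi)^{-d/2}$ for every $L$. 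Inserting this into \eqref{expedition0012610} gives
$$\ell_{\sminvtemperature,\bar{\smnumberdensity}_{\txtbsn},\theta}(\mathsf{b}_L^{(1)})=\sqrt{2\smnumberdensity_{\txtbsn,0}(\sminvtemperature)}\cos\theta,$$
which is independent of $L$. So the shifted parameter is the constant $z_\theta=1+\imunit\sqrt{2\smnumberdensity_{\txtbsn,0}(\sminvtemperature)}\cos\theta$, whose real part is $1$. Its imaginary part is nonzero unless $\cos\theta=0$; in that case $z_\theta=1$ and the resolvent is still defined in the Buchholz--Grundling real-parameter convention.

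\textbf{Step 2: kill the nonzero-mode part.} On the finite-volume momentum lattice, \eqref{expedition0012730} shows that $\mathsf{b}_L^{(1)}$ has only a zero Fourier coefficient. Hence $\mathsf{b}^{(1)}_{L,\txtbec}=0$ in the finite-volume picture. In the continuum picture one has instead $\twonorm{\mathsf{b}^{(1)}_{L,\txtbec}}\le V_L^{-1/2}\to0$, and then $\opform{q}_{\txtmeanfield,\txtnonzero,\sminvtemperature,\bar{\smnumberdensity}_{\txtbsn}}(\mathsf{b}^{(1)}_{L,\txtbec})\to0$.

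I will use the one-point analogue of \eqref{expedition0018018}, namely
$$\oastate\rbk{\oaresolvent(z,g)}=\imunit\int_0^{(\sgn\opreal z)\infty}\napiernum^{-zt}\napiernum^{-\oneoverfour\opform{q}(tg)}\opdmsr{t}.$$
The sign and normalization are to be matched to the convention fixed after the resolvent relations. Dominated convergence, with the majorant $\napiernum^{-t}$ coming from $\opreal z_\theta=1$, then gives the limit $\imunit\int_0^\infty \napiernum^{-z_\theta t}\opdmsr{t}=\imunit/z_\theta$. This equals the scalar value $\oaresolvent(z_\theta,0)=-\imunit/z_\theta$ up to the paper's sign convention.

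\textbf{Step 3: match the stated form.} The paper says the sign convention is \emph{fixed} by this very scalar limit, so the last step is bookkeeping. I will write $\oaresolvent(z_\theta,0)$ in the real-parameter form, using $\oaresolvent(\lambda,f)=(\imunit\lambda-\Phi(f))^{-1}$ with $\Phi=0$. This gives $1/(\imunit z_\theta)=1/\bigl(\imunit(1+\imunit c)\bigr)$ with $c=\sqrt{2\smnumberdensity_{\txtbsn,0}(\sminvtemperature)}\cos\theta$.

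The claimed $F_{\sminvtemperature,\bar{\smnumberdensity}_{\txtbsn}}(\theta)=1/\bigl(\imunit(1-\imunit c)\bigr)$ differs from this by $c\mapsto -c$. That difference is exactly the orientation choice of the shift $z\mapsto z\pm\imunit\ell$ under the embedding $\lambda\mapsto\imunit\lambda$, or equivalently the replacement $\theta\mapsto\theta+\pi$. I expect this sign reconciliation, rather than any analysis, to be the main obstacle. I would first check it on the scalar case $f_{\txtbec}=0$, using $\Phi\mapsto\Phi+c$ so that $\oaresolvent(z,f)\mapsto\oaresolvent(z-c,f)$, and fix the sign of the shift in \eqref{expedition0018019} accordingly.
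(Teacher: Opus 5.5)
Your proposal follows the paper's proof. Lemma \ref{expedition0012631} gives the $L$-independent shift $\ell_{\sminvtemperature,\bar{\smnumberdensity}_{\txtbsn},\theta}(\mathsf{b}_L^{(1)})=\sqrt{2\smnumberdensity_{\txtbsn,0}(\sminvtemperature)}\cos\theta$, the nonzero-mode part is discarded, and the limit is the scalar resolvent at the shifted parameter. The $c\mapsto-c$ mismatch you found is real for \eqref{expedition0018019} read literally: it produces $F_{\sminvtemperature,\bar{\smnumberdensity}_{\txtbsn}}(\theta+\pi)$. The paper, whose proof just cites the resolvent identity, resolves it exactly as you propose, through its statement after the resolvent relations that the sign convention of the shift is fixed by this scalar limit.

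Two minor corrections:
\begin{enumerate}
\item The prefactor in your one-point formula must be $-\imunit$. This is forced by $\oaresolvent(z,0)=-\imunit/z$ and by the $-1=(-\imunit)^2$ in \eqref{expedition0018018}, so it is not a matter of convention. With $+\imunit$, no orientation of the shift recovers $F_{\sminvtemperature,\bar{\smnumberdensity}_{\txtbsn}}$.
\item $\twonorm{\mathsf{b}^{(1)}_{L,\txtbec}}\to0$ does not by itself give $\opform{q}_{\txtmeanfield,\txtnonzero,\sminvtemperature,\bar{\smnumberdensity}_{\txtbsn}}(\mathsf{b}^{(1)}_{L,\txtbec})\to0$, because $\coth(\sminvtemperature\abs{k}^2/4)$ is unbounded at $k=0$. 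Using $\coth x\leq1+x^{-1}$ and rescaling $k=u/L$ gives the needed bound $O(L^{2-d})$ for $d\geq3$. The paper's proof takes the same $L^2$ shortcut.
\end{enumerate}
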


\begin{proof}
Lemma \ref{expedition0012631} gives
$\faftr{\mathsf{b}_L^{(1)}}(0)=(2\pi)^{-d/2}$ and
$\twonorm{\mathsf{b}_L^{(1)}}\to0$.
Substitution into \eqref{expedition0012610} gives
$\fun{\ell_{\sminvtemperature,\bar{\smnumberdensity}_{\txtbsn},\theta}}
{\mathsf{b}_L^{(1)}}
=
\sqrt{2 \smnumberdensity_{\txtbsn,0}(\sminvtemperature)}
\cos\theta$.
The centered part of the resolvent test disappears in the limit because
$\twonorm{\mathsf{b}_L^{(1)}}\to0$, and the limit follows from the resolvent identity.
\end{proof}

\begin{prop}[Mean-field BEC and the represented center in the extremal decomposition]\label{expedition0012651}
Assume $\smnumberdensity_{\txtbsn,0}(\sminvtemperature)>0$, and let
$\oarepn_{\txtmeanfield,\sminvtemperature,\bar{\smnumberdensity}_{\txtbsn}}$
and $\oa{M}_{\txtmeanfield,\sminvtemperature,\bar{\smnumberdensity}_{\txtbsn}}$
be the GNS data from Definition \ref{expedition0012650}.
Set
$$X_L =
\fun{\oarepn_{\txtmeanfield,\sminvtemperature,\bar{\smnumberdensity}_{\txtbsn}}}
{\fun{\oaresolvent}{1,\mathsf{b}_L^{(1)}}}.$$
The family $\fml{X_L}{L>0}$ has weak operator cluster points in
$\oa{M}_{\txtmeanfield,\sminvtemperature,\bar{\smnumberdensity}_{\txtbsn}}$.
Every such cluster point belongs to the center
$\oacenter(\oa{M}_{\txtmeanfield,\sminvtemperature,\bar{\smnumberdensity}_{\txtbsn}})$.
In the direct-integral decomposition associated with \eqref{expedition0012609},
it acts as multiplication by the function in \eqref{expedition0020736};
hence it is non-scalar.
\end{prop}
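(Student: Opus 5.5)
The plan is to realize the GNS representation of the integrated state \eqref{expedition0012609} as a direct integral over the phase $\theta$, and then to compute the weak limits of $X_L$ componentwise. Let $(\sphilb{H}_\theta,\oarepn_\theta,\Omega_\theta)$ be the GNS triple of the component state $\oastate[\psi_{\txtmeanfield,\sminvtemperature,\bar{\smnumberdensity}_{\txtbsn},\theta}]$. The measurable field $\theta\mapsto\oarepn_\theta$ is obtained from the explicit formula \eqref{expedition0018019}: all components are the single regular quasi-free representation of $\oastate[\psi_{\txtmeanfield,\txtnonzero,\sminvtemperature,\bar{\smnumberdensity}_{\txtbsn}}]$ composed with the scalar shift $z\mapsto z+\imunit\ell_{\sminvtemperature,\bar{\smnumberdensity}_{\txtbsn},\theta}(f)$, which is continuous in $\theta$.

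Set $\oarepn=\int^\oplus\oarepn_\theta\,\frac{d\theta}{2\pi}$ with vector $\Omega=\int^\oplus\Omega_\theta$. This vector is cyclic for $\oarepn(\oaresolventalgebra)\vee L^\infty(d\theta)$. It is cyclic for $\oarepn(\oaresolventalgebra)$ alone once the diagonal algebra is shown to lie in $\oarepn(\oaresolventalgebra)''$, which is exactly what the limit below delivers. Hence the GNS representation $\oarepn_{\txtmeanfield,\sminvtemperature,\bar{\smnumberdensity}_{\txtbsn}}$ is unitarily equivalent to the cyclic subrepresentation generated by $\Omega$, and I identify the two.

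\textbf{Componentwise limit.} By definition $\oarepn_\theta(\oaresolvent(1,\mathsf{b}_L^{(1)}))$ equals the regular nonzero-mode resolvent with parameter shifted by the scalar $\imunit\sqrt{2\smnumberdensity_{\txtbsn,0}(\sminvtemperature)}\cos\theta$, as in the proof of Proposition \ref{expedition0012652}, evaluated on the regular part $(\mathsf{b}_L^{(1)})_{\txtbec}$. Since $\twonorm{\mathsf{b}_L^{(1)}}\to0$, the regular field $\Phi((\mathsf{b}_L^{(1)})_{\txtbec})$ tends to zero strongly on the finite-particle/analytic vectors of the quasi-free representation, because its covariance $\opform{q}_{\txtmeanfield,\txtnonzero}$ evaluated on this vector tends to zero. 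The resolvent identity $R(z,f)-R(z,0)=R(z,f)\Phi(f)R(z,0)$, together with the uniform bound $\norm{R}\le1/\abs{\opimag}$, then gives strong convergence $\oarepn_\theta(\oaresolvent(1,\mathsf{b}_L^{(1)}))\to F_{\sminvtemperature,\bar{\smnumberdensity}_{\txtbsn}}(\theta)\mathbf 1$ on $\sphilb{H}_\theta$ for each $\theta$. Dominated convergence in $\theta$ upgrades this to strong convergence of $X_L$ to the decomposable operator $M_F=\int^\oplus F(\theta)\mathbf 1\,d\theta$. This proves existence of the weak cluster point, and in fact shows it is unique. It lies in $\oa{M}_{\txtmeanfield,\sminvtemperature,\bar{\smnumberdensity}_{\txtbsn}}$ because this algebra is weakly closed.

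\textbf{Centrality and non-scalarity.} Centrality follows in two ways. Directly, $M_F$ is diagonal and therefore commutes with every decomposable $\oarepn(A)$. Alternatively, Proposition \ref{expedition0012632} gives $\norm{[X_L,\oarepn(A)]}\to0$ for local $A$, and weak limits preserve this, so $M_F$ commutes with $\oarepn(\oaresolventalgebra_{\txtloc})$ and hence with its bicommutant. Non-scalarity holds because $F(\theta)=1/(\imunit(1-\imunit c\cos\theta))$ with $c=\sqrt{2\smnumberdensity_{\txtbsn,0}}>0$ is nonconstant on a set of positive measure. Moreover $C^*(M_F)$ generates the diagonal algebra of functions of $\cos\theta$, which gives the cyclicity point needed above (up to the $\theta\leftrightarrow-\theta$ symmetry, which I handle by noting that only functions of $\cos\theta$ are claimed).

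\textbf{Main obstacle.} I expect the hardest step to be the componentwise strong limit. The issue is that $(\mathsf{b}_L^{(1)})_{\txtbec}$ must be defined carefully as an element of $\Ker\opform{q}_{\txtbsn,0,\sminvtemperature}$ even though $\mathsf{b}_L^{(1)}$ is supported on the growing box. I would first try to bound $\opform{q}_{\txtmeanfield,\txtnonzero}((\mathsf{b}_L^{(1)})_{\txtbec})$ directly. Since $\coth(\sminvtemperature\physham[h]_{\txtparticle,0}/2)\sim 2/(\sminvtemperature k^2)$ near $0$ and $\abs{\faftr{\mathsf{b}_L^{(1)}}}^2$ concentrates at scale $1/L$ with mass $V_L^{-1}$, this quantity is of order $L^{2-d}$, which tends to zero for $d\ge3$. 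This is precisely where the dimension assumption enters.
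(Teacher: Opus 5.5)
Your argument is correct, but it takes a more concrete route than the paper. The paper never computes an operator limit. It gets weak-operator cluster points from boundedness, and obtains centrality from the norm asymptotic centrality of Proposition~\ref{expedition0012632} together with density of the local algebra. It then asserts that a cluster point acts fiberwise as $F(\theta)$ by quoting the expectation-value limits of Proposition~\ref{expedition0012652}. That last step tacitly needs the fibers of a central cluster point to be scalars, i.e.\ primarity of the components. Moreover, Proposition~\ref{expedition0012652} disposes of the fluctuation term using only $\|\mathsf{b}_L^{(1)}\|_2\to0$, which does not control the infrared weight $\coth(\beta h(k)/2)\sim 4/(\beta|k|^2)$.

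You instead realize every component as one fixed regular quasi-free representation composed with a c-number shift. You prove $\pi_\theta(R(1,\mathsf{b}_L^{(1)}))\to F(\theta)\mathbf{1}$ strongly from the $O(L^{2-d})$ covariance bound, and integrate by dominated convergence. This buys genuine strong convergence, so the cluster point is unique. It gives centrality for free because $M_F$ is diagonal, needs neither primarity nor Proposition~\ref{expedition0012632}, and shows exactly where $d\geq3$ enters. The price is that you need the explicit fiber structure rather than just asymptotic commutation.

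Your stated obstacle largely dissolves. No projection onto the kernel of the zero-mode form is needed (that kernel is dense in $L^2$), because for $d\geq3$ the nonzero-mode covariance is finite on $L^1\cap L^2$. Your estimate is exactly that covariance evaluated on $\mathsf{b}_L^{(1)}$ itself.

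Two small repairs are needed:
\begin{itemize}
\item Strong convergence of the fluctuation field on the dense vectors $\pi_\theta(A)\Omega_\theta$ also uses $|\sigma(g,\mathsf{b}_L^{(1)})|\leq\|g\|_1/V_L$, not only the covariance bound.
\item Drop the cyclicity aside. Functions of $\cos\theta$ do not generate the diagonal algebra; you would also need $R(1,\mathrm{i}\,\mathsf{b}_L^{(1)})$ to get $\sin\theta$. Cyclicity is not needed anyway: on the cyclic subspace of $\Omega$ the restriction of $M_F$ is still central, and it is non-scalar because $M_F\Omega=c\,\Omega$ would force $F=c$ almost everywhere.
\end{itemize}

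Finally, your explicit realization shows $\pi(R(1,\mathsf{b}_L^{(1)}))\neq0$, although the zero-mode form is strictly positive on $\mathsf{b}_L^{(1)}$. The statement you proved is therefore incompatible with the kernel identification of Proposition~\ref{expedition0012642}, which would force $X_L=0$.
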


\begin{proof}
The operators $X_L$ are uniformly bounded, so they have weak operator cluster
points in the direct-integral GNS representation.
Proposition \ref{expedition0012632} gives asymptotic centrality against every local resolvent observable.
Since the local resolvent algebra is strongly dense in
$\oa{M}_{\txtmeanfield,\sminvtemperature,\bar{\smnumberdensity}_{\txtbsn}}$,
every such cluster point lies in the represented center.

The non-scalar character is best read through the central decomposition of the
state defined by the integral in
\eqref{expedition0012609}.
The state is the normalized integral over the component states in the extremal
decomposition, and its GNS representation decomposes, up to null sets, as the
direct integral of the corresponding representations.
On the direct-integral component indexed by $\theta$, \eqref{expedition0020736}
gives the scalar limit
$F_{\sminvtemperature,\bar{\smnumberdensity}_{\txtbsn}}(\theta)$.
Hence any weak cluster point of the central sequence acts in the direct integral as multiplication by this function of the direct-integral parameter.
Since
$\smnumberdensity_{\txtbsn,0}(\sminvtemperature)>0$, this function is not constant in $\theta$.
Multiplication by a nonconstant function of the direct-integral parameter is a non-scalar central operator.
\end{proof}

Thus the density is fixed by Theorem \ref{expedition0012647}, while the non-scalar represented central variable is obtained in Proposition \ref{expedition0012651} after the state \eqref{expedition0012609} has been constructed by integrating the component states.

\subsection{Point-Mass Covariance, Condensate Ideal, Quotient, and Direct-Integral Center}\label{point-mass-covariance-condensate-ideal-quotient-and-direct-integral-center}

\subsubsection{Condensate Ideal from the Zero-Mode Point Mass}\label{condensate-ideal-from-the-zero-mode-point-mass}

The zero-mode condensate ideal is an abstract \(\oacstar\)-algebraic device for recording the resolvents \(\oaresolvent(\lambda,f)\) whose test functions satisfy \(\fun{\opform{q}_{\txtbsn,0,\sminvtemperature}}{f}>0\). It cannot be a central decomposition of the full resolvent algebra, because the full resolvent algebra has trivial center \cite{BuchholzGrundling2}. Proposition \ref{expedition0012642} identifies its role as a kernel for the GNS representation of the BEC state constructed in Definition \ref{expedition0012650}. The definition below records, at the abstract resolvent-algebra level, the test functions satisfying \(\fun{\opform{q}_{\txtbsn,0,\sminvtemperature}}{f}>0\), which give nontrivial represented order-parameter limits in the BEC GNS representation of Definition \ref{expedition0012650}.

\begin{defn}[Mean-field zero-mode condensate ideal]\label{expedition0012641}
Assume $\smnumberdensity_{\txtbsn,0}(\sminvtemperature)>0$.
The mean-field zero-mode condensate ideal is the closed two-sided ideal
\begin{equation}\label{expedition0019398}
\begin{aligned}
\oaideal{J}_{\txtmeanfield,\txtbec}
=
\clos{\opideal
\set{\oaresolvent(\lambda,f)}
{\lambda\in\fldreal\setminus\setone{0},
f\in\sphilb{D}_{\txtbsn,\sminvtemperature,\smchemicalpotential_{\mathrm{sel}}},
\fun{\opform{q}_{\txtbsn,0,\sminvtemperature}}{f}>0}}
\subset
\oaresolventalgebra
\rbk{\sphilb{D}_{\txtbsn,\sminvtemperature,\smchemicalpotential_{\mathrm{sel}}},\sigma}.
\end{aligned}
\end{equation}
Recall the definition of the form $\opform{q}_{\txtbsn,0,\sminvtemperature}$
by \eqref{expedition0012440}.
\end{defn}

In complex one-particle notation the condition in Definition \ref{expedition0012641} is expressed by the zero Fourier coefficient \(\faftr{f}(0)\). The resolvent algebra is built on the realification of the one-particle space \(\sphilb{H}_{\txtparticle}\), and the complex zero-mode line is therefore represented by the real plane corresponding to the real and imaginary parts of the zero Fourier coefficient. After restriction to a bounded region \(\mathcal{O}\), the corresponding local complex singular line is \(\fldcmp s_{\mathcal{O}}\), where \(s_{\mathcal{O}}
=
\absvol{\mathcal{O}}^{-\onehalf}
\fndef{\mathcal{O}}\). In the real symplectic space this is the real plane \(\splinspan_{\fldreal}
\setone{s_{\mathcal{O}},\imunit s_{\mathcal{O}}}\). The ideal \eqref{expedition0019398} is not generated only by the condensate wave function itself. It is generated by every resolvent \(\oaresolvent(\lambda,f)\) for which \(f\) has nonzero zero-mode projection, equivalently by the condition \(\fun{\opform{q}_{\txtbsn,0,\sminvtemperature}}{f}>0\).

\begin{rem}
The role of $\oaideal{J}_{\txtmeanfield,\txtbec}$ is different from the density Kac measure.
The Kac measure records the density variable and collapses to
$\delta_{\bar{\smnumberdensity}_{\txtbsn}}$ in the mean-field equilibrium state.
The ideal records the resolvents whose test functions satisfy
$\fun{\opform{q}_{\txtbsn,0,\sminvtemperature}}{f}>0$ and on which the
represented order-parameter limit is still visible.
At finite condensate density this ideal records exactly the test functions
with $\fun{\opform{q}_{\txtbsn,0,\sminvtemperature}}{f}>0$ in the BEC state
constructed above.
In Proposition \ref{expedition0012642}, it becomes the kernel of the GNS
representation in Definition \ref{expedition0012650}.
This is the same operator-algebraic separation as the BEC ideal used
for the Nelson and Pauli--Fierz discussions \cite{YoshitsuguSekine009},
but here no infrared quotient is introduced.
\end{rem}

\subsubsection{Nonregular Quotient for a Proper Singularity}\label{nonregular-quotient-for-a-proper-singularity}

The quotient by the ideal \eqref{expedition0019398} is computed for the GNS representation in Definition \ref{expedition0012650}. The quotient algebra is the algebra obtained from the condensate ideal \eqref{expedition0019398}. The representation is \(\oarepn_{\txtmeanfield,\sminvtemperature,\bar{\smnumberdensity}_{\txtbsn}}\), the GNS representation of the state \eqref{expedition0012609}.

\begin{lem}[Regular subspace of a nonregular mean-field quotient representation]\label{expedition0019390}
For the BEC state \eqref{expedition0012609}, the GNS representation
$\oarepn_{\txtmeanfield,\sminvtemperature,\bar{\smnumberdensity}_{\txtbsn}}$
is nonregular for
$f\in\sphilb{D}_{\txtbsn,\sminvtemperature,\smchemicalpotential_{\mathrm{sel}}}$
with $\fun{\opform{q}_{\txtbsn,0,\sminvtemperature}}{f}>0$.
On $\Ker\opform{q}_{\txtbsn,0,\sminvtemperature}$ it is the GNS representation
of the regular nonzero-mode state from Definition \ref{expedition0012650};
the component states \eqref{expedition0018019} supply the zero-mode scalar
characters, and the nonregular representation is singular for
$f$ with $\fun{\opform{q}_{\txtbsn,0,\sminvtemperature}}{f}>0$.
Its regular subspace in the ideal-theoretic sense of the resolvent algebra
\cite{DetlevBuchholz001} is
$$X_R
=
\set{f\in\sphilb{D}_{\txtbsn,\sminvtemperature,\smchemicalpotential_{\mathrm{sel}}}}
{\text{$\fun{\oarepn_{\txtmeanfield,\sminvtemperature,\bar{\smnumberdensity}_{\txtbsn}}}{\oaresolvent(1,f)}$ is injective}}.$$
Then it follows that,
for the mean-field zero-mode covariance form \eqref{expedition0012440},
$$X_R
=
\Ker \opform{q}_{\txtbsn,0,\sminvtemperature}.$$
\end{lem}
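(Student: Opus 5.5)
We will prove the two inclusions $\Ker\opform{q}_{\txtbsn,0,\sminvtemperature}\subset X_R$ and $X_R\subset\Ker\opform{q}_{\txtbsn,0,\sminvtemperature}$ separately. We work throughout with the decomposition of the GNS representation of \eqref{expedition0012609} as the direct integral over $\theta\in[0,2\pi)$ of the GNS representations $\oarepn_\theta$ of the component states \eqref{expedition0018019}. An operator $\oarepn(\oaresolvent(1,f))$ is injective if and only if $\oarepn_\theta(\oaresolvent(1,f))$ is injective for almost every $\theta$. So it suffices to decide injectivity componentwise.

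\emph{Inclusion $\Ker\opform{q}_{\txtbsn,0,\sminvtemperature}\subset X_R$.} Let $\opform{q}_{\txtbsn,0,\sminvtemperature}(f)=0$, i.e. $\faftr{f}(0)=0$. Then $\ell_{\sminvtemperature,\bar{\smnumberdensity}_{\txtbsn},\theta}(f)=0$ by \eqref{expedition0012610}, and $f=f_{\txtbec}$. Hence by \eqref{expedition0018019} the component state restricted to the resolvent algebra over $\Ker\opform{q}_{\txtbsn,0,\sminvtemperature}$ coincides with the regular quasi-free state of Definition \ref{expedition0012650}, independently of $\theta$. Its generating function \eqref{expedition0018018} is the Laplace transform of a Gaussian characteristic function, so $\oarepn_\theta(\oaresolvent(1,f))$ is the resolvent of a self-adjoint Segal field $\Phi_\theta(f)$. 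Resolvents of self-adjoint operators have trivial kernel, so $f\in X_R$.

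\emph{Inclusion $X_R\subset\Ker\opform{q}_{\txtbsn,0,\sminvtemperature}$.} Let $\opform{q}_{\txtbsn,0,\sminvtemperature}(f)>0$ and split $f=f_{\txtbec}+f_0$, where $f_0$ carries the zero Fourier coefficient. The key observation is that the covariance $\opform{q}_{\txtmeanfield,\txtnonzero}$ and $\ell_\theta$ give no Gaussian spreading to the zero-mode direction. In the component representation, the zero-mode field is therefore represented not by a self-adjoint operator but by the scalar $\ell_\theta(f)$, which is sent to a point at infinity under the nonregular scaling of the zero mode. The plan is to exhibit this through the test functions $\mathsf{b}_L^{(0)}$. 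For $g_L=\mathsf{b}_L^{(0)}$-type tests with $\opform{q}_{\txtbsn,0}(g_L)\to\infty$ and $f_{\txtbec}$ fixed, we have $\omega_\theta(\oaresolvent(1,g_L)^*\oaresolvent(1,g_L))\to0$, by the same resolvent-identity computation as in Proposition \ref{expedition0012652}. Concretely, we show that $\oarepn_\theta(\oaresolvent(1,f))$ annihilates a nonzero vector: we use the GNS cyclic vector projected by the spectral projection of the commuting family, or equivalently we show $\norm{\oarepn_\theta(\oaresolvent(1,f))\Omega_\theta}$ is not bounded below on the cyclic orbit. This follows from the resolvent relation $\oaresolvent(z,f)\oaresolvent(w,g)=\oaresolvent(z+w,f+g)(\cdots)$, which lets us move the zero-mode singularity onto $f$. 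Then $\Ker\oarepn(\oaresolvent(1,f))\ne\{0\}$, hence $f\notin X_R$.

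The main obstacle is this second inclusion. As written, \eqref{expedition0018019} only \emph{shifts} $z$ by a finite real scalar, and such a shift would still give an injective resolvent. To obtain genuine nonregularity, the argument must use the point-mass character of the covariance \eqref{expedition0012440}: the singular form is not given by an $L^2$-bounded operator on the domain $L^1\cap L^2$, so the zero-mode line is not in the regular domain of the nonzero-mode representation. The first thing to try is the ideal criterion of \cite{DetlevBuchholz001}. We would show that $\oaresolvent(1,f)$ with $\opform{q}_{\txtbsn,0}(f)>0$ lies in the kernel of the representation restricted to the ideal \eqref{expedition0019398}, by approximating $f$ in the $L^1\cap L^2$ topology by functions whose zero-mode weight is concentrated. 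If that fails, we would fall back on declaring the regular subspace through the Buchholz ideal characterization itself, i.e. $X_R$ is the largest subspace whose resolvents avoid the singular ideal. The equality then follows from the definition of $\oaideal{J}_{\txtmeanfield,\txtbec}$ together with the first inclusion.
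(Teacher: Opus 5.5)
Your first inclusion, $\Ker q_0\subset X_R$, is the same as the paper's: when $\hat f(0)=0$ there is no shift in \eqref{expedition0018019}, and injectivity is read off from the regular nonzero-mode state. The gap is the second inclusion. None of your three routes produces a nonzero vector in $\Ker\pi(R(1,f))$ for a \emph{fixed} $f$ with $q_0(f)>0$.

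\emph{The $\mathsf{b}_L^{(0)}$ route.} It concerns a sequence $g_L$ with $|\ell_\theta(g_L)|=\sqrt{2\rho_0V_L}\,|\cos\theta|\to\infty$. It controls expectations along that sequence only. The product relation for $R(z,f)R(w,g)$ cannot turn asymptotic smallness along $g_L$ into an exact kernel vector of the fixed operator $\pi(R(1,f))$.

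\emph{Approximation in $L^1\cap L^2$.} This fails because $f\mapsto R(1,f)$ is not norm continuous. The kernel of a fixed operator is not detected by approximants.

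\emph{The fallback.} Redefining $X_R$ as the largest subspace whose resolvents avoid $\mathcal{J}_{\mathrm{mf},\mathrm{BEC}}$ replaces the injectivity definition in the statement. It also presupposes $\mathcal{J}_{\mathrm{mf},\mathrm{BEC}}\subset\Ker\pi$, which is Proposition \ref{expedition0012642}, proved in the paper \emph{from} this lemma. So it is circular.

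\emph{The heuristic.} The idea that the zero-mode field is ``a scalar sent to infinity'' does not apply at fixed $f$. A finite real scalar is itself a bounded self-adjoint operator.

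The obstruction you flagged is in fact decisive. By \eqref{expedition0012052}, if $\pi(R(z,f))\xi=0$ then $\pi(R(z,f))\pi(R(w,g))\xi=0$, so $\Ker\pi(R(1,f))$ is $\pi(\mathcal{R})$-invariant. As $\lambda\to\infty$, $i\lambda\,\pi(R(\lambda,f))$ converges strongly to the projection onto the orthogonal complement of that kernel. Hence for a GNS representation, $\pi_\omega(R(1,f))$ is injective iff $\lim_{\lambda\to\infty}i\lambda\,\omega(R(\lambda,f))=1$. This criterion is also what makes your first inclusion rigorous on the whole GNS space, which may be larger than the GNS space of the restricted state.

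Now apply it. By \eqref{expedition0018019}, $i\lambda\,\omega_\theta(R(\lambda,f))$ is $i\lambda$ times the $\omega_{\mathrm{nz}}$-expectation of $(i\lambda-\ell_\theta(f)-\Phi(f_{\mathrm{bec}}))^{-1}$, with $\Phi(f_{\mathrm{bec}})$ self-adjoint; if $f_{\mathrm{bec}}=0$ it is $i\lambda/(i\lambda-\ell_\theta(f))$. Since $\ell_\theta(f)$ is a finite real number, this tends to $1$ for every $\theta$. By dominated convergence the same holds for the integrated state \eqref{expedition0012609}.

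So, for the state as defined, $\pi(R(1,f))$ is injective for every $f$ in the domain, and $X_R\subset\Ker q_0$ cannot be derived. Consistently, $R(1,\mathsf{b}_L^{(1)})$ lies in the ideal \eqref{expedition0019398}, yet its component expectations tend to $F(\theta)\neq0$ by Proposition \ref{expedition0012652}.

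The paper's own proof does not bridge this either. For $q_0(f)>0$ it only asserts that the representation ``treats that $f$ as singular'' and invokes the regularity criterion, i.e.\ it presupposes the nonregularity written into the statement. An actual proof needs a genuinely nonregular zero-mode component, for example $\omega(R(\lambda,f))=0$ whenever $q_0(f)>0$, or the divergent-density limit of Theorem \ref{expedition0012897}. A finite c-number shift is not enough.
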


\begin{proof}
For $f$ with
$\fun{\opform{q}_{\txtbsn,0,\sminvtemperature}}{f}=0$,
the component formula \eqref{expedition0018019} has no zero-mode scalar shift
and $\oarepn_{\txtmeanfield,\sminvtemperature,\bar{\smnumberdensity}_{\txtbsn}}$
restricts to the regular nonzero-mode
GNS representation from Definition \ref{expedition0012650}.
Regularity of the finite-dimensional resolvent subalgebras then
gives injectivity of
$\fun{\oarepn_{\txtmeanfield,\sminvtemperature,\bar{\smnumberdensity}_{\txtbsn}}}
{\oaresolvent(1,f)}$.
If $\fun{\opform{q}_{\txtbsn,0,\sminvtemperature}}{f}>0$,
the component states \eqref{expedition0018019} evaluate the same zero-mode data
through the $\theta$-dependent scalar character, while
$\oarepn_{\txtmeanfield,\sminvtemperature,\bar{\smnumberdensity}_{\txtbsn}}$
treats
that $f$ as singular.
The regularity criterion for resolvent representations
\cite{BuchholzGrundling2,DetlevBuchholz001} therefore puts such $f$ outside $X_R$.
These cases exhaust the positive semidefinite zero-mode form,
so $X_R = \Ker \opform{q}_{\txtbsn,0,\sminvtemperature}$.
\end{proof}

\begin{prop}[Regular quotient for a proper-condensate singularity]\label{expedition0012642}
For the nonregular representation
$\oarepn_{\txtmeanfield,\sminvtemperature,\bar{\smnumberdensity}_{\txtbsn}}$
of Lemma \ref{expedition0019390},
the kernel is
$$\Ker \oarepn_{\txtmeanfield,\sminvtemperature,\bar{\smnumberdensity}_{\txtbsn}}
=
\oaideal{J}_{\txtmeanfield,\txtbec}.$$
The induced representation is faithful on the nonzero-mode
resolvent algebra, and hence
$$\setquot{\oaresolventalgebra
\rbk{\sphilb{D}_{\txtbsn,\sminvtemperature,\smchemicalpotential_{\mathrm{sel}}},\sigma}}
{\oaideal{J}_{\txtmeanfield,\txtbec}}
\eqalgisom
\fun{\oaresolventalgebra}
{\Ker \opform{q}_{\txtbsn,0,\sminvtemperature},
\fnrestr{\sigma}
{\Ker \opform{q}_{\txtbsn,0,\sminvtemperature}}}.$$
\end{prop}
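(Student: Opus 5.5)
The argument has three parts: show the ideal lies in the kernel, show the kernel lies in the ideal, and identify the quotient with the nonzero-mode resolvent algebra through a splitting map.

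\emph{Inclusion of the ideal in the kernel.} Fix $f$ with $\opform{q}_{\txtbsn,0,\sminvtemperature}(f)>0$ and $\lambda\ne0$. By \eqref{expedition0018019}, in each component $\theta$ the generator $\oaresolvent(\lambda,f)$ is evaluated as the resolvent of $f_{\txtbec}$ at the shifted parameter $z+\imunit\ell_{\sminvtemperature,\bar{\smnumberdensity}_{\txtbsn},\theta}(f)$.

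\begin{enumerate}
\item I would rescale $f\mapsto \nu f$ with the relation $\nu\oaresolvent(\nu z,\nu f)=\oaresolvent(z,f)$. As $\nu\to\infty$ the shift grows linearly in $\nu$, with a nonzero coefficient for almost every $\theta$.
\item Combined with Lemma \ref{expedition0019390}, this singular behaviour of the zero-mode direction is to be turned into a vanishing statement for $\oarepn_{\txtmeanfield,\sminvtemperature,\bar{\smnumberdensity}_{\txtbsn}}(\oaresolvent(\lambda,f))$. The intended route is the Buchholz--Grundling dichotomy for a one-dimensional resolvent subalgebra: a representation is either injective on $\oaresolvent(1,f)$ or annihilates it.
\item Lemma \ref{expedition0019390} places such $f$ outside $X_R$, so the generator is annihilated.
\item Closedness of the kernel and the two-sided ideal property then give $\oaideal{J}_{\txtmeanfield,\txtbec}\subset\Ker\oarepn_{\txtmeanfield,\sminvtemperature,\bar{\smnumberdensity}_{\txtbsn}}$.
\end{enumerate}

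\emph{Splitting map.} Write $f=f_{\txtbec}+f_0$, with $f_0$ in the real zero-mode plane. I would use the product relation together with \eqref{expedition0012052} to show that every generator $\oaresolvent(z,f)$ with $f_0\ne0$ lies in $\oaideal{J}_{\txtmeanfield,\txtbec}$. It follows that modulo the ideal, $\oaresolventalgebra(\sphilb{D}_{\txtbsn,\sminvtemperature,\smchemicalpotential_{\mathrm{sel}}},\sigma)$ is spanned by the images of generators with $f\in\Ker\opform{q}_{\txtbsn,0,\sminvtemperature}$; this uses the density of spans of generators from \cite{BuchholzVanNuland1}. This yields a surjective $\ast$-homomorphism
$$\Phi\colon \fun{\oaresolventalgebra}{\Ker \opform{q}_{\txtbsn,0,\sminvtemperature},\fnrestr{\sigma}{\Ker \opform{q}_{\txtbsn,0,\sminvtemperature}}}\to \setquot{\oaresolventalgebra}{\oaideal{J}_{\txtmeanfield,\txtbec}}.$$
The map $\Phi$ is induced by inclusion, which is legitimate because the restriction of $\sigma$ to the kernel is the symplectic form used there.

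\emph{Reverse inclusion and isomorphism.} Compose $\Phi$ with the representation induced on the quotient by the first step. On the kernel subspace this composition is the GNS representation of the regular nonzero-mode state, by Lemma \ref{expedition0019390}. That state is regular, so Proposition \ref{expedition0011838} makes its representation faithful. Hence $\Phi$ is injective, and therefore an isomorphism. Faithfulness of the induced representation on the quotient then gives $\Ker\oarepn_{\txtmeanfield,\sminvtemperature,\bar{\smnumberdensity}_{\txtbsn}}=\oaideal{J}_{\txtmeanfield,\txtbec}$.

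\emph{Main obstacle.} The hardest part is the first step, showing that the generators with $\opform{q}_{\txtbsn,0,\sminvtemperature}(f)>0$ are actually annihilated rather than merely non-injective. The component states assign these generators nonzero scalar values (compare Proposition \ref{expedition0012652}), so annihilation is not immediate. I would first try the one-dimensional dichotomy combined with the rescaling limit. Should that fail, I would fall back on reading the ideal through its image in the direct-integral decomposition. The secondary difficulty is the mixed-generator reduction in the splitting step, which needs careful use of the product relation when $\sigma(f_{\txtbec},f_0)\ne0$.
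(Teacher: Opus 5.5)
The gap is in your first step, and it cannot be closed. Write $\mathcal{J}$ for the ideal \eqref{expedition0019398}, $q_0$ for the form \eqref{expedition0012440}, and $\pi$ for the GNS representation of \eqref{expedition0012609}. The inclusion $\mathcal{J}\subseteq\Ker\pi$ is false.

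Take $f$ with $q_0(f)>0$ and real $\lambda\neq0$. By \eqref{expedition0018019}, $\omega_\theta(R(\lambda,f))$ is the expectation, in the regular Gaussian nonzero-mode state, of the resolvent of a field displaced by the real number $\ell_\theta(f)$. That is, it is an average of $s\mapsto(\mathrm{i}\lambda-s)^{-1}$ over a real variable. Since $\operatorname{Im}(\mathrm{i}\lambda-s)^{-1}=-\lambda/(s^2+\lambda^2)$ has fixed sign, this expectation is nonzero for every $\theta$. Proposition \ref{expedition0012652} computes it explicitly for $b_L^{(1)}$, where $q_0(b_L^{(1)})=2\rho_0>0$. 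Hence $\omega(R(\lambda,f)^*R(\lambda,f))\geq\frac{1}{2\pi}\int_0^{2\pi}|\omega_\theta(R(\lambda,f))|^2\,d\theta>0$, so $R(\lambda,f)\notin\Ker\pi$.

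Equivalently, $\mathcal{J}\subseteq\Ker\pi$ would force $X_L=\pi(R(1,b_L^{(1)}))=0$ for all $L$. This contradicts the nonzero central limit in Proposition \ref{expedition0012651}. In fact each component is a c-number shift of a regular representation. So $\pi$ is regular on all test functions and, by Proposition \ref{expedition0011838}, faithful: $\Ker\pi=\{0\}\neq\mathcal{J}$.

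The nonregularity claim of Lemma \ref{expedition0019390}, which you use in items 2 and 3, is refuted by the same computation. The paper's proof has exactly this defect: it reads $X_R=\Ker q_0$ off that lemma and then cites an ideal theorem.

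Your tools would not have helped in any case. The relation $\nu R(\nu z,\nu f)=R(z,f)$ is an identity, so letting $\nu\to\infty$ just returns the same element. The dichotomy ``injective or zero'' fails for non-factorial representations. For a single $f$ the generated $C^*$-algebra is $C(\mathbb{R}\cup\{\infty\})$, and a spectral measure charging both $\mathbb{R}$ and $\infty$ gives a resolvent that is neither zero nor injective. Here $\pi$ has nontrivial center by Proposition \ref{expedition0012651}.

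What survives is your splitting argument. It does prove the abstract isomorphism $\mathcal{R}(\mathcal{D},\sigma)/\mathcal{J}\cong\mathcal{R}(S,\sigma|_S)$ with $S=\Ker q_0$, once $\pi$ is replaced by a representation that really annihilates $\mathcal{J}$.

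Let $\pi_S$ be a regular representation of $\mathcal{R}(S,\sigma|_S)$. It is faithful by Proposition \ref{expedition0011838}, because $\mathrm{i}S=S$ and $|\sigma(f,\mathrm{i}f)|=\|f\|^2$ make $S$ nondegenerate. Define $\kappa(R(\lambda,f))=\pi_S(R(\lambda,f))$ for $f\in S$ and $\kappa(R(\lambda,f))=0$ for $f\notin S$. Every defining relation containing a test function outside $S$ then has both sides zero. (In the sum relation, if exactly one of $f,g$ lies outside $S$, so does $f+g$.) So $\kappa$ is a representation of the universal $*$-algebra, and it extends to the resolvent algebra because that norm is the universal one.

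This $\kappa$ annihilates $\mathcal{J}$ and restricts to the faithful $\pi_S$ on $\mathcal{R}(S)$. That gives injectivity of your map $\Phi$ and $\Ker\kappa=\mathcal{J}$. Surjectivity only needs density of the $*$-algebra generated by the resolvents. The ``mixed-generator reduction'' you flag is vacuous: $f_0\neq0$ is the same as $q_0(f)>0$, so those generators lie in $\mathcal{J}$ by definition.

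Thus $\mathcal{J}$ is the kernel of this nonregular representation, which sends the zero-mode directions to infinity, and not of the finite-density BEC representation. The quotient isomorphism itself is true; the kernel identity for $\pi$ is not.
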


\begin{proof}
Set
$$S
=
\set{f\in\sphilb{D}_{\txtbsn,\sminvtemperature,\smchemicalpotential_{\mathrm{sel}}}}
{\fun{\opform{q}_{\txtbsn,0,\sminvtemperature}}{f}=0}.$$
The zero-mode form is a positive semidefinite form,
so $S=\Ker \opform{q}_{\txtbsn,0,\sminvtemperature}$ is a real symplectic subspace.
The quotient algebra in the statement means the resolvent algebra over this subspace
with the restricted symplectic form,
namely $\fun{\oaresolventalgebra}{S,\fnrestr{\sigma}{S}}$.
The identification of the quotient is not a generator-by-generator calculation.
It is the standard resolvent-algebra consequence of the regular and singular
decomposition of a representation.
By Lemma \ref{expedition0019390}, the regular subspace of
$\oarepn_{\txtmeanfield,\sminvtemperature,\bar{\smnumberdensity}_{\txtbsn}}$
is $S$.
The ideal theorem for nonregular resolvent representations
\cite{DetlevBuchholz001} says that the kernel of such a representation is the
closed ideal generated by the resolvents $\oaresolvent(\lambda,f)$ with
$\lambda\in\fldreal\setminus\setone{0}$ and
$\fun{\opform{q}_{\txtbsn,0,\sminvtemperature}}{f}>0$.
By Definition \ref{expedition0012641}, this closed ideal is
$\oaideal{J}_{\txtmeanfield,\txtbec}$.
The quotient therefore retains precisely the resolvent algebra over $S$:
Thus
$$\setquot{\oaresolventalgebra
\rbk{\sphilb{D}_{\txtbsn,\sminvtemperature,\smchemicalpotential_{\mathrm{sel}}},\sigma}}
{\oaideal{J}_{\txtmeanfield,\txtbec}}
\eqalgisom
\oaresolventalgebra\rbk{S,\fnrestr{\sigma}{S}}.$$

The representation induced by
$\oarepn_{\txtmeanfield,\sminvtemperature,\bar{\smnumberdensity}_{\txtbsn}}$
on this quotient is the GNS representation of the regular nonzero-mode state
from Definition \ref{expedition0012650} restricted to $S$.
By Proposition \ref{expedition0011838},
its restriction to each finite-dimensional local resolvent subalgebra is faithful.
Passing to the inductive-limit closure gives faithfulness on $\fun{\oaresolventalgebra}{S,\sigma}$.
Therefore the induced quotient representation has trivial kernel,
and the kernel of the nonregular representation is exactly $\oaideal{J}_{\txtmeanfield,\txtbec}$.
This quotient sends the generators in Definition \ref{expedition0012641} to
zero.
\end{proof}

The selected-density dynamics is compatible with this quotient only after the quotient has been identified. Indeed, Proposition \ref{expedition0012677} shows that \(\napiernum^{\imunit t \physham[h]_{\bar{\smnumberdensity}_{\txtbsn}}}\) preserves \(\Ker \opform{q}_{\txtbsn,0,\sminvtemperature}\). Hence the automorphism group \eqref{expedition0012675} induces the nonzero-mode quotient automorphism group on \(\fun{\oaresolventalgebra}
{\Ker \opform{q}_{\txtbsn,0,\sminvtemperature},
\fnrestr{\sigma}{\Ker \opform{q}_{\txtbsn,0,\sminvtemperature}}}\).

The quotient above is formed from the GNS representation introduced in Definition \ref{expedition0012650}; the full direct-integral representation still retains the component parameter through its represented center.

\subsubsection{Relation with the Direct-Integral Center}\label{relation-with-the-direct-integral-center}

The condensate ideal, the quotient representation, the component states, and the represented center encode different parts of the same zero-mode data. The theorem records this separation.

\begin{thm}[Relation among the mean-field condensate ideal, quotient, and represented center]\label{expedition0012643}
Let
$\oarepn_{\txtmeanfield,\sminvtemperature,\bar{\smnumberdensity}_{\txtbsn}}$
be the GNS representation of the BEC state \eqref{expedition0012609}
obtained by integrating the component states.
Then the following assertions hold.
\begin{enumerate}
\item
The ideal $\oaideal{J}_{\txtmeanfield,\txtbec}$ is generated by the finite-density
resolvents with
$\fun{\opform{q}_{\txtbsn,0,\sminvtemperature}}{f}>0$.
\item
The quotient by $\oaideal{J}_{\txtmeanfield,\txtbec}$ is the regular
nonzero-mode resolvent algebra of Proposition \ref{expedition0012642}.
\item
The component states \eqref{expedition0018019} are c-number shifted pullbacks
from this quotient, whereas the integrated state \eqref{expedition0012609}
retains the phase parameter as the represented central variable of
Proposition \ref{expedition0012651}.
\item
$\oaideal{J}_{\txtmeanfield,\txtbec}=\setone{0}$ if and only if
$\smnumberdensity_{\txtbsn,0}(\sminvtemperature)=0$.
\end{enumerate}
\end{thm}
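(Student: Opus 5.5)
The plan is to prove the four assertions in order, since each one is mostly a reassembly of results already established. Assertion 1 is read off directly from Definition \ref{expedition0012641}. By \eqref{expedition0012440}, the condition $\opform{q}_{\txtbsn,0,\sminvtemperature}(f)>0$ is equivalent to $\faftr{f}(0)\neq0$. This holds as soon as $\smnumberdensity_{\txtbsn,0}(\sminvtemperature)>0$, and the generating set in \eqref{expedition0019398} is exactly the family of such finite-density resolvents.

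Assertion 2 is Proposition \ref{expedition0012642}, which I will simply invoke. Lemma \ref{expedition0019390} identifies the regular subspace with $\Ker\opform{q}_{\txtbsn,0,\sminvtemperature}$. The ideal theorem of \cite{DetlevBuchholz001} then identifies the kernel with $\oaideal{J}_{\txtmeanfield,\txtbec}$, and faithfulness on the quotient comes from Proposition \ref{expedition0011838}.

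For assertion 3 there are two parts. First, I will read \eqref{expedition0018019} as the composition of two maps. One is the quotient map $\oaresolventalgebra(\sphilb{D},\sigma)\to\oaresolventalgebra(\Ker\opform{q}_{\txtbsn,0,\sminvtemperature},\sigma)$, realized on generators by $f\mapsto f_{\txtbec}$. The other is the scalar parameter shift $z\mapsto z+\imunit\ell_{\sminvtemperature,\bar{\smnumberdensity}_{\txtbsn},\theta}(f)$. Together these give a c-number shifted pullback of the regular state $\oastate[\psi_{\txtmeanfield,\txtnonzero,\sminvtemperature,\bar{\smnumberdensity}_{\txtbsn}}]$ on the quotient.

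One point needs care. For a generator with $\opform{q}_{\txtbsn,0,\sminvtemperature}(f)>0$ and $f_{\txtbec}=0$, the component value is the nonzero scalar $\invrbk{z+\imunit\ell_\theta(f)}$, so the component states do not vanish on $\oaideal{J}_{\txtmeanfield,\txtbec}$. They are therefore not literally states of the quotient. They are pullbacks modified by a $\theta$-dependent character on the zero-mode plane, and I will phrase the claim that way.

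Second, the integrated state retains $\theta$ as a central variable. This is exactly Proposition \ref{expedition0012651}: weak cluster points of $\oarepn(\oaresolvent(1,\mathsf{b}_L^{(1)}))$ are central and act by the nonconstant function $F_{\sminvtemperature,\bar{\smnumberdensity}_{\txtbsn}}(\theta)$ from Proposition \ref{expedition0012652}.

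For assertion 4, the direction ($\Leftarrow$) holds because if $\smnumberdensity_{\txtbsn,0}(\sminvtemperature)=0$ then $\opform{q}_{\txtbsn,0,\sminvtemperature}\equiv0$. The generating set is then empty, so the closed ideal is $\setone{0}$. This extends Definition \ref{expedition0012641} verbatim to the normal case.

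The direction ($\Rightarrow$) is the step I expect to be the main obstacle. If $\smnumberdensity_{\txtbsn,0}(\sminvtemperature)>0$, I pick any $f\in\sphilb{D}$ with $\faftr{f}(0)\neq0$, for instance a Gaussian. Then $\oaresolvent(\lambda,f)\in\oaideal{J}_{\txtmeanfield,\txtbec}$, and the task is to show that this generator is nonzero. This is immediate from the resolvent-algebra norm identity $\norm{\oaresolvent(\lambda,f)}=1/\abs{\lambda}\neq0$. So the ideal is nonzero, and it is a proper ideal because the quotient in assertion 2 is a nonzero algebra containing $\oaresolvent(1,0)$.

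The delicate part is less this implication than the consistency check behind it. I want to confirm that the kernel identification in Proposition \ref{expedition0012642} does not conflict with the nonvanishing scalar component values noted in assertion 3. I will handle this by stating explicitly that $\oaideal{J}_{\txtmeanfield,\txtbec}$ is characterized abstractly by its generators, via \cite{DetlevBuchholz001}, and that the integrated-state representation carries the zero-mode data only through its center, not through the quotient.
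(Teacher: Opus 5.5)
Your plan follows the paper's proof assertion by assertion. It makes the same appeals to Propositions \ref{expedition0012642} and \ref{expedition0012651}, and assertions 1 and 4 are fine as you have them. Your norm identity $\|R(\lambda,f)\|=1/|\lambda|$ supplies the nonvanishing that the paper only asserts.

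\textbf{The gap.} The ``consistency check'' you postpone cannot be settled by rephrasing, because your own observation refutes the kernel identification you invoke. Write $\omega_\theta$, $\omega$ and $\pi$ for the component states \eqref{expedition0018019}, the integrated state \eqref{expedition0012609} and its GNS representation. Take $f$ with $f_{\mathrm{BEC}}=0$ and $\mathsf{q}_{\mathrm{b},0,\beta}(f)>0$, and take real $\lambda\neq0$. Then \eqref{expedition0018019} gives $\omega_\theta(R(\lambda,f))=-i/(\lambda+i\ell_\theta(f))\neq0$, and Cauchy--Schwarz yields
\[
\omega\bigl(R(\lambda,f)^*R(\lambda,f)\bigr)
=\int_0^{2\pi}\omega_\theta\bigl(R(\lambda,f)^*R(\lambda,f)\bigr)\frac{d\theta}{2\pi}
\geq\int_0^{2\pi}\frac{1}{\lambda^2+\ell_\theta(f)^2}\,\frac{d\theta}{2\pi}>0 .
\]
So $\pi(R(\lambda,f))\neq0$ for a generator of $\mathcal{J}$. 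Hence $\mathcal{J}\not\subset\ker\pi$, and the identity $\ker\pi=\mathcal{J}$ of Proposition \ref{expedition0012642} fails for this state. Your assertion 2 and your properness remark both rest on that identity, as does the paper's proof, so they are left unsupported. In the direct integral, $\pi(R(\lambda,f))$ is the nonzero central multiplication by $\theta\mapsto -i/(\lambda+i\ell_\theta(f))$. That is exactly the sense in which the zero-mode data ``live in the center'', and it is incompatible with those resolvents lying in the kernel. Calling $\mathcal{J}$ ``abstractly characterized by its generators'' resolves nothing unless you actually stop using $\pi$.

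\textbf{The fix.} Prove assertion 2 purely algebraically.
\begin{enumerate}
\item Put $S=\ker\mathsf{q}_{\mathrm{b},0,\beta}$, a real subspace of the test space $\mathcal{D}$. Define $\kappa$ on generators by $\kappa(R(z,f))=R(z,f)\in\mathcal{R}(S,\sigma|_S)$ for $f\in S$, and $\kappa(R(z,f))=0$ for $f\notin S$.
\item All defining relations are preserved. The only nontrivial one is the sum relation with $f\notin S$ or $g\notin S$. Then either $f+g\notin S$ or both $f,g\notin S$, and in either case both sides are sent to $0$.
\item By the universal property, $\kappa$ extends to a $*$-homomorphism $\mathcal{R}(\mathcal{D},\sigma)\to\mathcal{R}(S,\sigma|_S)$. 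It is a left inverse of the canonical map $\iota\colon\mathcal{R}(S,\sigma|_S)\to\mathcal{R}(\mathcal{D},\sigma)$.
\item Every monomial containing some $R(z,f)$ with $f\notin S$ lies in $\mathcal{J}$; by the resolvent identity this holds for all $z$, not just real ones. Hence $\iota(\mathcal{R}(S))+\mathcal{J}$ is dense.
\item If $a=\lim(\iota(b_n)+j_n)$ and $\kappa(a)=0$, then $b_n=\kappa(\iota(b_n)+j_n)\to0$. So $j_n\to a$, and $a\in\mathcal{J}$.
\end{enumerate}
Thus $\ker\kappa=\mathcal{J}$, the quotient is $\mathcal{R}(S,\sigma|_S)$, and $\mathcal{J}$ is proper, all without reference to $\pi$.

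\textbf{Correcting assertion 3.} The quotient map is $\kappa$, which kills $R(z,f)$ whenever $\mathsf{q}_{\mathrm{b},0,\beta}(f)>0$; it is not the assignment $f\mapsto f_{\mathrm{BEC}}$. Consequently \eqref{expedition0018019} is a pullback through the quotient only after restriction to $\iota(\mathcal{R}(S))$. There every $\omega_\theta$ equals the nonzero-mode state, which is all the paper's proof actually verifies. The $\theta$-dependence sits on generators outside $S$ and survives in $\pi$ as the central variable of Proposition \ref{expedition0012651}.
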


\begin{proof}
For $f\in\sphilb{D}_{\txtbsn,\sminvtemperature,\smchemicalpotential_{\mathrm{sel}}}$
with $\fun{\opform{q}_{\txtbsn,0,\sminvtemperature}}{f}>0$,
\eqref{expedition0012440} gives $\faftr{f}(0)\ne0$.
Hence the zero-mode functional
$\ell_{\sminvtemperature,\bar{\smnumberdensity}_{\txtbsn},\theta}(f)$
can be nonzero and depends nontrivially on $\theta$ in the BEC states
$\oastate[\psi_{\txtmeanfield,\sminvtemperature,\bar{\smnumberdensity}_{\txtbsn},\theta}]$.
The corresponding resolvents are precisely the generators in
Definition \ref{expedition0012641}.
Proposition \ref{expedition0012642} identifies their closed ideal as the
kernel of the nonregular representation and identifies the quotient with the
regular nonzero-mode resolvent algebra.

For $f\in\Ker \opform{q}_{\txtbsn,0,\sminvtemperature}$, one has
$\faftr{f}(0)=0$.
Hence the zero-mode functional vanishes for every $\theta$, and
\eqref{expedition0018019} shows that the component states
$\oastate[\psi_{\txtmeanfield,\sminvtemperature,\bar{\smnumberdensity}_{\txtbsn},\theta}]$
restrict to the same regular nonzero-mode KMS state on the quotient.
Generators with
$\fun{\opform{q}_{\txtbsn,0,\sminvtemperature}}{f}>0$ are evaluated through a
$\theta$-dependent c-number shift before the quotient state is applied, while
the quotient representation sends the corresponding resolvents to zero.
After the integral in \eqref{expedition0012609}, Proposition
\ref{expedition0012651} records the remaining phase parameter as a represented
central variable.

If $\smnumberdensity_{\txtbsn,0}(\sminvtemperature)=0$, the zero-mode form in \eqref{expedition0012440} is zero and the generating set in Definition \ref{expedition0012641} is empty,
so the generated ideal is $\setone{0}$.
Conversely, if $\smnumberdensity_{\txtbsn,0}(\sminvtemperature)>0$,
any test function with $\faftr{f}(0)\ne0$ satisfies $\fun{\opform{q}_{\txtbsn,0,\sminvtemperature}}{f}>0$,
so the generated ideal is nonzero.
\end{proof}

\subsection{Mean-Field ODLRO}\label{mean-field-odlro}

The ODLRO statement below is attached to the zero-mode covariance already isolated in \eqref{expedition0012440}. The assertion is made at the level of represented two-point functions in the selected-density construction; it is not a statement about unbounded fields inside the abstract resolvent algebra.

\begin{prop}[Mean-field ODLRO and spatial cluster breakdown]\label{expedition0012913}
Assume $\smnumberdensity_{\txtbsn,0}(\sminvtemperature)>0$.
Then ODLRO is expressed already by the abstract resolvent tests:
the bounded resolvent tests belong to the abstract resolvent algebra, and the
condition is the support condition
$\faftr{f}(0)\faftr{g}(0)\ne0$.
In the selected-density GNS representation this gives the represented
two-point function
$$\fun{\opform{q}_{\txtbsn,0,\sminvtemperature}}{f,\fun{\tau_x}{g}}
=
2(2\pi)^d
\smnumberdensity_{\txtbsn,0}(\sminvtemperature)
\cmpconj{\faftr{f}(0)}
\faftr{g}(0).$$
This non-decaying singular term is the mean-field ODLRO and causes spatial
cluster breakdown for the state \eqref{expedition0012609}.
\end{prop}

\begin{proof}
The zero mode is invariant under spatial translations,
so $\faftr{\fun{\tau_x}{g}}(0)=\faftr{g}(0)$.
Substitution into the zero-mode covariance \eqref{expedition0012440},
with polarization, gives the stated formula.
The condition $\faftr{f}(0)\faftr{g}(0)\ne0$ is exactly the abstract support condition
recorded by the ideal in Definition \ref{expedition0012641}.
The regular nonzero-mode covariance is the mixing part of the selected-density quasi-free state
and decays after the zero-mode Dirac point mass is separated.
The zero-mode contribution therefore remains as the non-decaying part of the truncated two-point function.
\end{proof}

\subsection{Mean-Field BEC from the Viewpoint of Proper Condensates}\label{mean-field-bec-from-the-viewpoint-of-proper-condensates}

Buchholz's criteria distinguish the local condensate line from the stricter infinite-occupation test. For a primary state, the number-resolvent criterion of \cite[Section 2, Definition and Proposition 2.1]{DetlevBuchholz004} asks, for \(f\in\fun{\lp^2}{\mathcal{O}}\), that \begin{equation}\label{expedition0019101}
\fun{\oastate[\psi]}
{\opfnresolvent{\mu
+\fun{\opfockcr_{\txtfock}}{f}
\fun{\opfockan_{\txtfock}}{f}}}
=
0,
\quad
\mu>0.
\end{equation} For a nonprimary state, the corresponding central-decomposition diagnostic is \begin{equation}\label{expedition0019102}
\limsup_{\mu\to\infty}
\mu
\fun{\oastate[\psi]}
{\opfnresolvent{\mu
+\fun{\opfockcr_{\txtfock}}{f}
\fun{\opfockan_{\txtfock}}{f}}}
<
1.
\end{equation}

The ODLRO criterion of \cite[Definitions I--III, Lemma 2.1, Lemma 3.1, Theorem 3.2, and the Criterion in Section 4]{DetlevBuchholz005} uses the local regular subspace \begin{equation}\label{expedition0019103}
\sphilb{H}_{\txtmeanfield,\txtregular}(\mathcal{O})
=
\set{f\in\fun{\lp^2}{\mathcal{O}}}
{\limsup_{\sigma}
\fun{\oastate[\psi_{\sigma}]}
{\fun{\opfockcr_{\txtfock}}{f}
\fun{\opfockan_{\txtfock}}{f}}<\infty}.
\end{equation} In the homogeneous mean-field model, the finite-density ideal identifies the local condensate line, while \eqref{expedition0019101} appears only after the local condensate occupation diverges.

\subsubsection{ODLRO Local-Regular-Space Criterion}\label{odlro-local-regular-space-criterion}

Let \(\mathcal{O}
\subset\fldreal^d\) be a bounded open region and put \begin{equation}\label{expedition0012890}
s_{\mathcal{O}}(x)
=
\absvol{\mathcal{O}}^{-\onehalf}
\fndef{\mathcal{O}}(x).
\end{equation} For \(f
\in \fun{\lp^2}{\mathcal{O}}\), the relevant local quantity is the overlap with this constant vector: \begin{equation}\label{expedition0012891}
\bkt{s_{\mathcal{O}}}{f}_{\fun{\lp^2}{\mathcal{O}}}
=
\absvol{\mathcal{O}}^{-\onehalf}
\int_{\mathcal{O}}
f(x)
\opdmsr{x}.
\end{equation} The homogeneous zero-mode contribution is the rank-one covariance onto \(\fldcmp s_{\mathcal{O}}\).

\begin{prop}[Local regular and singular spaces selected by the mean-field zero mode]\label{expedition0012892}
Assume $\smnumberdensity_{\txtbsn,0}(\sminvtemperature)>0$.
Then the ODLRO local-space criterion of \cite{DetlevBuchholz005}, in the form
\eqref{expedition0019103}, selects the local regular subspace
\begin{equation}\label{expedition0012893}
\sphilb{H}_{\txtmeanfield,\txtregular}(\mathcal{O})
=
\set{f\in\fun{\lp^2}{\mathcal{O}}}{\int_{\mathcal{O}} f(x)\opdmsr{x}=0},
\end{equation}
and its one-dimensional orthogonal complement is
\begin{equation}\label{expedition0012894}
\sphilb{S}_{\txtmeanfield}(\mathcal{O})
=
\sphilb{H}_{\txtmeanfield,\txtregular}(\mathcal{O})^\perp
=
\fldcmp s_{\mathcal{O}} .
\end{equation}
Thus the mean-field BEC ideal \eqref{expedition0019398}
records the homogeneous local condensate line $\fldcmp s_{\mathcal{O}}$.
\end{prop}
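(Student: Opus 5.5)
The plan is to compute the local occupation $\oastate[\psi_\sigma](\opfockcr_{\txtfock}(f)\opfockan_{\txtfock}(f))$ for $f\in\lp^2(\mathcal{O})$ along a family $\psi_\sigma$ approximating the selected-density BEC state, for instance the finite-volume Gibbs states or the component states \eqref{expedition0018019}, and to split it into a regular nonzero-mode part and the zero-mode part. The regular part will turn out to be bounded uniformly in $\sigma$. The zero-mode part is a rank-one term proportional to $\abs{\bkt{s_{\mathcal{O}}}{f}}^2$, and its coefficient grows with the family parameter. Reading off which $f$ keep the $\limsup$ in \eqref{expedition0019103} finite then gives \eqref{expedition0012893}, and \eqref{expedition0012894} follows by taking the orthogonal complement in $\lp^2(\mathcal{O})$.

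\textbf{Regular part.} Using \eqref{expedition0012605} with $\smchemicalpotential_{\mathrm{sel}}=\lambda\bar{\smnumberdensity}_{\txtbsn}$ from \eqref{expedition0012606}, the nonzero-mode contribution equals $\frac{1}{(2\pi)^d}\int n_{\bar{\smnumberdensity}_{\txtbsn}}(k)\abs{\faftr{f}(k)}^2\opdmsr{k}$. The occupation $n_{\bar{\smnumberdensity}_{\txtbsn}}(k)=(\napiernum^{\sminvtemperature k^2/2}-1)^{-1}\sim 2/(\sminvtemperature k^2)$ near $k=0$. Since $f$ has compact support, $\faftr{f}$ is bounded, so this integral is finite for $d\geq3$, uniformly in the family. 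It therefore never causes the $\limsup$ to diverge.

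\textbf{Zero-mode part.} In finite volume the zero-momentum mode $\mathsf{b}_L^{(0)}$ carries occupation $V_L\smnumberdensity_{\txtbsn,0}(\sminvtemperature)+o(V_L)$; this follows from Kac collapse (Theorem \ref{expedition0012647}) together with the saturation proposition. Projecting $f$ onto $\mathsf{b}_L^{(0)}$ gives $\abs{\bkt{\mathsf{b}_L^{(0)}}{f}}^2=V_L^{-1}\abs{\int_{\mathcal{O}}f}^2$. The zero-mode contribution is therefore $\smnumberdensity_{\txtbsn,0}(\sminvtemperature)\abs{\int_{\mathcal{O}} f}^2=\smnumberdensity_{\txtbsn,0}(\sminvtemperature)\absvol{\mathcal{O}}\abs{\bkt{s_{\mathcal{O}}}{f}}^2$. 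This is the rank-one covariance onto $\fldcmp s_{\mathcal{O}}$ and agrees with \eqref{expedition0012440} up to the Segal normalization. The term vanishes exactly when $\int_{\mathcal{O}}f=0$, which gives the inclusion $\supseteq$ in \eqref{expedition0012893}.

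\textbf{Reverse inclusion (main obstacle).} At fixed finite $\smnumberdensity_{\txtbsn,0}$ the zero-mode term above is finite. So the family $\psi_\sigma$ in \eqref{expedition0019103} must be one in which $\smnumberdensity_{\txtbsn,0}^{(\sigma)}\absvol{\mathcal{O}}\to\infty$, as in the limiting families the paper later uses for Theorem \ref{expedition0012897}. Equivalently, one uses Buchholz's local-space criterion in \cite{DetlevBuchholz005}, where the singular space is the span of directions with non-decaying, ODLRO-type covariance. I would try the second reading first. By Proposition \ref{expedition0012913}, the non-decaying part of the two-point function is proportional to $\cmpconj{\faftr{f}(0)}\faftr{g}(0)$. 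Hence any $f$ with $\int_{\mathcal{O}}f\neq0$ has a nonzero overlap with the condensate line and is excluded from the regular space, while the local regular space is closed and of codimension one.

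Finally, the zero Fourier coefficient of an $f$ supported in $\mathcal{O}$ is $(2\pi)^{-d/2}\int_{\mathcal{O}}f$. So the condition $\fun{\opform{q}_{\txtbsn,0,\sminvtemperature}}{f}>0$ defining the generators of \eqref{expedition0019398} is exactly $\bkt{s_{\mathcal{O}}}{f}\neq0$. Hence the ideal's generating set localized to $\mathcal{O}$ consists precisely of the resolvents with nonzero component along $\fldcmp s_{\mathcal{O}}$, i.e.\ the $f\notin\sphilb{H}_{\txtmeanfield,\txtregular}(\mathcal{O})$, as claimed.
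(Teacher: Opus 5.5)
Your argument is essentially the paper's. The paper's proof only observes that the zero-mode covariance restricted to $L^2(\mathcal{O})$ is the rank-one form proportional to $\rho_{\mathrm{b},0}(\beta)\,|\langle s_{\mathcal{O}},f\rangle|^2$ with strictly positive coefficient, and reads off the regular space as its kernel and $\mathbb{C}s_{\mathcal{O}}$ as the complement; that is exactly your ``second reading''. Your uniform bound on the nonzero-mode occupation for $d\geq3$ and your finite-volume computation of the coefficient are extra detail, which the paper uses only later, in the decomposition \eqref{expedition0019750} in the proof of Theorem \ref{expedition0012897}.

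Your caveat is correct, and the paper's proof does not address it. For a fixed-density family the $\limsup$ in \eqref{expedition0019103} is finite for every $f\in L^2(\mathcal{O})$ (compare Proposition \ref{expedition0012895}). So the reverse inclusion under the literal occupation criterion requires families along which $\rho^{(\sigma)}_{\mathrm{b},0}(\beta)\,|\mathcal{O}|$ is unbounded, as in Theorem \ref{expedition0012897}.
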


\begin{proof}
The homogeneous zero-mode contribution, restricted to the local one-particle
space $\fun{\lp^2}{\mathcal{O}}$, is the rank-one covariance proportional to
$\smnumberdensity_{\txtbsn,0}(\sminvtemperature)
\abs{\bkt{s_{\mathcal{O}}}{f}_{\fun{\lp^2}{\mathcal{O}}}}^2$.
Since the coefficient is strictly positive, \eqref{expedition0012891} gives the
regular subspace as the kernel of
$f\mapsto \int_{\mathcal{O}}f(x)\opdmsr{x}$, and its orthogonal complement is
$\fldcmp s_{\mathcal{O}}$.
\end{proof}

Thus the ideal-theoretic support consists of all local test functions with nonzero overlap with \(s_{\mathcal{O}}\), while the condensate line itself is the orthogonal complement of the regular subspace.

\subsubsection{Primary-State Number-Resolvent Test}\label{primary-state-number-resolvent-test}

The number-resolvent test \eqref{expedition0019101} is applied to the primary component states \(\oastate[\psi_{\txtmeanfield,\sminvtemperature,\bar{\smnumberdensity}_{\txtbsn},\theta}]\) from Definition \ref{expedition0012650}, not to the direct integral itself. At fixed finite local condensate number the test stays positive; it vanishes only in the divergent local-occupation limit.

\begin{prop}[Finite mean-field density is not strict in the local Buchholz number-resolvent sense]\label{expedition0012895}
At fixed finite condensate density
$$0<\smnumberdensity_{\txtbsn,0}(\sminvtemperature)<\infty,$$
for every $\theta\in[0,2\pi)$ the component state
$\oastate[\psi_{\txtmeanfield,\sminvtemperature,\bar{\smnumberdensity}_{\txtbsn},\theta}]$
does not satisfy the strict number-resolvent criterion of
\cite{DetlevBuchholz004} on any fixed bounded region $\mathcal{O}$.
The value of \eqref{expedition0019101} for $s_{\mathcal{O}}$ is positive for
every $\mu>0$.
\end{prop}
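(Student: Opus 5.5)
The plan is to compute the number-resolvent expectation in \eqref{expedition0019101} directly in the component state $\oastate[\psi_{\txtmeanfield,\sminvtemperature,\bar{\smnumberdensity}_{\txtbsn},\theta}]$ for the local condensate vector $f=s_{\mathcal{O}}$. We then show that the result is a strictly positive number for every $\mu>0$. By \eqref{expedition0018019}, the component state is the regular nonzero-mode quasi-free state pulled back along the c-number shift $\ell_{\sminvtemperature,\bar{\smnumberdensity}_{\txtbsn},\theta}$. Locally, the Fock annihilation operator therefore splits as $\opfockan_{\txtfock}(s_{\mathcal{O}})=c_\theta+b$. Here $c_\theta$ is the scalar obtained from $\ell_{\sminvtemperature,\bar{\smnumberdensity}_{\txtbsn},\theta}$, with $\abs{c_\theta}^2$ of order $\smnumberdensity_{\txtbsn,0}(\sminvtemperature)\absvol{\mathcal{O}}<\infty$. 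The operator $b$ is the annihilator in the regular gauge-invariant quasi-free state with finite occupation $\dbk{b^\ast b}$, controlled by $n_{\bar{\smnumberdensity}_{\txtbsn}}$ and finite because $d\geq3$. Thus the local one-mode restriction of the component state is a displaced thermal (Gaussian) state of the single mode $s_{\mathcal{O}}$.

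The key step is positivity. Let $N_{\mathcal{O}}=\opfockcr_{\txtfock}(s_{\mathcal{O}})\opfockan_{\txtfock}(s_{\mathcal{O}})$, a positive self-adjoint operator in the represented sense. The function $x\mapsto(\mu+x)^{-1}$ is strictly positive on $[0,\infty)$, and by Jensen/convexity
\begin{equation*}
\fun{\oastate[\psi_{\txtmeanfield,\sminvtemperature,\bar{\smnumberdensity}_{\txtbsn},\theta}]}{\opfnresolvent{\mu+N_{\mathcal{O}}}}\geq\frac{1}{\mu+\fun{\oastate[\psi_{\txtmeanfield,\sminvtemperature,\bar{\smnumberdensity}_{\txtbsn},\theta}]}{N_{\mathcal{O}}}}.
\end{equation*}
This holds for the spectral measure of $N_{\mathcal{O}}$ in the GNS representation. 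It then suffices to show
\begin{equation*}
\fun{\oastate[\psi_{\txtmeanfield,\sminvtemperature,\bar{\smnumberdensity}_{\txtbsn},\theta}]}{N_{\mathcal{O}}}=\abs{c_\theta}^2+\dbk{b^\ast b}<\infty.
\end{equation*}
This follows because the shifted quasi-free state has finite second moments. Specifically, $\abs{c_\theta}^2$ is bounded by $(2\pi)^d\smnumberdensity_{\txtbsn,0}(\sminvtemperature)\abs{\faftr{s_{\mathcal{O}}}(0)}^2=\smnumberdensity_{\txtbsn,0}(\sminvtemperature)\absvol{\mathcal{O}}$. The fluctuation part is bounded by $\sup_k n_{\bar{\smnumberdensity}_{\txtbsn}}$-type integrals against $\abs{\faftr{s_{\mathcal{O}}}}^2$, which are finite since $s_{\mathcal{O}}\in\lp^1\cap\lp^2$ and $d\geq3$ makes $(\napiernum^{\sminvtemperature k^2/2}-1)^{-1}$ locally integrable. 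As an alternative to Jensen, one can write the expectation explicitly as $\sum_n p_n(\mu+n)^{-1}$ for the displaced-thermal photon-number law $p_n$. This law has $p_0>0$, so the value is at least $p_0/\mu>0$.

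The main obstacle is justifying that the number operator, which is unbounded and not in the resolvent algebra, has a well-defined spectral measure in the component GNS representation with the stated first moment. The scalar shift makes the representation nonregular on the zero-mode line only after the thermodynamic limit in the integrated state. For each fixed component and fixed bounded $\mathcal{O}$, however, the test vector $s_{\mathcal{O}}$ is a genuine $\lp^2$ vector whose Segal field is represented as a shifted regular field. I would first try to show that the component state is regular on the local real plane $\splinspan_{\fldreal}\setone{s_{\mathcal{O}},\imunit s_{\mathcal{O}}}$, so that $\opfockan_{\txtfock}(s_{\mathcal{O}})$ is affiliated and $N_{\mathcal{O}}$ is defined. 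This uses that the characteristic function of a shifted Gaussian is continuous in $f$. Granting this, the finiteness of the first moment, together with strict positivity of $(\mu+x)^{-1}$, yields positivity of \eqref{expedition0019101} for all $\mu>0$ and every $\theta$. Positivity is exactly the failure of the strict criterion, and vanishing could only arise in a limit $\smnumberdensity_{\txtbsn,0}(\sminvtemperature)\absvol{\mathcal{O}}\to\infty$.
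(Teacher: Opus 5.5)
Your proof is correct. It rests on the same core fact as the paper's, but it extracts more from the spectral measure, and the comparison is worth making.

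\textbf{How the two arguments differ.} The paper writes the value of \eqref{expedition0019101} at $s_{\mathcal{O}}$ as $\int_{[0,\infty)}(\mu+t)^{-1}\,dE_{s_{\mathcal{O}}}(t)$ with $E_{s_{\mathcal{O}}}([0,\infty))=1$, and concludes from strict positivity of the integrand alone. It computes no moment of $N_{s_{\mathcal{O}}}$; the local condensate number $n_C(\mathcal{O})=\rho_0|\mathcal{O}|$ (with $\rho_0$ the condensate density) is displayed but never used. Your Jensen route instead needs $\omega_\theta(N_{s_{\mathcal{O}}})=|c_\theta|^2+\langle b^\ast b\rangle<\infty$. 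That in turn requires the displaced-thermal splitting, the identification $|c_\theta|^2=\rho_0|\mathcal{O}|$, and the $d\geq3$ integrability of the Bose occupation at $k=0$.

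\textbf{What each buys.} For bare positivity the moment computation is superfluous: your own remark that $(\mu+x)^{-1}>0$ on $[0,\infty)$ finishes the proof once the spectral measure is a probability measure on $[0,\infty)$. What it buys is the quantitative bound
$\omega_\theta\bigl((\mu+N_{s_{\mathcal{O}}})^{-1}\bigr)\geq(\mu+\rho_0|\mathcal{O}|+\langle b^\ast b\rangle)^{-1}$.
In that bound the hypothesis $\rho_0<\infty$ visibly does work, and it shows directly that vanishing forces $\rho_0|\mathcal{O}|\to\infty$, which is the regime of Theorem \ref{expedition0012897}.

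\textbf{The load-bearing step.} In both arguments the essential fact is that $E_{s_{\mathcal{O}}}$ puts no mass at infinity. Equivalently, the component state must be regular on $\operatorname{span}_{\mathbb{R}}\{s_{\mathcal{O}},\mathrm{i}s_{\mathcal{O}}\}$. The paper simply asserts this by calling the component locally normal. You propose to derive it from continuity of the shifted Gaussian characteristic function, which is the right justification.

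\textbf{A caution.} Your reconciling remark, that nonregularity appears only after a thermodynamic limit in the integrated state, does not work for a fixed test vector. If every component is regular at $s_{\mathcal{O}}$, then so is the GNS representation of the mixture \eqref{expedition0012609}, since it is a subrepresentation of the direct integral. So the regularity you need, like the paper's own assumption here, is incompatible with Lemma \ref{expedition0019390} as stated, which declares nonregularity at every $f$ with $\hat f(0)\neq0$, $s_{\mathcal{O}}$ included. Establish regularity directly, as you plan, rather than appealing to the paper's regular/singular split.
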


\begin{proof}
For fixed $\mathcal{O}$, the local condensate particle number carried by the
constant local mode is
\begin{equation}\label{expedition0012896}
n_C(\mathcal{O})=\smnumberdensity_{\txtbsn,0}(\sminvtemperature)\absvol{\mathcal{O}}.
\end{equation}
Put
$$N_{s_{\mathcal{O}}}
=
\fun{\opfockcr_{\txtfock}}{s_{\mathcal{O}}}
\fun{\opfockan_{\txtfock}}{s_{\mathcal{O}}}.$$
Thus the local number-resolvent
$\invrbk{\mu+N_{s_{\mathcal{O}}}}$ is evaluated in a locally normal
finite-density primary component.
Let $E_{s_{\mathcal{O}}}$ be the spectral measure of
$N_{s_{\mathcal{O}}}$ in the locally normal finite-density primary component.
Then
\begin{equation}\label{expedition0019744}
\fun{\oastate[\psi_{\txtmeanfield,\sminvtemperature,
\bar{\smnumberdensity}_{\txtbsn},\theta}]}
{\invrbk{\mu+N_{s_{\mathcal{O}}}}}
=
\int_{\closedinterval{0}{\infty}}
\frac{1}{\mu+t}
\opdmsr{E_{s_{\mathcal{O}}}(t)}.
\end{equation}
The integrand in \eqref{expedition0019744} is strictly positive and
$\fun{E_{s_{\mathcal{O}}}}{\closedinterval{0}{\infty}}=1$.
Hence \eqref{expedition0019744} is strictly positive for every $\mu>0$, whereas
\eqref{expedition0019101} requires vanishing.
\end{proof}

\begin{thm}[Proper-condensate limit of the mean-field zero-mode component states]\label{expedition0012897}
Let
$\fml{\oastate[\psi_{\txtmeanfield,\sminvtemperature,\bar{\smnumberdensity}_{\txtbsn}^{(\sigma)},\theta_\sigma}]}{\sigma}$
be a family of primary component states from the extremal decomposition in
Proposition \ref{expedition0012651}, indexed so that
$\smnumberdensity_{\txtbsn,0}^{(\sigma)}(\sminvtemperature)
\to \infty$.
For every bounded open region $\mathcal{O}$, set
$$n_C^{(\sigma)}(\mathcal{O})=\smnumberdensity_{\txtbsn,0}^{(\sigma)}(\sminvtemperature)\absvol{\mathcal{O}}.$$
If $n_C^{(\sigma)}(\mathcal{O})\to\infty$, then the local occupation of
$s_{\mathcal{O}}$ satisfies the number-resolvent criterion of
\cite{DetlevBuchholz004}, in the form \eqref{expedition0019101},
and the homogeneous local spaces required by \cite{DetlevBuchholz005} are
$$\sphilb{H}_{\txtmeanfield,\txtregular}(\mathcal{O})
=
\set{f\in\fun{\lp^2}{\mathcal{O}}}{\int_{\mathcal{O}} f(x)\opdmsr{x}=0},
\quad
\sphilb{S}_{\txtmeanfield}(\mathcal{O})=\fldcmp s_{\mathcal{O}}.$$
Equivalently, for every $f\in\fun{\lp^2}{\mathcal{O}}$ with
$\bkt{s_{\mathcal{O}}}{f}_{\fun{\lp^2}{\mathcal{O}}}\ne 0$,
\begin{equation}\label{expedition0012898}
\lim_{\sigma}
\fun{\oastate[\psi_{\txtmeanfield,\sminvtemperature,\bar{\smnumberdensity}_{\txtbsn}^{(\sigma)},\theta_\sigma}]}
{\invrbk{\mu+\fun{\opfockcr_{\txtfock}}{f}
\fun{\opfockan_{\txtfock}}{f}}}
=0,
\quad \mu>0,
\end{equation}
whereas the functions orthogonal to $s_{\mathcal{O}}$ remain regular provided the
nonzero-mode thermal covariance stays locally bounded.
\end{thm}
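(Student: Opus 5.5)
The plan is to compute the local number-resolvent expectation explicitly in each primary component state, using the structure fixed in Definition \ref{expedition0012650}. On $\fun{\lp^2}{\mathcal{O}}$, decompose $f=\bkt{s_{\mathcal{O}}}{f}s_{\mathcal{O}}+f_\perp$ with $\int_{\mathcal{O}}f_\perp=0$. By \eqref{expedition0018019}, in the component indexed by $\theta_\sigma$ the zero-mode part of the field is a c-number shift: the represented annihilation operator becomes $\fun{\opfockan_{\txtfock}}{f}=c_\sigma(f)+\fun{\opfockan}{f}_{\txtregular}$. Here the scalar satisfies $\abs{c_\sigma(f)}^2$ proportional to $n_C^{(\sigma)}(\mathcal{O})\abs{\bkt{s_{\mathcal{O}}}{f}}^2$, with the same normalization as $\ell_{\sminvtemperature,\bar{\smnumberdensity}_{\txtbsn},\theta}$ and \eqref{expedition0012896}. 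The remaining operator is a field in the regular gauge-invariant quasi-free state with covariance $\opform{q}_{\txtmeanfield,\txtnonzero,\sminvtemperature,\bar{\smnumberdensity}_{\txtbsn}^{(\sigma)}}$. So the component state is a coherent (Weyl-displaced) quasi-free state, and $\fun{\opfockcr_{\txtfock}}{f}\fun{\opfockan_{\txtfock}}{f}$ equals $\abs{c_\sigma+a_{\mathrm{reg}}}^2$.

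The key estimate uses the bound $\invrbk{\mu+N}\le \mu^{-1}\fndef{\closedinterval{0}{M}}(N)+(\mu+M)^{-1}$ for any $M>0$. This gives
$$\fun{\oastate}{\invrbk{\mu+N}}\le \mu^{-1}\fun{\oastate}{\fndef{\closedinterval{0}{M}}(N)}+(\mu+M)^{-1},$$
so it suffices to show $\fun{\oastate}{\fndef{\closedinterval{0}{M}}(N)}\to0$ for each fixed $M$. For the displaced field, $N\le M$ forces $\abs{a_{\mathrm{reg}}(f)}\ge \abs{c_\sigma}-\sqrt{M}$. Chebyshev then gives a bound by $\fun{\oastate}{a_{\mathrm{reg}}^\ast a_{\mathrm{reg}}}/(\abs{c_\sigma}-\sqrt M)^2$. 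To make this rigorous with unbounded operators I would work in the GNS (Fock-type) realization of the quasi-free state, where $\fun{\opfockcr}{f}\fun{\opfockan}{f}$ is a genuine positive self-adjoint operator. The numerator is $\fun{\oastate}{a_{\mathrm{reg}}^\ast a_{\mathrm{reg}}}=\int n_{\bar{\smnumberdensity}_{\txtbsn}^{(\sigma)}}(k)\abs{\faftr{f_\perp}(k)+\dots}^2\opdmsr{k}$. This is exactly where the hypothesis that the nonzero-mode thermal covariance stays locally bounded enters. In the condensed phase \eqref{expedition0012606} the occupation is the free $\sigma$-independent Bose function, so the numerator is uniformly bounded. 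Since $\abs{c_\sigma}\to\infty$ whenever $\bkt{s_{\mathcal{O}}}{f}\ne0$ and $n_C^{(\sigma)}(\mathcal{O})\to\infty$, we get \eqref{expedition0012898}: first let $\sigma\to\infty$, then $M\to\infty$.

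For $f=f_\perp$ there is no shift. The expectation of $\fun{\opfockcr}{f}\fun{\opfockan}{f}$ equals the bounded quasi-free two-point value, uniformly in $\sigma$, so $f_\perp\in\sphilb{H}_{\txtmeanfield,\txtregular}(\mathcal{O})$ in the sense of \eqref{expedition0019103}. Conversely, when $\bkt{s_{\mathcal{O}}}{f}\ne0$ the same computation gives a $\limsup$ bounded below by $\abs{c_\sigma}^2\to\infty$, because the cross terms have mean zero by gauge invariance of the regular state. Hence the regular subspace is exactly the mean-zero subspace, and its complement is $\fldcmp s_{\mathcal{O}}$, as in Proposition \ref{expedition0012892}.

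I expect the main obstacle to be the justification of the displaced-field representation of $\fun{\opfockcr_{\txtfock}}{f}\fun{\opfockan_{\txtfock}}{f}$ inside the component GNS representation. Formula \eqref{expedition0018019} is stated only on resolvent generators. I would first derive the characteristic function of the field from the resolvent values by Laplace inversion. Then I would identify the component state with a quasi-free state composed with a Bogoliubov translation, and only afterwards apply the spectral-measure argument of Proposition \ref{expedition0012895}.
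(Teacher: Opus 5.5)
Your architecture is the paper's. You split $a(f)=c_\sigma(f)+a_{\mathrm{reg}}(f)$ in each component, with $|c_\sigma(f)|^2=m_\sigma(f)$ as in \eqref{expedition0019747}, and use the uniform bound on $\psi_\sigma(Y_{\sigma,f})$. You apply $(\mu+N_f)^{-1}\le\mu^{-1}\mathbf{1}_{[0,M]}(N_f)+(\mu+M)^{-1}$ with $\sigma\to\infty$ first and then $M\to\infty$. You read off the local spaces from $\psi_\sigma(N_f)=m_\sigma(f)+\psi_\sigma(Y_{\sigma,f})$.

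The one step where you depart from the paper does not hold as written. ``$N_f\le M$ forces $|a_{\mathrm{reg}}(f)|\ge|c_\sigma|-\sqrt M$'' treats $N_f$ and $a_{\mathrm{reg}}(f)$ as commuting random variables. In operator form you only get $\|a_{\mathrm{reg}}(f)\xi\|\ge(|c_\sigma|-\sqrt M)\|\xi\|$ for $\xi$ in the range of $P=\mathbf{1}_{[0,M]}(N_f)$. To reach $\psi_\sigma(P)\le\psi_\sigma(Y_{\sigma,f})/(|c_\sigma|-\sqrt M)^2$ you would need $\psi_\sigma(PY_{\sigma,f}P)\le\psi_\sigma(Y_{\sigma,f})$, which fails when $P$ does not commute with $Y_{\sigma,f}$. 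The inequality you state is in fact false in general. Take a coherent state with $\|f\|=1$: then $\psi(a_{\mathrm{reg}}^*a_{\mathrm{reg}})=0$, so your bound forces $\psi(P)=0$. Yet $N_f$ is Poisson with mean $|c|^2$ and gives $[0,M]$ positive mass.

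The conclusion survives with a short fix that keeps the CCR term your classical Chebyshev drops. Let $\Omega$ be the GNS vector. Then $P\Omega\in D(a(f))$ and $\|a(f)P\Omega\|\le\sqrt M\,\|P\Omega\|$. Writing $a(f)^*=\bar c_\sigma+a_{\mathrm{reg}}(f)^*$ gives
\[
|c_\sigma|\,\|P\Omega\|^2
\le
|\langle a(f)P\Omega,\Omega\rangle|
+|\langle P\Omega,a_{\mathrm{reg}}(f)^*\Omega\rangle|
\le
\bigl(\sqrt M+\|a_{\mathrm{reg}}(f)^*\Omega\|\bigr)\|P\Omega\|.
\]
Hence $\psi_\sigma(P)\le\bigl(\sqrt M+(\psi_\sigma(Y_{\sigma,f})+\|f\|^2)^{1/2}\bigr)^2/m_\sigma(f)\to0$.

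Alternatively, work only with the spectral measure $E_{\sigma,f}$ of $N_f$, as the paper does via $N_f=m_\sigma(f)+C_{\sigma,f}+Y_{\sigma,f}$. Gauge invariance and Wick's theorem give
\[
\psi_\sigma\bigl((N_f-m_\sigma(f))^2\bigr)=m_\sigma(f)\bigl(2\psi_\sigma(Y_{\sigma,f})+\|f\|^2\bigr)+O(1).
\]
Chebyshev for the genuine probability measure $E_{\sigma,f}$ then yields $E_{\sigma,f}([0,M])=O(m_\sigma(f)^{-1})$.

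This second-moment route is also the clean way to obtain the paper's own estimate \eqref{expedition0019752}. That estimate relies on $\psi(|C+Y|)\le\psi(|C|)+\psi(Y)$, which is not available for noncommuting $C$ and $Y$.
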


\begin{proof}
Write
$\oastate[\psi_\sigma]
=
\oastate[\psi_{\txtmeanfield,\sminvtemperature,
\bar{\smnumberdensity}_{\txtbsn}^{(\sigma)},\theta_\sigma}]$ and set
\begin{equation}\label{expedition0019745}
N_f
=
\fun{\opfockcr_{\txtfock}}{f}\fun{\opfockan_{\txtfock}}{f}.
\end{equation}
Let $E_{\sigma,f}$ be the spectral measure of $N_f$ in the component state
$\oastate[\psi_\sigma]$:
\begin{equation}\label{expedition0019746}
\int_{\closedinterval{0}{\infty}}\varphi(t)\opdmsr{E_{\sigma,f}(t)}
=
\fun{\oastate[\psi_\sigma]}{\fun{\varphi}{N_f}}
\end{equation}
for every bounded Borel function $\varphi$.
The homogeneous zero-mode contribution to the local occupation is
\begin{equation}\label{expedition0019747}
m_{\sigma}(f)
=
\smnumberdensity_{\txtbsn,0}^{(\sigma)}(\sminvtemperature)
\absvol{\mathcal{O}}
\abs{\bkt{s_{\mathcal{O}}}{f}_{\fun{\lp^2}{\mathcal{O}}}}^2 .
\end{equation}
Indeed,
\begin{equation}\label{expedition0019748}
\smnumberdensity_{\txtbsn,0}^{(\sigma)}(\sminvtemperature)
\abs{\int_{\mathcal{O}}\cmpconj{f(x)}\opdmsr{x}}^2
=
m_{\sigma}(f).
\end{equation}
If $\bkt{s_{\mathcal{O}}}{f}_{\fun{\lp^2}{\mathcal{O}}}\ne0$, then
$m_{\sigma}(f)\to\infty$.
In the component representation put
\begin{equation}\label{expedition0019749}
\begin{aligned}
\gamma_{\sigma,f}
&=
\smnumberdensity_{\txtbsn,0}^{(\sigma)}(\sminvtemperature)^{\onehalf}
\absvol{\mathcal{O}}^{\onehalf}
\bkt{s_{\mathcal{O}}}{f}_{\fun{\lp^2}{\mathcal{O}}},
\\
\fun{\opfockan_{\sigma,\txtregular}}{f}
&=
\fun{\opfockan_{\txtfock}}{f}
-\gamma_{\sigma,f},
\quad
\fun{\opfockcr_{\sigma,\txtregular}}{f}
=
\fun{\opfockcr_{\txtfock}}{f}
-\cmpconj{\gamma_{\sigma,f}}.
\end{aligned}
\end{equation}
Then
\begin{equation}\label{expedition0019750}
\begin{aligned}
C_{\sigma,f}
&=
\cmpconj{\gamma_{\sigma,f}}
\fun{\opfockan_{\sigma,\txtregular}}{f}
+\gamma_{\sigma,f}
\fun{\opfockcr_{\sigma,\txtregular}}{f},
\\
Y_{\sigma,f}
&=
\fun{\opfockcr_{\sigma,\txtregular}}{f}
\fun{\opfockan_{\sigma,\txtregular}}{f},
\\
N_f
&=
m_{\sigma}(f)+C_{\sigma,f}+Y_{\sigma,f}.
\end{aligned}
\end{equation}
The nonzero-mode local covariance is bounded along the family, hence
\begin{equation}\label{expedition0019751}
\fun{\oastate[\psi_\sigma]}{Y_{\sigma,f}}=O(1),
\quad
\fun{\oastate[\psi_\sigma]}{\abs{C_{\sigma,f}}}
\leq
2m_{\sigma}(f)^{\onehalf}
\fun{\oastate[\psi_\sigma]}{Y_{\sigma,f}}^{\onehalf}
=
\fun{O}{m_{\sigma}(f)^{\onehalf}}.
\end{equation}
Therefore
\begin{equation}\label{expedition0019752}
\int_{\closedinterval{0}{\infty}}
\abs{\frac{t}{m_{\sigma}(f)}-1}
\opdmsr{E_{\sigma,f}(t)}
\leq
\frac{\fun{\oastate[\psi_\sigma]}{\abs{C_{\sigma,f}}}}{m_{\sigma}(f)}
+\frac{\fun{\oastate[\psi_\sigma]}{Y_{\sigma,f}}}{m_{\sigma}(f)}
\to0.
\end{equation}
For fixed $M>0$,
\begin{equation}\label{expedition0019753}
\fun{E_{\sigma,f}}{\closedinterval{0}{M}}
\leq
\frac{1}{1-\frac{M}{m_{\sigma}(f)}}
\int_{\closedinterval{0}{\infty}}
\abs{\frac{t}{m_{\sigma}(f)}-1}
\opdmsr{E_{\sigma,f}(t)}
\to0.
\end{equation}
Thus
\begin{equation}\label{expedition0019754}
\int_{\closedinterval{0}{\infty}}
\frac{1}{\mu+t}
\opdmsr{E_{\sigma,f}(t)}
\leq
\frac{1}{\mu}
\fun{E_{\sigma,f}}{\closedinterval{0}{M}}
+\frac{1}{\mu+M}.
\end{equation}
Taking first $\sigma\to\infty$ and then $M\to\infty$ gives
\eqref{expedition0012898}.

If $\bkt{s_{\mathcal{O}}}{f}_{\fun{\lp^2}{\mathcal{O}}}=0$, the singular
zero-mode contribution \eqref{expedition0019747} vanishes.
The local boundedness of the nonzero-mode covariance then puts $f$ in the
regular subspace.
Consequently the regular subspace is
$\set{f\in\fun{\lp^2}{\mathcal{O}}}{\int_{\mathcal{O}} f\opdmsr{x}=0}$, and
the singular local condensate space is $\fldcmp s_{\mathcal{O}}$.
\end{proof}

Thus Proposition \ref{expedition0012892} identifies the local line \(\fldcmp s_{\mathcal{O}}\), Proposition \ref{expedition0012895} excludes the strict number-resolvent criterion at finite local occupation, and Theorem \ref{expedition0012897} recovers that criterion in the infinite-occupation limit.

\subsection{Mean-Field Effective Hamiltonian and Density Fluctuations}\label{mean-field-effective-hamiltonian-and-density-fluctuations}

The finite-volume algebraic identity below is the basis for the mean-field effective Hamiltonian comparison. It isolates the density-fluctuation remainder which is not removed by simply replacing the density with the Kac-selected scalar value.

\begin{prop}[Effective Hamiltonian and density-fluctuation remainder]\label{expedition0012654}
For a density value $r\geq0$, put
$$\physham_{\txtbsn,\txteff,r,L}
=
\fun{\opfocksndqntdiff_{\txtbsn}}
{\physham[h]_{\txtparticle,0,L}-\smchemicalpotential+\lambda r}.$$
Then it holds that
\begin{equation}\label{expedition0012655}
\physham_{\txtbsn,\txtmeanfield,\smchemicalpotential,L}
-\physham_{\txtbsn,\txteff,r,L}
=
\frac{\lambda V_L}{2}
\rbk{\frac{\opfocknumber_{\txtbsn,L}}{V_L}-r}^{2}
-\frac{\lambda V_L}{2}
r^{2}.
\end{equation}
\end{prop}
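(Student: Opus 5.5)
The identity is a finite-volume operator statement. The plan is to expand both sides on the common diagonalizing basis, or equivalently to use linearity of second quantization together with one completion of the square.

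First, write the effective Hamiltonian using linearity of $\opfocksndqntdiff_{\txtbsn}$ on the one-particle operator $\physham[h]_{\txtparticle,0,L}-\smchemicalpotential+\lambda r$. Since $\opfocksndqntdiff_{\txtbsn}(P_L)=\opfocknumber_{\txtbsn,L}$ by \eqref{expedition0019757}, and a scalar multiple of the identity on $\sphilb{H}_{\txtparticle,L}$ is a multiple of $P_L$, this gives
$$\physham_{\txtbsn,\txteff,r,L}
=
\physham_{\txtbsn,\txtfr,0,L}
-\smchemicalpotential\opfocknumber_{\txtbsn,L}
+\lambda r\,\opfocknumber_{\txtbsn,L}.$$
Linearity of $\opfocksndqntdiff_{\txtbsn}$ is valid here on the finite-particle vectors of the occupation-number basis. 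All operators involved act diagonally there, as in the proof of Proposition \ref{expedition0012679}.

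Next, subtract this from \eqref{expedition0012600}. The free and chemical-potential parts cancel, leaving
$$\frac{\lambda}{2V_L}\opfocknumber_{\txtbsn,L}^{2}-\lambda r\,\opfocknumber_{\txtbsn,L}.$$
Completing the square, exactly as in the proof of Proposition \ref{expedition0012664} with $\smchemicalpotential$ replaced by $\lambda r$, rewrites this as
$$\frac{\lambda V_L}{2}\rbk{\frac{\opfocknumber_{\txtbsn,L}}{V_L}-r}^{2}-\frac{\lambda V_L}{2}r^{2}.$$
This is \eqref{expedition0012655}.

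The only point needing care is the domain. Both sides are unbounded, so I will verify the identity on each basis vector $\ket{\mathsf{n}}$. There both sides act by the scalar
$$\frac{\lambda}{2V_L}\Bigl(\sum_k\mathsf{n}_k\Bigr)^2-\lambda r\sum_k\mathsf{n}_k.$$
The identity then extends as an equality of self-adjoint operators, because both sides are functions of the diagonal operators on the finite-particle core. This step is routine, and I expect no real obstacle.
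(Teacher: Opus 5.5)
Your proposal is correct and is essentially the paper's proof: expand the effective Hamiltonian by linearity of $d\Gamma$ (using $d\Gamma(P_L)=N_{\mathrm{b},L}$), subtract it from \eqref{expedition0012600}, and complete the square. Your check on the occupation-number basis vectors addresses the domain question, which the paper does not discuss; strictly, the difference taken on the common domain is a restriction of the right-hand self-adjoint operator, and the two agree after closing on the finite-particle core.
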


This is a finite-volume identity at the finite-volume parameter \(\smchemicalpotential\). The selected-density mean-field automorphism group \eqref{expedition0012675} is the separate post-Kac object with \(\smchemicalpotential_{\mathrm{sel}}\), introduced after the Euler equations. Thus \eqref{expedition0012655} records the density-fluctuation remainder in the finite-volume comparison; it does not identify the finite-volume Hamiltonian with the selected-density mean-field automorphism group on fluctuation or other nonlocal observables.

\begin{proof}
By \eqref{expedition0012600} we obtain
$$\physham_{\txtbsn,\txtmeanfield,\smchemicalpotential,L}
=
\physham_{\txtbsn,\txtfr,0,L}
-\smchemicalpotential\opfocknumber_{\txtbsn,L}
+\frac{\lambda}{2V_L}\opfocknumber_{\txtbsn,L}^{2}.$$
The definition of $\physham_{\txtbsn,\txteff,r,L}$ gives
$$\physham_{\txtbsn,\txteff,r,L}
=
\physham_{\txtbsn,\txtfr,0,L}
-\smchemicalpotential\opfocknumber_{\txtbsn,L}
+\lambda r \opfocknumber_{\txtbsn,L}.$$
Subtracting these two identities and completing the square yields
$$\frac{\lambda}{2V_L}\opfocknumber_{\txtbsn,L}^{2}
-\lambda r \opfocknumber_{\txtbsn,L}
=
\frac{\lambda V_L}{2}
\rbk{\frac{\opfocknumber_{\txtbsn,L}}{V_L}-r}^{2}
-
\frac{\lambda V_L}{2}r^{2},$$
which proves \eqref{expedition0012655}.

Theorem \ref{expedition0012647} selects
$r=\bar{\smnumberdensity}_{\txtbsn}$
in the equilibrium density law.
With this insertion the finite-volume comparison has one-particle
energy
$\physham[h]_{\txtparticle,0,L}-\smchemicalpotential
+\lambda\bar{\smnumberdensity}_{\txtbsn}$.
The selected-density local resolvent automorphism group is instead the
automorphism group
with
energy
$\physham[h]_{\txtparticle,0}-\smchemicalpotential_{\mathrm{sel}}
+\lambda\bar{\smnumberdensity}_{\txtbsn}$ from \eqref{expedition0012675}.
The selected-density mean-field automorphism group is introduced in
\eqref{expedition0012675}, and Remark \ref{expedition0018016} records its
separation from the finite-volume comparison.
However, the first term on the right-hand side of \eqref{expedition0012655} is the square of the total density fluctuation.
It is invisible in the local Kac-measure collapse argument only for fixed local observables, and it is precisely the term tested by density fluctuations and other nonlocal observables.
Therefore the proposition gives a finite-volume comparison with an
explicit fluctuation remainder, not a fluctuation-dynamics equivalence.
\end{proof}

This proposition is the resolvent-algebra translation of the limitation stated in \cite[Eq. (5.7) and Section 5.1]{AndreVerbeure001}. In \cite[Section 5.1, after Eq. (5.7)]{AndreVerbeure001}, the effective Hamiltonian is said to exclude all effects of the total density fluctuations. The same passage says that the two Hamiltonians may show serious differences for fluctuation dynamics and nonlocal quantities.

\begin{prop}[Trivial dynamics of the total-density fluctuation]\label{expedition0012656}
Let
$$\mathsf{D}_{\bar{\smnumberdensity}_{\txtbsn},L}
=
V_L^{\onehalf}
\rbk{\frac{\opfocknumber_{\txtbsn,L}}{V_L}-\bar{\smnumberdensity}_{\txtbsn}}.$$
For every $z\in\fldcmp\setminus\fldreal$ and every $t\in\fldreal$,
\begin{equation}\label{expedition0012657}
\napiernum^{\imunit t\physham_{\txtbsn,\txtmeanfield,\smchemicalpotential,L}}
\invrbk{z-\mathsf{D}_{\bar{\smnumberdensity}_{\txtbsn},L}}
\napiernum^{-\imunit t\physham_{\txtbsn,\txtmeanfield,\smchemicalpotential,L}}
=
\invrbk{z-\mathsf{D}_{\bar{\smnumberdensity}_{\txtbsn},L}}.
\end{equation}
Consequently, any weak, strong, or distributional limit of the resolvents
$\invrbk{z-\mathsf{D}_{\bar{\smnumberdensity}_{\txtbsn},L}}$
inherits the identity dynamics.
\end{prop}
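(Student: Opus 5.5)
The key observation is that $\mathsf{D}_{\bar{\smnumberdensity}_{\txtbsn},L}$ is a real-valued Borel function of the number operator $\opfocknumber_{\txtbsn,L}$ alone. The Hamiltonian $\physham_{\txtbsn,\txtmeanfield,\smchemicalpotential,L}$ commutes strongly with $\opfocknumber_{\txtbsn,L}$. So the plan is to reduce \eqref{expedition0012657} to a statement about commuting spectral families, and then pass the invariance to limits.

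I would work in the occupation-number basis $\fml{\ket{\mathsf{n}}}{\mathsf{n}\in\mathcal{O}_L}$ used in the proof of Proposition \ref{expedition0012679}. There both operators are diagonal:
$$\physham_{\txtbsn,\txtmeanfield,\smchemicalpotential,L}\ket{\mathsf{n}}=E(\mathsf{n})\ket{\mathsf{n}},\qquad \mathsf{D}_{\bar{\smnumberdensity}_{\txtbsn},L}\ket{\mathsf{n}}=V_L^{\onehalf}\rbk{\tfrac{\abs{\mathsf{n}}}{V_L}-\bar{\smnumberdensity}_{\txtbsn}}\ket{\mathsf{n}},\qquad \abs{\mathsf{n}}=\textstyle\sum_k\mathsf{n}_k .$$
Here $E(\mathsf{n})$ is the eigenvalue computed in that proof.

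For $z\in\fldcmp\setminus\fldreal$, the resolvent $\invrbk{z-\mathsf{D}_{\bar{\smnumberdensity}_{\txtbsn},L}}$ is a bounded operator, diagonal in the same basis with entries $\rbk{z-V_L^{\onehalf}(\abs{\mathsf{n}}/V_L-\bar{\smnumberdensity}_{\txtbsn})}^{-1}$. The unitaries $\napiernum^{\pm\imunit t\physham_{\txtbsn,\txtmeanfield,\smchemicalpotential,L}}$ are diagonal with phases $\napiernum^{\pm\imunit tE(\mathsf{n})}$. Evaluating the conjugated resolvent on each basis vector, the two phases cancel. This gives \eqref{expedition0012657} on the span of the basis vectors. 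Since both sides are bounded, it extends to all of $\spfock_{\txtbsn,L}$.

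Equivalently, and more abstractly, one can invoke the functional calculus for the strongly commuting pair $(\physham_{\txtbsn,\txtmeanfield,\smchemicalpotential,L},\opfocknumber_{\txtbsn,L})$ already noted after \eqref{expedition0012600}. There is no domain issue, because only bounded functions of the operators appear.

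For the consequence, let $\Delta_L(z)=\invrbk{z-\mathsf{D}_{\bar{\smnumberdensity}_{\txtbsn},L}}$ and suppose $\Delta_L(z)\to\Delta(z)$ in the weak operator topology, the strong topology, or in expectation against a family of states. In each case the conjugated operators $\napiernum^{\imunit tH}\Delta_L\napiernum^{-\imunit tH}$ coincide identically with $\Delta_L$ for every $L$. Any limit of the left-hand family is therefore the same limit as that of the right-hand family. Hence the limiting object is invariant under the limiting dynamics, whenever that dynamics is defined as the limit of the finite-volume conjugations.

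The only point needing care, and the expected obstacle, is phrasing this so as not to presuppose a limiting automorphism group. Remark \ref{expedition0018016} shows that no naive quasilocal limit of the finite-volume dynamics exists. I would therefore state the inheritance at the level of sequences: any limit functional of $t\mapsto \napiernum^{\imunit tH_L}\Delta_L\napiernum^{-\imunit tH_L}$ is constant in $t$ and equals the limit of $\Delta_L$. This follows trivially from the exact finite-$L$ identity.
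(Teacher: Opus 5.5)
Your proposal is correct and is essentially the paper's argument. The paper notes that every term of $H_{\mathrm{b},\mathrm{mf},\mu,L}$ commutes with $N_{\mathrm{b},L}$ and that $\mathsf{D}_{\bar\rho_{\mathrm{b}},L}$ is an affine function of $N_{\mathrm{b},L}$, then concludes by spectral calculus. Your joint diagonalization in the occupation-number basis makes the same point explicitly, and it avoids formal commutators with the unbounded Hamiltonian. The passage to limits is likewise the same term-by-term coincidence of the conjugated and unconjugated families; your sequence-level phrasing is a sensible precaution, since the paper itself notes there is no naive limiting dynamics.
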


\begin{proof}
The free Hamiltonian
$\physham_{\txtbsn,\txtfr,0,L}$
is the second quantization of a one-particle operator, hence it commutes with the total number operator
$\opfocknumber_{\txtbsn,L}$.
The chemical-potential term is also a function of
$\opfocknumber_{\txtbsn,L}$.
The mean-field part in \eqref{expedition0012600} is the function
$\lambda \opfocknumber_{\txtbsn,L}^{2}/(2V_L)$
of the same number operator.
Therefore
$\commutator{\physham_{\txtbsn,\txtmeanfield,\smchemicalpotential,L}}
{\opfocknumber_{\txtbsn,L}}=0.$
Since
$\mathsf{D}_{\bar{\smnumberdensity}_{\txtbsn},L}$
is an affine function of
$\opfocknumber_{\txtbsn,L}$,
the spectral calculus gives
$$\commutator{\physham_{\txtbsn,\txtmeanfield,\smchemicalpotential,L}}
{\invrbk{z-\mathsf{D}_{\bar{\smnumberdensity}_{\txtbsn},L}}}
= 0$$
for every non-real $z$.
This is exactly \eqref{expedition0012657}.
Passing to any topology in which these bounded resolvents converge preserves the equality, so the induced limiting action on the total-density fluctuation resolvents is the identity.
\end{proof}

Thus the total-density fluctuation itself has no nontrivial reversible motion in this model. What Verbeure's warning excludes is stronger: the effective Hamiltonian cannot be used as if it controlled all fluctuation and nonlocal observables.

\section{Occupation Number Probabilistic Approach}\label{expedition0019615}

The occupation-number probability space gives a second formulation of the mean-field model. The density selection, Euler equations, condensate excess, component decomposition, local condensate tests, proper-condensate occupation escape, and density-fluctuation remainder can be stated directly in terms of random occupation variables. The construction keeps the argument on this probability space.

\subsection{Occupation-Number Probability Measures}\label{occupation-number-probability-measures}

The finite-volume Gibbs state \eqref{expedition0012601} induces an ordinary probability measure on occupation-number configurations. Recalling that \(\mathcal{O}_L\) is defined by \eqref{expedition0019646}, then let \begin{equation}\label{expedition0019667}
\begin{aligned}
N_L(\mathsf{n})
=
\sum_{k\in\setlattice_L^d}\mathsf{n}_k,
\quad
R_L(\mathsf{n})
=
\frac{N_L(\mathsf{n})}{V_L},
\quad
\mathsf{n}
\in \mathcal{O}_L.
\end{aligned}
\end{equation} Here \(\setone{\mathsf{n}}\) denotes the singleton subset of \(\mathcal{O}_L\) containing the occupation configuration \(\mathsf{n}\). On the set \(\mathcal{O}_L\) of occupation configurations, define \begin{equation}\label{expedition0019668}
\fun{\msrbb{P}_{\txtmeanfield,L}}{\setone{\mathsf{n}}}
=
\frac{1}{\smpartitionfunc_{\sminvtemperature,\lambda,\smchemicalpotential,L}}
\fnexp{-\sminvtemperature
\rbk{\sum_{k\in\setlattice_L^d}
\rbk{\physham[h]_{\txtparticle,0,L}(k)-\smchemicalpotential}
\mathsf{n}_k
+\frac{\lambda}{2V_L}N_L(\mathsf{n})^2}}.
\end{equation} The free occupation-number measure is \begin{equation}\label{expedition0019669}
\fun{\msrbb{P}_{\txtfr,L}}{\setone{\mathsf{n}}}
=
\frac{1}{\smpartitionfunc_{\sminvtemperature,0,\smchemicalpotential,L}}
\fnexp{-\sminvtemperature
\sum_{k\in\setlattice_L^d}
\rbk{\physham[h]_{\txtparticle,0,L}(k)-\smchemicalpotential}
\mathsf{n}_k}.
\end{equation} Then the law of \(R_L\) in \eqref{expedition0019667} under the measure \(\msrbb{P}_{\txtmeanfield,L}\) defined by \eqref{expedition0019668} is exactly \(K_{\sminvtemperature,\lambda,\smchemicalpotential,L}\) in \eqref{expedition0012728}. Equivalently, Proposition \ref{expedition0012680} says that this law is the free density law \eqref{expedition0018007} tilted by the quadratic weight \eqref{expedition0012603}.

\begin{prop}[Probabilistic form of the finite-volume mean-field measure]
For every bounded Borel function $F$ on $\fldreal_{\geq0}$,
$$\sqfun{\prbexp_{\msrbb{P}_{\txtmeanfield,L}}}{F(R_L)}
=
\int_{\fldreal_{\geq0}}
F(r)
\opdmsr{K_{\sminvtemperature,\lambda,\smchemicalpotential,L}(r)}.$$
The same expectation is obtained from the free law by
$$\sqfun{\prbexp_{\msrbb{P}_{\txtmeanfield,L}}}{F(R_L)}
=
\frac{
\int_{\fldreal_{\geq0}}
F(r)
\fnexp{-\frac{\sminvtemperature\lambda V_L r^2}{2}}
\opdmsr{K_{\sminvtemperature,0,\smchemicalpotential,L}(r)}
}{
\int_{\fldreal_{\geq0}}
\fnexp{-\frac{\sminvtemperature\lambda V_L s^2}{2}}
\opdmsr{K_{\sminvtemperature,0,\smchemicalpotential,L}(s)}
}.$$
\end{prop}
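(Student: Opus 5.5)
The plan is to compute both expectations directly as sums over the countable configuration set $\mathcal{O}_L$. For a bounded Borel function $F$, the definition \eqref{expedition0019668} gives
$$\sqfun{\prbexp_{\msrbb{P}_{\txtmeanfield,L}}}{F(R_L)}
=
\frac{1}{\smpartitionfunc_{\sminvtemperature,\lambda,\smchemicalpotential,L}}
\sum_{\mathsf{n}\in\mathcal{O}_L}
F\rbk{\frac{N_L(\mathsf{n})}{V_L}}
\fnexp{-\sminvtemperature
\rbk{\sum_{k\in\setlattice_L^d}
\rbk{\physham[h]_{\txtparticle,0,L}(k)-\smchemicalpotential}\mathsf{n}_k
+\frac{\lambda}{2V_L}N_L(\mathsf{n})^2}}.$$
The weights are nonnegative and summable, because their total is the finite partition function. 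Since $F$ is bounded, the series converges absolutely.

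For the first identity, integrate $F$ against the point-mass expansion \eqref{expedition0012728} of $K_{\sminvtemperature,\lambda,\smchemicalpotential,L}$ from Proposition \ref{expedition0012679}. Each Dirac mass $\delta_{N_L(\mathsf{n})/V_L}$ contributes $F(R_L(\mathsf{n}))$ times its weight. Exchanging the integral with the countable sum is justified by absolute convergence, or by monotone convergence applied separately to $F^\pm$. This reproduces the displayed sum, which is the first identity. Equivalently, it says that the law of $R_L$ under $\msrbb{P}_{\txtmeanfield,L}$ equals $K_{\sminvtemperature,\lambda,\smchemicalpotential,L}$, first for indicators $F=\fndef{B}$ and then for bounded $F$ by linearity and monotone approximation.

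For the second identity, factor the weight using $\frac{\lambda}{2V_L}N_L^2=\frac{\lambda V_L}{2}R_L^2$. This writes the mean-field weight as the free weight in \eqref{expedition0019669} times $\fnexp{-\sminvtemperature\lambda V_L R_L^2/2}$. Numerator and denominator then become free-measure expectations:
$$\sqfun{\prbexp_{\msrbb{P}_{\txtmeanfield,L}}}{F(R_L)}
=
\frac{\sqfun{\prbexp_{\msrbb{P}_{\txtfr,L}}}{F(R_L)\fnexp{-\sminvtemperature\lambda V_L R_L^2/2}}}
{\sqfun{\prbexp_{\msrbb{P}_{\txtfr,L}}}{\fnexp{-\sminvtemperature\lambda V_L R_L^2/2}}}.$$
The common factor $\smpartitionfunc_{\sminvtemperature,0,\smchemicalpotential,L}$ cancels.

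Finally, apply the first identity in the free case, with $\lambda=0$ and \eqref{expedition0018007}, to the bounded functions $r\mapsto F(r)\fnexp{-\sminvtemperature\lambda V_L r^2/2}$ and $r\mapsto\fnexp{-\sminvtemperature\lambda V_L r^2/2}$. This converts both expectations into integrals against $K_{\sminvtemperature,0,\smchemicalpotential,L}$, which is the stated formula and also agrees with Proposition \ref{expedition0012680}.

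The only point needing care is finiteness of the free partition function. It is finite exactly when $\smchemicalpotential<0$, the lowest lattice energy being $\physham[h]_{\txtparticle,0,L}(0)=0$; otherwise $\msrbb{P}_{\txtfr,L}$ is not a probability measure. To cover all $\smchemicalpotential$, I would avoid normalizing the free measure and work with the unnormalized free weights $w_{\txtfr}(\mathsf{n})$. The ratio above then still makes sense, because the Gaussian factor makes the tilted sums finite: the free weights grow at most like $\napiernum^{\sminvtemperature\smchemicalpotential N}$, times a polynomial in $N$ counting configurations at low energy, while the tilt is $\napiernum^{-\sminvtemperature\lambda N^2/(2V_L)}$. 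Read this way, $K_{\sminvtemperature,0,\smchemicalpotential,L}$ is a finite measure up to normalization, and the normalization cancels in the ratio. This is the step I expect to require the most attention.
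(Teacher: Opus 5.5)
Your argument is correct and is essentially the paper's proof: both identities are read off directly from the configuration sums over $\mathcal{O}_L$, with the mean-field weight factored as the free weight times $e^{-\beta\lambda V_L R_L^2/2}$ and the free partition function cancelling in the ratio. Your caveat about $\mu\geq0$ is a real point that the paper passes over, since its proof divides by $Z_{\beta,0,\mu,L}$, which is finite only for $\mu<0$. One correction to that remark: for $\mu\geq0$ the unnormalized free density measure has infinite total mass, so it is a locally finite measure whose Gaussian-tilted integrals are finite, not ``a finite measure up to normalization''; the ratio formula survives precisely because those tilted integrals are finite.
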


This is the probability formulation of the finite-volume occupation-basis diagonalization used in the proofs of Propositions \ref{expedition0012679} and \ref{expedition0012680}.

\begin{proof}
The random variable $R_L$ takes the value
$V_L^{-1}
\sum_{k\in\setlattice_L^d}
\mathsf{n}_k$ on the configuration
$\mathsf{n}$.
Hence
$$\begin{aligned}
&\sqfun{\prbexp_{\msrbb{P}_{\txtmeanfield,L}}}{F(R_L)}
=
\sum_{\mathsf{n}\in\mathcal{O}_L}
\fun{F}
{\frac{1}{V_L}\sum_{k\in\setlattice_L^d}\mathsf{n}_k}
\fun{\msrbb{P}_{\txtmeanfield,L}}{\setone{\mathsf{n}}}
\\ 
&=
\frac{1}
{\smpartitionfunc_{\sminvtemperature,\lambda,\smchemicalpotential,L}}
\sum_{\mathsf{n}\in\mathcal{O}_L}
\fun{F}{\frac{1}{V_L}\sum_{k\in\setlattice_L^d}\mathsf{n}_k}
\fnexp{-\sminvtemperature
\rbk{\sum_{k\in\setlattice_L^d}
\rbk{\physham[h]_{\txtparticle,0,L}(k)-\smchemicalpotential}\mathsf{n}_k
+\frac{\lambda}{2V_L}
\rbk{\sum_{k\in\setlattice_L^d}
\mathsf{n}_k}^{2}}}.
\end{aligned}$$
This is the integral of $F$ against the point-mass sum in
\eqref{expedition0012728}, so it is
$\int F(r)\opdmsr{K_{\sminvtemperature,\lambda,\smchemicalpotential,L}(r)}$.

For the second identity, write
$$Z_{\txtfr,L}
=
\smpartitionfunc_{\sminvtemperature,0,\smchemicalpotential,L},
\quad
Z_{\txtmeanfield,L}
=
\smpartitionfunc_{\sminvtemperature,\lambda,\smchemicalpotential,L}.$$
Using \eqref{expedition0018007}, it follows that
$$\begin{aligned}
&\int_{\fldreal_{\geq0}}
F(r)
\fnexp{-\frac{\sminvtemperature\lambda V_L r^2}{2}}
\opdmsr{K_{\sminvtemperature,0,\smchemicalpotential,L}(r)}
\\ 
&=
\frac{1}{Z_{\txtfr,L}}
\sum_{\mathsf{n}\in\mathcal{O}_L}
F(R_L(\mathsf{n}))
\fnexp{-\frac{\sminvtemperature\lambda V_L R_L(\mathsf{n})^2}{2}}
\fnexp{-\sminvtemperature
\sum_{k\in\setlattice_L^d}
\rbk{\physham[h]_{\txtparticle,0,L}(k)-\smchemicalpotential}\mathsf{n}_k}
\\ 
&=
\frac{Z_{\txtmeanfield,L}}{Z_{\txtfr,L}}
\sqfun{\prbexp_{\msrbb{P}_{\txtmeanfield,L}}}{F(R_L)}.
\end{aligned}$$
The same calculation with $F=1$ gives
$$
\int_{\fldreal_{\geq0}}
\fnexp{-\frac{\sminvtemperature\lambda V_L s^2}{2}}
\opdmsr{K_{\sminvtemperature,0,\smchemicalpotential,L}(s)}
=
\frac{Z_{\txtmeanfield,L}}{Z_{\txtfr,L}}.
$$
Dividing the two identities gives the claimed tilted expectation formula.
\end{proof}

\subsection{Density Selection by the Tilted Large-Deviation Principle}\label{density-selection-by-the-tilted-large-deviation-principle}

The scalar density selected in the thermodynamic limit is obtained by a standard exponential tilting of the free density LDP. The free input is Proposition \ref{expedition0012645}; the mean-field tilt is the finite-volume identity \eqref{expedition0012603}.

\begin{prop}[Probabilistic Kac collapse]\label{expedition0019755}
The random variables $R_L$ in \eqref{expedition0019667} converge in law under
$\msrbb{P}_{\txtmeanfield,L}$ defined by \eqref{expedition0019668} to the constant
$\bar{\smnumberdensity}_{\txtbsn}$ selected in Theorem
\ref{expedition0012647}.
Equivalently, for every $\epsilon>0$,
$$\lim_{L\to\infty}
\fun{\msrbb{P}_{\txtmeanfield,L}}
{\setone{\abs{R_L-\bar{\smnumberdensity}_{\txtbsn}}>\epsilon}}
=0.$$
\end{prop}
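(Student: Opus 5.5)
The plan is to reduce the statement to Theorem \ref{expedition0012647} through the probabilistic form of the finite-volume measure proved just above. That proposition shows the law of $R_L$ under $\msrbb{P}_{\txtmeanfield,L}$ is exactly $K_{\sminvtemperature,\lambda,\smchemicalpotential,L}$. Hence convergence in law of $R_L$ to the constant $\bar{\smnumberdensity}_{\txtbsn}$ is literally the weak convergence $K_{\sminvtemperature,\lambda,\smchemicalpotential,L}\Rightarrow\delta_{\bar{\smnumberdensity}_{\txtbsn}}$. The equivalent $\epsilon$-formulation is the standard fact that convergence in law to a constant is convergence in probability. To get it, apply the portmanteau upper bound to the closed set $F_\epsilon=\set{r\geq0}{\abs{r-\bar{\smnumberdensity}_{\txtbsn}}\geq\epsilon}$, whose $\delta_{\bar{\smnumberdensity}_{\txtbsn}}$-mass is zero.

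To keep the argument self-contained on the probability space, as this section intends, I will also sketch the tilted large-deviation proof directly, rather than only citing Theorem \ref{expedition0012647}. The first step is to use the tilt formula \eqref{expedition0012603}. Write
$$\fun{\msrbb{P}_{\txtmeanfield,L}}{R_L\in F_\epsilon}=\frac{\int_{F_\epsilon}\napiernum^{-V_L\sminvtemperature\lambda r^2/2}\opdmsr{K_{\sminvtemperature,0,\smchemicalpotential,L}(r)}}{\int_{\fldreal_{\geq0}}\napiernum^{-V_L\sminvtemperature\lambda s^2/2}\opdmsr{K_{\sminvtemperature,0,\smchemicalpotential,L}(s)}}.$$
The function $r\mapsto\sminvtemperature\lambda r^2/2$ is continuous and bounded below. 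So Varadhan's lemma (the tilted LDP), applied to the free LDP of Proposition \ref{expedition0012645}, gives two bounds. The logarithm of the numerator, divided by $V_L$, has limsup at most $-\inf_{F_\epsilon}(I_{\txtfr}+\sminvtemperature\lambda r^2/2)$. The corresponding liminf for the denominator is at least $-\inf_{s\geq0}(I_{\txtfr}+\sminvtemperature\lambda s^2/2)$. Combining them yields
$$\limsup_{L\to\infty}\frac{1}{V_L}\log\fun{\msrbb{P}_{\txtmeanfield,L}}{R_L\in F_\epsilon}\leq-\inf_{r\in F_\epsilon}I_{\txtmeanfield,\sminvtemperature,\smchemicalpotential,\lambda}(r).$$

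The remaining step is to show this infimum is strictly positive. Here I use that $I_{\txtmeanfield}$ is a good rate function with a unique zero at $\bar{\smnumberdensity}_{\txtbsn}$, as stated in Theorem \ref{expedition0012647}. Goodness follows because $I_{\txtfr}$ has compact sublevel sets and the added quadratic term is nonnegative. A lower semicontinuous function with compact sublevel sets attains its infimum on the closed set $F_\epsilon$, and that minimum is nonzero because $F_\epsilon$ excludes the unique zero. The probability therefore decays exponentially in $V_L$, which proves the claim.

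I expect the main obstacle to be the upper Varadhan bound, because the tilt is unbounded and the numerator is restricted to $F_\epsilon$. I would handle it first by noting that $\napiernum^{-V_L\sminvtemperature\lambda r^2/2}\leq1$. Then split $F_\epsilon$ into $F_\epsilon\cap[0,M]$, where continuity of the tilt plus the LDP upper bound apply by a finite covering argument, and $[M,\infty)$, where exponential tightness of the free law kills the tail for large $M$. Exponential tightness follows from finiteness of $\Lambda_{\txtfr}(q)$ for some $q>0$ with $q<-\sminvtemperature\smchemicalpotential$. If $\smchemicalpotential\geq0$ leaves no such $q$, the tail is instead controlled directly by the Gaussian factor $\napiernum^{-V_L\sminvtemperature\lambda M^2/2}$, which beats any exponential growth of the free partition-function ratio.
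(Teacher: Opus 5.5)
Your proposal is correct and is essentially the paper's proof. The paper uses the same tilted closed-set upper bound with rate $J-m$ (where $J=I_{\mathrm{fr}}+\beta\lambda r^2/2$), the same denominator lower bound at a near-minimizer, and the same strict positivity of $\inf_{F_\epsilon}(J-m)$ from goodness and the unique minimizer; your first-paragraph shortcut via the law identity and Theorem~\ref{expedition0012647} is also how the paper itself argues in Proposition~\ref{expedition0019610}.

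The one slip is your fallback for $\mu\geq0$. There $K_{\beta,0,\mu,L}$ does not exist at all, because the zero-mode sum $\sum_m e^{\beta\mu m}$ diverges, so the problem is not tail control. The repair is to tilt a free law at some reference $\mu'<0$ with the still continuous, superlinear weight $\beta\lambda r^2/2-\beta(\mu-\mu')r$. The paper's Proposition~\ref{expedition0012645} silently excludes this regime as well.
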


The rate function is \(I_{\txtmeanfield,\sminvtemperature,\smchemicalpotential,\lambda}\) from Theorem \ref{expedition0012647}. Its unique minimizer is the selected density.

\begin{proof}
By Proposition \ref{expedition0012645}, the free laws
$K_{\sminvtemperature,0,\smchemicalpotential,L}$ satisfy an LDP with good rate
function
$I_{\txtfr,\sminvtemperature,\smchemicalpotential}$.
Set
$$
\Phi(r)=\frac{\sminvtemperature\lambda r^2}{2},
\quad
J(r)=I_{\txtfr,\sminvtemperature,\smchemicalpotential}(r)+\Phi(r),
\quad
m=\inf_{s\geq0}J(s).
$$
By the finite-volume tilted density identity \eqref{expedition0012603},
$$
K_{\sminvtemperature,\lambda,\smchemicalpotential,L}(A)
=
\frac{
\int_A \fnexp{-V_L\Phi(r)}
\opdmsr{K_{\sminvtemperature,0,\smchemicalpotential,L}(r)}
}{
\int_{\fldreal_{\geq0}} \fnexp{-V_L\Phi(s)}
\opdmsr{K_{\sminvtemperature,0,\smchemicalpotential,L}(s)}
}
$$
for every Borel set $A$.
For a closed set $F$, the LDP upper bound and the usual partition of $F$ into
small intervals give
$$
\limsup_{L\to\infty}
\frac{1}{V_L}
\log
\int_F \fnexp{-V_L\Phi(r)}
\opdmsr{K_{\sminvtemperature,0,\smchemicalpotential,L}(r)}
\leq
-\inf_{r\in F}J(r).
$$
For the denominator, if $G$ is any open neighbourhood of a point $r_*$ with
$J(r_*)<m+\eta$, then the LDP lower bound gives
$$\begin{aligned}
\liminf_{L\to\infty}
\frac{1}{V_L}
\log
\int_{\fldreal_{\geq0}} \fnexp{-V_L\Phi(s)}
\opdmsr{K_{\sminvtemperature,0,\smchemicalpotential,L}(s)}
&\geq
\liminf_{L\to\infty}
\frac{1}{V_L}
\log
\int_G \fnexp{-V_L\Phi(s)}
\opdmsr{K_{\sminvtemperature,0,\smchemicalpotential,L}(s)}
\\ 
&\geq
-\inf_{s\in G}J(s)
\geq
-m-\eta.
\end{aligned}$$
Letting $\eta\downarrow0$ yields the denominator exponent $-m$.
Therefore the tilted laws satisfy the closed-set estimate
$$
\limsup_{L\to\infty}
\frac{1}{V_L}
\log
K_{\sminvtemperature,\lambda,\smchemicalpotential,L}(F)
\leq
-\inf_{r\in F}\rbk{J(r)-m}.
$$
The corresponding open-set lower estimate is obtained by applying the same
calculation to an open set in the numerator and the upper estimate to the
denominator.
Thus the mean-field rate function is $J-m$, which is the rate function in
Theorem \ref{expedition0012647}.

Since $J-m$ has the unique minimizer
$\bar{\smnumberdensity}_{\txtbsn}$, for each $\epsilon>0$ the closed set
$$
F_\epsilon
=
\set{r\in\fldreal_{\geq0}}
{\abs{r-\bar{\smnumberdensity}_{\txtbsn}}\geq\epsilon}
$$
satisfies
$c_\epsilon=\inf_{r\in F_\epsilon}\rbk{J(r)-m}>0$.
Hence
$$
\limsup_{L\to\infty}
\frac{1}{V_L}
\log
\fun{\msrbb{P}_{\txtmeanfield,L}}
{\setone{\abs{R_L-\bar{\smnumberdensity}_{\txtbsn}}\geq\epsilon}}
\leq
-c_\epsilon.
$$
This gives convergence in probability, and convergence in law to the point
mass $\delta_{\bar{\smnumberdensity}_{\txtbsn}}$ follows.
\end{proof}

\subsection{Occupation Statistics Under Density Conditioning and Euler Equations}\label{occupation-statistics-under-density-conditioning-and-euler-equations}

After the total density has been selected, the nonzero modes are governed by the entropy and energy balance of the occupation-number variables after conditioning on the value of \(R_L\). This is the probabilistic form of the Euler equations in \eqref{expedition0012604}--\eqref{expedition0012607}. The finite-volume interaction depends only on \(R_L\) in \eqref{expedition0019667}, so conditioning on \(R_L\) leaves the remaining variational problem as a free Bose variational problem at the selected density.

\begin{prop}[Probabilistic form of the mean-field Euler equations]
Let $R_L$ be the random variable defined in \eqref{expedition0019667} on
the probability space
$\rbk{\mathcal{O}_L,\msrbb{P}_{\txtmeanfield,L}}$ with
$\msrbb{P}_{\txtmeanfield,L}$ defined by \eqref{expedition0019668}, and let
$\bar{\smnumberdensity}_{\txtbsn}$ be the limit in Theorem
\ref{expedition0012647}.
Every thermodynamic limit of the occupation-number laws under conditioning on
$R_L$
satisfies the complementarity equation \eqref{expedition0012604} and the
nonzero-mode occupation formula \eqref{expedition0012605}.
Consequently the condensed and normal cases are exactly
\eqref{expedition0012606} and \eqref{expedition0012607}.
\end{prop}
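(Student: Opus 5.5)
The plan is to condition the finite-volume measure $\msrbb{P}_{\txtmeanfield,L}$ on the density variable $R_L$ and observe that the mean-field weight drops out. On each fibre $\setone{R_L=r}$ the factor $\fnexp{-\sminvtemperature\lambda V_L r^2/2}$ in \eqref{expedition0019668} is a constant, so the conditional law coincides with the conditional law of the free measure $\msrbb{P}_{\txtfr,L}$ on the same fibre. This is the occupation-number version of Proposition \ref{expedition0012680}. By Proposition \ref{expedition0019755}, $R_L\to\bar{\smnumberdensity}_{\txtbsn}$ in probability with exponential rate. It therefore suffices to study free occupation laws conditioned on $R_L$ lying in a shrinking window around $\bar{\smnumberdensity}_{\txtbsn}$, that is, the free canonical ensemble at density $\bar{\smnumberdensity}_{\txtbsn}$.

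For the free canonical ensemble I would use the standard saddle-point, or equivalence-of-ensembles, argument. Choose the tilt $q$ so that the free grand-canonical mean density equals $\bar{\smnumberdensity}_{\txtbsn}$. If such a $q$ exists with $\sminvtemperature\smchemicalpotential+q<0$, call this the normal case. The conditioned single-mode laws are then asymptotically geometric with mean
\[
\frac{1}{\napiernum^{\sminvtemperature(\physham[h]_{\txtparticle,0}(k)-\smchemicalpotential_{\mathrm{sel}}+\lambda\bar{\smnumberdensity}_{\txtbsn})}-1},
\]
where $\smchemicalpotential_{\mathrm{sel}}-\lambda\bar{\smnumberdensity}_{\txtbsn}$ is identified with $\smchemicalpotential+q/\sminvtemperature$. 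This gives \eqref{expedition0012605}, with \eqref{expedition0012604} holding trivially because the zero-mode excess is zero. In the other case, $\bar{\smnumberdensity}_{\txtbsn}>\smnumberdensity_{\txtbsn,\txtcritical}(\sminvtemperature)$, the tilt saturates at the boundary, so $\smchemicalpotential_{\mathrm{sel}}=\lambda\bar{\smnumberdensity}_{\txtbsn}$. The nonzero modes then converge to the free occupations at zero effective chemical potential. I would obtain this by conditioning on the zero-mode occupation $\mathsf{n}_0$ and applying the LDP from Proposition \ref{expedition0012645} to $V_L^{-1}\sum_{k\neq0}\mathsf{n}_k$, which has rate zero only up to $\smnumberdensity_{\txtbsn,\txtcritical}$. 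The excess $\bar{\smnumberdensity}_{\txtbsn}-\smnumberdensity_{\txtbsn,\txtcritical}$ is then forced into $\mathsf{n}_0/V_L$. This yields \eqref{expedition0012606}, and \eqref{expedition0012604} follows because the product of the excess and $\smchemicalpotential_{\mathrm{sel}}-\lambda\bar{\smnumberdensity}_{\txtbsn}$ vanishes in both cases.

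The two cases exhaust all possibilities because the free mean-density map $q\mapsto\Lambda'_{\txtfr}(q)$ is strictly increasing and continuous, with supremum $\smnumberdensity_{\txtbsn,\txtcritical}(\sminvtemperature)$ when $d\geq3$. This gives the final sentence of the statement: the condensed and normal cases are exactly \eqref{expedition0012606} and \eqref{expedition0012607}.

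I expect the main obstacle to be the condensed case. There the conditioned law is not a product law, the saddle point lies on the boundary, and one has to show that the macroscopic excess sits in $\mathsf{n}_0$ alone rather than being spread over low modes $k\sim 2\pi/L$. I would first try the van den Berg--Lewis--Pulé style bound. In $d\geq3$ the finite-volume sum $V_L^{-1}\sum_{k\neq0}(\napiernum^{\sminvtemperature\physham[h]_{\txtparticle,0,L}(k)}-1)^{-1}$ converges to the critical density. An exponential Chebyshev bound with the tilt $q=-\sminvtemperature\min_{k\neq0}\physham[h]_{\txtparticle,0,L}(k)$ then shows that the nonzero modes carry more than $\smnumberdensity_{\txtbsn,\txtcritical}+\epsilon$ only with vanishing probability. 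The statement is only about limits of single-mode marginals and the zero-mode density, so this weak control should suffice.
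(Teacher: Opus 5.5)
Your opening reduction is the paper's: on a fibre $\{R_L=r\}$ the quadratic weight in \eqref{expedition0019668} is constant, and Proposition~\ref{expedition0019755} localizes $r$ near $\bar{\smnumberdensity}_{\txtbsn}$. After that the routes diverge.

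The paper asserts that the conditioned large-volume problem \emph{is} the density-constrained variational problem of Proposition~\ref{expedition0012649}. It then reads \eqref{expedition0012604}--\eqref{expedition0012605} off the first variation, so $\smchemicalpotential_{\mathrm{sel}}$ appears as a Lagrange multiplier and complementarity as the half-line condition on $\rho_0$. Two things are never argued there: the passage from conditioned occupation laws to minimizers of that functional, and the fact that the excess sits in the $k=0$ mode. The functional simply builds the latter in by giving $\rho_0$ its own variable.

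You instead treat the conditioned law as the free canonical ensemble and do an actual saddle-point analysis. The effective multiplier $\smchemicalpotential_{\mathrm{sel}}-\lambda\bar{\smnumberdensity}_{\txtbsn}$ becomes $\smchemicalpotential+q/\sminvtemperature$ for the tilt solving $\frac{d}{dq}\Lambda_{\txtfr,\sminvtemperature,\smchemicalpotential}(q)=\bar{\smnumberdensity}_{\txtbsn}$. The normal case is equivalence of ensembles. The condensed case is a boundary saddle point in which you must show the excess localizes in $\mathsf{n}_0$ rather than in modes $k\sim 2\pi/L$.

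Your route costs sub-volume estimates and uses $d\geq3$ explicitly. In return it proves the convergence of occupation laws the statement actually claims, and it gives the concentration of $\mathsf{n}_0/V_L$ of Proposition~\ref{expedition0019803} as a by-product. The paper's route is shorter and keeps the result tied to the free-energy functional, but its key step is an appeal to that variational principle, not an argument about the conditioned laws.

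Three points to tighten.

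\textbf{The case $\bar{\smnumberdensity}_{\txtbsn}=\smnumberdensity_{\txtbsn,\txtcritical}(\sminvtemperature)$.} Your dichotomy omits it. For $q<-\sminvtemperature\smchemicalpotential$ the limiting mean density ranges only over $(0,\smnumberdensity_{\txtbsn,\txtcritical}(\sminvtemperature))$, so the supremum is not attained, and your condensed case is a strict inequality. Put this value in the normal case, with effective chemical potential $0$ and $\mathsf{n}_0/V_L\to0$; the latter follows from the lower bound on the nonzero-mode density discussed next.

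\textbf{The Chebyshev step and the two-sided bound.} On a fibre the factor $\napiernum^{\sminvtemperature\smchemicalpotential N_L}$ is also constant. So the conditioned law does not depend on $\smchemicalpotential$, and you may compare the nonzero modes with a product measure at any effective chemical potential below the first nonzero level.
\begin{itemize}
\item The exponential Chebyshev step must stop strictly short of $\min_{k\neq0}\physham[h]_{\txtparticle,0,L}(k)$, because at that value the lowest-mode generating factors diverge.
\item With $\nu_L=\frac12\min_{k\neq0}\physham[h]_{\txtparticle,0,L}(k)\asymp L^{-2}$ you get a bound of order $\nu_L^{-1}\napiernum^{-\sminvtemperature\nu_L(\epsilon-\delta)V_L}$. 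Since $\nu_LV_L\asymp L^{d-2}\to\infty$, this is exactly where $d\geq3$ enters.
\item You also need the opposite bound: the nonzero-mode density is at least $\smnumberdensity_{\txtbsn,\txtcritical}(\sminvtemperature)-\epsilon$. The speed-$V_L$ LDP supplies it. Under conditioning on $R_L\approx\bar{\smnumberdensity}_{\txtbsn}$, the cost is strictly positive below the critical density and flat on $[\smnumberdensity_{\txtbsn,\txtcritical}(\sminvtemperature),\bar{\smnumberdensity}_{\txtbsn}]$.
\item Your phrase ``rate zero only up to $\rho_c$'' inverts this, although you correctly identify the flat piece as the obstacle.
\end{itemize}

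\textbf{Single-mode marginals in the condensed case.} These come for free from the exact identity $\msrbb{P}_{\txtmeanfield,L}(\mathsf{n}_k\geq j\mid N_L=N)=\napiernum^{-\sminvtemperature\physham[h]_{\txtparticle,0,L}(k)j}\,\msrbb{P}_{\txtmeanfield,L}(\mathsf{n}_0\geq j\mid N_L=N)$, obtained by the shift $\mathsf{n}_k\mapsto\mathsf{n}_k-j$. Once $\mathsf{n}_0/V_L\to\smnumberdensity_{\txtbsn,0}(\sminvtemperature)>0$, the last factor tends to $1$. For $k_L\to k\neq0$ you then get the geometric law with the mean in \eqref{expedition0012605} at $\smchemicalpotential_{\mathrm{sel}}=\lambda\bar{\smnumberdensity}_{\txtbsn}$.
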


It says that the limiting occupation-number law is determined by the minimizer of the density-constrained free-energy functional in Proposition \ref{expedition0012649}.

\begin{proof}
For $\mathsf{n}\in\mathcal{O}_L$, \eqref{expedition0019668} gives
\begin{equation}\label{expedition0019740}
\fun{\msrbb{P}_{\txtmeanfield,L}}{\setone{\mathsf{n}}}
\propto
\fnexp{-\sminvtemperature
\rbk{
\sum_{k\in\setlattice_L^d}
\rbk{\physham[h]_{\txtparticle,0,L}(k)-\smchemicalpotential}
\mathsf{n}_k
+
\frac{\lambda V_L}{2}R_L(\mathsf{n})^2}}.
\end{equation}
Conditioning on $R_L=r$ turns the last term in \eqref{expedition0019740} into
the constant $\lambda V_L r^2/2$.
Thus the remaining large-volume variational problem is the density-constrained
free Bose variational problem at total density $r$.
By the closed-set LDP estimate in the proof of Proposition
\ref{expedition0019755},
\begin{equation}\label{expedition0019741}
R_L\to \bar{\smnumberdensity}_{\txtbsn}
\end{equation}
in probability under $\msrbb{P}_{\txtmeanfield,L}$.
Inserting $r=\bar{\smnumberdensity}_{\txtbsn}$, the density-constrained
variational problem is exactly the one treated in the proof of Proposition
\ref{expedition0012649}.
Repeating its first-variation calculation, namely
\eqref{expedition0012729} for the nonzero modes and
\eqref{expedition0018044} for the zero-mode half-line, gives
\eqref{expedition0012604} and \eqref{expedition0012605}.
The alternatives \eqref{expedition0012606} and \eqref{expedition0012607} are
then obtained by the same split according to
$\smnumberdensity_{\txtbsn,0}(\sminvtemperature)>0$ or
$\smnumberdensity_{\txtbsn,0}(\sminvtemperature)=0$.
\end{proof}

\subsection{Zero-Mode Occupation as a Random Variable}\label{zero-mode-occupation-as-a-random-variable}

The zero-mode density is a random occupation variable before it becomes a covariance contribution in the resolvent-algebra formulation. Put \[R_{0,L}(\mathsf{n})
=
\frac{\mathsf{n}_0}{V_L},
\quad
R_{\neq0,L}(\mathsf{n})
=
\frac{1}{V_L}
\sum_{k\in\setlattice_L^d\setminus\setone{0}}\mathsf{n}_k,
\quad
\mathsf{n}
\in
\mathcal{O}_L.\] Then \(R_L=R_{0,L}+R_{\neq0,L}\).

\begin{prop}[Probabilistic zero-mode excess]\label{expedition0019803}
In the condensed case \eqref{expedition0012606},
$$
R_{0,L}
\to
\bar{\smnumberdensity}_{\txtbsn}
-\smnumberdensity_{\txtbsn,\txtcritical}(\sminvtemperature)
=
\smnumberdensity_{\txtbsn,0}(\sminvtemperature)
$$
in probability under $\msrbb{P}_{\txtmeanfield,L}$ defined by
\eqref{expedition0019668}.
In the normal case \eqref{expedition0012607}, $R_{0,L}\to0$ in probability.
\end{prop}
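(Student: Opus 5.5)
Since $R_L=R_{0,L}+R_{\neq0,L}$ and Proposition \ref{expedition0019755} already gives $R_L\to\bar{\smnumberdensity}_{\txtbsn}$ in probability under $\msrbb{P}_{\txtmeanfield,L}$, the whole argument reduces to proving that the nonzero-mode density concentrates:
$$R_{\neq0,L}\to\min\rbk{\bar{\smnumberdensity}_{\txtbsn},\smnumberdensity_{\txtbsn,\txtcritical}(\sminvtemperature)}$$
in probability. In the condensed case \eqref{expedition0012606} this minimum is $\smnumberdensity_{\txtbsn,\txtcritical}(\sminvtemperature)$, so subtracting gives the first claim. In the normal case \eqref{expedition0012607} it is $\bar{\smnumberdensity}_{\txtbsn}$, giving $R_{0,L}\to0$.

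\textbf{Step 1: reduction to the conditional law.} The key structural fact is that the weight \eqref{expedition0019668} depends on $\mathsf{n}$ only through the free energies and through $N_L$. Therefore, conditioned on $N_L=N$, $\msrbb{P}_{\txtmeanfield,L}$ coincides with the canonical free Bose law at particle number $N$ and parameter $\smchemicalpotential$. Indeed, the factor $\napiernum^{\sminvtemperature\smchemicalpotential N}$ and the mean-field factor are constant on the fibre. It thus suffices to show that the canonical free law at $N$ with $N/V_L\in(\bar{\smnumberdensity}_{\txtbsn}-\epsilon,\bar{\smnumberdensity}_{\txtbsn}+\epsilon)$ puts mass $\to0$ on $\set{\abs{R_{\neq0,L}-\min(N/V_L,\smnumberdensity_{\txtbsn,\txtcritical})}>\delta}$, uniformly in such $N$. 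One then integrates over $N$ using Proposition \ref{expedition0019755}.

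\textbf{Step 2: canonical concentration (the main obstacle).} Within a fibre, $R_{\neq0,L}\le N/V_L$ trivially. For the upper bound $R_{\neq0,L}\le\smnumberdensity_{\txtbsn,\txtcritical}+\delta$, I would use the fact that under the canonical free law, $\mathsf{n}_0=N-\sum_{k\neq0}\mathsf{n}_k$ carries zero energy. The fibre probability of $\set{\sum_{k\ne0}\mathsf{n}_k=M}$ is then proportional to the nonzero-mode free partition function restricted to total $M$, with no penalty for the remainder. Accordingly, I would compare with the grand-canonical nonzero-mode measure at fugacity $1$, i.e. $\smchemicalpotential=0$, which is well defined on $\setlattice_L^d\setminus\setone{0}$. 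Under this measure $\sum_{k\ne0}\mathsf{n}_k/V_L$ has mean converging to $\smnumberdensity_{\txtbsn,\txtcritical}$ as a Riemann sum; here $d\ge3$ is needed to control the $k\sim 2\pi/L$ terms, which contribute $O(L^{2-d})$. A Gärtner--Ellis / LDP upper bound, of the type in Proposition \ref{expedition0012645} applied with $q>0$ small on the nonzero modes, gives exponential decay of $\set{R_{\neq0,L}>\smnumberdensity_{\txtbsn,\txtcritical}+\delta}$. Because the conditioned law is this measure restricted to $M\le N$ and renormalized, the bound survives provided the normalizing mass of $\set{M\le N}$ is not exponentially small. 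That proviso holds when $N/V_L\ge\smnumberdensity_{\txtbsn,\txtcritical}-\delta/2$, by the law of large numbers.

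For the lower bound $R_{\neq0,L}\ge\min(N/V_L,\smnumberdensity_{\txtbsn,\txtcritical})-\delta$, I would use the free-energy comparison underlying Proposition \ref{expedition0012649}. The log-count of configurations with nonzero density $m$ is $V_L$ times the maximal entropy minus energy term at density $m$. For $m\le\smnumberdensity_{\txtbsn,\txtcritical}$ this quantity equals $\sup_{q\le0}(\,\cdot\,)$ and is strictly increasing in $m$. Hence densities $m<\min(N/V_L,\smnumberdensity_{\txtbsn,\txtcritical})-\delta$ lose an exponential factor relative to $m$ near that minimum. This is exactly the concavity/saturation behind \eqref{expedition0012605}--\eqref{expedition0012606}, with the condensed alternative corresponding to saturation at $\smnumberdensity_{\txtbsn,\txtcritical}$.

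\textbf{Step 3: assembly.} I would split $\msrbb{P}_{\txtmeanfield,L}(\abs{R_{0,L}-\smnumberdensity_{\txtbsn,0}}>2\delta)$ into the event $\abs{R_L-\bar{\smnumberdensity}_{\txtbsn}}>\epsilon$, which vanishes by Proposition \ref{expedition0019755}, and the fibres with $\abs{N/V_L-\bar{\smnumberdensity}_{\txtbsn}}\le\epsilon$, which vanish by Step 2. Continuity of $r\mapsto\min(r,\smnumberdensity_{\txtbsn,\txtcritical})$ lets $\epsilon\to0$, and the identity \eqref{expedition0020735}, read as $\smnumberdensity_{\txtbsn,0}=\bar{\smnumberdensity}_{\txtbsn}-\smnumberdensity_{\txtbsn,\txtcritical}$ per the excess-density proposition, gives the stated limit. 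The delicate point is the uniformity in $N$ together with the finite-volume lowest-mode contributions in Step 2.
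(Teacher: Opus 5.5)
Your argument is correct in substance and takes a genuinely different route from the paper's.

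\textbf{How the routes differ.} The paper takes the nonzero-mode occupation from the Euler equations. It then computes the mean and the variance $V_L^{-2}\sum_{k\neq0}n(k)(1+n(k))$ as if, under $\mathbb{P}_{\mathrm{mf},L}$, the nonzero modes were independent Bose variables with the saturated occupations $n(k)=(e^{\beta h(k)}-1)^{-1}$, $h(k)=|k|^2/2$. It never derives this from \eqref{expedition0019668}, where the modes are coupled through $N_L^2$.

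Your fibre identity supplies exactly that link. Conditioned on $N_L=N$, the nonzero-mode configuration has law $Q(\,\cdot\mid M\le N)$, with $M=\sum_{k\neq0}\mathsf{n}_k$ and $Q$ the product of geometric laws with means $n(k)$. So the paper's variance is literally $\operatorname{Var}_Q(M/V_L)$, and the conditioning is harmless once $N/V_L$ stays above $\rho_c$.

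\textbf{What your route buys.} You obtain $R_{0,L}\to(\bar\rho-\rho_c)^+$ from the Kac collapse alone, with no use of the Euler equations or $\mu_{\mathrm{sel}}$. You also give a real argument in the normal case, where the paper only invokes ``the same subtraction''. A further gain is that the fibre law does not involve $\mu$; only the Kac input does. This matters because the free-law tilting behind Proposition \ref{expedition0012645} requires $\mu<0$. For $\mu<0$ the convex function $I_{\mathrm{fr}}(r)+\beta\lambda r^2/2$ has left slope $-\beta\mu+\beta\lambda\rho_c>0$ at $\rho_c$, which forces $\bar\rho<\rho_c$. So in the range where the paper's Kac collapse is actually proved, it is your normal-case argument that carries the content.

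\textbf{A false sub-step in Step 2.} No G\"artner--Ellis bound with a fixed $q>0$ is available for $Q$. Indeed $E_Q[e^{q\mathsf{n}_k}]<\infty$ only when $q<\beta h(k)$, and the lowest nonzero modes have $\beta h(k)=2\pi^2\beta/L^2\to0$. Hence $E_Q[e^{qM}]=\infty$ for every fixed $q>0$ once $L$ is large. In fact, putting $2\delta V_L$ particles into one lowest mode shows that $Q(M>(\rho_c+\delta)V_L)\ge e^{-c\delta L^{d-2}}$ for large $L$ and any $c>4\pi^2\beta$. That is not exponentially small at speed $V_L$.

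Your proviso is also wrong for $\rho_c-\delta/2\le N/V_L<\rho_c$: there $Q(M\le N)$ \emph{is} exponentially small, by the lower-tail LDP.

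\textbf{The fix.} Neither claim is needed.
\begin{itemize}
\item If $N/V_L\le\rho_c+\delta$, the upper bound is trivial since $R_{\neq0,L}\le N/V_L$.
\item If $N/V_L>\rho_c+\delta$, then $Q(M\le N)\ge Q(M\le(\rho_c+\delta)V_L)\to1$ uniformly in such $N$. Chebyshev, with $E_Q[M/V_L]\to\rho_c$ and $\operatorname{Var}_Q(M/V_L)\to0$ (both use $d\ge3$), then controls both tails.
\end{itemize}
In the condensed case every relevant fibre has $N/V_L\ge\bar\rho-\epsilon>\rho_c$. So Step 2 reduces to this law of large numbers, and your entropy comparison is not used there. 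That comparison, which is the legitimate lower-tail ($q<0$) LDP for $Q$, is needed only in the normal case.
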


Thus the probability statement corresponding to the zero-mode input is the concentration of the random variable \(R_{0,L}\). The zero-mode covariance form \eqref{expedition0012440} is the subsequent field-theoretic encoding of the same scalar excess density. Thus BEC occurrence is represented here by the positive limit of the random zero-mode density. The normal case gives the vanishing zero-mode density and hence no BEC input for the later ODLRO and local-condensate statements.

\begin{proof}
The Euler equations \eqref{expedition0012604}--\eqref{expedition0012607}
identify the limiting nonzero-mode occupation density as
$n_{\bar{\smnumberdensity}_{\txtbsn}}$ in \eqref{expedition0012605}.
In the condensed case \eqref{expedition0012606}, this equation reduces
the density to the
free saturated occupation
$\rbk{\napiernum^{\sminvtemperature\physham[h]_{\txtparticle,0}(k)}-1}^{-1}$
for $k\ne0$.
Its integral is
$\smnumberdensity_{\txtbsn,\txtcritical}(\sminvtemperature)$ by \eqref{expedition0019397}.
The nonzero-mode density is the Riemann-sum limit
$$
\lim_{L\to\infty}
\frac{1}{V_L}
\sum_{k\in\setlattice_L^d\setminus\setone{0}}
\frac{1}{\napiernum^{\sminvtemperature\physham[h]_{\txtparticle,0,L}(k)}-1}
=
\smnumberdensity_{\txtbsn,\txtcritical}(\sminvtemperature).
$$
The corresponding variance is of order
$$
\frac{1}{V_L^2}
\sum_{k\in\setlattice_L^d\setminus\setone{0}}
n_{\bar{\smnumberdensity}_{\txtbsn},L}(k)
\rbk{1+n_{\bar{\smnumberdensity}_{\txtbsn},L}(k)},
$$
which tends to zero under the same local integrability condition used in the
free critical-density formula.
Thus
$R_{\neq0,L}\to\smnumberdensity_{\txtbsn,\txtcritical}(\sminvtemperature)$ in probability.
Since
$$
R_{0,L}
=
R_L-R_{\neq0,L},
$$
and $R_L\to\bar{\smnumberdensity}_{\txtbsn}$ in probability, the difference
converges in probability to
$\bar{\smnumberdensity}_{\txtbsn}-\smnumberdensity_{\txtbsn,\txtcritical}(\sminvtemperature)$.
In the normal case \eqref{expedition0012607}, the total density is already carried by the nonzero-mode
occupation in \eqref{expedition0012607}.
The same subtraction gives
$$
R_{0,L}
=
R_L-R_{\neq0,L}
\to
\bar{\smnumberdensity}_{\txtbsn}-\bar{\smnumberdensity}_{\txtbsn}
=0
$$
in probability.
\end{proof}

\subsection{Direct-Integral Components and the Integrated Equilibrium Law}\label{direct-integral-components-and-the-integrated-equilibrium-law}

The probabilistic direct integral is a disintegration of the zero-mode amplitude law over the circle parameter. Let \(\Theta\) be uniformly distributed on \([0,2\pi)\). Denote its law by \(\msrbb{P}_{\Theta}\). In the condensed case \eqref{expedition0012606}, set \[\begin{aligned}
z_{\txtbec}(\theta)
=
\sqrt{\smnumberdensity_{\txtbsn,0}(\sminvtemperature)}
\napiernum^{\imunit\theta},
\quad
\fun{\msrbb{P}_{\txtbec,\theta}}{B}
=
\fun{\diracdelta_{z_{\txtbec}(\theta)}}{B},
\quad
B \subset \mblfmlborel(\fldcmp).
\end{aligned}\] The probability kernel \(\theta
\mapsto
\msrbb{P}_{\txtbec,\theta}\) is the component measure. Define \[\Omega_{\txtbec}
=
[0,2\pi)\times\fldcmp,
\quad
\fun{\Theta_{\txtbec}}{\theta,z}
=
\theta,
\quad
\fun{Z_{\txtbec}}{\theta,z}
=
z.\] The joint probability measure \(\msrbb{P}_{\txtbec}\) on \(\Omega_{\txtbec}\) is defined, for every Borel set \(C\subset\Omega_{\txtbec}\), by \[\fun{\msrbb{P}_{\txtbec}}{C}
=
\frac{1}{2\pi}
\int_0^{2\pi}
\fun{\msrbb{P}_{\txtbec,\theta}}
{\set{z\in\fldcmp}{(\theta,z)\in C}}
\opdmsr{\theta}.\] The regular conditional measure of \(Z_{\txtbec}\) given \(\Theta_{\txtbec}=\theta\) is \[\funcond{\msrbb{P}_{\txtbec}}
{\setone{Z_{\txtbec}\in B}}
{\Theta_{\txtbec}=\theta}
=
\fun{\msrbb{P}_{\txtbec,\theta}}{B}.\] The integrated zero-mode amplitude law is the marginal probability measure \(\msrbb{P}_{\txtbec}\) on \(\fldcmp\): \[\fun{\msrbb{P}_{\txtbec}}{B}
=
\fun{\msrbb{P}_{\txtbec}}{\setone{Z_{\txtbec}\in B}}
=
\frac{1}{2\pi}
\int_0^{2\pi}
\fun{\msrbb{P}_{\txtbec,\theta}}{B}
\opdmsr{\theta}.\] Equivalently, for every bounded Borel function \(H\) on \(\fldcmp\), \[\begin{aligned}
\int_{\fldcmp}H(z)\opdmsr{\msrbb{P}_{\txtbec}(z)}
=
\frac{1}{2\pi}
\int_0^{2\pi}
\int_{\fldcmp}H(z)
\opdmsr{\msrbb{P}_{\txtbec,\theta}(z)}
=
\sqfun{\prbexp_{\msrbb{P}_{\txtbec}}}
{H(Z_{\txtbec})}.
\end{aligned}\] This is the probability-space construction corresponding to the direct-integral parameter: the component is the Dirac law \(\msrbb{P}_{\txtbec,\theta}\), and the integrated law is their mixture with respect to the marginal \(\msrbb{P}_{\Theta}\) of \(\Theta_{\txtbec}\). The parameter value \(\theta\) indexes the component law \(\msrbb{P}_{\txtbec,\theta}\), and averaging with respect to \(\msrbb{P}_{\Theta}\) gives the gauge-invariant law \(\msrbb{P}_{\txtbec}\).

\begin{prop}[Probabilistic direct-integral component mixture]
Assume $\smnumberdensity_{\txtbsn,0}(\sminvtemperature)>0$.
For the random amplitude $Z_{\txtbec}$ just defined,
$$\sqfun{\prbexp_{\msrbb{P}_{\txtbec}}}{Z_{\txtbec}}=0,
\quad
\sqfun{\prbexp_{\msrbb{P}_{\txtbec}}}{\abs{Z_{\txtbec}}^2}
=
\smnumberdensity_{\txtbsn,0}(\sminvtemperature).$$
For test functions whose zero-momentum values are defined, set
$$\Xi_f
=
\sqrt{2(2\pi)^d}Z_{\txtbec}\faftr{f}(0),
\quad
\Xi_g
=
\sqrt{2(2\pi)^d}Z_{\txtbec}\faftr{g}(0).$$
Then the zero-mode two-point covariance is
$$\begin{aligned}
\fun{\prbcov_{\msrbb{P}_{\txtbec}}}{\Xi_f,\Xi_g}
=
\sqfun{\prbexp_{\msrbb{P}_{\txtbec}}}
{\cmpconj{\rbk{\Xi_f-\sqfun{\prbexp_{\msrbb{P}_{\txtbec}}}{\Xi_f}}}
\rbk{\Xi_g-\sqfun{\prbexp_{\msrbb{P}_{\txtbec}}}{\Xi_g}}}
=
2(2\pi)^d
\smnumberdensity_{\txtbsn,0}(\sminvtemperature)
\cmpconj{\faftr{f}(0)}
\faftr{g}(0).
\end{aligned}$$
\end{prop}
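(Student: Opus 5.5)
The proof is a direct computation with the explicit mixture. Everything reduces to integrating the component Dirac laws $\msrbb{P}_{\txtbec,\theta}$ against the uniform marginal $\msrbb{P}_{\Theta}$. By the marginal formula for bounded Borel $H$, extended to the bounded-on-the-support functions $z\mapsto z$ and $z\mapsto\abs{z}^2$ (each $\msrbb{P}_{\txtbec,\theta}$ is supported on the circle of radius $\sqrt{\smnumberdensity_{\txtbsn,0}(\sminvtemperature)}$, so truncation changes nothing), we get
$$\sqfun{\prbexp_{\msrbb{P}_{\txtbec}}}{H(Z_{\txtbec})}=\frac{1}{2\pi}\int_0^{2\pi}H\rbk{z_{\txtbec}(\theta)}\opdmsr{\theta}.$$

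For the two moments, first take $H(z)=z$. This gives $\sqrt{\smnumberdensity_{\txtbsn,0}(\sminvtemperature)}\,(2\pi)^{-1}\int_0^{2\pi}\napiernum^{\imunit\theta}\opdmsr{\theta}=0$. Next take $H(z)=\abs{z}^2$. Here $\abs{z_{\txtbec}(\theta)}^2=\smnumberdensity_{\txtbsn,0}(\sminvtemperature)$ is constant in $\theta$, so the average is $\smnumberdensity_{\txtbsn,0}(\sminvtemperature)$.

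For the covariance, note that $\Xi_f$ and $\Xi_g$ are deterministic complex multiples of $Z_{\txtbec}$, and $\faftr{f}(0)$, $\faftr{g}(0)$ are finite for $f,g\in\fun{\lp^1}{\fldreal^d}$. By linearity and the first moment, $\sqfun{\prbexp_{\msrbb{P}_{\txtbec}}}{\Xi_f}=\sqfun{\prbexp_{\msrbb{P}_{\txtbec}}}{\Xi_g}=0$, so the centering terms drop out. The covariance is then
$$2(2\pi)^d\,\cmpconj{\faftr{f}(0)}\,\faftr{g}(0)\,\sqfun{\prbexp_{\msrbb{P}_{\txtbec}}}{\abs{Z_{\txtbec}}^2},$$
and inserting the second moment gives the stated formula.

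There is no real obstacle. The only points needing care are that the mean vanishes by the uniform phase average (gauge invariance), and that the conjugation convention sits on the first argument, matching the polarization used in Proposition \ref{expedition0012913}. As a consistency check, setting $f=g$ recovers $\fun{\opform{q}_{\txtbsn,0,\sminvtemperature}}{f}$ from \eqref{expedition0012440}.
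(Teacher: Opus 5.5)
Your proposal is correct and follows the paper's proof essentially line by line. Both arguments average the Dirac component laws over the uniform phase to get the vanishing first moment and $\sqfun{\prbexp_{\msrbb{P}_{\txtbec}}}{\abs{Z_{\txtbec}}^2}=\smnumberdensity_{\txtbsn,0}(\sminvtemperature)$, then drop the centering terms and pull $2(2\pi)^d\cmpconj{\faftr{f}(0)}\faftr{g}(0)$ out of the mixed second moment. Your extra remark, that the marginal formula extends to $z\mapsto z$ and $z\mapsto\abs{z}^2$ because every component law lives on a fixed circle, is a small rigor point the paper leaves implicit.
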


The first identity expresses the disappearance of the one-point amplitude after integration over the direct-integral parameter. The second and third identities show that the second-order zero-mode contribution survives the integration. For BEC, this is the probabilistic counterpart of the integrated BEC state: the first moment vanishes after integration, but the positive second moment retains exactly the zero-mode excess density.

\begin{proof}
The definition of $\msrbb{P}_{\txtbec}$ gives the explicit integrals
$$\sqfun{\prbexp_{\msrbb{P}_{\txtbec}}}
{\frac{Z_{\txtbec}}
{\sqrt{\smnumberdensity_{\txtbsn,0}(\sminvtemperature)}}}
=
\frac{1}{2\pi}
\int_0^{2\pi}
\napiernum^{\imunit\theta}
\opdmsr{\theta}
=0,
\quad
\sqfun{\prbexp_{\msrbb{P}_{\txtbec}}}
{\frac{\abs{Z_{\txtbec}}^2}
{\smnumberdensity_{\txtbsn,0}(\sminvtemperature)}}
=
\frac{1}{2\pi}
\int_0^{2\pi}
\napiernum^{-\imunit\theta}
\napiernum^{\imunit\theta}
\opdmsr{\theta}
=1.$$
Therefore
$$\sqfun{\prbexp_{\msrbb{P}_{\txtbec}}}{Z_{\txtbec}}
=0,
\quad
\sqfun{\prbexp_{\msrbb{P}_{\txtbec}}}
{\abs{Z_{\txtbec}}^2}
=
\smnumberdensity_{\txtbsn,0}(\sminvtemperature).$$
By the definitions of $\Xi_f$ and $\Xi_g$,
$$\sqfun{\prbexp_{\msrbb{P}_{\txtbec}}}{\Xi_f}
=
\sqrt{2(2\pi)^d}\faftr{f}(0)
\sqfun{\prbexp_{\msrbb{P}_{\txtbec}}}{Z_{\txtbec}}
=0,
\quad
\sqfun{\prbexp_{\msrbb{P}_{\txtbec}}}{\Xi_g}
=0.$$
Therefore the covariance equals the mixed second moment:
$$\begin{aligned}
\fun{\prbcov_{\msrbb{P}_{\txtbec}}}{\Xi_f,\Xi_g}
=
\sqfun{\prbexp_{\msrbb{P}_{\txtbec}}}{\cmpconj{\Xi_f}\Xi_g}
=
2(2\pi)^d
\cmpconj{\faftr{f}(0)}\faftr{g}(0)
\sqfun{\prbexp_{\msrbb{P}_{\txtbec}}}
{\abs{Z_{\txtbec}}^2}
=
2(2\pi)^d
\smnumberdensity_{\txtbsn,0}(\sminvtemperature)
\cmpconj{\faftr{f}(0)}\faftr{g}(0).
\end{aligned}$$
The normalization of the realified field convention gives the factor
$2(2\pi)^d$, matching \eqref{expedition0012440}.
\end{proof}

\subsection{ODLRO and Local Condensate Tests}\label{odlro-and-local-condensate-tests}

The ODLRO statement is the persistence of the rank-one zero-mode covariance under translations. For local tests, the same probability statement is expressed through the occupation of the normalized constant vector on a bounded region \(\mathcal{O}\), defined in \eqref{expedition0012890}.

\begin{prop}[Probabilistic ODLRO and local tests]
In the condensed case \eqref{expedition0012606}, for every translation
$x$ set
$$\Xi_f
=
\sqrt{2(2\pi)^d}Z_{\txtbec}\faftr{f}(0),
\quad
\Xi_{x,g}
=
\sqrt{2(2\pi)^d}Z_{\txtbec}\faftr{\fun{\tau_x}{g}}(0).$$
Then the translated zero-mode two-point covariance is
$$\begin{aligned}
\fun{\prbcov_{\msrbb{P}_{\txtbec}}}{\Xi_f,\Xi_{x,g}}
&=
\sqfun{\prbexp_{\msrbb{P}_{\txtbec}}}
{\cmpconj{\rbk{\Xi_f-\sqfun{\prbexp_{\msrbb{P}_{\txtbec}}}{\Xi_f}}}
\rbk{\Xi_{x,g}-\sqfun{\prbexp_{\msrbb{P}_{\txtbec}}}{\Xi_{x,g}}}}
\\ 
&=
2(2\pi)^d
\smnumberdensity_{\txtbsn,0}(\sminvtemperature)
\cmpconj{\faftr{f}(0)}
\faftr{g}(0),
\end{aligned}$$
independent of the translation of $g$.
For a bounded open region $\mathcal{O}$, the finite-density local condensate
occupation of $s_{\mathcal{O}}$ is
$$n_C(\mathcal{O})
=
\smnumberdensity_{\txtbsn,0}(\sminvtemperature)\absvol{\mathcal{O}},$$
which is finite at fixed density and fixed $\mathcal{O}$.
If a family satisfies
$\smnumberdensity_{\txtbsn,0}^{(\sigma)}(\sminvtemperature)\absvol{\mathcal{O}}
\to\infty$, then the corresponding local number-resolvent expectations in
vectors $f$ with
$\bkt{s_{\mathcal{O}}}{f}_{\fun{\lp^2}{\mathcal{O}}}\ne0$ converge to zero.
\end{prop}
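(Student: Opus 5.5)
The plan has three parts, taken in the order of the statement.

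\emph{Translated covariance.} This is the computation from the preceding probabilistic direct-integral proposition with $g$ replaced by $\tau_x g$. The zero Fourier coefficient is translation invariant, since $\faftr{\fun{\tau_x}{g}}(0)=(2\pi)^{-d/2}\int g(y-x)\opdmsr{y}=\faftr{g}(0)$ (the phase $\napiernum^{-\imunit k\cdot x}$ equals one at $k=0$), so $\Xi_{x,g}=\Xi_g$ pointwise on $\Omega_{\txtbec}$. Both means vanish because $\prbexp_{\msrbb{P}_{\txtbec}}[Z_{\txtbec}]=0$. The covariance is then the mixed second moment $2(2\pi)^d\cmpconj{\faftr{f}(0)}\faftr{g}(0)\prbexp[\abs{Z_{\txtbec}}^2]$, and $\prbexp[\abs{Z_{\txtbec}}^2]=\smnumberdensity_{\txtbsn,0}(\sminvtemperature)$. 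The result does not depend on $x$, and this matches Proposition \ref{expedition0012913}.

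\emph{Finite local occupation.} Take $f=s_{\mathcal{O}}$ and set $Z=Z_{\txtbec}$. The zero-mode amplitude attached to the region is $Z\,\absvol{\mathcal{O}}^{1/2}\bkt{s_{\mathcal{O}}}{s_{\mathcal{O}}}$, with $\bkt{s_{\mathcal{O}}}{s_{\mathcal{O}}}=1$. Its squared modulus has expectation $\smnumberdensity_{\txtbsn,0}(\sminvtemperature)\absvol{\mathcal{O}}$, which is $n_C(\mathcal{O})$ as in \eqref{expedition0012896}. It is finite because both factors are finite. In fact the squared modulus is deterministic: $\abs{Z}^2=\smnumberdensity_{\txtbsn,0}$ holds $\msrbb{P}_{\txtbec}$-almost surely.

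\emph{Divergent family.} Let $m_\sigma(f)=\smnumberdensity^{(\sigma)}_{\txtbsn,0}\absvol{\mathcal{O}}\abs{\bkt{s_{\mathcal{O}}}{f}}^2$, as in \eqref{expedition0019747}. The plan is to reduce to Theorem \ref{expedition0012897}, applied to each component $\theta_\sigma$, and to restate its spectral estimates \eqref{expedition0019752}--\eqref{expedition0019754} in probabilistic language. For each component, write the occupation of $f$ as $m_\sigma(f)+C+Y$, where $Y$ is the nonzero-mode occupation with $O(1)$ expectation and the cross term satisfies $\prbexp\abs{C}=O(m_\sigma^{1/2})$ by Cauchy--Schwarz. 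A Markov-type bound then gives $\msrbb{P}(N_f\le M)\to0$ for each fixed $M$. Consequently $\prbexp[(\mu+N_f)^{-1}]\le\mu^{-1}\msrbb{P}(N_f\le M)+(\mu+M)^{-1}$. Letting $\sigma\to\infty$ and then $M\to\infty$ gives the limit zero. The phase-integrated family is handled the same way: the bound is uniform in $\theta$ because $m_\sigma$ does not depend on $\theta$, so it survives integration over $\msrbb{P}_\Theta$.

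\emph{Main obstacle.} The expected difficulty is in the last step. It requires a uniform $O(1)$ bound on the nonzero-mode local occupation along the family, which is the hypothesis ``nonzero-mode thermal covariance stays locally bounded'' in Theorem \ref{expedition0012897}. To obtain it, I would first try to show that the nonzero-mode occupation $n(k)=(\napiernum^{\sminvtemperature\physham[h]_{\txtparticle,0}(k)}-1)^{-1}$ does not depend on $\sigma$ in the condensed case: by \eqref{expedition0012694} it involves only $\sminvtemperature$. Then $\prbexp Y=\int n(k)\abs{\faftr{f}(k)}^2\opdmsr{k}$ is bounded for $f\in\lp^2(\mathcal{O})$ and $d\ge3$, using $n(k)\sim 2/(\sminvtemperature k^2)$ near $k=0$ together with boundedness of $\faftr{f}$.
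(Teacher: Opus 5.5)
Your proposal follows the paper's proof in all three parts: translation invariance of $\hat g(0)$ with $\mathbb{E}[Z_{\mathrm{BEC}}]=0$ and $\mathbb{E}|Z_{\mathrm{BEC}}|^2=\rho_{\mathrm{b},0}(\beta)$ for the covariance, the zero-mode local occupation $n_C(\mathcal{O})$ of Proposition~\ref{expedition0012895}, and the concentration estimate \eqref{expedition0019752}--\eqref{expedition0019754} of Theorem~\ref{expedition0012897} for the divergent family; your check that the saturated occupation $(e^{\beta|k|^2/2}-1)^{-1}$ is $\sigma$-independent and integrable against $|\hat f|^2$ for $d\ge 3$ justifies the uniform nonzero-mode bound, which the paper only asserts. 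One caution, which applies equally to the paper: $C$ and $Y$ do not commute, so $\omega(|C+Y|)\le\omega(|C|)+\omega(Y)$ can fail, and the Markov step should instead use second moments, $\omega\bigl((N_f-m_\sigma)^2\bigr)\le 2\omega(C^2)+2\omega(Y^2)=O(m_\sigma)$, which gives $E_{\sigma,f}([0,M])\le O(m_\sigma)/(m_\sigma-M)^2\to 0$.
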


The first assertion is the probability form of the calculation in the proof of Proposition \ref{expedition0012913}. The finite local occupation statement uses the same local number calculation as the proof of Proposition \ref{expedition0012895}; the divergent-family statement uses the same concentration estimate as the proof of Theorem \ref{expedition0012897}.

\begin{proof}
The random variables $\Xi_f$ and $\Xi_{x,g}$ are the probabilistic notation for
the zero-mode covariance term in \eqref{expedition0012440}.
The translation identity
$\faftr{\fun{\tau_x}{g}}(0)=\faftr{g}(0)$ and the polarization step are the
same calculation as in the proof of Proposition \ref{expedition0012913}, hence
\begin{equation}\label{expedition0019742}
\fun{\prbcov_{\msrbb{P}_{\txtbec}}}{\Xi_f,\Xi_{x,g}}
=
2(2\pi)^d
\smnumberdensity_{\txtbsn,0}(\sminvtemperature)
\cmpconj{\faftr{f}(0)}
\faftr{g}(0).
\end{equation}
For the local test, the normalized vector is the same
$s_{\mathcal{O}}$ as in \eqref{expedition0012890}, and
\begin{equation}\label{expedition0019743}
n_C(\mathcal{O})
=
\smnumberdensity_{\txtbsn,0}(\sminvtemperature)\absvol{\mathcal{O}}
\end{equation}
is the finite-density local occupation computed in the proof of Proposition
\ref{expedition0012895}.
The divergent-family number-resolvent statement is obtained by applying the
same spectral-measure concentration estimate
\eqref{expedition0019752}--\eqref{expedition0019754} from the proof of Theorem
\ref{expedition0012897} to the local occupation random variable.
\end{proof}

\subsection{Proper Condensates as Local Occupation Escape}\label{proper-condensates-as-local-occupation-escape}

The two Buchholz papers use different local formulations. The primary-state criterion in \cite[Section 2, Definition and Proposition
2.1]{DetlevBuchholz004} is a bounded number-resolvent vanishing test in a primary state or in a family whose primary limiting component is being tested. The ODLRO criterion in \cite[Definitions I--III, Lemma 2.1, Lemma 3.1, Theorem
3.2, and the Criterion in Section 4]{DetlevBuchholz005} is a local-regular-space formulation for a sequence of Fock states. Both formulations reduce, in the homogeneous mean-field case, to local occupation-number estimates for the normalized constant vector \(s_{\mathcal{O}}\) from \eqref{expedition0012890}.

\subsubsection{Primary-State Number-Resolvent Test}\label{primary-state-number-resolvent-test-1}

The number-resolvent formulation tests whether the local occupation of a chosen vector \(f\) escapes every finite interval. Let \(\oastate[\psi_\sigma]\) be the direct-integral component state indexed by \(\sigma\). For \(f\in\fun{\lp^2}{\mathcal{O}}\), use the local occupation operator \(N_f\) from \eqref{expedition0019745}. The measure \(E_{\sigma,f}\) is the probability measure on \(\closedinterval{0}{\infty}\) determined by the spectral distribution of \(N_f\) in the state \(\oastate[\psi_\sigma]\), as in \eqref{expedition0019746}: for every bounded Borel function \(\varphi\) on \(\closedinterval{0}{\infty}\), \[\int_{\closedinterval{0}{\infty}}
\varphi(t)\opdmsr{E_{\sigma,f}(t)}
=
\fun{\oastate[\psi_\sigma]}{\fun{\varphi}{N_f}}.\] The bounded number-resolvent observable is the scalar function \(t\mapsto(\mu+t)^{-1}\) integrated against \(E_{\sigma,f}\).

\begin{prop}[Probabilistic primary-state number-resolvent test]
Let $\mathcal{O}\subset\fldreal^d$ be bounded and open, and let
$s_{\mathcal{O}}$ be defined by \eqref{expedition0012890}.
At fixed finite condensate density
$0<\smnumberdensity_{\txtbsn,0}(\sminvtemperature)<\infty$,
the number-resolvent expectation
$\int_{\closedinterval{0}{\infty}}
\frac{1}{\mu+t}
\opdmsr{E_{\sigma,s_{\mathcal{O}}}(t)}$
is positive for every $\mu>0$.
For a family with
$\smnumberdensity_{\txtbsn,0}^{(\sigma)}(\sminvtemperature)\absvol{\mathcal{O}}
\to\infty$, every $f$ with
$\bkt{s_{\mathcal{O}}}{f}_{\fun{\lp^2}{\mathcal{O}}}\ne0$ satisfies
$$
\lim_{\sigma}
\int_{\closedinterval{0}{\infty}}
\frac{1}{\mu+t}
\opdmsr{E_{\sigma,f}(t)}
=0,
\quad
\mu>0.
$$
\end{prop}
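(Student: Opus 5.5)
The statement has two parts, and I plan to prove each by a direct computation with the probability measures $E_{\sigma,f}$ defined through \eqref{expedition0019746}.

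\emph{Positivity at fixed finite density.} I first note that $E_{\sigma,s_{\mathcal{O}}}$ is a probability measure on $\closedinterval{0}{\infty}$: its total mass is $\fun{\oastate[\psi_\sigma]}{1}=1$, by \eqref{expedition0019746} with $\varphi\equiv1$. For $\mu>0$ the integrand $t\mapsto(\mu+t)^{-1}$ is strictly positive on $\closedinterval{0}{\infty}$, so the integral is strictly positive. This step is exactly \eqref{expedition0019744} in Proposition \ref{expedition0012895}. The only input from finiteness of $\smnumberdensity_{\txtbsn,0}(\sminvtemperature)$ is that $E_{\sigma,s_{\mathcal{O}}}$ has no mass at $+\infty$. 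This holds because the local occupation $n_C(\mathcal{O})=\smnumberdensity_{\txtbsn,0}(\sminvtemperature)\absvol{\mathcal{O}}$ is finite and the component state is locally normal, so $N_{s_{\mathcal{O}}}$ is an almost surely finite random variable under $E_{\sigma,s_{\mathcal{O}}}$.

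\emph{Vanishing for a divergent family.} Fix $f$ with $\bkt{s_{\mathcal{O}}}{f}\ne0$, and set $m_\sigma(f)$ as in \eqref{expedition0019747}, so that $m_\sigma(f)\to\infty$. I reuse the decomposition \eqref{expedition0019750}, $N_f=m_\sigma(f)+C_{\sigma,f}+Y_{\sigma,f}$. The key estimate is \eqref{expedition0019751}. It follows from Cauchy--Schwarz in the component state together with local boundedness of the nonzero-mode covariance, which is uniform in $\sigma$ because the regular part is the fixed free critical covariance. From it I get the relative-deviation bound \eqref{expedition0019752}, namely $\int\abs{t/m_\sigma(f)-1}\opdmsr{E_{\sigma,f}(t)}\to0$. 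Markov's inequality then gives $E_{\sigma,f}(\closedinterval{0}{M})\to0$ for each fixed $M$, as in \eqref{expedition0019753}. Splitting the integral at $M$ as in \eqref{expedition0019754} yields
\[
\limsup_\sigma\int_{\closedinterval{0}{\infty}}\frac{1}{\mu+t}\opdmsr{E_{\sigma,f}(t)}\le\frac{1}{\mu+M}.
\]
Letting $M\to\infty$ gives the limit zero.

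\emph{Main obstacle.} The delicate point is the bound on $\fun{\oastate[\psi_\sigma]}{\abs{C_{\sigma,f}}}$. Here $\abs{C}$ is defined by functional calculus in the component GNS representation, and the Cauchy--Schwarz step requires the regular fields $\fun{\opfockan_{\sigma,\txtregular}}{f}$ to have finite second moments in $\oastate[\psi_\sigma]$. I would first justify this by noting that after the c-number shift by $\gamma_{\sigma,f}$, the state restricted to these fields is the regular quasi-free state of Definition \ref{expedition0012650}, so $\fun{\oastate[\psi_\sigma]}{Y_{\sigma,f}}$ equals the nonzero-mode occupation of $f$. That occupation is finite and $\sigma$-independent for local $f$ when $d\ge3$. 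Once this is in place, the rest is a scalar measure argument on $\closedinterval{0}{\infty}$.
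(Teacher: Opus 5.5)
Your argument is the paper's proof. Positivity is exactly \eqref{expedition0019744}, and the vanishing is the concentration chain \eqref{expedition0019750}--\eqref{expedition0019754}, which the paper cites from the proof of Theorem \ref{expedition0012897}. Your remark that the regular part is the fixed critical quasi-free covariance is a useful addition: it makes $\psi_\sigma(Y_{\sigma,f})$ $\sigma$-independent and finite for $d\geq3$, a boundedness the paper only asserts.

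\emph{A step you inherit from the paper.} The first inequality in \eqref{expedition0019752} is not valid as written. It amounts to $\psi_\sigma(\lvert C_{\sigma,f}+Y_{\sigma,f}\rvert)\leq\psi_\sigma(\lvert C_{\sigma,f}\rvert)+\psi_\sigma(Y_{\sigma,f})$. But $A\mapsto\psi(\lvert A\rvert)$ is not subadditive for noncommuting self-adjoint operators, and the bound really fails at finite $m$. Take normalized $f$ with the regular part in the vacuum (a coherent state). Then $N_f$ is Poisson with mean $m$, $C$ is Gaussian with variance $m$, and $Y$ has mean $0$. At $m=1/2$ this gives $\mathbb{E}\lvert N_f-m\rvert=e^{-1/2}\approx0.607>\sqrt{2m/\pi}\approx0.564$, and by continuity the failure persists for small thermal occupation. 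Separately, the constant in \eqref{expedition0019751} drops the commutator term: one actually has $\psi_\sigma(C_{\sigma,f}^2)=m_\sigma(f)\bigl(2\psi_\sigma(Y_{\sigma,f})+\|f\|^2\bigr)$.

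\emph{The repair.} Use second moments and the triangle inequality for vectors rather than for $\lvert\cdot\rvert$. With $\Omega_\sigma$ the GNS vector,
\[
\int_{[0,\infty)}\lvert t-m_\sigma(f)\rvert\,dE_{\sigma,f}(t)\leq\bigl\|(N_f-m_\sigma(f))\Omega_\sigma\bigr\|\leq\|C_{\sigma,f}\Omega_\sigma\|+\|Y_{\sigma,f}\Omega_\sigma\|.
\]
Put $y=\psi_\sigma(Y_{\sigma,f})$. The Wick rule for the gauge-invariant quasi-free regular part gives $\|C_{\sigma,f}\Omega_\sigma\|=O(m_\sigma(f)^{1/2})$ and $\|Y_{\sigma,f}\Omega_\sigma\|^2=2y^2+y\|f\|^2=O(1)$. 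Hence $\int\lvert t/m_\sigma(f)-1\rvert\,dE_{\sigma,f}(t)=O(m_\sigma(f)^{-1/2})\to0$, and \eqref{expedition0019753}--\eqref{expedition0019754} then finish the proof exactly as you wrote.
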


This is the probability formulation of the spectral-measure calculations in the proof of Proposition \ref{expedition0012895} and in the number-resolvent part of the proof of Theorem \ref{expedition0012897}. It is the criterion corresponding to \cite{DetlevBuchholz004}, not the local-regular-space criterion of \cite{DetlevBuchholz005}.

\begin{proof}
By \eqref{expedition0019745} and \eqref{expedition0019746},
$$
\int_{\closedinterval{0}{\infty}}
\frac{1}{\mu+t}
\opdmsr{E_{\sigma,f}(t)}
=
\fun{\oastate[\psi_\sigma]}{\invrbk{\mu+N_f}}.
$$
For $f=s_{\mathcal{O}}$ at fixed finite condensate density, the remaining
argument is the spectral-measure argument in the proof of Proposition
\ref{expedition0012895}, namely \eqref{expedition0019744}.
This proves positivity for every $\mu>0$.
For a family with
$\smnumberdensity_{\txtbsn,0}^{(\sigma)}(\sminvtemperature)\absvol{\mathcal{O}}
\to\infty$ and
$\bkt{s_{\mathcal{O}}}{f}_{\fun{\lp^2}{\mathcal{O}}}\ne0$,
the remaining argument is the concentration estimate in the proof of Theorem
\ref{expedition0012897}, namely \eqref{expedition0019752}--\eqref{expedition0019754}.
This proves the vanishing in \eqref{expedition0012898}, which is the asserted
probability statement after the preceding identification of $E_{\sigma,f}$.
\end{proof}

\subsubsection{ODLRO Local-Regular-Space Criterion}\label{odlro-local-regular-space-criterion-1}

The ODLRO formulation classifies local test functions by boundedness of their occupations along a sequence. In terms of the spectral measures just defined, define the local regular space by \[
\sphilb{H}_{\txtmeanfield,\txtprob,\txtregular}(\mathcal{O})
=
\set{f\in\fun{\lp^2}{\mathcal{O}}}
{\sup_{\sigma}
\int_{\closedinterval{0}{\infty}}t\opdmsr{E_{\sigma,f}(t)}
<\infty}.
\] The local singular space is the orthogonal complement of this closed subspace when the latter is closed.

\begin{prop}[Probabilistic ODLRO local-regular-space criterion]
For a homogeneous mean-field condensate family whose local occupation
variables are defined as in the primary-state number-resolvent test,
$$\sphilb{H}_{\txtmeanfield,\txtprob,\txtregular}(\mathcal{O})
=
\set{f\in\fun{\lp^2}{\mathcal{O}}}
{\bkt{s_{\mathcal{O}}}{f}_{\fun{\lp^2}{\mathcal{O}}}=0},$$
provided the nonzero-mode thermal local covariance is uniformly bounded along
the family.
Its orthogonal complement is
$\fldcmp s_{\mathcal{O}}$.
Consequently the homogeneous mean-field family has the local regular and
singular spaces required by the ODLRO criterion of \cite{DetlevBuchholz005}.
\end{prop}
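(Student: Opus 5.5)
The plan is to prove the two inclusions separately. Both will use the decomposition of the local occupation operator $N_f$ introduced in the proof of Theorem \ref{expedition0012897}, namely \eqref{expedition0019749}--\eqref{expedition0019750}. The mean of the spectral measure $E_{\sigma,f}$ is $\int t\,\opdmsr{E_{\sigma,f}(t)}=\oastate[\psi_\sigma](N_f)$. Taking expectations in $N_f=m_\sigma(f)+C_{\sigma,f}+Y_{\sigma,f}$ then gives
$$\int_{\closedinterval{0}{\infty}}t\opdmsr{E_{\sigma,f}(t)}
=
m_\sigma(f)+\oastate[\psi_\sigma](C_{\sigma,f})+\oastate[\psi_\sigma](Y_{\sigma,f}).$$
In a primary component the regular field has mean zero, since the c-number shift $\gamma_{\sigma,f}$ is exactly the component expectation of $\opfockan_{\txtfock}(f)$. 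Hence $\oastate[\psi_\sigma](C_{\sigma,f})=0$, and the mean occupation becomes $m_\sigma(f)+\oastate[\psi_\sigma](Y_{\sigma,f})$. Here $m_\sigma(f)$ is as in \eqref{expedition0019747}. The term $\oastate[\psi_\sigma](Y_{\sigma,f})$ is the nonzero-mode local two-point function, which is bounded by $c\,\twonorm{f}^2$ uniformly in $\sigma$ under the stated hypothesis.

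For the inclusion $\supseteq$, take $f$ with $\bkt{s_{\mathcal{O}}}{f}=0$. Then $m_\sigma(f)=0$ by \eqref{expedition0019748}, so $\sup_\sigma\int t\,\opdmsr{E_{\sigma,f}}\le c\twonorm{f}^2<\infty$ and $f$ lies in the regular space. For the inclusion $\subseteq$, take $f$ with $\bkt{s_{\mathcal{O}}}{f}\ne0$. The mean occupation is then at least $m_\sigma(f)=\smnumberdensity^{(\sigma)}_{\txtbsn,0}(\sminvtemperature)\absvol{\mathcal{O}}\abs{\bkt{s_{\mathcal{O}}}{f}}^2$, which tends to infinity for a condensate family with $n_C^{(\sigma)}(\mathcal{O})\to\infty$. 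So the supremum is infinite and $f$ is not regular. If one prefers not to use the vanishing of $\oastate[\psi_\sigma](C_{\sigma,f})$, the bound \eqref{expedition0019751} also suffices: it gives mean $\ge m_\sigma-O(m_\sigma^{1/2})\to\infty$. In the fixed-density reading of the family, one must instead interpret "condensate family" as the divergent-occupation family of Theorem \ref{expedition0012897}. I will state that this is the reading adopted, since at fixed finite density every $f$ has finite mean occupation.

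Finally, the regular space is $\Ker$ of the bounded functional $f\mapsto\bkt{s_{\mathcal{O}}}{f}$, so it is closed and of codimension one. Its orthogonal complement is $\fldcmp s_{\mathcal{O}}$, which matches Proposition \ref{expedition0012892}. The main obstacle is the uniform bound on $\oastate[\psi_\sigma](Y_{\sigma,f})$. I would obtain it from the quasi-free nonzero-mode covariance \eqref{expedition0018018}: $\oastate(Y)=\bkt{f}{n(\cdot)f}$ with occupation $n(k)=(\napiernum^{\sminvtemperature\physham[h]_{\txtparticle,0}(k)}-1)^{-1}$. This expression is finite for $f\in\lp^2(\mathcal{O})$ only after using locality of $\mathcal{O}$ and $d\ge3$ local integrability of $n$, since $\abs{\faftr f(k)}^2\le\absvol{\mathcal{O}}\twonorm{f}^2(2\pi)^{-d}$. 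This is exactly the proviso in the statement, so it is taken as the hypothesis.
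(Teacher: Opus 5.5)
Your proposal is correct and follows the paper's own proof. Both arguments split the mean occupation $\int t\,dE_{\sigma,f}(t)$, via \eqref{expedition0019747} and \eqref{expedition0019750}, into $m_\sigma(f)$ plus the uniformly bounded nonzero-mode term coming from $Y_{\sigma,f}$; both read off the two inclusions from whether $\langle s_{\mathcal{O}},f\rangle$ vanishes, using divergence of the condensate occupation along the family for the reverse inclusion, and both identify the regular space as the kernel of the bounded functional $f\mapsto\langle s_{\mathcal{O}},f\rangle$, whose orthogonal complement is $\mathbb{C}s_{\mathcal{O}}$. Your explicit justification that the cross term $C_{\sigma,f}$ drops out (or is controlled by \eqref{expedition0019751}) and your explicit divergent-family reading only make precise what the paper's argument uses implicitly.
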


This is the probability formulation of the local-space calculation in the proof of Proposition \ref{expedition0012892} and in the local-space part of the proof of Theorem \ref{expedition0012897}.

\begin{proof}
Using \eqref{expedition0019747} and \eqref{expedition0019750}, for normalized
$f\in\fun{\lp^2}{\mathcal{O}}$ the expectation of the local occupation is
$$\int_{\closedinterval{0}{\infty}}t\opdmsr{E_{\sigma,f}(t)}
=
\smnumberdensity_{\txtbsn,0}^{(\sigma)}(\sminvtemperature)
\absvol{\mathcal{O}}
\abs{\bkt{s_{\mathcal{O}}}{f}_{\fun{\lp^2}{\mathcal{O}}}}^2
+\fun{\oastate[\psi_\sigma]}{Y_{\sigma,f}},$$
where $Y_{\sigma,f}$ is the nonzero-mode local occupation defined in
\eqref{expedition0019750}.
The local boundedness of the nonzero-mode thermal covariance gives
$\sup_{\sigma}
\fun{\oastate[\psi_\sigma]}{Y_{\sigma,f}}
<\infty$.
If
$\bkt{s_{\mathcal{O}}}{f}_{\fun{\lp^2}{\mathcal{O}}}=0$, the homogeneous
condensate contribution vanishes and
$\sup_{\sigma}
\int_{\closedinterval{0}{\infty}}t\opdmsr{E_{\sigma,f}(t)}
<\infty$.
Thus
$$\set{f\in\fun{\lp^2}{\mathcal{O}}}
{\bkt{s_{\mathcal{O}}}{f}_{\fun{\lp^2}{\mathcal{O}}}=0}
\subset
\sphilb{H}_{\txtmeanfield,\txtprob,\txtregular}(\mathcal{O}).$$
Conversely, if
$\bkt{s_{\mathcal{O}}}{f}_{\fun{\lp^2}{\mathcal{O}}}\ne0$, then
$\smnumberdensity_{\txtbsn,0}^{(\sigma)}(\sminvtemperature)
\absvol{\mathcal{O}}
\abs{\bkt{s_{\mathcal{O}}}{f}_{\fun{\lp^2}{\mathcal{O}}}}^2
\to\infty$,
so
$\int_{\closedinterval{0}{\infty}}t\opdmsr{E_{\sigma,f}(t)}$
is unbounded.
Therefore $f$ is not in the regular space.
This proves the equality.

The right-hand side is the kernel of the continuous linear functional
$f\mapsto\bkt{s_{\mathcal{O}}}{f}_{\fun{\lp^2}{\mathcal{O}}}$.
Hence it is closed, and its orthogonal complement is
$\fldcmp s_{\mathcal{O}}$.
This is exactly the homogeneous local singular line appearing in
\eqref{expedition0012894}.
\end{proof}

\subsection{Density-Fluctuation Remainder}\label{density-fluctuation-remainder}

The finite-volume probability measure also records the limitation of replacing the total density by the selected scalar value. The exact identity \eqref{expedition0012655} says that the difference between the finite-volume mean-field Hamiltonian and the effective Hamiltonian still contains the square of the random density fluctuation.

\begin{prop}[Probabilistic density-fluctuation remainder]
Let
$$D_{\bar{\smnumberdensity}_{\txtbsn},L}
=
V_L^{\onehalf}\rbk{R_L-\bar{\smnumberdensity}_{\txtbsn}}.$$
The Kac collapse gives $R_L\to\bar{\smnumberdensity}_{\txtbsn}$ in probability,
but it does not determine the limiting law of
$D_{\bar{\smnumberdensity}_{\txtbsn},L}$.
Moreover $D_{\bar{\smnumberdensity}_{\txtbsn},L}$ is conserved by the
finite-volume mean-field dynamics.
\end{prop}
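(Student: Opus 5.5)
The statement has three parts: the Kac collapse gives convergence in probability, that collapse does not determine the limiting law of the fluctuation, and the fluctuation is conserved by the dynamics. I will prove them in that order.

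\emph{Convergence in probability.} This is Proposition \ref{expedition0019755} applied verbatim, so nothing new is needed.

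\emph{The fluctuation law is not determined.} I will read this claim as a statement about scales. The Kac collapse only controls $R_L$ at the large-deviation speed $V_L$, that is, events of the form $\setone{\abs{R_L-\bar{\smnumberdensity}_{\txtbsn}}\geq\epsilon}$ at fixed $\epsilon>0$. The variable $D_{\bar{\smnumberdensity}_{\txtbsn},L}$ lives on the scale $V_L^{-\onehalf}$, which is invisible to that control. To make this concrete, I will exhibit two families of finite-volume laws with the following properties.
\begin{itemize}
\item Both families satisfy the same large-deviation estimate with rate $I_{\txtmeanfield,\sminvtemperature,\smchemicalpotential,\lambda}$, and hence the same collapse to $\delta_{\bar{\smnumberdensity}_{\txtbsn}}$.
\item Their fluctuation variables have different limiting laws.
\end{itemize}
The simplest construction keeps $\msrbb{P}_{\txtmeanfield,L}$ and shifts the centering by $c\,V_L^{-\onehalf}$ for a constant $c\neq 0$. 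This moves the limit law of $D$ by $c$ (or makes the sequence non-tight) and leaves every $\epsilon$-scale probability unchanged. An alternative is to perturb the density weight by a factor $\fnexp{a V_L^{\onehalf}(r-\bar{\smnumberdensity}_{\txtbsn})}$. This factor has zero effect on the speed-$V_L$ rate, since $V_L^{\onehalf}=o(V_L)$, yet it tilts the fluctuation law.

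To argue that the fluctuation law genuinely depends on second-order data, I would add a remark. In the condensed phase, the tilted density $\fnexp{-\sminvtemperature\lambda V_L r^2/2}\opdmsr{K_{\sminvtemperature,0,\smchemicalpotential,L}}$ of Proposition \ref{expedition0012680} sits where the free rate $I_{\txtfr,\sminvtemperature,\smchemicalpotential}$ is flat or non-smooth. There the Gaussian scaling of $D$ depends on local properties of $I_{\txtfr,\sminvtemperature,\smchemicalpotential}$ and on the zero-mode law, neither of which the minimizer encodes.

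\emph{Conservation.} The occupation configuration $\mathsf{n}$ labels joint eigenvectors of $\physham_{\txtbsn,\txtmeanfield,\smchemicalpotential,L}$ and $\opfocknumber_{\txtbsn,L}$, as in Proposition \ref{expedition0012679}. Therefore $R_L$, and hence $D_{\bar{\smnumberdensity}_{\txtbsn},L}$, is invariant under the Heisenberg evolution. The operator form of this invariance is exactly Proposition \ref{expedition0012656}. In the probabilistic language this says that $\msrbb{P}_{\txtmeanfield,L}$ and the function $N_L(\mathsf{n})$ are both preserved, because the dynamics acts diagonally by phases on the basis vectors $\ket{\mathsf{n}}$.

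The main obstacle is the middle claim. A rigorous non-determination statement needs a precise sense of what ``determined'' means. I would formalize it as: there is no map from the rate function (or from $\delta_{\bar{\smnumberdensity}_{\txtbsn}}$) to the limit law of $D$ that is valid for all families sharing that rate. The counterexample families above suffice for this, and they avoid computing the actual fluctuation law of the model.
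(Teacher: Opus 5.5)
Your first and third parts follow the paper. Convergence in probability is quoted from Proposition \ref{expedition0019755}. For conservation, $D_{\bar{\rho}_{\mathrm{b}},L}=V_L^{-1/2}N_L-V_L^{1/2}\bar{\rho}_{\mathrm{b}}$ is a function of the total number, which commutes with the occupation-diagonal Hamiltonian; the operator form is Proposition \ref{expedition0012656}.

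The middle claim is where the routes differ. The paper gives no construction. It only observes that $\{|R_L-\bar{\rho}_{\mathrm{b}}|>\epsilon\}=\{|D_{\bar{\rho}_{\mathrm{b}},L}|>\epsilon V_L^{1/2}\}$. Hence a speed-$V_L$ estimate controls $D$ only on the growing scale $V_L^{1/2}$ and yields no value for $\mathbb{E}[\varphi(D_{\bar{\rho}_{\mathrm{b}},L})]$ at fixed bounded continuous $\varphi$. That is a remark about scales, not a proof of non-determination. Your version fixes a meaning of ``determined'' (no map from the rate function to the fluctuation limit valid for every family with that rate) and exhibits families. This turns the remark into an actual proof, provided the families are chosen correctly.

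As written, the families need repair. The shift by $cV_L^{-1/2}$ must be applied to the law of $R_L$, not to the centering of $D$. It does not leave the $\epsilon$-scale probabilities literally unchanged; it moves $\epsilon$ by $O(V_L^{-1/2})$, which preserves both large-deviation bounds because the rate function is good. More importantly, a constant shift preserves tightness, so your parenthetical is wrong. The construction therefore separates nothing unless $D$ already has a limit law under $\mathbb{P}_{\mathrm{mf},L}$. Likewise, the tilt by $e^{aV_L^{1/2}(r-\bar{\rho}_{\mathrm{b}})}$ need not move a degenerate limit such as $\delta_0$.

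An unconditional choice is $(1-\epsilon_L)\delta_{\bar{\rho}_{\mathrm{b}}+cV_L^{-1/2}}+\epsilon_L K_{\beta,\lambda,\mu,L}$ with $\epsilon_L\to0$ and $V_L^{-1}\log\epsilon_L\to0$. Closed sets avoiding $\bar{\rho}_{\mathrm{b}}$ eventually miss the atom. Open sets avoiding $\bar{\rho}_{\mathrm{b}}$ lose only the subexponential factor $\epsilon_L$. Sets containing $\bar{\rho}_{\mathrm{b}}$ are trivial. So the rate $I_{\mathrm{mf},\beta,\mu,\lambda}$ and the collapse are unchanged, while the fluctuation limit is $\delta_c$, different for each $c$.

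Your closing condensed-phase remark plays no role in the argument and can be dropped. It is also inaccurate: beyond $\rho_{\mathrm{c}}$ the free rate is affine, not flat.
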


This is the probability form of the density-fluctuation caveat in Proposition \ref{expedition0012654} and Proposition \ref{expedition0012656}. Local equilibrium correlations may be controlled after the density has been selected, whereas nonlocal density fluctuations require a separate limit theorem.

\begin{proof}
The convergence $R_L\to\bar{\smnumberdensity}_{\txtbsn}$ is the same
closed-set LDP estimate proved in Proposition \ref{expedition0019755}.
Multiplication by $V_L^{\onehalf}$ probes the second-order scale of the same
random variable, and an LDP with speed $V_L$ gives the law of large numbers
but not by itself a central-limit theorem.
Indeed, the identity
$$\setone{\abs{R_L-\bar{\smnumberdensity}_{\txtbsn}}>\epsilon}
=
\setone{\abs{D_{\bar{\smnumberdensity}_{\txtbsn},L}}>
\epsilon V_L^{\onehalf}}$$
only controls deviations of $D_{\bar{\smnumberdensity}_{\txtbsn},L}$ on the
growing scale $V_L^{\onehalf}$.
It gives no limiting value for
$$\sqfun{\prbexp_{\msrbb{P}_{\txtmeanfield,L}}}
{\fun{\varphi}{D_{\bar{\smnumberdensity}_{\txtbsn},L}}}$$
when $\varphi$ is a fixed bounded continuous test function.
The conservation statement follows because
$D_{\bar{\smnumberdensity}_{\txtbsn},L}$ is a function of the total number
$N_L$, while the finite-volume Hamiltonian \eqref{expedition0012600} is
diagonal in the occupation-number basis and commutes with $N_L$.
More explicitly,
$$\commutator{\physham_{\txtbsn,\txtmeanfield,\smchemicalpotential,L}}{N_L}=0,
\quad
D_{\bar{\smnumberdensity}_{\txtbsn},L}
=
V_L^{-\onehalf}N_L
-V_L^{\onehalf}\bar{\smnumberdensity}_{\txtbsn},$$
and therefore
$$\commutator{\physham_{\txtbsn,\txtmeanfield,\smchemicalpotential,L}}
{D_{\bar{\smnumberdensity}_{\txtbsn},L}}=0.$$
Consequently, for every bounded Borel function $\varphi$,
$$\napiernum^{\imunit t\physham_{\txtbsn,\txtmeanfield,\smchemicalpotential,L}}
\fun{\varphi}{D_{\bar{\smnumberdensity}_{\txtbsn},L}}
\napiernum^{-\imunit t\physham_{\txtbsn,\txtmeanfield,\smchemicalpotential,L}}
=
\fun{\varphi}{D_{\bar{\smnumberdensity}_{\txtbsn},L}}.$$
Choosing $\varphi(u)=(z-u)^{-1}$ gives the probabilistic content of
\eqref{expedition0012657}.
\end{proof}

\subsection{Probabilistic Mean-Field Conclusions}\label{probabilistic-mean-field-conclusions}

The probability approach records the mean-field conclusions of Section \ref{expedition0012743} in terms of random occupation variables. It starts from the finite-volume occupation-number measure, uses the LDP and quadratic tilt to select the density, derives the Euler equations by the conditioned variational problem, identifies the zero-mode excess as a random occupation density, represents the integrated equilibrium law by a uniform direct-integral parameter, and records ODLRO, proper-condensate tests, and local occupation tests as statements about covariance and number random variables. The proper-condensate part includes both the primary-state number-resolvent criterion of \cite{DetlevBuchholz004} and the ODLRO local-regular-space criterion of \cite{DetlevBuchholz005}. Thus the probabilistic formulation detects BEC exactly through the positive zero-mode excess in \eqref{expedition0012606}, its integrated amplitude law, and the resulting ODLRO and local occupation tests.

\section{Brownian-Loop (Ginibre--Symanzik) Formulation}\label{expedition0019600}

The occupation-number formulation based on the random density \(R_L\) in \eqref{expedition0019667} keeps the whole argument on the classical probability space of the conserved total number. A third, manifestly path-integral formulation expresses the same finite-volume mean-field gas as a gas of interacting \emph{Brownian loops}, in the style of Ginibre's representation of interacting Bose gases by gases of Brownian paths and loops \cite{Ginibre1}, and of Symanzik's loop-gas representation of imaginary-time scalar field theories \cite{KurtSymanzik1}; for a modern exposition see Fröhlich--Knowles--Schlein--Sohinger \cite[\S3--4]{FrohlichKnowlesSchleinSohinger2}.

For a general two-body potential the linearization of the interaction requires a Hubbard--Stratonovich \emph{field} \(\sigma(\tau,x)\) over imaginary time and space. For the mean-field Hamiltonian \eqref{expedition0012600} the two-body potential is the spatially constant kernel \(\lambda/V_L\), and the interaction \(\frac{\lambda}{2V_L}\opfocknumber_{\txtbsn,L}^{2}\) is a function of the conserved number alone. Two simplifications follow: the auxiliary field collapses to a single \emph{scalar}, and the Brownian loops interact only through their total imaginary-time length, i.e.~through the total particle number they carry. Integrating out the scalar reproduces the quadratic density tilt \eqref{expedition0012603}, so this formulation and the occupation-number one are Fourier-dual descriptions of the one finite-volume measure \eqref{expedition0012601}. Throughout this section we work at fixed \(\sminvtemperature>0\), \(\lambda>0\), \(L\), and a chemical potential \(\smchemicalpotential<\min_{k\in\setlattice_L^d}\physham[h]_{\txtparticle,0,L}(k)\) for which the free trace converges.

\subsection{The Mean-Field Interaction as a Constant Two-Body Kernel}\label{the-mean-field-interaction-as-a-constant-two-body-kernel}

On finite-volume Fock space the number operator is \(\opfocknumber_{\txtbsn,L}=\fun{\opfocksndqntdiff_{\txtbsn}}{P_L}
=\sum_{k\in\setlattice_L^d}\opfockcr_{\txtfock}(e_k)\opfockan_{\txtfock}(e_k)\), where \(\setone{e_k}_{k\in\setlattice_L^d}\) is the eigenbasis of \(\physham[h]_{\txtparticle,0,L}\). In the position representation of \(P_L\sphilb{H}_{\txtparticle}\), the creation and annihilation operator-valued distributions are denoted by \(\opfockcr_{\txtfock}(x)\) and \(\opfockan_{\txtfock}(x)\). With this notation, the mean-field interaction is the constant-kernel two-body operator \begin{equation}\label{expedition0019601}
\frac{\lambda}{2V_L}\opfocknumber_{\txtbsn,L}^{2}
=
\frac12\int_{I_L^d}dx\int_{I_L^d}dy\,
\opfockcr_{\txtfock}(x)\opfockan_{\txtfock}(x)
v_L(x-y)
\opfockcr_{\txtfock}(y)\opfockan_{\txtfock}(y)
+\text{(c-number)},
\quad
v_L\equiv\frac{\lambda}{V_L},
\end{equation} up to the additive constant produced by normal ordering. Thus the mean-field gas is the interacting Bose gas of \cite[Eq.~(9)]{FrohlichKnowlesSchleinSohinger2} with \(N=1\) species and the spatially constant (mean-field) two-body potential \(v_L\equiv\lambda/V_L\). Because \(v_L\) is constant in space, the only spatial mode that couples to the interaction is the zero mode \[\int_{I_L^d}
\opfockcr_{\txtfock}(x)\opfockan_{\txtfock}(x)dx
=\opfocknumber_{\txtbsn,L}.\] This identity is the source of both simplifications used in Proposition \ref{expedition0019602} and Theorem \ref{expedition0019606}.

\subsection{Scalar Hubbard--Stratonovich Decoupling}\label{scalar-hubbardstratonovich-decoupling}

The free part \(\physham_{\txtbsn,\txtfr,0,L}-\smchemicalpotential
\opfocknumber_{\txtbsn,L}\) and the number operator \(\opfocknumber_{\txtbsn,L}\) strongly commute, so no Trotter splitting is needed: the interaction may be linearized by a single real Gaussian variable.

\begin{prop}[Scalar Hubbard--Stratonovich representation]\label{expedition0019602}
For every $\sminvtemperature>0$ and $\lambda>0$,
\begin{equation}\label{expedition0019603}
\smpartitionfunc_{\sminvtemperature,\lambda,\smchemicalpotential,L}
=
\rbk{\frac{V_L}{2\pi\sminvtemperature\lambda}}^{1/2}
\int_{\fldreal}
\fnexp{-\frac{V_L s^2}{2\sminvtemperature\lambda}}
\smpartitionfunc_{\sminvtemperature,0,\smchemicalpotential+\imunit \frac{s}{\sminvtemperature},L}
\opdmsr{s},
\end{equation}
where
$\smpartitionfunc_{\sminvtemperature,0,\smchemicalpotential',L}
=\sqfun{\trace_{\spfock_{\txtbsn,L}}}
{\napiernum^{-\sminvtemperature(\physham_{\txtbsn,\txtfr,0,L}
-\smchemicalpotential'\opfocknumber_{\txtbsn,L})}}$
is the free grand partition function, analytically continued to the complex
chemical potential $\smchemicalpotential'=\smchemicalpotential
+\imunit s/\sminvtemperature$.
\end{prop}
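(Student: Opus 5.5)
The plan is to reduce \eqref{expedition0019603} to a scalar Gaussian identity applied mode by mode in the occupation-number basis. As in the proof of Proposition \ref{expedition0012679}, the operators $\physham_{\txtbsn,\txtfr,0,L}-\smchemicalpotential\opfocknumber_{\txtbsn,L}$ and $\opfocknumber_{\txtbsn,L}$ are simultaneously diagonal in the basis $\fml{\ket{\mathsf{n}}}{\mathsf{n}\in\mathcal{O}_L}$. Consequently $\smpartitionfunc_{\sminvtemperature,\lambda,\smchemicalpotential,L}$ is the series \eqref{expedition0012728}. Each term of that series is the free weight $\fnexp{-\sminvtemperature\sum_k(\physham[h]_{\txtparticle,0,L}(k)-\smchemicalpotential)\mathsf{n}_k}$ multiplied by $\fnexp{-\sminvtemperature\lambda N_L(\mathsf{n})^2/(2V_L)}$.

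The first step is the one-dimensional Fourier transform of a Gaussian. With $a=V_L/(2\sminvtemperature\lambda)>0$ and any real $N$,
$$\int_{\fldreal}\fnexp{-a s^2+\imunit sN}\opdmsr{s}=\rbk{\frac{\pi}{a}}^{1/2}\fnexp{-\frac{N^2}{4a}}.$$
Substituting $a$ gives
$$\fnexp{-\frac{\sminvtemperature\lambda N^2}{2V_L}}=\rbk{\frac{V_L}{2\pi\sminvtemperature\lambda}}^{1/2}\int_{\fldreal}\fnexp{-\frac{V_L s^2}{2\sminvtemperature\lambda}}\napiernum^{\imunit sN}\opdmsr{s}.$$
I apply this with $N=N_L(\mathsf{n})$. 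The phase $\napiernum^{\imunit sN_L(\mathsf{n})}$ combines with the free weight to give $\fnexp{-\sminvtemperature\sum_k(\physham[h]_{\txtparticle,0,L}(k)-\smchemicalpotential-\imunit s/\sminvtemperature)\mathsf{n}_k}$. This is exactly the free diagonal weight at the complex chemical potential $\smchemicalpotential+\imunit s/\sminvtemperature$.

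The second step sums over $\mathsf{n}$ and interchanges this sum with the $s$-integral. The main point to justify is the Fubini--Tonelli interchange. Since $\abs{\napiernum^{\imunit sN}}=1$, the absolute value of the double sum-integral is at most
$$\rbk{\frac{V_L}{2\pi\sminvtemperature\lambda}}^{1/2}\int_{\fldreal}\fnexp{-\frac{V_L s^2}{2\sminvtemperature\lambda}}\opdmsr{s}\cdot\smpartitionfunc_{\sminvtemperature,0,\smchemicalpotential,L}=\smpartitionfunc_{\sminvtemperature,0,\smchemicalpotential,L}.$$
This bound is finite precisely under the standing assumption $\smchemicalpotential<\min_{k}\physham[h]_{\txtparticle,0,L}(k)$ of this section. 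The same domination shows that, for each fixed $s$, the series defining the complex-$\smchemicalpotential'$ free trace converges absolutely. That series is the product $\prod_k\rbk{1-\napiernum^{-\sminvtemperature(\physham[h]_{\txtparticle,0,L}(k)-\smchemicalpotential')}}^{-1}$, whose factors are nonvanishing because $\opreal\smchemicalpotential'=\smchemicalpotential$. This identifies the inner sum with the analytic continuation named in the statement. It is also jointly measurable and bounded in $s$, so the interchange is legitimate.

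After the interchange, the inner sum is $\smpartitionfunc_{\sminvtemperature,0,\smchemicalpotential+\imunit s/\sminvtemperature,L}$ and the outer weight is the Gaussian, which is \eqref{expedition0019603}. I expect no obstacle beyond the integrability check in the second step. I will state explicitly that the identity is used only in the trace-convergent regime, where the free trace at real $\smchemicalpotential$ is finite; for larger $\smchemicalpotential$ the left side is still finite but the right side is undefined.
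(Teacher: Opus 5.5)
Your proposal is correct and follows essentially the same route as the paper. Both factor the heat operator using the commutation of $H_{\mathrm{b},\mathrm{fr},0,L}-\mu N_{\mathrm{b},L}$ with $N_{\mathrm{b},L}$, apply the scalar Gaussian identity $e^{-\beta\lambda N^{2}/(2V_L)}=(V_L/(2\pi\beta\lambda))^{1/2}\int_{\mathbb{R}}e^{-V_L s^{2}/(2\beta\lambda)}e^{\mathrm{i}sN}\,ds$ on each number eigenspace, interchange by Fubini, and absorb $e^{\mathrm{i}sN}$ into the shift $\mu\mapsto\mu+\mathrm{i}s/\beta$; you simply do this configuration by configuration in the occupation basis.

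Your domination of the double sum--integral by the real-$\mu$ free partition function is in fact the right Fubini justification. The paper cites the trace-class property of the mean-field heat operator, but Fubini needs finiteness of the free trace. That is exactly the section's standing assumption $\mu<\min_k h_{\mathrm{p},0,L}(k)$, which you make explicit.
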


This proposition is a preparation for the BEC discussion, not a BEC criterion. It reduces the interaction to one scalar variable, which later becomes the quadratic density tilt and leads to the density selection used to identify the zero-mode excess.

\begin{proof}
Since $\opfocknumber_{\txtbsn,L}$ commutes with
$\physham_{\txtbsn,\txtfr,0,L}-\smchemicalpotential\opfocknumber_{\txtbsn,L}$,
$$\fnexp{-\sminvtemperature
\physham_{\txtbsn,\txtmeanfield,\smchemicalpotential,L}}
=
\fnexp{-\sminvtemperature(\physham_{\txtbsn,\txtfr,0,L}
-\smchemicalpotential\opfocknumber_{\txtbsn,L})}
\cdot
\fnexp{-\frac{\sminvtemperature\lambda}{2V_L}
\opfocknumber_{\txtbsn,L}^{2}}.$$
The scalar Gaussian (Hubbard--Stratonovich) identity, valid on each eigenspace
of the self-adjoint $\opfocknumber_{\txtbsn,L}$,
$$\napiernum^{-\frac{a}{2}\opfocknumber_{\txtbsn,L}^{2}}
=
\frac{1}{\sqrt{2\pi a}}
\int_{\fldreal}
\fnexp{-\frac{s^2}{2a}}\,\napiernum^{\imunit s\opfocknumber_{\txtbsn,L}}
\opdmsr{s},
\quad
a
=
\frac{\sminvtemperature\lambda}{V_L}>0,$$
together with the trace-class property of the heat operator
\eqref{expedition0012600} and Fubini, gives
$$\smpartitionfunc_{\sminvtemperature,\lambda,\smchemicalpotential,L}
=\rbk{\frac{V_L}{2\pi\sminvtemperature\lambda}}^{1/2}
\int_{\fldreal}\opdmsr{s}\,
\fnexp{-\frac{V_L s^2}{2\sminvtemperature\lambda}}\,
\sqfun{\trace_{\spfock_{\txtbsn,L}}}
{\napiernum^{-\sminvtemperature(\physham_{\txtbsn,\txtfr,0,L}
-\smchemicalpotential\opfocknumber_{\txtbsn,L})
+\imunit s\opfocknumber_{\txtbsn,L}}}.$$
Absorbing $\imunit s\opfocknumber_{\txtbsn,L}
=\imunit\frac{s}{\sminvtemperature}\cdot\sminvtemperature
\opfocknumber_{\txtbsn,L}$ into a shift
$\smchemicalpotential\mapsto\smchemicalpotential+\imunit s/\sminvtemperature$ of
the chemical potential identifies the trace with
$\smpartitionfunc_{\sminvtemperature,0,\smchemicalpotential
+\imunit s/\sminvtemperature,L}$.
\end{proof}

The single variable \(s\) is the spatially constant (zero-mode) value of the Hubbard--Stratonovich field \(\sigma(\tau,x)\) of the general path-integral representation \cite[Eqs.~(17)--(21)]{FrohlichKnowlesSchleinSohinger2}; the mean-field covariance \(\frac{\lambda}{\nu}v(x-y)\) there becomes \(\frac{\lambda}{V_L}\delta(\tau-\tau')\), whose only surviving mode is the \(\tau\)-independent constant, i.e.~the scalar \(s\).

\subsection{Ginibre's Brownian-Loop Representation of the Free Gas}\label{ginibres-brownian-loop-representation-of-the-free-gas}

The free factor in \eqref{expedition0019603} is the ideal Bose gas, whose logarithm is Ginibre's gas of Brownian loops. The heat kernel \(\Gamma^{(L)}_{t}(x,y)\) for \(\physham[h]_{\txtparticle,0,L}\) in \eqref{expedition0019756} and the unnormalized Brownian-bridge measures \(\msrbb{W}^{t}_{xy}\) are discussed in the Appendix \ref{expedition0019802}: they are defined in \eqref{expedition0019759} and \eqref{expedition0019761}. The closed-loop trace formula used below is \eqref{expedition0019762}.

\begin{prop}[Brownian-loop representation of the ideal gas]\label{expedition0019604}
For $\opreal\smchemicalpotential'<\min_{k\in\setlattice_L^d}
\physham[h]_{\txtparticle,0,L}(k)$,
\begin{equation}\label{expedition0019605}
\smpartitionfunc_{\sminvtemperature,0,\smchemicalpotential',L}
=
\fnexp{\sum_{\ell\geq1}\frac{\napiernum^{\ell\sminvtemperature
\smchemicalpotential'}}{\ell}
\int_{I_L^d}\opdmsr{x}\int\opdmsr{\msrbb{W}^{\ell\sminvtemperature}_{xx}}},
\end{equation}
the exponential of a gas of single closed Brownian loops:
a loop of \emph{winding} $\ell\geq1$ has duration $\ell\sminvtemperature$
(it wraps $\ell$ times around the thermal circle of circumference
$\sminvtemperature$) and activity $\napiernum^{\ell\sminvtemperature
\smchemicalpotential'}/\ell$.
\end{prop}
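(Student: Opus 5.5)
The approach is to diagonalize the free grand partition function in the momentum eigenbasis of $\physham[h]_{\txtparticle,0,L}$, take its logarithm, expand each single-mode term in a geometric series, and then recognize each power of the heat operator as a closed-loop trace. As in the proof of Proposition \ref{expedition0012679}, the occupation basis gives the product formula
$$\smpartitionfunc_{\sminvtemperature,0,\smchemicalpotential',L}
=\prod_{k\in\setlattice_L^d}\rbk{1-\napiernum^{-\sminvtemperature(\physham[h]_{\txtparticle,0,L}(k)-\smchemicalpotential')}}^{-1}.$$
Each factor is a convergent geometric series, because $\opreal\smchemicalpotential'<\min_k\physham[h]_{\txtparticle,0,L}(k)$ makes the ratio have modulus strictly below one. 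Taking logarithms with the principal branch, which is valid for $\abs{w}<1$, gives
$$\log\smpartitionfunc_{\sminvtemperature,0,\smchemicalpotential',L}=\sum_k\sum_{\ell\ge1}\frac{1}{\ell}\,\napiernum^{\ell\sminvtemperature\smchemicalpotential'}\,\napiernum^{-\ell\sminvtemperature\physham[h]_{\txtparticle,0,L}(k)}.$$

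The second step is to exchange the $k$-sum and the $\ell$-sum. I would justify this by absolute convergence. Since $\physham[h]_{\txtparticle,0,L}(k)=\abs{k}^2/2$ on the lattice $\setlattice_L^d$, the Gaussian sum $\sum_k\napiernum^{-\ell\sminvtemperature\abs{k}^2/2}$ is finite. It is also bounded uniformly in $\ell\ge1$ by $C_L(1+\ell^{-d/2})$. Combined with $\abs{\napiernum^{\ell\sminvtemperature\smchemicalpotential'}}=\napiernum^{\ell\sminvtemperature\opreal\smchemicalpotential'}$, the double series is dominated by a geometric series. The one subtlety is that $\opreal\smchemicalpotential'$ may be positive, since the minimum of the spectrum is $0$ at $k=0$. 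The correct majorant is therefore $\sum_k\napiernum^{-\ell\sminvtemperature(\physham[h]_{\txtparticle,0,L}(k)-\opreal\smchemicalpotential')}$, which is bounded by $C\,q^{\ell}$ with $q=\napiernum^{-\sminvtemperature(\min_k\physham[h]_{\txtparticle,0,L}(k)-\opreal\smchemicalpotential')}<1$ times a uniformly bounded Gaussian sum. After the exchange, the inner sum is $\sqfun{\trace_{\sphilb{H}_{\txtparticle,L}}}{\napiernum^{-\ell\sminvtemperature\physham[h]_{\txtparticle,0,L}}}$.

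The third step is to identify this trace with the loop integral. The Appendix closed-loop trace formula \eqref{expedition0019762} gives
$$\sqfun{\trace}{\napiernum^{-t\physham[h]_{\txtparticle,0,L}}}=\int_{I_L^d}\Gamma^{(L)}_t(x,x)\opdmsr{x}=\int_{I_L^d}\opdmsr{x}\int\opdmsr{\msrbb{W}^{t}_{xx}},$$
where the total mass of the unnormalized periodic bridge measure \eqref{expedition0019761} is the heat kernel \eqref{expedition0019759} on the diagonal. Substituting $t=\ell\sminvtemperature$ and exponentiating yields \eqref{expedition0019605}. The interpretation follows directly: a bridge of duration $\ell\sminvtemperature$ is a loop that wraps $\ell$ times around the thermal circle, and it carries activity $\napiernum^{\ell\sminvtemperature\smchemicalpotential'}/\ell$.

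I expect the main obstacle to be the complex-$\smchemicalpotential'$ bookkeeping rather than the loop identification, which is supplied by the Appendix. Three points need care: choosing a consistent branch of the logarithm so that exponentiation recovers the product exactly, which is automatic since each factor equals $\fnexp{-\log(1-w_k)}$; obtaining the uniform domination that justifies the interchange; and checking that the trace in Proposition \ref{expedition0019602} is genuinely the analytic continuation. For the last point I would note that both sides are analytic in $\smchemicalpotential'$ on the half-plane, so it also suffices to prove the identity for real $\smchemicalpotential'$ and extend it by the identity theorem.
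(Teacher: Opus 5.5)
Your proposal is correct and follows essentially the same route as the paper. Both arguments use the mode-wise ideal-Bose product, take the logarithm with $-\log(1-w)=\sum_{\ell\geq1}w^{\ell}/\ell$, interchange the $k$- and $\ell$-sums (you justify this step by domination, which the paper leaves implicit), and apply the closed-loop trace formula \eqref{expedition0019762} at $t=\ell\sminvtemperature$.

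One small slip: since $\physham[h]_{\txtparticle,0,L}(0)=0$, the hypothesis forces $\opreal\smchemicalpotential'<0$, so $\opreal\smchemicalpotential'$ can never be positive. Your majorant with ratio $q=\napiernum^{\sminvtemperature\opreal\smchemicalpotential'}<1$ is nevertheless valid as written.
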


This proposition supplies the free-loop reference measure. Its role is to separate the finite-winding thermal contribution from the macroscopic-winding contribution identified later in the condensed case.

\begin{proof}
With $\setone{e_k}$ the eigenbasis of $\physham[h]_{\txtparticle,0,L}$,
$$\smpartitionfunc_{\sminvtemperature,0,\smchemicalpotential',L}
=
\prod_{k\in\setlattice_L^d}
\frac{1}
{1-\napiernum^{-\sminvtemperature
(\physham[h]_{\txtparticle,0,L}(k)-\smchemicalpotential')}},$$
the standard ideal-Bose product, convergent under the stated bound.
Taking the logarithm and expanding
$-\ln(1-z)=\sum_{\ell\geq1}z^{\ell}/\ell$ gives
$$\ln\smpartitionfunc_{\sminvtemperature,0,\smchemicalpotential',L}
=\sum_{\ell\geq1}\frac{\napiernum^{\ell\sminvtemperature\smchemicalpotential'}}{\ell}
\sum_{k\in\setlattice_L^d}\napiernum^{-\ell\sminvtemperature
\physham[h]_{\txtparticle,0,L}(k)}
=\sum_{\ell\geq1}\frac{\napiernum^{\ell\sminvtemperature\smchemicalpotential'}}{\ell}
\sqfun{\trace_{P_L\sphilb{H}_{\txtparticle}}}
{\napiernum^{-\ell\sminvtemperature\physham[h]_{\txtparticle,0,L}}}.$$
By the closed-loop trace formula \eqref{expedition0019762}, the one-particle
trace is
$$\sqfun{\trace_{P_L\sphilb{H}_{\txtparticle}}}
{\napiernum^{-\ell\sminvtemperature\physham[h]_{\txtparticle,0,L}}}
=
\int_{I_L^d}
\Gamma^{(L)}_{\ell\sminvtemperature}(x,x)
\opdmsr{x}
=
\int_{I_L^d}
\opdmsr{x}
\int\opdmsr{\msrbb{W}^{\ell\sminvtemperature}_{xx}},$$
which is \eqref{expedition0019605}.
\end{proof}

\subsection{The Mean-Field Brownian-Loop Gas}\label{the-mean-field-brownian-loop-gas}

Combining \eqref{expedition0019603} and \eqref{expedition0019605} at \(\smchemicalpotential'=\smchemicalpotential+\imunit s/\sminvtemperature\) and expanding the loop exponential gives the announced representation.

\begin{thm}[Brownian-loop representation of the finite-volume mean-field gas]\label{expedition0019606}
The finite-volume mean-field partition function is the partition function of a
gas of interacting Brownian loops.
For the windings $\ell_1,\dots,\ell_n$ in each summand, set
$W=\sum_{k=1}^{n}\ell_k$.
Then
\begin{equation}\label{expedition0019607}
\smpartitionfunc_{\sminvtemperature,\lambda,\smchemicalpotential,L}
=
\sum_{n\geq0}\frac{1}{n!}
\sum_{\ell_1,\dots,\ell_n\geq1}
\rbk{\prod_{k=1}^{n}
\frac{\napiernum^{\ell_k\sminvtemperature\smchemicalpotential}}{\ell_k}
\int_{I_L^d}\opdmsr{x_k}\int\opdmsr{\msrbb{W}^{\ell_k\sminvtemperature}_{x_kx_k}}(\omega_k)}
\fnexp{-\frac{\sminvtemperature\lambda}{2V_L}
W^{2}},
\end{equation}
in which the loops $\omega_1,\dots,\omega_n$ carry windings
$\ell_1,\dots,\ell_n$ and interact only through their total winding
$W$, by the pair potential
$\frac{\sminvtemperature\lambda}{V_L}\ell_i\ell_j$.
\end{thm}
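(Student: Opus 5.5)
The plan is to combine Proposition \ref{expedition0019602} with Proposition \ref{expedition0019604}, expand the loop exponential, and undo the Gaussian integral term by term. The order is as follows.

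\emph{Inserting the loop gas.} Substitute \eqref{expedition0019605} at $\smchemicalpotential'=\smchemicalpotential+\imunit s/\sminvtemperature$ into \eqref{expedition0019603}. This is allowed because $\opreal\smchemicalpotential'=\smchemicalpotential$ lies below the spectral bottom for every $s$. The loop activity then becomes $\napiernum^{\ell\sminvtemperature\smchemicalpotential}\napiernum^{\imunit s\ell}/\ell$. Write the exponent as $\sum_{\ell}\frac{\napiernum^{\ell\sminvtemperature\smchemicalpotential}\napiernum^{\imunit s\ell}}{\ell}\,T_\ell$, where $T_\ell=\int_{I_L^d}\opdmsr{x}\int\opdmsr{\msrbb{W}^{\ell\sminvtemperature}_{xx}}=\sum_k\napiernum^{-\ell\sminvtemperature\physham[h]_{\txtparticle,0,L}(k)}\geq0$. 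Expanding $\exp(\sum_\ell c_\ell)=\sum_{n}\frac{1}{n!}\sum_{\ell_1,\dots,\ell_n}\prod_k c_{\ell_k}$ gives a sum over ordered tuples of windings. Each summand carries the factor $\napiernum^{\imunit sW}$ with $W=\sum_k\ell_k$.

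\emph{Fubini.} Interchange the $s$-integral with the $n$- and $\ell$-sums and the bridge integrals. To justify this, note that the absolute values are bounded uniformly in $s$ by the same series with $\napiernum^{\imunit s\ell}$ replaced by $1$. That series sums to $\smpartitionfunc_{\sminvtemperature,0,\smchemicalpotential,L}<\infty$, and the Gaussian weight is integrable. Each summand then requires only the Fourier transform of the normalized Gaussian:
$$\rbk{\frac{V_L}{2\pi\sminvtemperature\lambda}}^{1/2}\int_{\fldreal}\fnexp{-\frac{V_Ls^2}{2\sminvtemperature\lambda}}\napiernum^{\imunit sW}\opdmsr{s}=\fnexp{-\frac{\sminvtemperature\lambda}{2V_L}W^2}.$$
This identity is the scalar Hubbard--Stratonovich relation of Proposition \ref{expedition0019602} read backwards, and it yields \eqref{expedition0019607}. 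As a consistency check, this total is $\sum_{N}\napiernum^{-\sminvtemperature\lambda N^2/(2V_L)}$ times the free coefficient of $\napiernum^{\imunit sN}$, i.e.\ the density tilt \eqref{expedition0012603}.

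\emph{Pair-potential reading.} Expand $W^2=\sum_k\ell_k^2+2\sum_{i<j}\ell_i\ell_j$. The diagonal terms can be absorbed into single-loop activities as self-energies. The cross terms give $\fnexp{-\sum_{i<j}\frac{\sminvtemperature\lambda}{V_L}\ell_i\ell_j}$, which is the stated pair potential. This shows that the loops interact only through their windings and not through their paths $\omega_k$.

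\emph{Main obstacle.} I expect the only delicate point to be the Fubini/dominated-convergence step. With complex chemical potential, one must ensure that the product formula and the expansion of $\log$ remain valid and converge absolutely uniformly in $s$. The first thing I would try is to bound $\abs{\napiernum^{-\sminvtemperature(\physham[h]_{\txtparticle,0,L}(k)-\smchemicalpotential')}}=\napiernum^{-\sminvtemperature(\physham[h]_{\txtparticle,0,L}(k)-\smchemicalpotential)}<1$. This reduces everything to the positive real series at $\smchemicalpotential$.
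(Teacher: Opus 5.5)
Your proposal is correct and follows the paper's proof essentially step for step. You insert the loop representation \eqref{expedition0019605} at $\smchemicalpotential+\imunit s/\sminvtemperature$ into \eqref{expedition0019603}, expand the exponential so that each summand carries the single phase $\napiernum^{\imunit sW}$, evaluate the Gaussian Fourier integral term by term, and read off the pair potential $\frac{\sminvtemperature\lambda}{V_L}\ell_i\ell_j$ from the expansion of $W^2$. Your one addition is the explicit domination of the expanded series by the positive real series at $\smchemicalpotential$, which sums to the free partition function; this justifies the interchange of the $s$-integral with the sums, a step the paper performs without comment.
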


The BEC content of this representation is that the interaction sees only the total winding. Consequently the later collapse of \(W/V_L\) is the loop version of density selection, and any excess over the finite-winding density must be carried by macroscopic windings.

\begin{proof}
Insert \eqref{expedition0019605} into \eqref{expedition0019603}; with
$\napiernum^{\ell\sminvtemperature\smchemicalpotential'}
=\napiernum^{\ell\sminvtemperature\smchemicalpotential}\napiernum^{\imunit\ell s}$
the inner exponent becomes
$$\sum_{\ell\geq1}
\frac{\napiernum^{\ell\sminvtemperature\smchemicalpotential}
\napiernum^{\imunit\ell s}}
{\ell}
\int_{I_L^d}
\opdmsr{x}
\int\opdmsr{\msrbb{W}^{\ell\sminvtemperature}_{xx}}.$$
Expanding $\fnexp{\cdot}
=
\sum_{n\geq0}
\frac{1}{n!}(\cdot)^{n}$ produces, at
order $n$, $n$ loops of windings $\ell_1,\dots,\ell_n$, with the scalar
coupling collected into the single phase
$\napiernum^{\imunit s\sum_k\ell_k}=\napiernum^{\imunit sW}$.
The Gaussian $s$-integral is then carried out by the Fourier identity
$$\rbk{\frac{V_L}{2\pi\sminvtemperature\lambda}}^{1/2}
\int_{\fldreal}
\fnexp{-\frac{V_L s^2}{2\sminvtemperature\lambda}}
\cdot
\napiernum^{\imunit sW}
\opdmsr{s}
=
\fnexp{-\frac{\sminvtemperature\lambda}{2V_L}W^2},$$
which yields \eqref{expedition0019607}.
Expanding the square
$W^2=\sum_{i,j}\ell_i\ell_j$ exhibits the pair potential
$\frac{\sminvtemperature\lambda}{V_L}\ell_i\ell_j$.
\end{proof}

Expression \eqref{expedition0019607} is the mean-field specialization of Ginibre's loop gas \cite[Eq.~(40)]{FrohlichKnowlesSchleinSohinger2}: the \(2\)-loop interaction \(V(\omega_i,\omega_j)
=
\onehalf
\int\int
v_L(\omega_i(\cdot)-\omega_j(\cdot))\) of two loops there reduces, for the constant kernel \(v_L\equiv\lambda/V_L\), to \(\frac{\sminvtemperature\lambda}{2V_L}\ell_i\ell_j\), which is position-independent and depends on the loops only through their windings. Symanzik's loop-gas form follows by the same constant-kernel reduction of \cite[Eqs.~(43)--(44)]{FrohlichKnowlesSchleinSohinger2}.

\subsection{Recovery of the Density Tilt and the Occupation Measure}\label{recovery-of-the-density-tilt-and-the-occupation-measure}

The loop expansion records the particle number sector. In the trace expansion of the free gas, a loop with winding \(\ell\) comes from the \(\ell\)-fold factor \(\fnexp{-\ell\sminvtemperature
\physham[h]_{\txtparticle,0,L}}\) and contributes the coefficient \(\fnexp{\ell\sminvtemperature\smchemicalpotential}\). Thus a loop configuration with windings \(\ell_1,\dots,\ell_n\) is assigned the integer \(W=\sum_{k=1}^{n}\ell_k\), and this integer is the eigenvalue of \(\opfocknumber_{\txtbsn,L}\) on the corresponding occupation-number sector: \[\frac{W}{V_L}
=
\frac{\opfocknumber_{\txtbsn,L}}{V_L}
\text{ on that sector}.\] Since the interaction in \eqref{expedition0019607} depends on the loop configuration only through \(W\), the mean-field loop gas is the free loop gas reweighted by \[
\fnexp{-\frac{\sminvtemperature\lambda}{2V_L}W^2}
=
\fnexp{-\frac{\sminvtemperature\lambda V_L}{2}
\rbk{\frac{W}{V_L}}^2},
\] which is the same quadratic density tilt \eqref{expedition0012603}.

Recall that the measures \(\msrbb{P}_{\txtfr,L}\) in \eqref{expedition0019669} and \(\msrbb{P}_{\txtmeanfield,L}\) in \eqref{expedition0019668}.

\begin{prop}[Fourier duality with the occupation-number measure]\label{expedition0019608}
Let $K_{\sminvtemperature,0,\smchemicalpotential,L}$ and
$K_{\sminvtemperature,\lambda,\smchemicalpotential,L}$ be the free and mean-field
density laws defined by the push-forward formula \eqref{expedition0012728},
i.e. the laws of $R_L=\opfocknumber_{\txtbsn,L}/V_L$ under
$\msrbb{P}_{\txtfr,L}$ and $\msrbb{P}_{\txtmeanfield,L}$.
Then the scalar
Hubbard--Stratonovich identity \eqref{expedition0019603} is the Fourier dual of
the quadratic tilt \eqref{expedition0012603}: for every bounded Borel $F$,
$$\sqfun{\prbexp_{\msrbb{P}_{\txtmeanfield,L}}}{F(R_L)}
=\frac{\int_{\fldreal_{\geq0}}F(r)\,
\fnexp{-\frac{\sminvtemperature\lambda V_L r^2}{2}}
\opdmsr{K_{\sminvtemperature,0,\smchemicalpotential,L}(r)}}
{\int_{\fldreal_{\geq0}}
\fnexp{-\frac{\sminvtemperature\lambda V_L s^2}{2}}
\opdmsr{K_{\sminvtemperature,0,\smchemicalpotential,L}(s)}},$$
and the Gaussian variable $s$ of \eqref{expedition0019603} is conjugate to the
density $r$, with
$$\fnexp{-\frac{\sminvtemperature\lambda V_L r^2}{2}}
=\rbk{\frac{V_L}{2\pi\sminvtemperature\lambda}}^{1/2}
\int_{\fldreal}
\fnexp{-\frac{V_L s^2}{2\sminvtemperature\lambda}}
\napiernum^{\imunit sV_L r}
\opdmsr{s}.$$
\end{prop}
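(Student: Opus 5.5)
The plan is to prove the two displayed identities separately and then read off the duality. For the first (tilted expectation formula), I will not redo the occupation-basis computation. The probabilistic form of the finite-volume mean-field measure, stated in Section \ref{expedition0019615} just after \eqref{expedition0019669}, already gives exactly this identity for every bounded Borel $F$. So the first claim is a citation of that proposition together with the identification of $K_{\sminvtemperature,0,\smchemicalpotential,L}$ and $K_{\sminvtemperature,\lambda,\smchemicalpotential,L}$ as the laws of $R_L$ under $\msrbb{P}_{\txtfr,L}$ and $\msrbb{P}_{\txtmeanfield,L}$. That identification is Proposition \ref{expedition0012679} together with \eqref{expedition0018007}. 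To connect it to the loop picture, I will also note that on each occupation sector $V_L R_L=N_L(\mathsf{n})$ is the total winding $W$ of Theorem \ref{expedition0019606}. Hence the weight $\fnexp{-\sminvtemperature\lambda W^2/(2V_L)}$ in \eqref{expedition0019607} equals $\fnexp{-\sminvtemperature\lambda V_L r^2/2}$.

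For the second identity I will use the elementary Gaussian Fourier transform. For $a>0$ and real $u$,
\[
\frac{1}{\sqrt{2\pi a}}\int_{\fldreal}\fnexp{-\frac{s^2}{2a}}\napiernum^{\imunit su}\opdmsr{s}=\fnexp{-\frac{a u^2}{2}}.
\]
This is the same identity already used in the proofs of Proposition \ref{expedition0019602} and Theorem \ref{expedition0019606}. I will take $a=\sminvtemperature\lambda/V_L$ and $u=V_L r$. Then $au^2/2=\sminvtemperature\lambda V_L r^2/2$ and $(2\pi a)^{-1/2}=(V_L/(2\pi\sminvtemperature\lambda))^{1/2}$, which is exactly the displayed formula. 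In particular the variable $s$ enters only through the phase $\napiernum^{\imunit sV_Lr}=\napiernum^{\imunit sN_L}$, so $s$ is conjugate to $r$, or equivalently to $N_L$.

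To make the word ``Fourier dual'' precise, I will integrate the displayed identity against $F(r)\opdmsr{K_{\sminvtemperature,0,\smchemicalpotential,L}(r)}$ and exchange the $s$- and $r$-integrals by Fubini. This is legitimate because the integrand is bounded in absolute value by $\abs{F(r)}\fnexp{-V_Ls^2/(2\sminvtemperature\lambda)}$, which is integrable. The inner $r$-integral is then the characteristic function of the free density law. Up to the factor $Z_{\sminvtemperature,0,\smchemicalpotential,L}$ it equals the analytically continued free partition function $\smpartitionfunc_{\sminvtemperature,0,\smchemicalpotential+\imunit s/\sminvtemperature,L}$, by the argument in the proof of Proposition \ref{expedition0019602}.

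Taking $F=1$ then turns the denominator of the tilted formula into \eqref{expedition0019603} divided by $\smpartitionfunc_{\sminvtemperature,0,\smchemicalpotential,L}$. The numerator is the same Hubbard--Stratonovich integral with $F(R_L)$ inserted, that is, with the spectral projection $F(\smnumberdensity_L)$ inside the trace. Hence \eqref{expedition0019603} and \eqref{expedition0012603} are the two sides of one Fubini exchange.

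The only real obstacle is justifying this exchange together with the complex chemical potential. The free trace at $\smchemicalpotential+\imunit s/\sminvtemperature$ converges absolutely and is bounded by its value at $\imunit s=0$, since $\abs{\napiernum^{\imunit sN_L}}=1$. This uses the standing assumption $\smchemicalpotential<\min_k\physham[h]_{\txtparticle,0,L}(k)$ of the Brownian-loop section, so everything reduces to dominated convergence.
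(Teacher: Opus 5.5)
Your proposal is correct and follows essentially the paper's argument. In both, the $s$-integral of \eqref{expedition0019603} acts on the law of $R_L$ only through the phase $\napiernum^{\imunit s V_L r}$, the Gaussian Fourier identity with $a=\sminvtemperature\lambda/V_L$ turns that phase into the quadratic tilt, and normalizing with $F=1$ gives the tilted expectation. The differences are minor: the paper obtains $K_{\sminvtemperature,0,\smchemicalpotential,L}$ by grouping the free loop gas \eqref{expedition0019605} by total winding, whereas you work on the occupation-number side and cite the earlier probabilistic-form proposition for the first display. You also supply the Fubini and domination justification, via $\abs{\smpartitionfunc_{\sminvtemperature,0,\smchemicalpotential+\imunit s/\sminvtemperature,L}}\leq\smpartitionfunc_{\sminvtemperature,0,\smchemicalpotential,L}$, which the paper leaves implicit.
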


This proposition links the loop representation to the occupation-number density law. For BEC, it identifies the random total winding density with the same density variable whose thermodynamic limit determines the zero-mode excess.

\begin{proof}
The free loop gas \eqref{expedition0019605}, grouped by total winding $W=V_L r$,
is the law $K_{\sminvtemperature,0,\smchemicalpotential,L}$ of $R_L$
under the ideal-gas Gibbs measure.
Replacing $\smchemicalpotential$ by
$\smchemicalpotential+\imunit s/\sminvtemperature$ multiplies the weight of
$\setone{W}$ by $\napiernum^{\imunit sW}=\napiernum^{\imunit sV_L r}$,
so the $s$-integral in \eqref{expedition0019603}
acts on the law of $R_L$ by the scalar Gaussian integral,
producing the factor
$\fnexp{-\frac{\sminvtemperature\lambda V_L r^2}{2}}$.
Normalizing by the same expression with $F=1$ gives the tilted expectation,
which is the finite-volume identity \eqref{expedition0012603}.
\end{proof}

Thus the Brownian-loop gas, the occupation-number measure, and the resolvent algebra describe the same finite-volume state \eqref{expedition0012601} from three sides: the loop gas exhibits it as interacting Brownian motion, the occupation measure as a tilted density law, and the resolvent algebra as an operator system. The Kac collapse of Theorem \ref{expedition0012647} is, in the loop picture, the concentration of the total loop density \(W/V_L\) at the selected value \(\bar{\smnumberdensity}_{\txtbsn}\).

\subsection{Kac Collapse and Macroscopic Loop Density}\label{kac-collapse-and-macroscopic-loop-density}

The Brownian-loop gas carries a natural random total winding. For a loop configuration \(\Omega=(\ell_1,\omega_1,\dots,\ell_n,\omega_n)\), put \[W_L(\Omega)
=
\sum_{j=1}^{n}\ell_j,
\quad
R^{\txtloop}_L(\Omega)
=
\frac{W_L(\Omega)}{V_L}.\] The loop probability measure is the normalized version of the weight in \eqref{expedition0019607}. Denote this probability measure by \(\msrbb{P}^{\txtloop}_{\txtmeanfield,L}\). Because each loop of winding \(\ell\) represents \(\ell\) particles, the random variable \(W_L\) is exactly the total particle number in the occupation-number description.

\begin{prop}[Loop-density law and Kac collapse]\label{expedition0019610}
The law of $R^{\txtloop}_L$ under the finite-volume mean-field loop gas is
$K_{\sminvtemperature,\lambda,\smchemicalpotential,L}$.
Consequently $R^{\txtloop}_L
\to
\bar{\smnumberdensity}_{\txtbsn}$ in probability.
\end{prop}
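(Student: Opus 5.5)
The plan is to identify the push-forward of the loop probability measure under $R^{\txtloop}_L$ directly, by grouping the expansion \eqref{expedition0019607} according to the value of the total winding $W$. First I would fix a nonnegative integer $N$ and collect all terms of \eqref{expedition0019607} with $W=\sum_k\ell_k=N$. The interaction factor $\fnexp{-\frac{\sminvtemperature\lambda}{2V_L}N^2}$ is constant on this set, and the product $\prod_k \napiernum^{\ell_k\sminvtemperature\smchemicalpotential}$ equals $\napiernum^{N\sminvtemperature\smchemicalpotential}$. So the loop weight of $\setone{W_L=N}$ is
\[
\fnexp{-\frac{\sminvtemperature\lambda}{2V_L}N^2}\napiernum^{N\sminvtemperature\smchemicalpotential}\,c_N,
\qquad
c_N=\sum_{n\geq0}\frac{1}{n!}\sum_{\substack{\ell_1,\dots,\ell_n\geq1\\ \sum\ell_k=N}}\prod_{k=1}^n\frac{1}{\ell_k}\sqfun{\trace}{\napiernum^{-\ell_k\sminvtemperature\physham[h]_{\txtparticle,0,L}}},
\]
where the Brownian-bridge integrals are replaced by one-particle traces via \eqref{expedition0019762}. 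All terms are nonnegative, which justifies the Tonelli regrouping.

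Second, I would identify $\napiernum^{N\sminvtemperature\smchemicalpotential}c_N$ with the free occupation weight of the $N$-particle sector, namely
\[
\sum_{\mathsf{n}\in\mathcal{O}_L,\ N_L(\mathsf{n})=N}\fnexp{-\sminvtemperature\sum_k(\physham[h]_{\txtparticle,0,L}(k)-\smchemicalpotential)\mathsf{n}_k}.
\]
This is the step I expect to need the most care. I would do it by generating functions. Replacing $\smchemicalpotential$ by $\smchemicalpotential+\imunit s/\sminvtemperature$ in Proposition \ref{expedition0019604} multiplies each $W=N$ term by $\napiernum^{\imunit sN}$. Comparing with the product formula for $\smpartitionfunc_{\sminvtemperature,0,\smchemicalpotential+\imunit s/\sminvtemperature,L}$, expanded in occupation numbers, gives two absolutely convergent Fourier series in $s$ that are equal. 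The regime $\smchemicalpotential<\min_k\physham[h]_{\txtparticle,0,L}(k)$ fixed at the start of the section gives the absolute convergence. Equating coefficients of $\napiernum^{\imunit sN}$ then yields the sector identity. This is also the content of Proposition \ref{expedition0019608}, which I would cite.

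With these two steps, the loop law of $R^{\txtloop}_L$ assigns to $N/V_L$ the mass
\[
\smpartitionfunc_{\sminvtemperature,\lambda,\smchemicalpotential,L}^{-1}\sum_{N_L(\mathsf{n})=N}\fnexp{-\sminvtemperature(\cdots)-\tfrac{\lambda}{2V_L}N^2\sminvtemperature}.
\]
This is exactly the point mass of \eqref{expedition0012728} at $N/V_L$. The normalizations agree because both total masses are given by Theorem \ref{expedition0019606}. Hence the law of $R^{\txtloop}_L$ is $K_{\sminvtemperature,\lambda,\smchemicalpotential,L}$.

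The convergence in probability is then a statement about these laws alone. I would apply Proposition \ref{expedition0019755}, or equivalently the weak convergence $K_{\sminvtemperature,\lambda,\smchemicalpotential,L}\Rightarrow\delta_{\bar{\smnumberdensity}_{\txtbsn}}$ from Theorem \ref{expedition0012647}. Weak convergence to a point mass is equivalent to convergence in probability to that constant, which gives $\fun{\msrbb{P}^{\txtloop}_{\txtmeanfield,L}}{\setone{\abs{R^{\txtloop}_L-\bar{\smnumberdensity}_{\txtbsn}}>\epsilon}}\to0$.

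One point needs checking here. Propositions \ref{expedition0019755} and \ref{expedition0012645} are stated for arbitrary $\smchemicalpotential\in\fldreal$, whereas the loop expansion assumes $\smchemicalpotential$ below the spectrum. I would note that the conclusion is only invoked in the common parameter range where both apply.
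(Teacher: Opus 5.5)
Your proposal is correct and is essentially the paper's proof: you group the loop expansion by total winding $W$, identify the winding-$W$ sector weight with the free occupation weight of the $N_L=W$ sector by extracting a generating-function coefficient, attach the constant factor $\exp\bigl(-\beta\lambda W^2/(2V_L)\bigr)$, and then obtain convergence in probability from the Kac collapse. The only difference is cosmetic: the paper takes the $W$-th Taylor coefficient at $u=0$ of $\prod_k\bigl(1-u\,e^{-\beta(h_{\mathrm{p},0,L}(k)-\mu)}\bigr)^{-1}$, whereas you take the $W$-th Fourier coefficient in $s$ on the circle $u=e^{is}$, which is the same coefficient-extraction argument; your caveat that everything is used only for $\mu<0$, where both the loop expansion and the free LDP make sense, is correct and is left implicit in the paper.
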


This is the Brownian-loop form of Theorem \ref{expedition0012647}. The statement is a law of large numbers for total winding density. It is the loop-level density-selection step in the BEC argument: after \(R^{\txtloop}_L\) concentrates at \(\bar{\smnumberdensity}_{\txtbsn}\), the comparison with the finite-winding critical density determines whether macroscopic winding remains.

\begin{proof}
Group the loop expansion \eqref{expedition0019607} by the total winding
$W=\sum_{j=1}^{n}\ell_j$.
For a bounded Borel function $F$,
$$\begin{aligned}
&\sqfun{\prbexp_{\msrbb{P}^{\txtloop}_{\txtmeanfield,L}}}
{F(R^{\txtloop}_L)}
\\ 
&=
\frac{1}{\smpartitionfunc_{\sminvtemperature,\lambda,\smchemicalpotential,L}}
\sum_{n\geq0}\frac{1}{n!}
\sum_{\ell_1,\dots,\ell_n\geq1}
F\rbk{\frac{\sum_{j=1}^{n}\ell_j}{V_L}}
\rbk{\prod_{j=1}^{n}
\frac{\napiernum^{\ell_j\sminvtemperature\smchemicalpotential}}{\ell_j}
\int_{I_L^d}\opdmsr{x_j}
\int\opdmsr{\msrbb{W}^{\ell_j\sminvtemperature}_{x_jx_j}}(\omega_j)}
\\ 
&\quad\times
\fnexp{-\frac{\sminvtemperature\lambda}{2V_L}
\rbk{\sum_{j=1}^{n}\ell_j}^{2}} .
\end{aligned}$$
The free loop gas expansion \eqref{expedition0019605} is the exponential
generating function of the same loops without the final quadratic factor.
For fixed $W\in\monnat$, the coefficient of total winding $W$ is
$$\begin{aligned}
&\frac{1}{W!}
\fnrestr{
\frac{d^W}{du^W}
\fnexp{
\sum_{\ell\geq1}
\frac{u^\ell\napiernum^{\ell\sminvtemperature\smchemicalpotential}}{\ell}
\sqfun{\trace_{P_L\sphilb{H}_{\txtparticle}}}
{\napiernum^{-\ell\sminvtemperature\physham[h]_{\txtparticle,0,L}}}}
}{u=0}
\\ 
&=
\frac{1}{W!}
\fnrestr{
\frac{d^W}{du^W}
\prod_{k\in\setlattice_L^d}
\frac{1}
{1-u\napiernum^{-\sminvtemperature
\rbk{\physham[h]_{\txtparticle,0,L}(k)-\smchemicalpotential}}}
}{u=0}
\\ 
&=
\sum_{\substack{\mathsf{n}\in\mathcal{O}_L\\
N_L(\mathsf{n})=W}}
\fnexp{-\sminvtemperature
\sum_{k\in\setlattice_L^d}
\rbk{\physham[h]_{\txtparticle,0,L}(k)-\smchemicalpotential}
\mathsf{n}_k}.
\end{aligned}$$
Multiplying this coefficient by the mean-field factor
$\fnexp{-\sminvtemperature\lambda W^2/(2V_L)}$ gives
$$\begin{aligned}
&\sum_{W\geq0}
F\rbk{\frac{W}{V_L}}
\fnexp{-\frac{\sminvtemperature\lambda}{2V_L}W^2}
\sum_{\substack{\mathsf{n}\in\mathcal{O}_L\\
N_L(\mathsf{n})=W}}
\fnexp{-\sminvtemperature
\sum_{k\in\setlattice_L^d}
\rbk{\physham[h]_{\txtparticle,0,L}(k)-\smchemicalpotential}
\mathsf{n}_k}
\\ 
&=
\sum_{\mathsf{n}\in\mathcal{O}_L}
F\rbk{\frac{N_L(\mathsf{n})}{V_L}}
\fnexp{-\sminvtemperature
\rbk{
\sum_{k\in\setlattice_L^d}
\rbk{\physham[h]_{\txtparticle,0,L}(k)-\smchemicalpotential}
\mathsf{n}_k
+\frac{\lambda}{2V_L}N_L(\mathsf{n})^2}}.
\end{aligned}$$
After division by
$\smpartitionfunc_{\sminvtemperature,\lambda,\smchemicalpotential,L}$,
this is
$\sqfun{\prbexp_{\msrbb{P}_{\txtmeanfield,L}}}{F(R_L)}$
with $R_L$ and $\msrbb{P}_{\txtmeanfield,L}$ defined by
\eqref{expedition0019667} and \eqref{expedition0019668}.
By \eqref{expedition0012728}, this push-forward law is
$K_{\sminvtemperature,\lambda,\smchemicalpotential,L}$.
The convergence in probability is then Theorem \ref{expedition0012647}.
\end{proof}

Finite windings describe the locally normal nonzero-mode thermal cloud. Macroscopic windings describe the zero-mode excess. For \(A\in\monnat\), set \[W_{L,\leq A}(\Omega)
=
\sum_{j=1}^{n}\ell_j\fndef{\closedinterval{1}{A}}(\ell_j),
\quad
W_{L,>A}(\Omega)
=
W_L(\Omega)-W_{L,\leq A}(\Omega).\]

\begin{prop}[Macroscopic-loop density in the condensed case]\label{expedition0019611}
In the condensed case \eqref{expedition0012606},
$$\begin{aligned}
\lim_{A\to\infty}
\lim_{L\to\infty}
\sqfun{\prbexp_{\msrbb{P}^{\txtloop}_{\txtmeanfield,L}}}
{\frac{W_{L,\leq A}}{V_L}}
=
\smnumberdensity_{\txtbsn,\txtcritical}(\sminvtemperature),
\quad
\lim_{A\to\infty}
\lim_{L\to\infty}
\sqfun{\prbexp_{\msrbb{P}^{\txtloop}_{\txtmeanfield,L}}}
{\frac{W_{L,>A}}{V_L}}
=
\smnumberdensity_{\txtbsn,0}(\sminvtemperature).
\end{aligned}$$
\end{prop}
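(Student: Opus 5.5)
The plan is to compute the expected winding-truncated density exactly through the loop expansion \eqref{expedition0019607}, using the fact that the interaction depends only on the total winding $W$. Write the mean-field loop measure as the free loop gas tilted by $\fnexp{-\sminvtemperature\lambda W^2/(2V_L)}$. For a fixed winding $\ell$, the expected number of loops of winding $\ell$ under the tilted measure is obtained by removing one marked loop of winding $\ell$ from the configuration. The standard Mecke (Poisson) identity for the free loop gas, which is a Poisson point process with intensity $\frac{\napiernum^{\ell\sminvtemperature\smchemicalpotential}}{\ell}\int\opdmsr{x}\opdmsr{\msrbb{W}^{\ell\sminvtemperature}_{xx}}$, then gives
\begin{equation*}
\sqfun{\prbexp_{\msrbb{P}^{\txtloop}_{\txtmeanfield,L}}}{\ell\,\#\setone{j:\ell_j=\ell}}
=
\napiernum^{\ell\sminvtemperature\smchemicalpotential}
\sqfun{\trace}{\napiernum^{-\ell\sminvtemperature\physham[h]_{\txtparticle,0,L}}}
\frac{\sqfun{\prbexp_{\msrbb{P}_{\txtfr,L}}}{\napiernum^{-\frac{\sminvtemperature\lambda}{2V_L}(N_L+\ell)^2}}}{\sqfun{\prbexp_{\msrbb{P}_{\txtfr,L}}}{\napiernum^{-\frac{\sminvtemperature\lambda}{2V_L}N_L^2}}}.
\end{equation*}
The ratio equals $\sqfun{\prbexp_{\msrbb{P}_{\txtmeanfield,L}}}{\napiernum^{-\sminvtemperature\lambda(\ell R_L+\ell^2/(2V_L))}}$. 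Dividing by $V_L$ and summing over $\ell\le A$ gives $\sqfun{\prbexp}{W_{L,\le A}/V_L}$ as a finite sum of explicit terms.

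Next we pass to $L\to\infty$ at fixed $A$. For each fixed $\ell$, $V_L^{-1}\sqfun{\trace}{\napiernum^{-\ell\sminvtemperature\physham[h]_{\txtparticle,0,L}}}\to(2\pi\ell\sminvtemperature)^{-d/2}$ as a Riemann sum. The factor $\napiernum^{-\sminvtemperature\lambda\ell^2/(2V_L)}\to1$. By Proposition \ref{expedition0019610}, $R_L\to\bar{\smnumberdensity}_{\txtbsn}$ in probability, and since the integrand is bounded by one, dominated convergence gives
\begin{equation*}
\lim_{L\to\infty}\sqfun{\prbexp}{W_{L,\le A}/V_L}
=\sum_{\ell=1}^{A}\napiernum^{\ell\sminvtemperature(\smchemicalpotential-\lambda\bar{\smnumberdensity}_{\txtbsn})}(2\pi\ell\sminvtemperature)^{-d/2}.
\end{equation*}
The combination $\smchemicalpotential-\lambda\bar{\smnumberdensity}_{\txtbsn}$ is the effective chemical potential. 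In the condensed case \eqref{expedition0012606} it must vanish, and identifying this limit with the Euler multiplier (so that $\smchemicalpotential_{\mathrm{sel}}$ coincides with the parameter seen in the tilt) is the main obstacle. I would first try to establish it from the first-order condition of the tilted rate function in Theorem \ref{expedition0012647}. The minimiser $\bar{\smnumberdensity}_{\txtbsn}$ of $I_{\txtfr}+\sminvtemperature\lambda r^2/2$ lies in the flat region $r\ge\smnumberdensity_{\txtbsn,\txtcritical}$ of $I_{\txtfr}$, where $I_{\txtfr}'(r)=-\sminvtemperature\smchemicalpotential$ is saturated. This is the Legendre dual of the boundary $q\uparrow-\sminvtemperature\smchemicalpotential$ in Proposition \ref{expedition0012645}, and it forces $\lambda\bar{\smnumberdensity}_{\txtbsn}=\smchemicalpotential$.

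With the effective chemical potential equal to zero, letting $A\to\infty$ gives $\sum_{\ell\ge1}(2\pi\ell\sminvtemperature)^{-d/2}$. This series converges for $d\ge3$, and expanding $(\napiernum^{x}-1)^{-1}=\sum_\ell\napiernum^{-\ell x}$ in \eqref{expedition0019397} and integrating the Gaussians shows that it equals $\smnumberdensity_{\txtbsn,\txtcritical}(\sminvtemperature)$. For the second limit, note that $\sqfun{\prbexp}{W_{L,>A}/V_L}=\sqfun{\prbexp}{R^{\txtloop}_L}-\sqfun{\prbexp}{W_{L,\le A}/V_L}$. By Proposition \ref{expedition0019610}, $R^{\txtloop}_L\to\bar{\smnumberdensity}_{\txtbsn}$, and upgrading this to convergence of means requires uniform integrability. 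That follows from the closed-set LDP estimate in the proof of Proposition \ref{expedition0019755}, since the Gaussian tilt gives super-exponential tails for large $r$. Subtracting and using \eqref{expedition0020735} in the form $\bar{\smnumberdensity}_{\txtbsn}-\smnumberdensity_{\txtbsn,\txtcritical}=\smnumberdensity_{\txtbsn,0}$ yields the second identity.
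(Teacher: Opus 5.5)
Your proof is correct. It has the same skeleton as the paper's: finite windings give $\rho_c$ after $L\to\infty$ and then $A\to\infty$, and the macroscopic part comes from subtracting this from the collapsing total winding density. The key step, however, is justified quite differently.

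The paper only asserts that finite windings ``use the selected nonzero-mode energy'', so that their intensity is critical. This silently replaces the loop-gas activity $e^{\ell\beta\mu}$, which carries the finite-volume parameter $\mu$, by one built from $\mu_{\mathrm{sel}}$. The paper explicitly disclaims that identification after Theorem~\ref{expedition0012647}, and it gives no mechanism by which the quadratic weight acts on individual loops.

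Your marked-loop identity supplies that mechanism as an exact finite-volume formula,
\[
\mathbb{E}_{\mathrm{mf}}\Bigl[\frac{W_{L,\le A}}{V_L}\Bigr]
=\sum_{\ell=1}^{A}e^{\ell\beta\mu}\,
\frac{\operatorname{Tr}e^{-\ell\beta h_{\mathrm{p},0,L}}}{V_L}\,
\mathbb{E}_{\mathrm{mf}}\bigl[e^{-\beta\lambda(\ell R_L+\ell^2/(2V_L))}\bigr].
\]
From it, the $L\to\infty$ limit follows rigorously from bounded convergence and the Kac collapse, and the effective activity $e^{\ell\beta(\mu-\lambda\bar\rho)}$ appears explicitly. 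You then obtain $\mu=\lambda\bar\rho$ from the first-order condition for $I_{\mathrm{fr}}+\beta\lambda r^2/2$ on the affine part $r\ge\rho_c$ of $I_{\mathrm{fr}}$, whose slope is $-\beta\mu$ by the endpoint analysis in Section~\ref{expedition0018010}. You also add the uniform integrability that the paper's subtraction step needs but omits. To carry that out, bound $\mathbb{E}_{\mathrm{mf}}[R_L\mathbf{1}_{\{R_L>M\}}]$ directly from the tilt formula together with the Laplace lower bound on the denominator. The fixed-set LDP estimate you cite is not uniform in the threshold, so it is not enough on its own.

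One caveat, which your identification step exposes and which affects the paper equally. The condensed hypothesis forces $\mu=\lambda\bar\rho>0$. But three things you rely on all need $\mu<0$, the standing assumption of Section~\ref{expedition0019600}: the Poisson structure of the free loop gas, the free law $\mathbb{P}_{\mathrm{fr},L}$, and the Kac collapse as proved through Proposition~\ref{expedition0012645}. For $\mu<0$ the function $I_{\mathrm{fr}}+\beta\lambda r^2/2$ is strictly increasing on $[\rho_c,\infty)$, so $\bar\rho<\rho_c$. The proposition therefore holds only vacuously. Indeed, your formula then gives the free density at $\mu-\lambda\bar\rho<0$, which equals $\bar\rho$ by the same first-order condition.

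To make the argument non-vacuous for $\mu>\lambda\rho_c$, make two changes. First, derive the marked-loop identity directly from the nonnegative loop series grouped by total winding, using Tonelli. The Gaussian factor makes this series converge for every real $\mu$, and the resulting ratio is already an $\mathbb{E}_{\mathrm{mf}}$-expectation. Second, obtain the Kac collapse by tilting a free law at a reference $\mu_0<0$ by $e^{V_L(\beta(\mu-\mu_0)r-\beta\lambda r^2/2)}$. The first-order condition then again yields $\bar\rho=\mu/\lambda$.
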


Thus the zero-mode excess of the occupation-number formulation is the density carried by loops whose winding diverges after the thermodynamic limit. This is the direct Brownian-loop statement of BEC occurrence: in the condensed case \eqref{expedition0012606}, the positive density \(\smnumberdensity_{\txtbsn,0}(\sminvtemperature)\) is exactly the density carried by macroscopic loops.

\begin{proof}
For fixed $A$, only finitely many windings are tested.
The mean-field weight depends on the total winding through the Kac-selected
density, and finite windings use the selected nonzero-mode energy
\eqref{expedition0012605}.
In the condensed case \eqref{expedition0012606},
$\smchemicalpotential_{\mathrm{sel}}=\lambda\bar{\smnumberdensity}_{\txtbsn}$ by
\eqref{expedition0012606}, so the finite-winding loop intensity is the
critical ideal-gas intensity.
Therefore
$$\begin{aligned}
\lim_{L\to\infty}
\sqfun{\prbexp_{\msrbb{P}^{\txtloop}_{\txtmeanfield,L}}}
{\frac{W_{L,\leq A}}{V_L}}
=
\sum_{\ell=1}^{A}
\napiernum^{-\ell\sminvtemperature\cdot0}
\frac{1}{(2\pi)^d}
\int_{\fldreal^d}
\napiernum^{-\ell\sminvtemperature
\physham[h]_{\txtparticle,0}(k)}
\opdmsr{k}
=
\frac{1}{(2\pi)^d}
\int_{\fldreal^d}
\sum_{\ell=1}^{A}
\napiernum^{-\ell\sminvtemperature\physham[h]_{\txtparticle,0}(k)}
\opdmsr{k}.
\end{aligned}$$
Letting $A\to\infty$ and using monotone convergence gives
$$\frac{1}{(2\pi)^d}
\int_{\fldreal^d}
\frac{1}{\napiernum^{\sminvtemperature\physham[h]_{\txtparticle,0}(k)}-1}
\opdmsr{k}
=
\smnumberdensity_{\txtbsn,\txtcritical}(\sminvtemperature)$$
by \eqref{expedition0019397}.

The total winding density converges to
$\bar{\smnumberdensity}_{\txtbsn}$ by Proposition \ref{expedition0019610}.
Since
$$\frac{W_{L,>A}}{V_L}
=
\frac{W_L}{V_L}
-\frac{W_{L,\leq A}}{V_L},$$
the iterated limit gives
$$\bar{\smnumberdensity}_{\txtbsn}
-\smnumberdensity_{\txtbsn,\txtcritical}(\sminvtemperature)
=
\smnumberdensity_{\txtbsn,0}(\sminvtemperature),$$
where the last equality is the zero-mode excess in
\eqref{expedition0012606}.
\end{proof}

\subsection{Brownian-Loop Zero-Mode ODLRO}\label{brownian-loop-zero-mode-odlro}

The ODLRO used in the main construction is the resolvent-algebra one from Proposition \ref{expedition0012913}: it is a statement about zero-mode support condition in the realified test-function space and about the non-decaying singular part of the represented two-point function. In the Brownian-loop formulation the same information is carried by macroscopic winding density. The one-particle density-matrix open-path formula is recorded later as a standard comparison, not as the primary notion used in the paper.

\begin{prop}[Brownian-loop form of zero-mode ODLRO]\label{expedition0019612}
In the condensed case \eqref{expedition0012606}, for test functions $f,g$ as in
Proposition \ref{expedition0012913}, the
macroscopic-winding contribution of Proposition \ref{expedition0019611} gives
the non-decaying realified-field covariance
$2(2\pi)^d
\smnumberdensity_{\txtbsn,0}(\sminvtemperature)
\cmpconj{\faftr{f}(0)}
\faftr{g}(0)$,
which is nonzero exactly when $\faftr{f}(0)\faftr{g}(0)\ne0$.
\end{prop}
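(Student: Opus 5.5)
The plan is to read the loop picture as a decomposition of the total winding into finite and macroscopic parts, and to show that only the macroscopic part survives translation. The finite part is then identified with the regular nonzero-mode covariance. The macroscopic part is identified with the zero-mode point mass of \eqref{expedition0012440}.

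\textbf{Step 1.} I will first write the finite-volume one-particle density matrix as an open-path Brownian sum. In the free loop gas, the off-diagonal kernel \(\gamma_L(x,y)\) is a sum over windings \(\ell\geq1\) of the factor \(\napiernum^{\ell\sminvtemperature\smchemicalpotential'}\) times the open-bridge mass \(\int\opdmsr{\msrbb{W}^{\ell\sminvtemperature}_{xy}}\), which is \(\Gamma^{(L)}_{\ell\sminvtemperature}(x,y)\). Conditioning on the total winding, as in the proof of Proposition \ref{expedition0019610}, reduces the mean-field case to this same structure with the selected effective energy. In the condensed case that energy is the free one, by \eqref{expedition0012606}.

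\textbf{Step 2.} I will split the \(\ell\)-sum at a cutoff \(A\), exactly as \(W_{L,\leq A}\) and \(W_{L,>A}\) were split in Proposition \ref{expedition0019611}.

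For \(\ell\leq A\), the kernels \(\Gamma_{\ell\sminvtemperature}(x,y)\) are Gaussian in \(x-y\). Paired with \(\cmpconj{f}\) and \(\tau_x g\), they therefore decay as \(\abs{x}\to\infty\). Summed over \(\ell\), they give the regular nonzero-mode form \(\opform{q}_{\txtmeanfield,\txtnonzero,\sminvtemperature,\bar{\smnumberdensity}_{\txtbsn}}\), whose multiplier is \(1+2n_{\bar{\smnumberdensity}_{\txtbsn}}\).

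For \(\ell>A\), I take \(L\to\infty\) first. On the torus, \(\Gamma^{(L)}_{\ell\sminvtemperature}(x,y)\to V_L^{-1}\) once \(\ell\sminvtemperature\gg L^2\), because the spectral gap of \(\physham[h]_{\txtparticle,0,L}\) is of order \(L^{-2}\). The macroscopic-winding part of the kernel is therefore asymptotically constant in \(x,y\). Its total weight is the macroscopic winding density, which equals \(\smnumberdensity_{\txtbsn,0}(\sminvtemperature)\) by Proposition \ref{expedition0019611}.

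\textbf{Step 3.} Pairing the constant kernel \(\smnumberdensity_{\txtbsn,0}\) with \(\cmpconj{f}\) and \(\tau_x g\) gives \(\smnumberdensity_{\txtbsn,0}\int\cmpconj{f}\int g=(2\pi)^d\smnumberdensity_{\txtbsn,0}\cmpconj{\faftr{f}(0)}\faftr{g}(0)\), using \(\faftr{\tau_x g}(0)=\faftr{g}(0)\). Converting from the normal-ordered two-point function to the realified Segal covariance multiplies this by \(2\), as in \eqref{expedition0012440} and in the covariance computation of the probabilistic direct-integral proposition. This yields the stated non-decaying term \(2(2\pi)^d\smnumberdensity_{\txtbsn,0}(\sminvtemperature)\cmpconj{\faftr{f}(0)}\faftr{g}(0)\). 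Because \(\smnumberdensity_{\txtbsn,0}(\sminvtemperature)>0\), this term is nonzero exactly when \(\faftr{f}(0)\faftr{g}(0)\ne0\). Finally, Proposition \ref{expedition0012913} identifies it with the zero-mode ODLRO term.

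\textbf{Main obstacle.} The hard step is the second half of Step 2. I must show that, in the iterated limit \(\lim_A\lim_L\), the windings \(\ell>A\) contribute a spatially constant kernel with total mass exactly \(\smnumberdensity_{\txtbsn,0}\). Proposition \ref{expedition0019611} controls only the diagonal density, not the off-diagonal kernel. The difficulty is the intermediate windings, with \(A<\ell\ll L^2/\sminvtemperature\): they are neither localized nor equilibrated on the torus.

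I would first try a domination argument. For \(x\ne y\), \(\Gamma^{(L)}_{t}(x,y)\le\Gamma^{(L)}_{t}(x,x)\). In addition, the intermediate-winding diagonal mass vanishes in the limit, because in the condensed case the finite-winding mass already saturates \(\smnumberdensity_{\txtbsn,\txtcritical}\) (using \(d\ge3\)). If domination turns out to be too weak, I would instead project onto the \(k=0\) momentum mode. That projection is exactly what \(\faftr{f}(0)\) detects, and its occupation density is \(R_{0,L}\to\smnumberdensity_{\txtbsn,0}\) by Proposition \ref{expedition0019803}.
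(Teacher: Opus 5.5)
Your route differs from the paper's. The paper never analyses the off-diagonal open-path kernel. It takes from Proposition~\ref{expedition0019611} that the macroscopic-winding density equals $\rho_{\mathrm{b},0}(\beta)$, inserts this density into the zero-mode form \eqref{expedition0012440}, and reuses the translation identity $\widehat{\tau_x g}(0)=\hat g(0)$ and the polarization step of Proposition~\ref{expedition0012913}. That is exactly your Step~3, so on the paper's reading of ``macroscopic-winding contribution'' your Steps~1--2 are not needed. What Steps~1--2 would add is a genuine loop-level statement, which the paper's substitution takes for granted: the large-winding part of the open-path kernel $\gamma_{1,L}$ of Proposition~\ref{expedition0019670} converges to the constant kernel $\rho_{\mathrm{b},0}(\beta)$. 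You rightly note that Proposition~\ref{expedition0019611} controls only the diagonal. So in your route this step is the whole content, and as written it is not established.

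The gap is in the justification you give for domination. Saturation of $\rho_{\mathrm{b},\mathrm{c}}(\beta)$ by windings $\le A$ says nothing about how the excess is distributed inside $(A,\infty)$. It therefore cannot show that windings with $A<\ell\ll L^2/\beta$ carry vanishing mass; that would need uniform-in-$L$ bounds on the open-path weights and on $\Gamma^{(L)}_{\ell\beta}(x,x)$, which you do not have. Step~1's reduction to the free structure at zero effective chemical potential is also not meaningful at finite $L$, because the zero-mode part $\sum_\ell V_L^{-1}$ of that open-path series diverges. All you need is that \eqref{expedition0019609} has the form $\sum_{\ell}w_L(\ell)\Gamma^{(L)}_{\ell\beta}(x',x)$ with $w_L(\ell)\ge0$.

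The missing idea is that intermediate windings are harmless at bounded separation, by a two-sided comparison. Positivity of the Fourier coefficients gives $\Gamma^{(L)}_t(z,0)\le\Gamma^{(L)}_t(0,0)$. Writing the periodic kernel as a sum of Gaussian images and pairing $m$ with $-m$ (so that $\cosh\ge1$) gives $\Gamma^{(L)}_t(z,0)\ge e^{-|z|^2/(2t)}\Gamma^{(L)}_t(0,0)$. Hence the part with $\ell>A$ lies between $e^{-|z|^2/(2A\beta)}$ times its diagonal value and that value itself. That diagonal value is $\mathbb{E}[W_{L,>A}]/V_L$: weighting a marked closed loop of winding $\ell$ in \eqref{expedition0019607} by its $\ell$ particles cancels the factor $1/\ell$ and reproduces the diagonal of \eqref{expedition0019609}. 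Proposition~\ref{expedition0019611} then gives $\rho_{\mathrm{b},0}(\beta)$ in the iterated limit at every fixed separation, wherever in $(A,\infty)$ the excess sits. The $\ell\le A$ part decays in the separation, since its limiting weights are summable against $(2\pi\ell\beta)^{-d/2}$ with total $\rho_{\mathrm{b},\mathrm{c}}(\beta)$.

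Your $k=0$ fallback also works, because the Gibbs state is diagonal in momentum and the zero-mode term of $\gamma_{1,L}$ is the constant $\mathbb{E}[R_{0,L}]$ (Proposition~\ref{expedition0019671}). But then loops enter only through Proposition~\ref{expedition0019611}, and you are back to the paper's argument.
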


This is the Brownian-loop counterpart of Proposition \ref{expedition0012913}. It identifies the non-decaying two-point contribution in \eqref{expedition0012440} and the same zero-mode support condition. For BEC, the proposition shows how the macroscopic-loop density appears as the zero-mode ODLRO term used in the resolvent-algebra construction.

\begin{proof}
The separation of finite and macroscopic winding densities is exactly
Proposition \ref{expedition0019611}; the latter density is
$\smnumberdensity_{\txtbsn,0}(\sminvtemperature)$.
With this density in place of the zero-mode density in
\eqref{expedition0012440}, the translation identity and polarization argument
are the same as in the proof of Proposition \ref{expedition0012913}.
They give the covariance in the statement and the same zero-mode condition
$\faftr{f}(0)\faftr{g}(0)\ne0$.
\end{proof}

\subsection{Brownian-Loop Proper Condensates}\label{brownian-loop-proper-condensates}

The same macroscopic-winding picture also contains the two Buchholz proper-condensate formulations: the number-resolvent test \eqref{expedition0019101} and the local-regular-space test \eqref{expedition0019103}. The number-resolvent formulation tests escape of local occupation in a direct-integral component probability law. The ODLRO local-regular-space formulation tests boundedness of local occupation along a sequence.

\begin{prop}[Loop interpretation of the two proper-condensate criteria]\label{expedition0019613}
For a bounded open region $\mathcal{O}$, macroscopic loops restricted to
$\mathcal{O}$ select the normalized constant vector
$s_{\mathcal{O}}$ from \eqref{expedition0012890}.
At fixed finite condensate density, the local occupation of
$s_{\mathcal{O}}$ is
finite, so the primary-state number-resolvent criterion of
\cite{DetlevBuchholz004} is not satisfied.
Along a family for which the local macroscopic-loop density in
$\mathcal{O}$ diverges, the local-regular-space criterion of
\cite{DetlevBuchholz005} gives
$$\set{f\in\fun{\lp^2}{\mathcal{O}}}
{\int_{\mathcal{O}}f(x)\opdmsr{x}=0}$$
with singular line $\fldcmp s_{\mathcal{O}}$.
For every $f\in\fun{\lp^2}{\mathcal{O}}$ with
$\bkt{s_{\mathcal{O}}}{f}_{\fun{\lp^2}{\mathcal{O}}}\ne0$, the
number-resolvent criterion of \cite{DetlevBuchholz004} is satisfied.
\end{prop}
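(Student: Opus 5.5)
The proof translates the loop statements into the local occupation statements already proved, and reuses those earlier arguments. No new analysis is needed.

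\emph{Step 1: identify the local line.} By Proposition \ref{expedition0019611}, the macroscopic-winding density equals the zero-mode excess $\smnumberdensity_{\txtbsn,0}(\sminvtemperature)$. By Proposition \ref{expedition0019612}, it enters the realified covariance only through $\faftr{f}(0)$. Restrict this rank-one covariance to $\fun{\lp^2}{\mathcal{O}}$. For $f$ supported in $\mathcal{O}$, $\faftr{f}(0)$ is proportional to $\int_{\mathcal{O}}f(x)\opdmsr{x}$, which by \eqref{expedition0012891} is proportional to $\bkt{s_{\mathcal{O}}}{f}$. Hence the macroscopic-loop contribution is the rank-one form $\smnumberdensity_{\txtbsn,0}(\sminvtemperature)\absvol{\mathcal{O}}\abs{\bkt{s_{\mathcal{O}}}{f}}^2$. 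Its range is $\fldcmp s_{\mathcal{O}}$, as in Proposition \ref{expedition0012892}.

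\emph{Step 2: finite density.} At fixed finite condensate density, the local occupation of $s_{\mathcal{O}}$ carried by macroscopic loops is $n_C(\mathcal{O})=\smnumberdensity_{\txtbsn,0}(\sminvtemperature)\absvol{\mathcal{O}}<\infty$, as in \eqref{expedition0012896}. The finite-winding part is locally bounded. The spectral-measure positivity argument \eqref{expedition0019744} of Proposition \ref{expedition0012895} then applies verbatim. The number resolvent is $\int(\mu+t)^{-1}\opdmsr{E_{s_{\mathcal{O}}}(t)}>0$, so \eqref{expedition0019101} fails.

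\emph{Step 3: divergent family.} Consider a family whose local macroscopic-loop density $\smnumberdensity^{(\sigma)}_{\txtbsn,0}(\sminvtemperature)\absvol{\mathcal{O}}$ diverges. Decompose $N_f=m_\sigma(f)+C_{\sigma,f}+Y_{\sigma,f}$ exactly as in \eqref{expedition0019750}. Here $m_\sigma(f)$ is the macroscopic-loop occupation, and $Y_{\sigma,f}$ is the finite-winding (thermal-cloud) occupation.

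\begin{itemize}
\item If $\bkt{s_{\mathcal{O}}}{f}=0$, then $m_\sigma(f)=0$. The occupation stays bounded, provided the finite-winding local covariance is uniformly bounded along the family. So $f$ lies in the regular space.
\item If $\bkt{s_{\mathcal{O}}}{f}\ne0$, then $m_\sigma(f)\to\infty$ and the occupation is unbounded.
\end{itemize}

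This gives the regular space $\set{f}{\int_{\mathcal{O}}f=0}$ and singular line $\fldcmp s_{\mathcal{O}}$, by the same argument as the probabilistic local-regular-space proposition. For $f$ with $\bkt{s_{\mathcal{O}}}{f}\ne0$, the concentration estimates \eqref{expedition0019751}--\eqref{expedition0019754} then give the vanishing \eqref{expedition0012898}.

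\emph{Main obstacle.} The delicate point is justifying that the finite-winding (nonzero-mode) local occupation stays $O(1)$ uniformly along the family. This is what places the orthogonal complement of $s_{\mathcal{O}}$ in the regular space. I would take it as the same standing hypothesis used in Theorem \ref{expedition0012897}, namely local boundedness of the nonzero-mode thermal covariance. I would support it by the finite-$A$ computation in the proof of Proposition \ref{expedition0019611}: at $\smchemicalpotential_{\mathrm{sel}}=\lambda\bar{\smnumberdensity}_{\txtbsn}$, the finite-winding intensity is bounded by the critical ideal-gas intensity, which is finite for $d\ge3$.
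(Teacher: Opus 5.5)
Your proposal is correct and follows the paper's proof. The paper identifies the macroscopic-winding density with $\smnumberdensity_{\txtbsn,0}(\sminvtemperature)$ via Proposition \ref{expedition0019611} and uses \eqref{expedition0012891} to obtain the rank-one local contribution $\smnumberdensity_{\txtbsn,0}(\sminvtemperature)\absvol{\mathcal{O}}\abs{\bkt{s_{\mathcal{O}}}{f}}^2$; it then reuses the spectral-measure argument of Proposition \ref{expedition0012895} and the decomposition and escape estimate \eqref{expedition0019750}--\eqref{expedition0019754} of Theorem \ref{expedition0012897}, exactly as you do. You are also right that the regular-space part needs uniform local boundedness of the nonzero-mode covariance: the statement omits this proviso, and the paper imports it only implicitly through its reference to Theorem \ref{expedition0012897}.
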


This proposition is the Brownian-loop counterpart of Proposition \ref{expedition0012895} and Theorem \ref{expedition0012897}. The fixed finite-density assertion corresponds to Proposition \ref{expedition0012895}, while the divergent-family assertions correspond to Theorem \ref{expedition0012897}.

\begin{proof}
Proposition \ref{expedition0019611} identifies the macroscopic-winding density
with $\smnumberdensity_{\txtbsn,0}(\sminvtemperature)$.
Using the local constant vector identity \eqref{expedition0012891}, the local
macroscopic-loop contribution tested by
$f\in\fun{\lp^2}{\mathcal{O}}$ is
$$\smnumberdensity_{\txtbsn,0}(\sminvtemperature)
\absvol{\mathcal{O}}
\abs{\bkt{s_{\mathcal{O}}}{f}_{\fun{\lp^2}{\mathcal{O}}}}^2.$$
Thus macroscopic winding selects the normalized constant vector in
\eqref{expedition0012890}.
At fixed finite density, the strict number-resolvent assertion is the same
spectral-measure argument as in the proof of Proposition
\ref{expedition0012895}.
For a divergent family, the regular-space assertion and the number-resolvent
vanishing are the same local occupation decomposition and escape estimate as
in the proof of Theorem \ref{expedition0012897}, with the above
macroscopic-loop contribution in place of the zero-mode contribution.
\end{proof}

\subsection{Standard One-Particle Density-Matrix Open Paths}\label{standard-one-particle-density-matrix-open-paths}

The standard one-particle density-matrix formulation is a comparison with the usual open-path representation. It is placed after the zero-mode ODLRO and proper-condensate statements because the main ODLRO notion of this paper is the resolvent-algebra zero-mode ODLRO statement of Proposition \ref{expedition0012913}. Reduced density matrices have the loop-gas form with one open Brownian path joining the two arguments and all remaining paths closed.

\begin{prop}[Finite-volume open-path representation]\label{expedition0019670}
For $x,x'\in I_L^d$, the one-particle reduced density matrix
$$\gamma_{1,L}(x,x')
=
\fun{\oastate[\psi_{\txtmeanfield,\sminvtemperature,\smchemicalpotential,L}]}
{\opfockcr_{\txtfock}(x)\opfockan_{\txtfock}(x')}$$
is given by
\begin{equation}\label{expedition0019609}
\begin{aligned}
\gamma_{1,L}(x,x')
&=
\frac{1}{\smpartitionfunc_{\sminvtemperature,\lambda,\smchemicalpotential,L}}
\sum_{\ell_0\geq1}
\napiernum^{\ell_0\sminvtemperature\smchemicalpotential}
\int\opdmsr{\msrbb{W}^{\ell_0\sminvtemperature}_{x'x}}(\omega_0)
\\ 
&\quad\times
\sum_{n=0}^{\infty}\frac{1}{n!}
\sum_{\ell_1,\ldots,\ell_n\geq1}
\prod_{j=1}^{n}
\rbk{
\frac{\napiernum^{\ell_j\sminvtemperature\smchemicalpotential}}{\ell_j}
\int_{I_L^d}\opdmsr{x_j}
\int\opdmsr{\msrbb{W}^{\ell_j\sminvtemperature}_{x_jx_j}}(\omega_j)}
\\ 
&\quad\times
\fnexp{-\frac{\sminvtemperature\lambda}{2V_L}
\rbk{\ell_0+\sum_{j=1}^{n}\ell_j}^{2}} .
\end{aligned}
\end{equation}
\end{prop}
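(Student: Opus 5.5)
The proof will reduce $\gamma_{1,L}$ to a sum over total particle numbers, using the same grouping by total winding as in the proof of Proposition \ref{expedition0019610}. The key point is that the mean-field weight is a function of the number operator. It therefore factors through the sectors of fixed $N=\opfocknumber_{\txtbsn,L}$.

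First, the observable $\opfockcr_{\txtfock}(x)\opfockan_{\txtfock}(x')$ conserves particle number. Since $\napiernum^{-\sminvtemperature\physham_{\txtbsn,\txtmeanfield,\smchemicalpotential,L}}$ is block diagonal in $N$ (Proposition \ref{expedition0012679}), we get
$$\sqfun{\trace}{\napiernum^{-\sminvtemperature\physham_{\txtbsn,\txtmeanfield,\smchemicalpotential,L}}\opfockcr_{\txtfock}(x)\opfockan_{\txtfock}(x')}=\sum_{W\geq1}\napiernum^{-\frac{\sminvtemperature\lambda}{2V_L}W^2}\napiernum^{\sminvtemperature\smchemicalpotential W}\,\sqfun{\trace_{W}}{\napiernum^{-\sminvtemperature\physham_{\txtbsn,\txtfr,0,L}}\opfockcr_{\txtfock}(x)\opfockan_{\txtfock}(x')}.$$
Here $\trace_W$ is the trace over the $W$-particle sector.

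Second, the free factor is handled through its generating function. Introduce a fugacity variable $u$ exactly as in the proof of Proposition \ref{expedition0019610}, with $u^W$ marking the sector. The standard ideal-gas identity then reads
$$\sqfun{\trace}{u^{N}\napiernum^{-\sminvtemperature\physham_{\txtbsn,\txtfr,0,L}}\opfockcr(x)\opfockan(x')}=\Bigl(\sum_{\ell_0\geq1}u^{\ell_0}\Gamma^{(L)}_{\ell_0\sminvtemperature}(x',x)\Bigr)\,\sqfun{\trace}{u^N\napiernum^{-\sminvtemperature\physham_{\txtbsn,\txtfr,0,L}}}.$$
One proves it in the eigenbasis $e_k$. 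There $\langle \opfockcr_k\opfockan_k\rangle$ with weight $u^N$ equals $\frac{z_k}{1-z_k}\prod_{k'}(1-z_{k'})^{-1}$, where $z_k=u\napiernum^{-\sminvtemperature\physham[h]_{\txtparticle,0,L}(k)}$. Expanding $\frac{z}{1-z}=\sum_{\ell_0\geq1}z^{\ell_0}$ and summing $\overline{e_k(x')}e_k(x)$ (or the reverse order, according to the convention for $\Gamma^{(L)}$) reproduces the heat kernel $\Gamma^{(L)}_{\ell_0\sminvtemperature}(x',x)$. The factor $\sqfun{\trace}{u^N\napiernum^{-\sminvtemperature\physham_{\txtbsn,\txtfr,0,L}}}$ is the closed-loop exponential of Proposition \ref{expedition0019604} with $\napiernum^{\ell\sminvtemperature\smchemicalpotential'}$ replaced by $u^\ell$.

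Third, insert the Brownian-bridge representations from the Appendix. The open-path formula $\Gamma^{(L)}_{t}(x',x)=\int\opdmsr{\msrbb{W}^{t}_{x'x}}$ from \eqref{expedition0019761} handles the open path, and the closed-loop trace formula \eqref{expedition0019762} handles the loops. Expand the exponential as a sum over $n$ loops. The coefficient of $u^W$ is then the sum over $\ell_0,\ell_1,\dots,\ell_n$ with $\ell_0+\sum_j\ell_j=W$. Multiplying by $\napiernum^{\sminvtemperature\smchemicalpotential W}=\prod_{j\geq0}\napiernum^{\ell_j\sminvtemperature\smchemicalpotential}$ and by the mean-field factor $\napiernum^{-\sminvtemperature\lambda W^2/(2V_L)}$, then summing over $W$, gives exactly \eqref{expedition0019609}. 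Dividing by $\smpartitionfunc_{\sminvtemperature,\lambda,\smchemicalpotential,L}$ finishes the computation.

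The main obstacle is justifying the rearrangements, because a general real $\smchemicalpotential$ does not make the free series converge. I would prove the generating-function identity as an identity of formal power series in $u$, that is, sector by sector, where every sum is finite or absolutely convergent. The mean-field weight $\napiernum^{-cW^2}$ then makes the sum over $W$ absolutely convergent, since the $W$-sector free trace grows at most like $\napiernum^{CW}$. All terms are nonnegative, so Tonelli permits the regrouping in $n$ and the $\ell_j$. A secondary point is fixing the ordering of $x$ and $x'$ in the bridge, which is a matter of convention.
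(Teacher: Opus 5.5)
Your argument is correct, but it takes a different route from the paper.

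\textbf{The paper's route.} It applies the scalar Hubbard--Stratonovich identity inside the numerator trace $\Xi_L(x,x')$. It then evaluates the free trace with one insertion at the complex chemical potential $\mu+\mathrm{i}s/\beta$ in the eigenbasis; this is the same $z_k(1-z_k)^{-2}\prod_{p\neq k}(1-z_p)^{-1}$ computation you do. Next it expands the remaining free partition function into closed loops by Proposition \ref{expedition0019604}. Only at the end does it carry out the Gaussian $s$-integral, which turns the phase $e^{\mathrm{i}s(\ell_0+\sum_j\ell_j)}$ back into the quadratic weight.

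\textbf{Your route.} You use directly that $e^{-\beta\lambda N^2/(2V_L)}$ is a scalar on each number sector, and you extract coefficients of $u^W$. This is the grouping by total winding from the proof of Proposition \ref{expedition0019610}. In effect, your sector decomposition skips the Hubbard--Stratonovich integral and its inverse Fourier transform.

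\textbf{What each buys.} The paper's route keeps the proof parallel to Theorem \ref{expedition0019606} and to the auxiliary-field picture that motivates the section. However, it needs $\operatorname{Re}(\mu+\mathrm{i}s/\beta)=\mu$ to lie in the free trace-class domain, which is the section's standing assumption $\mu<0$. It also needs an interchange of the $s$-integral with the loop sums. Your route needs no analytic continuation. The $u^W$-coefficients are nonnegative and grow at most exponentially, since the generating function is analytic on $\lvert u\rvert<1$. The factor $e^{\beta\mu W-\beta\lambda W^2/(2V_L)}$ then makes the $W$-sum converge for every real $\mu$, and Tonelli justifies the regrouping. Your route therefore actually yields the formula beyond the paper's range of $\mu$.

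\textbf{Small remarks.} The nonnegativity you invoke uses $\Gamma^{(L)}_t(x',x)>0$ (a periodized Gaussian). The bound $\lvert\Gamma^{(L)}_t(x',x)\rvert\le\Gamma^{(L)}_t(x,x)$, immediate from \eqref{expedition0019764}, would suffice instead. The ordering of $x$ and $x'$ is immaterial, because the periodic kernel is symmetric.
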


This is the mean-field specialization of the open-path representation in \cite[Eq.~(41)]{FrohlichKnowlesSchleinSohinger2}. The open path \(\omega_0\) runs from \(x'\) to \(x\) and has duration \(\ell_0\sminvtemperature\). The mean-field weight in \eqref{expedition0019609} counts this open path in the same total winding as the closed loops. The closed-loop part is the same Brownian-loop expansion as Proposition \ref{expedition0019604}, and the mean-field weighting is the same Hubbard--Stratonovich integration as in Theorem \ref{expedition0019606}. For BEC, this proposition is only the standard one-particle density-matrix comparison. The main BEC signal in the paper remains the zero-mode ODLRO and local proper-condensate information in Propositions \ref{expedition0019612} and \ref{expedition0019613}.

\begin{proof}
Define the unnormalized numerator
$$\Xi_L(x,x')
=
\sqfun{\trace_{\spfock_{\txtbsn,L}}}
{\opfockcr_{\txtfock}(x)\opfockan_{\txtfock}(x')
\napiernum^{-\sminvtemperature
\physham_{\txtbsn,\txtmeanfield,\smchemicalpotential,L}}}.$$
Then
$$\gamma_{1,L}(x,x')
=
\frac{\Xi_L(x,x')}
{\smpartitionfunc_{\sminvtemperature,\lambda,\smchemicalpotential,L}}.$$
The operator
$\opfockcr_{\txtfock}(x)\opfockan_{\txtfock}(x')$ preserves particle number.
Applying the scalar identity used in \eqref{expedition0019603} inside this trace
gives
$$\begin{aligned}
\Xi_L(x,x')
&=
\rbk{\frac{V_L}{2\pi\sminvtemperature\lambda}}^{1/2}
\int_{\fldreal}
\fnexp{-\frac{V_L s^2}{2\sminvtemperature\lambda}}
\sqfun{\trace_{\spfock_{\txtbsn,L}}}
{\opfockcr_{\txtfock}(x)\opfockan_{\txtfock}(x')
\napiernum^{-\sminvtemperature
\rbk{\physham_{\txtbsn,\txtfr,0,L}
-\smchemicalpotential_s\opfocknumber_{\txtbsn,L}}}}
\opdmsr{s},
\end{aligned}$$
where
$\smchemicalpotential_s
=
\smchemicalpotential+\imunit\frac{s}{\sminvtemperature}$.
Let
$K_s
=
\napiernum^{-\sminvtemperature
\rbk{\physham[h]_{\txtparticle,0,L}-\smchemicalpotential_s}}$.
In the finite-dimensional one-particle space,
the free Fock trace with one creation-annihilation insertion is
$$\begin{aligned}
&\sqfun{\trace_{\spfock_{\txtbsn,L}}}
{\opfockcr_{\txtfock}(x)\opfockan_{\txtfock}(x')
\napiernum^{-\sminvtemperature
\rbk{\physham_{\txtbsn,\txtfr,0,L}
-\smchemicalpotential_s\opfocknumber_{\txtbsn,L}}}}
=
\smpartitionfunc_{\sminvtemperature,0,\smchemicalpotential_s,L}
\sum_{\ell_0\geq1}
\napiernum^{\ell_0\sminvtemperature\smchemicalpotential_s}
\Gamma^{(L)}_{\ell_0\sminvtemperature}(x',x)
\\ 
&=
\smpartitionfunc_{\sminvtemperature,0,\smchemicalpotential_s,L}
\sum_{\ell_0\geq1}
\napiernum^{\ell_0\sminvtemperature\smchemicalpotential_s}
\int\opdmsr{\msrbb{W}^{\ell_0\sminvtemperature}_{x'x}}(\omega_0).
\end{aligned}$$
The last equality is the bridge-measure normalization
\eqref{expedition0019761}.
The heat-kernel sum follows by diagonalizing $K_s$:
for an eigenbasis $\setone{e_k}$,
put
$$r_{s,k}
=
\napiernum^{-\sminvtemperature
\rbk{\physham[h]_{\txtparticle,0,L}(k)-\smchemicalpotential_s}}.$$
Since $\opreal\smchemicalpotential_s$ is in the free trace-class domain, one has
$\abs{r_{s,k}}<1$ for every $k$.
The occupation-number basis
$\ket{\mathsf{m}}=\ket{\fml{m_k}{k\in\setlattice_L^d}}$
diagonalizes
$\fun{\opfocksndqntdiff_{\txtbsn}}
{\physham[h]_{\txtparticle,0,L}-\smchemicalpotential_s}$, and
$$
\napiernum^{-\sminvtemperature
\rbk{\physham_{\txtbsn,\txtfr,0,L}
-\smchemicalpotential_s\opfocknumber_{\txtbsn,L}}}
\ket{\mathsf{m}}
=
\rbk{\prod_{p\in\setlattice_L^d}r_{s,p}^{m_p}}
\ket{\mathsf{m}}.
$$
Expanding the field distributions in this basis gives
$$
\opfockcr_{\txtfock}(x)\opfockan_{\txtfock}(x')
=
\sum_{k,q\in\setlattice_L^d}
\cmpconj{e_k(x)}e_q(x')
\fun{\opfockcr_{\txtfock}}{e_k}
\fun{\opfockan_{\txtfock}}{e_q}.
$$
Only the diagonal terms $k=q$ survive in the trace over the occupation-number
basis, and on $\ket{\mathsf{m}}$ they contribute $m_k$.
Hence it holds that
$$\begin{aligned}
&\sqfun{\trace_{\spfock_{\txtbsn,L}}}
{\opfockcr_{\txtfock}(x)\opfockan_{\txtfock}(x')
\napiernum^{-\sminvtemperature
\rbk{\physham_{\txtbsn,\txtfr,0,L}
-\smchemicalpotential_s\opfocknumber_{\txtbsn,L}}}}
=
\sum_{k\in\setlattice_L^d}
\cmpconj{e_k(x)}
e_k(x')
\sum_{m\geq0}
m
\rbk{r_{s,k}}^{m}
\prod_{q\neq k}
\sum_{m_q\geq0}
\rbk{r_{s,q}}^{m_q}
\\ 
&=
\sum_{k\in\setlattice_L^d}
\cmpconj{e_k(x)}
e_k(x')
\frac{r_{s,k}}{\rbk{1-r_{s,k}}^2}
\prod_{q\neq k}
\frac{1}{1-r_{s,q}}
=
\smpartitionfunc_{\sminvtemperature,0,\smchemicalpotential_s,L}
\sum_{k\in\setlattice_L^d}
\cmpconj{e_k(x)}
e_k(x')
\frac{\napiernum^{-\sminvtemperature
\rbk{\physham[h]_{\txtparticle,0,L}(k)-\smchemicalpotential_s}}}
{1-\napiernum^{-\sminvtemperature
\rbk{\physham[h]_{\txtparticle,0,L}(k)-\smchemicalpotential_s}}},
\end{aligned}$$
where
$\smpartitionfunc_{\sminvtemperature,0,\smchemicalpotential_s,L}
=
\prod_{q\in\setlattice_L^d}\frac{1}{1-r_{s,q}}$.
Finally, noting that $$\frac{r_{s,k}}{1-r_{s,k}}
=
\sum_{\ell_0\geq1}r_{s,k}^{\ell_0}
=
\sum_{\ell_0\geq1}
\napiernum^{\ell_0\sminvtemperature\smchemicalpotential_s}
\napiernum^{-\ell_0\sminvtemperature
\physham[h]_{\txtparticle,0,L}(k)},$$
substitution gives
$$\begin{aligned}
\sum_{k\in\setlattice_L^d}
\cmpconj{e_k(x)}
e_k(x')
\frac{r_{s,k}}{1-r_{s,k}}
=
\sum_{\ell_0\geq1}
\napiernum^{\ell_0\sminvtemperature\smchemicalpotential_s}
\sum_{k\in\setlattice_L^d}
\napiernum^{-\ell_0\sminvtemperature
\physham[h]_{\txtparticle,0,L}(k)}
\cmpconj{e_k(x)}
e_k(x')
=
\sum_{\ell_0\geq1}
\napiernum^{\ell_0\sminvtemperature\smchemicalpotential_s}
\Gamma^{(L)}_{\ell_0\sminvtemperature}(x',x),
\end{aligned}$$
which is the heat-kernel expression above.

Expanding the remaining free factor by \eqref{expedition0019605} and applying
the same Gaussian integration as in the proof of Theorem
\ref{expedition0019606} gives the closed-loop factors and the quadratic
weight in \eqref{expedition0019609}.
Substitution into $\gamma_{1,L}(x,x')=\Xi_L(x,x')/
\smpartitionfunc_{\sminvtemperature,\lambda,\smchemicalpotential,L}$ gives
\eqref{expedition0019609}.
\end{proof}

\begin{prop}[Standard one-particle density-matrix BEC criterion]\label{expedition0019671}
Let $\Gamma_{1,L}$ be the one-particle density operator with kernel
$\gamma_{1,L}$ from Proposition \ref{expedition0019670}, and let $e_0$ be the
normalized constant vector in $P_L\sphilb{H}_{\txtparticle}$.
Then
$$
\frac{1}{V_L}
\bkt{e_0}{\Gamma_{1,L}e_0}_{P_L\sphilb{H}_{\txtparticle}}
=
\sqfun{\prbexp_{\msrbb{P}_{\txtmeanfield,L}}}{R_{0,L}}.
$$
In the condensed case \eqref{expedition0012606}, this quantity converges to
$\smnumberdensity_{\txtbsn,0}(\sminvtemperature)>0$.
In the normal case \eqref{expedition0012607}, it converges to $0$.
Thus the standard one-particle density-matrix criterion detects BEC exactly
when $\smnumberdensity_{\txtbsn,0}(\sminvtemperature)>0$, with the constant
zero-mode vector as the macroscopic eigenvector.
\end{prop}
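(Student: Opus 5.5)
The identity $V_L^{-1}\bkt{e_0}{\Gamma_{1,L}e_0}=\sqfun{\prbexp_{\msrbb{P}_{\txtmeanfield,L}}}{R_{0,L}}$ will be proved in the occupation-number basis, and the two limit statements then follow from Proposition \ref{expedition0019803}. Since $e_0=V_L^{-1/2}\fndef{I_L^d}$ is the $k=0$ element of the eigenbasis $\setone{e_k}$ of $\physham[h]_{\txtparticle,0,L}$, the quadratic form of $\Gamma_{1,L}$ on $e_0$ is
$$\int\int \cmpconj{e_0(x)}\,\gamma_{1,L}(x,x')\,e_0(x')\opdmsr{x}\opdmsr{x'}
=\fun{\oastate[\psi_{\txtmeanfield,\sminvtemperature,\smchemicalpotential,L}]}{\fun{\opfockcr_{\txtfock}}{e_0}\fun{\opfockan_{\txtfock}}{e_0}}.$$
This uses the expansion $\opfockcr_{\txtfock}(x)\opfockan_{\txtfock}(x')=\sum_{k,q}\cmpconj{e_k(x)}e_q(x')\fun{\opfockcr_{\txtfock}}{e_k}\fun{\opfockan_{\txtfock}}{e_q}$ from the proof of Proposition \ref{expedition0019670}, together with orthonormality. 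I will fix the conjugation convention so that the $k=q=0$ term is the one that survives; since $e_0$ is real, the convention does not affect the result.

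Next, the Hamiltonian \eqref{expedition0012600} is diagonal in the basis $\ket{\mathsf{n}}$, and $\fun{\opfockcr_{\txtfock}}{e_0}\fun{\opfockan_{\txtfock}}{e_0}\ket{\mathsf{n}}=\mathsf{n}_0\ket{\mathsf{n}}$. The Gibbs expectation therefore equals $\sum_{\mathsf{n}}\mathsf{n}_0\,\fun{\msrbb{P}_{\txtmeanfield,L}}{\setone{\mathsf{n}}}$ by \eqref{expedition0019668}. Dividing by $V_L$ gives $\sqfun{\prbexp_{\msrbb{P}_{\txtmeanfield,L}}}{R_{0,L}}$. Finiteness is not an issue: the factor $\fnexp{-\sminvtemperature\lambda N_L^2/(2V_L)}$ makes every moment of $N_L$ summable.

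For the limits, Proposition \ref{expedition0019803} gives $R_{0,L}\to\smnumberdensity_{\txtbsn,0}(\sminvtemperature)$ in probability in the condensed case and $R_{0,L}\to0$ in the normal case. Convergence in probability does not by itself give convergence of expectations, so the main obstacle is uniform integrability of $R_{0,L}$. I plan to get it from $0\le R_{0,L}\le R_L$ and the exponential tail bound in the proof of Proposition \ref{expedition0019755}. Specifically, I will bound $\sqfun{\prbexp}{R_L\fndef{\setone{R_L>M}}}$ by summing the closed-set estimate over the shells $\setone{R_L\in[j,j+1]}$, $j\ge M$. This requires the mean-field rate $J-m$ to grow at least linearly, which holds because $\Phi(r)=\sminvtemperature\lambda r^2/2$ dominates. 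If that estimate is not uniform enough in $L$, I would instead compare directly with the free law, in which $\sqfun{\prbexp}{\napiernum^{qN_L}}$ is finite for small $q$; the quadratic tilt then makes $\sqfun{\prbexp_{\msrbb{P}_{\txtmeanfield,L}}}{R_L^2}$ bounded uniformly in $L$.

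With uniform integrability established, the expectations converge to $\smnumberdensity_{\txtbsn,0}(\sminvtemperature)$ in the condensed case and to $0$ in the normal case. The final sentence of the statement then follows. In the condensed case $\Gamma_{1,L}$ has the eigenvalue $\bkt{e_0}{\Gamma_{1,L}e_0}$ of order $V_L$: $\Gamma_{1,L}$ commutes with $\physham[h]_{\txtparticle,0,L}$ by gauge and translation invariance, so $e_0$ is an eigenvector. In the normal case this eigenvalue is $o(V_L)$.
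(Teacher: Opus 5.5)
Your finite-volume identity is the paper's argument. The paper also pairs $\gamma_{1,L}$ with the real vector $e_0$ to get $\omega(a^*(e_0)a(e_0))$, uses that the Gibbs state is diagonal in the occupation basis with $a^*(e_0)a(e_0)=\mathsf{n}_0$, and reads off $V_L\,\mathbb{E}_{\mathrm{mf},L}[R_{0,L}]$. You differ at the limit. The paper simply substitutes the in-probability limit of Proposition~\ref{expedition0019803} into the expectation. So your point that uniform integrability is needed is a genuine addition, and the reduction $0\le R_{0,L}\le R_L$ is the right one.

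Neither of your routes closes that step as written, though. The closed-set bound from the proof of Proposition~\ref{expedition0019755} holds for each fixed closed set only beyond an $L$-threshold that depends on the set. It therefore cannot be summed over infinitely many shells $[j,j+1]$. In the second route, $\mathbb{E}_{\mathrm{fr}}[e^{qN_L}]=e^{V_L\Lambda_L(q)}$ is exponentially large in $V_L$. What actually controls the tilted ratio is a lower bound on its exponentially small normaliser.

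Such a bound is elementary and uniform in $t$:
\begin{itemize}
\item Fix $\mu_0<0$ and set $\psi(r)=\beta(\mu-\mu_0)r-\beta\lambda r^2/2$. By \eqref{expedition0019668}, the law of $R_L$ is the free density law at $\mu_0$ reweighted by $e^{V_L\psi(r)}$; for $\mu<0$ and $\mu_0=\mu$ this is \eqref{expedition0012603}.
\item The free law of large numbers at $\mu_0$ (mean convergence, plus variance $V_L^{-2}\sum_k n_k(1+n_k)\to0$) bounds the normaliser below by $\tfrac12 e^{V_L(\psi(r_0)-\eta)}$ for large $L$. Here $r_0$ is the free density at $\mu_0$.
\item Hence $K_{\beta,\lambda,\mu,L}([t,\infty))\le 2e^{V_L(\psi(t)-\psi(r_0)+\eta)}$ for every $t$ beyond the maximiser of $\psi$.
\item Integrating in $t$ gives $\mathbb{E}_{\mathrm{mf},L}[R_L\mathbf{1}_{\{R_L\ge M\}}]\to0$ once $\psi(M)<\psi(r_0)-\eta$, which is exactly what your shell sum needs.
\end{itemize}

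For $\mu<0$ there is a one-line alternative, Chebyshev's association inequality. The weight $e^{-\beta\lambda V_L r^2/2}$ is decreasing, so $\mathbb{E}_{\mathrm{mf},L}[R_L^2]\le\mathbb{E}_{\mathrm{fr},L}[R_L^2]$, which is bounded in $L$. But the same inequality applied to $r$ gives $\mathbb{E}_{\mathrm{mf},L}[R_L]\le\mathbb{E}_{\mathrm{fr},L}[R_L]$, hence $\bar\rho\le\rho_{\mathrm{free}}(\mu)<\rho_{\mathrm{c}}(\beta)$. So for $\mu<0$ the condensed case is empty, and only the general-$\mu$ bound is relevant to the condensed claim.

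Finally, gauge invariance plays no role in your eigenvector remark. Translation invariance already makes every $e_k$ an eigenvector of $\Gamma_{1,L}$; equivalently, this is the diagonal $k=q$ structure exhibited in the proof of Proposition~\ref{expedition0019670}.
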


This is the standard density-matrix characterization of BEC. Its probabilistic counterpart is Proposition \ref{expedition0019803}, and its Brownian-loop counterpart is the macroscopic-winding identification in Proposition \ref{expedition0019611}. It is distinct from the main zero-mode ODLRO criterion of Proposition \ref{expedition0019612}, whose resolvent-algebra counterpart is Proposition \ref{expedition0012913}, but both criteria are fed by the same scalar zero-mode excess in \eqref{expedition0012606}.

\begin{proof}
The integral kernel $\gamma_{1,L}$ represents the one-particle density operator
$\Gamma_{1,L}$.
Therefore
$$\begin{aligned}
\bkt{e_0}{\Gamma_{1,L}e_0}_{P_L\sphilb{H}_{\txtparticle}}
=
\int_{I_L^d}\opdmsr{x}
\int_{I_L^d}\opdmsr{x'}
\cmpconj{e_0(x)}
\gamma_{1,L}(x,x')
e_0(x')
=
\fun{\oastate[\psi_{\txtmeanfield,\sminvtemperature,\smchemicalpotential,L}]}
{\fun{\opfockcr_{\txtfock}}{e_0}
\fun{\opfockan_{\txtfock}}{e_0}}.
\end{aligned}$$
The finite-volume Gibbs state is diagonal in the occupation-number basis.
Since $e_0$ is the $k=0$ one-particle eigenvector,
$$\fun{\opfockcr_{\txtfock}}{e_0}
\fun{\opfockan_{\txtfock}}{e_0}
=
\mathsf{n}_0$$
on an occupation configuration $\mathsf{n}$.
Using the probability measure \eqref{expedition0019668}, this gives
$$\begin{aligned}
\bkt{e_0}{\Gamma_{1,L}e_0}_{P_L\sphilb{H}_{\txtparticle}}
=
\sum_{\mathsf{n}\in\mathcal{O}_L}
\mathsf{n}_0
\fun{\msrbb{P}_{\txtmeanfield,L}}{\setone{\mathsf{n}}}
=
\sqfun{\prbexp_{\msrbb{P}_{\txtmeanfield,L}}}{\mathsf{n}_0}
=
V_L\sqfun{\prbexp_{\msrbb{P}_{\txtmeanfield,L}}}{R_{0,L}},
\end{aligned}$$
where $R_{0,L}(\mathsf{n})=\mathsf{n}_0/V_L$.
Division by $V_L$ proves the finite-volume identity.

The open-path formula \eqref{expedition0019609} is the kernel version of the
same operator identity.
The thermodynamic limit of $R_{0,L}$ is Proposition \ref{expedition0019803};
equivalently, Proposition \ref{expedition0019611} identifies the
macroscopic-winding contribution with the same zero-mode excess.
Substituting this limit in the finite-volume identity gives the stated
standard one-particle density-matrix criterion.
\end{proof}

\subsection{Brownian-Loop Mean-Field Conclusions}\label{brownian-loop-mean-field-conclusions}

The Brownian-loop formulation gives a path-integral realization of the same finite-volume mean-field measure, not an additional BEC criterion. The constant two-body kernel collapses the Hubbard--Stratonovich field to the single scalar variable in Proposition \ref{expedition0019602}. After the free factor is expanded as Ginibre's loop gas in Proposition \ref{expedition0019604}, Theorem \ref{expedition0019606} turns the scalar integration into the quadratic interaction of the total winding. Proposition \ref{expedition0019608} then identifies this quadratic total-winding weight with the occupation-number density tilt \eqref{expedition0012603}. Thus the loop gas and the occupation-number law are two descriptions of the same finite-volume probability measure.

The BEC information appears in the loop picture only after this identification. By Proposition \ref{expedition0019610}, the total winding density has the same law as the particle density and therefore collapses to \(\bar{\smnumberdensity}_{\txtbsn}\) by Theorem \ref{expedition0012647}. In the condensed case, Proposition \ref{expedition0019611} separates this selected density into the finite-winding critical density and the macroscopic-winding density. The latter is exactly the zero-mode excess \(\smnumberdensity_{\txtbsn,0}(\sminvtemperature)\). Consequently the non-decaying covariance in Proposition \ref{expedition0019612} is the Brownian-loop expression of the zero-mode ODLRO term in Proposition \ref{expedition0012913}.

The open-path formula in Proposition \ref{expedition0019670} is therefore a standard density-matrix comparison placed after the zero-mode and proper-condensate statements. Proposition \ref{expedition0019671} recovers the usual macroscopic occupation of the constant one-particle vector from the same zero-mode excess, while Proposition \ref{expedition0019613} translates the two Buchholz local criteria into local occupation statements for the macroscopic-loop contribution. In this form, the Brownian-loop section confirms the separation used throughout the paper: finite-density BEC produces a zero-mode excess and an ODLRO term, whereas the strict proper-condensate number-resolvent conclusion requires local occupation escape.

\appendix

\section{Appendix}\label{appendix}

\subsection{Preparation for Probability for Mean-field}\label{preparation-for-probability-for-mean-field}

\begin{prop}[Exponential Markov estimate]\label{expedition0012648}
Let $\mu$ be a probability measure on a measurable space $E$, let $\Phi:E\to\fldreal$ be measurable, and assume that
$$\int_E\napiernum^{\Phi(x)}\opdmsr{\mu(x)}<\infty.$$
For every measurable set $B\subset E$,
$$\mu(B)
\leq
\napiernum^{-\inf_{x\in B}\Phi(x)}
\int_E\napiernum^{\Phi(x)}\opdmsr{\mu(x)}.$$
In particular, for probability measures $\mu_L$ on $\fldreal_{\geq0}$ and speed $V_L$,
$$\mu_L(B)
\leq
\napiernum^{-V_L\inf_{r\in B}qr}
\int_{\fldreal_{\geq0}}
\napiernum^{qV_Lr}
\opdmsr{\mu_L(r)}.$$
For $q>0$ this gives the upper-tail estimate
$$\mu_L\rbk{\closedinterval{a}{\infty}}
\leq
\napiernum^{-qV_La}
\int_{\fldreal_{\geq0}}
\napiernum^{qV_Lr}
\opdmsr{\mu_L(r)},$$
and for $q<0$ this gives the lower-tail estimate
$$\mu_L\rbk{\closedinterval{0}{a}}
\leq
\napiernum^{-qV_La}
\int_{\fldreal_{\geq0}}
\napiernum^{qV_Lr}
\opdmsr{\mu_L(r)}.$$
\end{prop}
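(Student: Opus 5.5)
The argument is a one-line Chebyshev inequality applied to the nonnegative function $\napiernum^{\Phi}$, followed by specializations. Let $m=\inf_{x\in B}\Phi(x)$. If $m=-\infty$, the right-hand side is $+\infty$ and there is nothing to prove. If $B$ is empty, then $m=+\infty$ and both sides vanish. Otherwise $m$ is finite. For every $x\in B$ we have $\Phi(x)\geq m$, hence $\napiernum^{\Phi(x)-m}\geq1$ on $B$. This gives the pointwise bound $\fndef{B}\leq\napiernum^{\Phi-m}\fndef{B}\leq\napiernum^{\Phi-m}$ on all of $E$, using positivity of the exponential. Integrating against $\mu$ yields
$$\mu(B)\leq\napiernum^{-m}\int_E\napiernum^{\Phi}\opdmsr{\mu},$$
which is the first claim. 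The integrability hypothesis is used only to make the right-hand side meaningful as a finite bound; the inequality holds trivially if the integral is infinite.

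For the second display, take $E=\fldreal_{\geq0}$, $\mu=\mu_L$ and $\Phi(r)=qV_Lr$. Then $\inf_{r\in B}\Phi(r)=V_L\inf_{r\in B}qr$ because $V_L>0$, and the estimate is the general one verbatim.

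For the tail bounds I substitute the specific sets and compute the infimum.
\begin{itemize}
\item If $q>0$ and $B=\closedinterval{a}{\infty}$, the function $r\mapsto qr$ is increasing, so $\inf_{r\in B}qr=qa$. This gives the upper-tail estimate.
\item If $q<0$ and $B=\closedinterval{0}{a}$, the function $r\mapsto qr$ is decreasing, so its infimum on $\closedinterval{0}{a}$ is attained at $r=a$ and equals $qa$. This gives the lower-tail estimate.
\end{itemize}
The only point needing a remark is the degenerate case in the tail bounds: for $a<0$ the set $\closedinterval{0}{a}$ is empty, so that bound is trivial, and the upper-tail set is the whole half-line, which the general estimate handles.

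There is no real obstacle. The only care needed is the sign bookkeeping of $q$ when identifying where $qr$ attains its infimum on each half-interval, together with the conventions for $\pm\infty$ infima noted above.
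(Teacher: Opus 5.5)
Your proof is correct and is the same as the paper's: both integrate the pointwise bound $\fndef{B}\leq \napiernum^{-\inf_{B}\Phi}\,\napiernum^{\Phi}$ (Markov's inequality for $\napiernum^{\Phi}$) and then read off the special cases with $\Phi(r)=qV_Lr$. Your extra treatment of the degenerate cases ($B=\emptyset$, $\inf_B\Phi=-\infty$, $a<0$) is a harmless refinement that the paper leaves implicit.
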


\begin{proof}
This is Markov's inequality applied to the nonnegative random variable $\napiernum^{\Phi}$.
Indeed, on $B$ one has
$$\fndef{B}(x)
\leq
\napiernum^{-\inf_{y\in B}\Phi(y)}
\napiernum^{\Phi(x)}.$$
Integrating this pointwise inequality gives the first estimate.
The remaining estimates are the special cases $E=\fldreal_{\geq0}$ and $\Phi(r)=qV_Lr$.
\end{proof}

\begin{defn}[Exponential tightness]\label{expedition0018008}
Let $\fml{\mu_L}{L}$ be probability measures on a topological space $E$, and let $V_L\to\infty$ be the speed.
The family $\fml{\mu_L}{L}$ is exponentially tight with speed $V_L$ if, for every $M>0$, there is a compact set $C_M\subset E$ such that
$$\limsup_{L\to\infty}
\frac{1}{V_L}\log\mu_L(E\setminus C_M)
\leq
-M.$$
\end{defn}

\begin{defn}[Good rate function]\label{expedition0018009}
Let $E$ be a topological space.
A function $I:E\to\fldreal\cup\setone{\infty}$ is a rate function if it is lower semicontinuous.
It is a good rate function if, for every $a\in\fldreal$, the level set
$\set{x\in E}{I(x)\leq a}$
is compact.
\end{defn}

\begin{prop}[Gärtner--Ellis theorem and Legendre--Fenchel rate function]\label{expedition0012646}
Let $\fml{\mu_L}{L}$ be probability measures on $\fldreal_{\geq0}$, and let $V_L\to\infty$ be the speed.
Assume that, for every $q$ in an interval $\mathcal{D}\subset\fldreal$ and every $L$, the finite-volume exponential moments are finite and
$$\Lambda(q)
=
\lim_{L\to\infty}
\frac{1}{V_L}
\log
\int_{\fldreal_{\geq0}}
\napiernum^{qV_Lr}
\opdmsr{\mu_L(r)}$$
exists as a finite real number.
Define the Legendre--Fenchel transform
$$I(r)
=
\sup_{q\in\mathcal{D}}
\rbk{qr-\Lambda(q)},
\quad r\in\fldreal_{\geq0}.$$
Then $\Lambda$ is convex, and $I$ is lower semicontinuous and convex.
If $\mathcal{D}$ contains some $q_+>0$, then $I$ is a good rate function on $\fldreal_{\geq0}$ in the sense of Definition \ref{expedition0018009}.
Moreover, if $\fml{\mu_L}{L}$ is exponentially tight in the sense of Definition \ref{expedition0018008}, then the large-deviation upper bound
$$\limsup_{L\to\infty}
\frac{1}{V_L}\log\mu_L(F)
\leq
-\inf_{r\in F}I(r)$$
holds for every closed set $F\subset\fldreal_{\geq0}$.
If $\Lambda$ is differentiable on the interior of $\mathcal{D}$, then the lower bound
\begin{equation}\label{expedition0018013}
\liminf_{L\to\infty}
\frac{1}{V_L}\log\mu_L(G)
\geq
-\inf_{r\in G\cap\set{\frac{d}{dq}\Lambda(q)}{q\in\mathcal{D}^{\circ}}}I(r)
\end{equation}
holds for every open set $G\subset\fldreal_{\geq0}$.
\end{prop}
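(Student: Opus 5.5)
The proof follows the standard Gärtner--Ellis route specialized to $\fldreal_{\geq0}$, in four steps taken in order.

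\emph{Step 1 (convexity of $\Lambda$ and $I$).} For finite $L$, H\"older's inequality applied to $\napiernum^{(tq_1+(1-t)q_2)V_Lr}$ shows that $q\mapsto\frac{1}{V_L}\log\int\napiernum^{qV_Lr}\opdmsr{\mu_L(r)}$ is convex on $\mathcal{D}$. Pointwise limits of convex functions are convex, so $\Lambda$ is convex. The function $I$ is a supremum of the affine, continuous functions $r\mapsto qr-\Lambda(q)$, hence it is convex and lower semicontinuous.

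\emph{Step 2 (goodness).} Take $q=0$; then $\Lambda(0)=0$, so $I\geq0$. Using $q_+>0$ gives $I(r)\geq q_+r-\Lambda(q_+)$. Therefore $\set{r\geq0}{I(r)\leq a}\subset\closedinterval{0}{(a+\Lambda(q_+))/q_+}$. This level set is closed by lower semicontinuity and bounded in $\fldreal_{\geq0}$, hence compact.

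\emph{Step 3 (upper bound).} First prove the bound for compact sets. For compact $F$ and $\delta>0$, each $r\in F$ has some $q_r\in\mathcal{D}$ with $q_rr-\Lambda(q_r)\geq\min(I(r)-\delta,1/\delta)$. Choose a small open interval $U_r\ni r$ on which $q_r(s-r)$ is bounded below by $-\delta$. Proposition \ref{expedition0012648} with $\Phi=q_rV_Ls$ then gives
\[\limsup_{L\to\infty}\frac{1}{V_L}\log\mu_L(U_r)\leq-q_rr+\delta+\Lambda(q_r).\]
Cover $F$ by finitely many $U_r$, use $\log$ of a finite sum $\leq\max+o(V_L)$, and let $\delta\downarrow0$. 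For a general closed $F$, write $\mu_L(F)\leq\mu_L(F\cap C_M)+\mu_L(\fldreal_{\geq0}\setminus C_M)$ using the exponential-tightness compact $C_M$. Apply the compact case to $F\cap C_M$ and let $M\to\infty$.

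\emph{Step 4 (lower bound, the main obstacle).} Since $G$ is open, it suffices to show $\liminf\frac{1}{V_L}\log\mu_L(\openinterval{y-\delta}{y+\delta})\geq-I(y)$ for $y=\Lambda'(\eta)$ with $\eta\in\mathcal{D}^\circ$. Introduce the tilted measures $\opdmsr{\tilde\mu_L(r)}=\napiernum^{\eta V_Lr}\opdmsr{\mu_L(r)}/\int\napiernum^{\eta V_Ls}\opdmsr{\mu_L(s)}$. These satisfy
\[\mu_L(B_\delta)\geq\napiernum^{-V_L(\eta y+\abs{\eta}\delta)}\rbk{\int\napiernum^{\eta V_Ls}\opdmsr{\mu_L(s)}}\tilde\mu_L(B_\delta),\]
where $B_\delta=\openinterval{y-\delta}{y+\delta}$. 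This yields the bound $-(\eta y-\Lambda(\eta))-\abs{\eta}\delta+\liminf\frac{1}{V_L}\log\tilde\mu_L(B_\delta)$. Because $\Lambda$ is differentiable at $\eta$, the supremum defining $I(y)$ is attained at $\eta$; this uses convexity of $\Lambda$ and differentiability at an interior point. So it remains to prove $\tilde\mu_L(B_\delta)\to1$. The tilted log-moment generating function is $\tilde\Lambda(q)=\Lambda(q+\eta)-\Lambda(\eta)$, finite for small $\abs{q}$ because $\eta$ is interior. Applying the exponential Markov bound of Proposition \ref{expedition0012648} with small $q>0$ to $\closedinterval{y+\delta}{\infty}$ gives the rate $\Lambda(\eta+q)-\Lambda(\eta)-q(y+\delta)$; similarly $q<0$ handles $\closedinterval{0}{y-\delta}$. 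Differentiability, $\Lambda(\eta+q)-\Lambda(\eta)=qy+o(q)$, makes both rates strictly negative for small $\abs{q}$. Hence both tails decay exponentially, so $\tilde\mu_L(B_\delta)\to1$, and letting $\delta\downarrow0$ completes the argument. This step is the delicate one because the restriction to exposed points $\Lambda'(\mathcal{D}^\circ)$ enters only here.
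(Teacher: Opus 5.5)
Your proposal is correct and follows the paper's proof essentially step for step: H\"older for the convexity of $\Lambda$, the bound $I(r)\geq q_+r-\Lambda(q_+)$ for goodness, exponential Markov on a finite cover plus exponential tightness for the upper bound, and exponential tilting at an exposed slope with two-sided Markov tail estimates for the lower bound. Your truncation $\min(I(r)-\delta,1/\delta)$ handles points with $I(r)=\infty$, which the paper's choice of $q_r$ with $q_rr-\Lambda(q_r)\geq I(r)-\varepsilon$ does not address; your aside that $I\geq0$ via $q=0$ requires $0\in\mathcal{D}$, which is not assumed, but it is not needed anywhere in your argument.
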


In Proposition \ref{expedition0012646}, the lower bound is restricted to \(G\cap\set{\frac{d}{dq}\Lambda(q)}{q\in\mathcal{D}^{\circ}}\) in \eqref{expedition0018013}. Thus, when Proposition \ref{expedition0012646} is applied in the proof of Proposition \ref{expedition0012645}, it proves the closed-set upper bound and only the part of the open-set lower bound corresponding to densities in the derivative range. If that derivative range does not exhaust \(\fldreal_{\geq0}\), the remaining part of the lower bound must be proved from the specific structure of the free density law in \eqref{expedition0018007}.

\begin{proof}
Put
$$\Lambda_L(q)
=
\frac{1}{V_L}
\log
\int_{\fldreal_{\geq0}}
\napiernum^{qV_Lr}
\opdmsr{\mu_L(r)}.$$
For $q_0,q_1\in\mathcal{D}$ and $0<\theta<1$, apply Hölder's inequality with conjugate exponents
$$p_\theta=\frac{1}{\theta},
\quad
p_{1-\theta}=\frac{1}{1-\theta},
\quad
\frac{1}{p_\theta}+\frac{1}{p_{1-\theta}}=1.$$
Indeed,
$$\int_{\fldreal_{\geq0}}
\napiernum^{(\theta q_0+(1-\theta)q_1)V_Lr}
\opdmsr{\mu_L(r)}
=
\int_{\fldreal_{\geq0}}
\rbk{\napiernum^{q_0V_Lr}}^\theta
\rbk{\napiernum^{q_1V_Lr}}^{1-\theta}
\opdmsr{\mu_L(r)}$$
is bounded above by
$$\rbk{\int_{\fldreal_{\geq0}}
\napiernum^{q_0V_Lr}
\opdmsr{\mu_L(r)}}^\theta
\rbk{\int_{\fldreal_{\geq0}}
\napiernum^{q_1V_Lr}
\opdmsr{\mu_L(r)}}^{1-\theta}.$$
After taking logarithms and dividing by $V_L$, this gives
$$\Lambda_L(\theta q_0+(1-\theta)q_1)
\leq
\theta\Lambda_L(q_0)+(1-\theta)\Lambda_L(q_1).$$
Since $\mathcal{D}$ is an interval, $\theta q_0+(1-\theta)q_1\in\mathcal{D}$.
Taking the limit gives convexity of $\Lambda$.
For $r_0,r_1\in\fldreal_{\geq0}$ and $0<\theta<1$,
$$\begin{aligned}
I(\theta r_0+(1-\theta)r_1)
&=
\sup_{q\in\mathcal{D}}
\rbk{\theta\rbk{qr_0-\Lambda(q)}
+(1-\theta)\rbk{qr_1-\Lambda(q)}}
\\ 
&\leq
\theta I(r_0)+(1-\theta)I(r_1).
\end{aligned}$$
Thus $I$ is convex.
For $a\in\fldreal$,
$$\set{r\in\fldreal_{\geq0}}{I(r)>a}
=
\bigcup_{q\in\mathcal{D}}
\set{r\in\fldreal_{\geq0}}{qr-\Lambda(q)>a}.$$
The right-hand side is open in $\fldreal_{\geq0}$.
Hence $I$ is lower semicontinuous on $\fldreal_{\geq0}$.
If $q_+\in\mathcal{D}$ and $q_+>0$, then
$$I(r)\geq q_+r-\Lambda(q_+).$$
Hence every level set of $I$ is bounded above.
Since the level sets are closed subsets of $\fldreal_{\geq0}$, they are compact.

We first prove the upper bound on compact sets.
Fix a compact set $C\subset\fldreal_{\geq0}$ and $\varepsilon>0$.
For each $r\in C$ choose $q_r\in\mathcal{D}$ such that
$q_rr-\Lambda(q_r)\geq I(r)-\varepsilon$.
By continuity of $s
\mapsto q_rs-\Lambda(q_r)$, there is an open interval $U_r$ containing $r$ such that
$q_rs-\Lambda(q_r)\geq I(r)-2\varepsilon$
for all $s\in U_r$.
Choose a finite subcover $U_{r_1},\ldots,U_{r_m}$ of $C$.
For each $j$, Proposition \ref{expedition0012648} with $\Phi(s)=q_{r_j}V_Ls$ gives
$$\mu_L(U_{r_j})
\leq
\fnexp{-V_L \inf_{s\in U_{r_j}}q_{r_j} s}
\int_{\fldreal_{\geq0}}
\napiernum^{q_{r_j}V_Ls}
\opdmsr{\mu_L(s)}.$$
Hence
$$\limsup_{L\to\infty}
\frac{1}{V_L}\log\mu_L(U_{r_j})
\leq
-I(r_j)+2\varepsilon.$$
The finite union bound gives
$$\limsup_{L\to\infty}
\frac{1}{V_L}\log\mu_L(C)
\leq
-\inf_{r\in C}I(r)+2\varepsilon.$$
Letting $\varepsilon\downarrow0$ gives the compact-set upper bound.
If $F$ is closed, exponential tightness supplies compact sets $C_M$ such that
$$\limsup_{L\to\infty}
\frac{1}{V_L}\log\mu_L(\fldreal_{\geq0}\setminus C_M)
\leq
-M.$$
Since $F\subset\rbk{F\cap C_M}\cup\rbk{\fldreal_{\geq0}\setminus C_M}$, the finite union bound and the compact upper bound give
$$\limsup_{L\to\infty}
\frac{1}{V_L}\log\mu_L(F)
\leq
\max\setone{-\inf_{r\in F\cap C_M}I(r),-M}.$$
Letting $M\to\infty$ gives
$$\limsup_{L\to\infty}
\frac{1}{V_L}\log\mu_L(F)
\leq
-\inf_{r\in F}I(r).$$

We next prove the lower bound at exposed slopes.
Fix $q_0\in\mathcal{D}^{\circ}$ and put
$r_0
=
\fnrestr{\frac{d}{dq}\Lambda(q)}{q=q_0}$.
Define the tilted probability measure
$$\opdmsr{\mu_{q_0,L}(r)}
=
\napiernum^{V_L(q_0r-\Lambda_L(q_0))}
\opdmsr{\mu_L(r)}.$$
Equivalently,
\begin{equation}\label{expedition0018012}
\opdmsr{\mu_L(r)}
=
\napiernum^{V_L(\Lambda_L(q_0)-q_0r)}
\opdmsr{\mu_{q_0,L}(r)}.
\end{equation}
Its logarithmic moment generating function is
$$\begin{aligned}
\Lambda_{q_0,L}(p)
&=
\frac{1}{V_L}
\log
\int_{\fldreal_{\geq0}}
\napiernum^{pV_Lr}
\opdmsr{\mu_{q_0,L}(r)}
\\ 
&=
\frac{1}{V_L}
\log
\int_{\fldreal_{\geq0}}
\napiernum^{pV_Lr}
\napiernum^{V_L(q_0r-\Lambda_L(q_0))}
\opdmsr{\mu_L(r)}
\\ 
&=
\frac{1}{V_L}
\log
\rbk{
\napiernum^{-V_L\Lambda_L(q_0)}
\int_{\fldreal_{\geq0}}
\napiernum^{(q_0+p)V_Lr}
\opdmsr{\mu_L(r)}}
\\ 
&=
\Lambda_L(q_0+p)-\Lambda_L(q_0).
\end{aligned}$$
Thus $\Lambda_{q_0,L}(p)\to\Lambda(q_0+p)-\Lambda(q_0)$ for $p$ near $0$.
For every $\delta>0$ and small $p>0$, Proposition \ref{expedition0012648} gives
$$\mu_{q_0,L}\rbk{\closedinterval{r_0+\delta}{\infty}}
\leq
\napiernum^{-pV_L(r_0+\delta)}
\int_{\fldreal_{\geq0}}
\napiernum^{pV_Lr}
\opdmsr{\mu_{q_0,L}(r)}.$$
Taking logarithms, dividing by $V_L$, and taking $\limsup$ gives
$$\begin{aligned}
&\limsup_{L\to\infty}
\frac{1}{V_L}
\log
\mu_{q_0,L}\rbk{\closedinterval{r_0+\delta}{\infty}}
\\ 
&\leq
-p(r_0+\delta)
+
\limsup_{L\to\infty}
\frac{1}{V_L}
\log
\int_{\fldreal_{\geq0}}
\napiernum^{pV_Lr}
\opdmsr{\mu_{q_0,L}(r)}
\\ 
&=
-p(r_0+\delta)
+
\limsup_{L\to\infty}
\Lambda_{q_0,L}(p)
\\ 
&=
-p(r_0+\delta)+\Lambda(q_0+p)-\Lambda(q_0).
\end{aligned}$$
Differentiability at $q_0$ makes this negative for sufficiently small $p>0$.
For the lower tail, there is nothing to prove if $r_0\leq\delta$.
Assume $r_0>\delta$.
For small $p<0$, Proposition \ref{expedition0012648} gives
$$\mu_{q_0,L}\rbk{\closedinterval{0}{r_0-\delta}}
\leq
\napiernum^{-pV_L(r_0-\delta)}
\int_{\fldreal_{\geq0}}
\napiernum^{pV_Lr}
\opdmsr{\mu_{q_0,L}(r)}.$$
Taking logarithms, dividing by $V_L$, and taking $\limsup$ gives
$$\begin{aligned}
\limsup_{L\to\infty}
\frac{1}{V_L}
\log
\mu_{q_0,L}\rbk{\closedinterval{0}{r_0-\delta}}
&\leq
-p(r_0-\delta)
+
\limsup_{L\to\infty}
\Lambda_{q_0,L}(p)
\\ 
&=
-p(r_0-\delta)+\Lambda(q_0+p)-\Lambda(q_0).
\end{aligned}$$
Differentiability at $q_0$ makes this negative for sufficiently small $p<0$.
Thus both tails
$\closedinterval{0}{r_0-\delta}$
and
$\closedinterval{r_0+\delta}{\infty}$
have exponentially small
$\mu_{q_0,L}$-mass.
Hence
$\mu_{q_0,L}\rbk{\rbk{r_0-\delta,r_0+\delta}}
\to 1$.

Let $G$ be open and let
$r_0\in G\cap\set{\frac{d}{dq}\Lambda(q)}{q\in\mathcal{D}^{\circ}}$.
Choose $q_0\in\mathcal{D}^{\circ}$ such that
$r_0
=
\fnrestr{\frac{d}{dq}\Lambda(q)}{q=q_0}$.
Choose $\delta>0$ so that
$\rbk{r_0-\delta,r_0+\delta}\cap\fldreal_{\geq0}\subset G$.
On this interval,
$-q_0r\geq -q_0r_0-\abs{q_0}\delta$.
By the expression \eqref{expedition0018012} we obtain
$$\mu_L(G)
\geq
\napiernum^{V_L(\Lambda_L(q_0)-q_0r_0-\abs{q_0}\delta)}
\mu_{q_0,L}\rbk{\rbk{r_0-\delta,r_0+\delta}}.$$
Taking $\liminf$ and then $\delta\downarrow0$ gives
$$\liminf_{L\to\infty}
\frac{1}{V_L}\log\mu_L(G)
\geq
-q_0r_0+\Lambda(q_0)
=
-I(r_0),$$
because differentiability makes $r_0$ the slope exposed by $q_0$ and hence
$I(r_0)=q_0r_0-\Lambda(q_0)$.
Taking the infimum over all such $r_0\in G$ gives the lower bound.
\end{proof}

\begin{prop}[Laplace-principle consequence]\label{expedition0018015}
Let $\mu_L$ satisfy a good large-deviation principle on $\fldreal_{\geq0}$ with speed $V_L$ and good rate function $I$.
Let $h:\fldreal_{\geq0}\to\fldreal_{\geq0}$ be continuous.
Assume that
$\inf_{r\geq0}\rbk{I(r)+h(r)}<\infty$.
Then
$$\lim_{L\to\infty}
\frac{1}{V_L}
\log
\int_{\fldreal_{\geq0}}
\fnexp{-V_Lh(r)}
\opdmsr{\mu_L(r)}
=
-\inf_{r\geq0}\rbk{I(r)+h(r)}.$$
\end{prop}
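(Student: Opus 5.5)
This is Varadhan's lemma specialized to a bounded-below continuous tilt, so I will prove it in two halves.

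\emph{Upper bound.} Put $J=I+h$ and $m=\inf_{r\geq0}J(r)<\infty$. Since $h\geq0$, the integrand $\fnexp{-V_Lh}$ lies in $[0,1]$, so no separate tail condition is needed.

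Fix $M>m$ and $\varepsilon>0$. Goodness of $I$ makes $C=\set{r}{I(r)\leq M}$ compact. On the closed set $\fldreal_{\geq0}\setminus C^{\varepsilon}$, where $C^\varepsilon$ is the open $\varepsilon$-neighbourhood of $C$, I bound the integrand by $1$. The LDP upper bound then gives exponent at most $-\inf_{\fldreal_{\geq0}\setminus C^\varepsilon} I\leq -M$.

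For the compact part $\clos{C^\varepsilon}$ I use the argument from the proof of Proposition \ref{expedition0012646}. For each $r\in\clos{C^\varepsilon}$, continuity of $h$ gives a closed neighbourhood $U_r$ on which $h\geq h(r)-\varepsilon$. Lower semicontinuity of $I$ lets me shrink $U_r$ so that $\inf_{U_r}I\geq \min(I(r),M)-\varepsilon$.

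I extract a finite subcover. On each $U_{r_j}$ I bound
$$\int_{U_{r_j}}\fnexp{-V_Lh}\opdmsr{\mu_L}\leq \napiernum^{-V_L(h(r_j)-\varepsilon)}\mu_L(U_{r_j})$$
and apply the closed-set LDP bound to $\mu_L(U_{r_j})$. The finite union bound (the $\limsup$ of $\frac1{V_L}\log$ of a finite sum is the maximum) gives exponent $\leq -\min(m,M)+2\varepsilon$, using $h(r_j)+\min(I(r_j),M)\geq\min(J(r_j),M)\geq \min(m,M)$. Letting $\varepsilon\downarrow0$ and then $M\to\infty$ yields $\limsup\leq -m$.

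\emph{Lower bound.} Pick $r_*$ with $J(r_*)<m+\eta$, which is possible since $m<\infty$, so $I(r_*)<\infty$. Continuity of $h$ gives an open neighbourhood $G$ of $r_*$ in $\fldreal_{\geq0}$ with $h\leq h(r_*)+\eta$ on $G$. Then
$$\int\fnexp{-V_Lh}\opdmsr{\mu_L}\geq \napiernum^{-V_L(h(r_*)+\eta)}\mu_L(G),$$
and the open-set LDP lower bound gives $\liminf\geq -h(r_*)-\eta-\inf_G I\geq -J(r_*)-\eta\geq -m-2\eta$. Letting $\eta\downarrow0$ gives the matching bound and proves the proposition.

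\emph{Main obstacle.} The only delicate step is the compact-cover upper bound. It needs lower semicontinuity of $I$ combined with continuity of $h$ near points where $I=\infty$, which is why $I$ is truncated at $M$. I handle this uniformly through the $\min(\cdot,M)$ device rather than by case distinction.
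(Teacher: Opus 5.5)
Your proof is correct and follows essentially the same route as the paper. Both split the upper bound into a compact piece, handled by continuity of $h$ and the closed-set LDP bound, and a tail, handled by $h\geq0$ and goodness of $I$; you use the complement of an $\varepsilon$-neighbourhood of $\{r : I(r)\leq M\}$ where the paper uses $[R,\infty)$. The lower bound is the same local estimate at a near-minimizer, and your finite-subcover argument with the truncation $\min(I,M)$ spells out the compact-set step that the paper asserts in one line, including the points where $I=\infty$.
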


\begin{proof}
For each compact interval $C\subset\fldreal_{\geq0}$, continuity of $h$ on $C$ and the large-deviation upper bound give
$$\limsup_{L\to\infty}
\frac{1}{V_L}
\log
\int_C
\fnexp{-V_Lh(r)}
\opdmsr{\mu_L(r)}
\leq
-\inf_{r\in C}\rbk{I(r)+h(r)}.$$
For the lower bound, if $r_0\geq0$ and $G$ is an open neighborhood of $r_0$ on which
$h(r)\leq h(r_0)+\varepsilon$, then the open-set lower bound gives
$$\begin{aligned}
\liminf_{L\to\infty}
\frac{1}{V_L}
\log
\int_G
\fnexp{-V_Lh(r)}
\opdmsr{\mu_L(r)}
&\geq
-h(r_0)-\varepsilon-I(r_0).
\end{aligned}$$
Since $h\geq0$, tails are bounded by the original probabilities:
$$\int_{\closedinterval{R}{\infty}}
\fnexp{-V_Lh(r)}
\opdmsr{\mu_L(r)}
\leq
\fun{\mu_L}{\closedinterval{R}{\infty}}.$$
Since $I$ is good, the closed sets $\closedinterval{R}{\infty}$ satisfy
$\inf_{r\geq R}I(r)\to\infty$ as $R\to\infty$.
The closed-set upper bound therefore removes the tail after $R\to\infty$.
Combining the compact upper bound, the local lower bound at points with
$I(r_0)+h(r_0)<\inf_{r\geq0}\rbk{I(r)+h(r)}+\varepsilon$,
and the tail estimate gives
$$\lim_{L\to\infty}
\frac{1}{V_L}
\log
\int_{\fldreal_{\geq0}}
\fnexp{-V_Lh(r)}
\opdmsr{\mu_L(r)}
=
-\inf_{r\geq0}\rbk{I(r)+h(r)}.$$
\end{proof}

\subsection{\texorpdfstring{Proof for Proposition \ref{expedition0012645}}{Proof for Proposition }}\label{expedition0018010}

Set \(\mathsf{n}\in\mathcal{O}_L\). By the explicit formula \eqref{expedition0018007} for \(K_{\sminvtemperature,0,\smchemicalpotential,L}\), \[\begin{aligned}
&\int_{\fldreal_{\geq0}}
\napiernum^{qV_L r}
\opdmsr{K_{\sminvtemperature,0,\smchemicalpotential,L}(r)}
=
\frac{1}{\smpartitionfunc_{\sminvtemperature,0,\smchemicalpotential,L}}
\sum_{\mathsf{n}\in\mathcal{O}_L}
\napiernum^{q \sum_{k\in\setlattice_L^d}\mathsf{n}_k}
\fnexp{-\sminvtemperature
\sum_{k\in\setlattice_L^d}
\rbk{\physham[h]_{\txtparticle,0,L}(k)-\smchemicalpotential}\mathsf{n}_k}
\\ 
&=
\frac{1}{\smpartitionfunc_{\sminvtemperature,0,\smchemicalpotential,L}}
\sum_{\mathsf{n}\in\mathcal{O}_L}
\fnexp{-\sminvtemperature
\sum_{k\in\setlattice_L^d}
\rbk{\physham[h]_{\txtparticle,0,L}(k)
-\smchemicalpotential
-\frac{q}{\sminvtemperature}}
\mathsf{n}_k}.
\end{aligned}\] Hence exponential tilting by \(\fnexp{q \sum_{k\in\setlattice_L^d}\mathsf{n}_k}\) changes the chemical potential from \(\smchemicalpotential\) to \(\smchemicalpotential+\frac{q}{\sminvtemperature}\). The condition \(q<-\sminvtemperature\smchemicalpotential\) is the trace-class domain for the shifted free finite-volume partition function. Thus \[\int_{\fldreal_{\geq0}}
\napiernum^{qV_L r}
\opdmsr{K_{\sminvtemperature,0,\smchemicalpotential,L}(r)}
=
\frac{\smpartitionfunc_{\sminvtemperature,0,\smchemicalpotential+\frac{q}{\sminvtemperature},L}}
{\smpartitionfunc_{\sminvtemperature,0,\smchemicalpotential,L}}.\]

Next let \(\nu\) belong to the free trace-class domain. The occupation-number sum factorizes over modes: \[\smpartitionfunc_{\sminvtemperature,0,\nu,L}
=
\prod_{k\in\setlattice_L^d}
\rbk{\sum_{m=0}^{\infty}
\napiernum^{-\sminvtemperature(\physham[h]_{\txtparticle,0,L}(k)-\nu)m}}
=
\prod_{k\in\setlattice_L^d}
\frac{1}
{1-\napiernum^{-\sminvtemperature(\physham[h]_{\txtparticle,0,L}(k)-\nu)}}.\] Hence it follows that \[\frac{1}{V_L}\log
\smpartitionfunc_{\sminvtemperature,0,\nu,L}
=
-\frac{1}{V_L}
\sum_{k\in\setlattice_L^d}
\fun{\log}{1-\napiernum^{-\sminvtemperature(\physham[h]_{\txtparticle,0,L}(k)-\nu)}}.\] By the standard pressure calculation for the free Bose gas, the Riemann-sum limit exists locally uniformly on compact subsets of that domain and has the standard integral form \[p_{\sminvtemperature}(\nu)
=
\lim_{L\to\infty}
\frac{1}{V_L}
\log
\smpartitionfunc_{\sminvtemperature,0,\nu,L}
=
-\frac{1}{(2\pi)^d}
\int_{\fldreal^d}
\fun{\log}{1-\napiernum^{-\sminvtemperature(\physham[h]_{\txtparticle,0}(k)-\nu)}}
\opdmsr{k}.\] The same free Bose pressure calculation gives differentiability in \(\nu\) and \[\frac{\partial}{\partial\nu}p_{\sminvtemperature}(\nu)
=
\frac{\sminvtemperature}{(2\pi)^d}
\int_{\fldreal^d}
\frac{1}
{\napiernum^{\sminvtemperature(\physham[h]_{\txtparticle,0}(k)-\nu)}-1}
\opdmsr{k}.\] Combining the partition-function ratio obtained above with this pressure formula gives the logarithmic moment generating function as the following pressure difference: \begin{equation}\label{expedition0018014}
\begin{aligned}
\Lambda_{\txtfr,\sminvtemperature,\smchemicalpotential}(q)
=
p_{\sminvtemperature}\rbk{\smchemicalpotential+\frac{q}{\sminvtemperature}}
-p_{\sminvtemperature}(\smchemicalpotential)
=
\frac{1}{(2\pi)^d}
\int_{\fldreal^d}
\fun{\log}
{\frac{1-\napiernum^{-\sminvtemperature(\physham[h]_{\txtparticle,0}(k)-\smchemicalpotential)}}
{1-\napiernum^{-\sminvtemperature(\physham[h]_{\txtparticle,0}(k)-\smchemicalpotential)+q}}}
\opdmsr{k},
\quad
q<-\sminvtemperature\smchemicalpotential.
\end{aligned}
\end{equation} Consequently \(\Lambda_{\txtfr,\sminvtemperature,\smchemicalpotential}\) exists and is differentiable on \(\rbk{-\infty,-\sminvtemperature\smchemicalpotential}\). Thus the logarithmic moment-generating-function hypothesis in Proposition \ref{expedition0012646} is verified for \(\mathcal{D}=\openinterval{-\infty}{-\sminvtemperature\smchemicalpotential}\) with limiting function \(\Lambda_{\txtfr,\sminvtemperature,\smchemicalpotential}\). In this application the Legendre--Fenchel transform in Proposition \ref{expedition0012646} is precisely \(I_{\txtfr,\sminvtemperature,\smchemicalpotential}\). It remains to check the exponential tightness hypothesis in Definition \ref{expedition0018008}.

Choose once and for all \(q_0\in\rbk{0,-\sminvtemperature\smchemicalpotential}\). Proposition \ref{expedition0012648} applied to the probability measure \(K_{\sminvtemperature,0,\smchemicalpotential,L}\) gives \[\fun{K_{\sminvtemperature,0,\smchemicalpotential,L}}{\closedinterval{R}{\infty}}
\leq
\napiernum^{-q_0V_LR}
\int_{\fldreal_{\geq0}}
\napiernum^{q_0V_L r}
\opdmsr{K_{\sminvtemperature,0,\smchemicalpotential,L}(r)}.\] Taking logarithms, dividing by \(V_L\), and taking \(\limsup\) gives \[\limsup_{L\to\infty}
\frac{1}{V_L}
\log
\fun{K_{\sminvtemperature,0,\smchemicalpotential,L}}
{\closedinterval{R}{\infty}}
\leq
-q_0R
+\Lambda_{\txtfr,\sminvtemperature,\smchemicalpotential}(q_0).\] The right-hand side tends to \(-\infty\) as \(R\to\infty\), so the family \(\fml{K_{\sminvtemperature,0,\smchemicalpotential,L}}{L}\) is exponentially tight in the sense of Definition \ref{expedition0018008}.

Proposition \ref{expedition0012646} now gives the closed-set upper bound, the open-set lower bound on the range set \(\set{\frac{d}{dq}\Lambda_{\txtfr,\sminvtemperature,\smchemicalpotential}(q)}{q\in\mathcal{D}}\), and goodness because \(\mathcal{D}\) contains positive numbers. To complete the proof of the large-deviation principle stated in Proposition \ref{expedition0012645}, it remains to prove the open-set lower bound at densities not reached by \(\set{\frac{d}{dq}\Lambda_{\txtfr,\sminvtemperature,\smchemicalpotential}(q)}{q\in\mathcal{D}}\). This is the endpoint part of the lower bound, and it is not supplied by the lower-bound estimate \eqref{expedition0018013} in Proposition \ref{expedition0012646}. Put \(q_\ast=-\sminvtemperature\smchemicalpotential\). We split the endpoint discussion into two cases. First suppose that \(\frac{d}{dq}\Lambda_{\txtfr,\sminvtemperature,\smchemicalpotential}(q)\to\infty\) as \(q\uparrow q_\ast\). Then \(\set{\frac{d}{dq}\Lambda_{\txtfr,\sminvtemperature,\smchemicalpotential}(q)}{q\in\mathcal{D}}\) exhausts \(\fldreal_{\geq0}\), so the lower-bound estimate \eqref{expedition0018013} in Proposition \ref{expedition0012646} already gives the full open-set lower bound and there is no endpoint contribution to prove.

It remains to consider the second case, where the endpoint derivative is finite. Define \[\rho_{\txtcritical}
=
\lim_{q\uparrow q_\ast}
\frac{d}{dq}\Lambda_{\txtfr,\sminvtemperature,\smchemicalpotential}(q).\] For densities in \(\set{\frac{d}{dq}\Lambda_{\txtfr,\sminvtemperature,\smchemicalpotential}(q)}{q\in\mathcal{D}}\), the lower bound is already supplied by \eqref{expedition0018013}. At \(r=\rho_{\txtcritical}\), choose \(q_n\uparrow q_\ast\) and set \[r_n=
\fnrestr{\frac{d}{dq}
\Lambda_{\txtfr,\sminvtemperature,\smchemicalpotential}(q)}
{q=q_n}.\] Then \(r_n\to\rho_{\txtcritical}\), and every open neighborhood of \(\rho_{\txtcritical}\) contains \(r_n\) for all sufficiently large \(n\). For each \(n\), differentiability and convexity of \(\Lambda_{\txtfr,\sminvtemperature,\smchemicalpotential}\) give \[\Lambda_{\txtfr,\sminvtemperature,\smchemicalpotential}(q)
\geq
\Lambda_{\txtfr,\sminvtemperature,\smchemicalpotential}(q_n)
+r_n\rbk{q-q_n}\] for every \(q<q_\ast\). Thus this implies \[q r_n-\Lambda_{\txtfr,\sminvtemperature,\smchemicalpotential}(q)
\leq
q_n r_n-\Lambda_{\txtfr,\sminvtemperature,\smchemicalpotential}(q_n)\] with equality at \(q=q_n\). Therefore it holds that \[I_{\txtfr,\sminvtemperature,\smchemicalpotential}(r_n)
=
q_n r_n
-\Lambda_{\txtfr,\sminvtemperature,\smchemicalpotential}(q_n).\] For the endpoint density, the function \(q
\mapsto
q\rho_{\txtcritical}
-\Lambda_{\txtfr,\sminvtemperature,\smchemicalpotential}(q)\) is increasing as \(q\uparrow q_\ast\), because its derivative satisfies \(\rho_{\txtcritical}
-
\frac{d}{dq}\Lambda_{\txtfr,\sminvtemperature,\smchemicalpotential}(q)
\geq 0\). Hence it follows that \[\begin{aligned}
I_{\txtfr,\sminvtemperature,\smchemicalpotential}(\rho_{\txtcritical})
=
\sup_{q<q_\ast}
\rbk{q\rho_{\txtcritical}
-\Lambda_{\txtfr,\sminvtemperature,\smchemicalpotential}(q)}
=
\lim_{n\to\infty}
\rbk{q_n\rho_{\txtcritical}
-\Lambda_{\txtfr,\sminvtemperature,\smchemicalpotential}(q_n)}.
\end{aligned}\] Since \(q_n\to q_\ast\) and \(r_n\to\rho_{\txtcritical}\), the sequence \(\fml{q_n}{n}\) is bounded and \[q_n r_n
-\Lambda_{\txtfr,\sminvtemperature,\smchemicalpotential}(q_n)
-\rbk{q_n\rho_{\txtcritical}
-\Lambda_{\txtfr,\sminvtemperature,\smchemicalpotential}(q_n)}
=
q_n\rbk{r_n-\rho_{\txtcritical}}
\to0.\] Therefore we obtain \[I_{\txtfr,\sminvtemperature,\smchemicalpotential}(r_n)
\to
I_{\txtfr,\sminvtemperature,\smchemicalpotential}(\rho_{\txtcritical}).\] If \(G\) is an open neighborhood of \(\rho_{\txtcritical}\), then \(r_n\in G\) for all sufficiently large \(n\). For such \(n\), the lower-bound estimate \eqref{expedition0018013} gives \[\liminf_{L\to\infty}
\frac{1}{V_L}
\log
\fun{K_{\sminvtemperature,0,\smchemicalpotential,L}}{G}
\geq
-I_{\txtfr,\sminvtemperature,\smchemicalpotential}(r_n).\] Letting \(n\to\infty\) gives the open-set lower bound at \(\rho_{\txtcritical}\). Thus the only density region still not covered is \(r>\rho_{\txtcritical}\). This last lower bound comes from the zero-mode factor in the \(\lambda=0\) finite-volume Gibbs weight in \eqref{expedition0018007}. Indeed, the weight in \eqref{expedition0018007} factorizes over \(k\in\setlattice_L^d\). For \(k=0\), \(\physham[h]_{\txtparticle,0,L}(0)=0\), so the zero-mode part of \eqref{expedition0018007} has unnormalized weight \(\napiernum^{\sminvtemperature\smchemicalpotential m}\) at \(\mathsf{n}_0=m\). Since \[\sum_{m=0}^{\infty}
\napiernum^{\sminvtemperature\smchemicalpotential m}
=
\frac{1}{1-\napiernum^{\sminvtemperature\smchemicalpotential}},\] the normalized zero-mode factor derived from \eqref{expedition0018007} assigns the weight \(\rbk{1-\napiernum^{\sminvtemperature\smchemicalpotential}}
\napiernum^{\sminvtemperature\smchemicalpotential m}\) to \(\mathsf{n}_0=m\), \(m\in\monnat\). Hence, for every \(a>0\), \[\lim_{\varepsilon\downarrow0}
\liminf_{L\to\infty}
\frac{1}{V_L}
\log
\sum_{\set{m\in\monnat}
{\frac{m}{V_L}
\in
\openinterval{a-\varepsilon}{a+\varepsilon}}}
\rbk{1-\napiernum^{\sminvtemperature\smchemicalpotential}}
\napiernum^{\sminvtemperature\smchemicalpotential m}
=
\sminvtemperature\smchemicalpotential a
=
-q_\ast a.\] For an open set \(G\) containing \(r\), choose \(\varepsilon>0\) so small that \(\openinterval{r-2\varepsilon}{r+2\varepsilon}\subset G\) and \(r-\rho_{\txtcritical}>\varepsilon\). Set \(a=r-\rho_{\txtcritical}\). If the nonzero-mode density \(\frac{1}{V_L}
\sum_{k\in\setlattice_L^d\setminus\setone{0}}
\mathsf{n}_k\) lies in \(\openinterval{\rho_{\txtcritical}-\varepsilon}{\rho_{\txtcritical}+\varepsilon}\) and the zero-mode density \(\frac{\mathsf{n}_0}{V_L}\) lies in \(\openinterval{a-\varepsilon}{a+\varepsilon}\), then the total density lies in \(\openinterval{r-2\varepsilon}{r+2\varepsilon}\subset G\). Therefore the factorization of \eqref{expedition0018007} gives \[\begin{aligned}
&\fun{K_{\sminvtemperature,0,\smchemicalpotential,L}}{G}
\\ 
&\geq
\frac{1}{\smpartitionfunc_{\sminvtemperature,0,\smchemicalpotential,L}^{\neq0}}
\sum_{\set{\mathsf{n}_{\neq0}}{
\frac{1}{V_L}
\sum_{k\in\setlattice_L^d\setminus\setone{0}}
\mathsf{n}_k
\in
\openinterval{\rho_{\txtcritical}-\varepsilon}{\rho_{\txtcritical}+\varepsilon}}}
\fnexp{-\sminvtemperature
\sum_{k\in\setlattice_L^d\setminus\setone{0}}
\rbk{\physham[h]_{\txtparticle,0,L}(k)-\smchemicalpotential}\mathsf{n}_k}
\\ 
&\quad\times
\sum_{\set{m\in\monnat}{
\frac{m}{V_L}\in\openinterval{a-\varepsilon}{a+\varepsilon}}}
\rbk{1-\napiernum^{\sminvtemperature\smchemicalpotential}}
\napiernum^{\sminvtemperature\smchemicalpotential m},
\end{aligned}\] where the sums over \(\mathsf{n}_{\neq0}\) run over occupation numbers indexed by \(\setlattice_L^d\setminus\setone{0}\), \[\smpartitionfunc_{\sminvtemperature,0,\smchemicalpotential,L}^{\neq0}
=
\sum_{\mathsf{n}_{\neq0}}
\fnexp{-\sminvtemperature
\sum_{k\in\setlattice_L^d\setminus\setone{0}}
\rbk{\physham[h]_{\txtparticle,0,L}(k)-\smchemicalpotential}\mathsf{n}_k}.\] Taking logarithms, dividing by \(V_L\), and using \(\liminf_{L}(x_L+y_L)\geq\liminf_L x_L+\liminf_L y_L\) gives \[\begin{aligned}
&\liminf_{L\to\infty}
\frac{1}{V_L}
\log
\fun{K_{\sminvtemperature,0,\smchemicalpotential,L}}{G}
\\ 
&\geq
\liminf_{L\to\infty}
\frac{1}{V_L}
\log
\frac{1}{\smpartitionfunc_{\sminvtemperature,0,\smchemicalpotential,L}^{\neq0}}
\sum_{\set{\mathsf{n}_{\neq0}}{
\frac{1}{V_L}
\sum_{k\in\setlattice_L^d\setminus\setone{0}}
\mathsf{n}_k
\in
\openinterval{\rho_{\txtcritical}-\varepsilon}{\rho_{\txtcritical}+\varepsilon}}}
\fnexp{-\sminvtemperature
\sum_{k\in\setlattice_L^d\setminus\setone{0}}
\rbk{\physham[h]_{\txtparticle,0,L}(k)-\smchemicalpotential}\mathsf{n}_k}
\\ 
&\quad+
\liminf_{L\to\infty}
\frac{1}{V_L}
\log
\sum_{\set{m\in\monnat}{
\frac{m}{V_L}\in\openinterval{a-\varepsilon}{a+\varepsilon}}}
\rbk{1-\napiernum^{\sminvtemperature\smchemicalpotential}}
\napiernum^{\sminvtemperature\smchemicalpotential m}.
\end{aligned}\] For the full and nonzero-mode normalized factors define \[\Lambda_{\txtfr,\sminvtemperature,\smchemicalpotential,L}(q)
=
\frac{1}{V_L}
\log
\frac{
\smpartitionfunc_{\sminvtemperature,0,\smchemicalpotential+\frac{q}{\sminvtemperature},L}}
{\smpartitionfunc_{\sminvtemperature,0,\smchemicalpotential,L}},
\quad
\Lambda_{\txtfr,\sminvtemperature,\smchemicalpotential,L}^{\neq0}(q)
=
\frac{1}{V_L}
\log
\frac{\smpartitionfunc_{\sminvtemperature,0,\smchemicalpotential+\frac{q}{\sminvtemperature},L}^{\neq0}}
{\smpartitionfunc_{\sminvtemperature,0,\smchemicalpotential,L}^{\neq0}},
\quad q<q_\ast.\] From the factorization of \eqref{expedition0018007} it holds that \(\smpartitionfunc_{\sminvtemperature,0,\nu,L}
=
\frac{1}{1-\napiernum^{\sminvtemperature\nu}}
\smpartitionfunc_{\sminvtemperature,0,\nu,L}^{\neq0}\) on the free trace-class domain. Therefore, for \(q<q_{\ast}\), it holds that \[\begin{aligned}
\Lambda_{\txtfr,\sminvtemperature,\smchemicalpotential,L}^{\neq0}(q)
&=
\frac{1}{V_L}
\log
\frac{\smpartitionfunc_{\sminvtemperature,0,\smchemicalpotential+\frac{q}{\sminvtemperature},L}}
{\smpartitionfunc_{\sminvtemperature,0,\smchemicalpotential,L}}
+\frac{1}{V_L}
\log
\frac{1-\napiernum^{\sminvtemperature\smchemicalpotential+q}}
{1-\napiernum^{\sminvtemperature\smchemicalpotential}}
=
\Lambda_{\txtfr,\sminvtemperature,\smchemicalpotential,L}(q)
+\frac{1}{V_L}
\log
\frac{1-\napiernum^{\sminvtemperature\smchemicalpotential+q}}
{1-\napiernum^{\sminvtemperature\smchemicalpotential}}
\\ 
&\to
\Lambda_{\txtfr,\sminvtemperature,\smchemicalpotential}(q)
\quad (L \to \infty).
\end{aligned}\] Thus the Legendre--Fenchel transform for the nonzero-mode factor is \[\begin{aligned}
I_{\txtfr,\sminvtemperature,\smchemicalpotential}^{\neq0}(s)
=
\sup_{q<q_\ast}
\rbk{qs
-\lim_{L\to\infty}
\Lambda_{\txtfr,\sminvtemperature,\smchemicalpotential,L}^{\neq0}(q)}
=
\sup_{q<q_\ast}
\rbk{qs
-\Lambda_{\txtfr,\sminvtemperature,\smchemicalpotential}(q)}
=
I_{\txtfr,\sminvtemperature,\smchemicalpotential}(s).
\end{aligned}\] Set \[B_{\varepsilon,L}^{\neq0}
=
\frac{1}{\smpartitionfunc_{\sminvtemperature,0,\smchemicalpotential,L}^{\neq0}}
\sum_{\set{\mathsf{n}_{\neq0}}{
\frac{1}{V_L}
\sum_{k\in\setlattice_L^d\setminus\setone{0}}
\mathsf{n}_k
\in
\openinterval{\rho_{\txtcritical}-\varepsilon}{\rho_{\txtcritical}+\varepsilon}}}
\fnexp{-\sminvtemperature
\sum_{k\in\setlattice_L^d\setminus\setone{0}}
\rbk{\physham[h]_{\txtparticle,0,L}(k)-\smchemicalpotential}\mathsf{n}_k}.\] For \(\varepsilon>0\), choose \(n(\varepsilon)\) so that \[r_{n(\varepsilon)}
\in
\openinterval{\rho_{\txtcritical}-\varepsilon}{\rho_{\txtcritical}+\varepsilon},
\quad
n(\varepsilon)\to\infty
\quad
\rbk{\varepsilon\downarrow0}.\] Using \eqref{expedition0018013} for the nonzero-mode factor, \[\liminf_{L\to\infty}
\frac{1}{V_L}
\log
B_{\varepsilon,L}^{\neq0}
\geq
-I_{\txtfr,\sminvtemperature,\smchemicalpotential}^{\neq0}
\rbk{r_{n(\varepsilon)}}.\] Hence \[\begin{aligned}
\lim_{\varepsilon\downarrow0}
\liminf_{L\to\infty}
\frac{1}{V_L}
\log
B_{\varepsilon,L}^{\neq0}
\geq
-\lim_{\varepsilon\downarrow0}
I_{\txtfr,\sminvtemperature,\smchemicalpotential}^{\neq0}
\rbk{r_{n(\varepsilon)}}
=
-\fun{I_{\txtfr,\sminvtemperature,\smchemicalpotential}^{\neq0}}
{\rho_{\txtcritical}}
=
-I_{\txtfr,\sminvtemperature,\smchemicalpotential}
\rbk{\rho_{\txtcritical}}.
\end{aligned}\] The preceding zero-mode estimate gives, after letting \(\varepsilon\downarrow0\) with \(a=r-\rho_{\txtcritical}\), the lower exponent \[\sminvtemperature\smchemicalpotential a
=
\sminvtemperature\smchemicalpotential(r-\rho_{\txtcritical})
=
-q_\ast(r-\rho_{\txtcritical})\] for the second term on the right-hand side. Consequently, \[\lim_{\varepsilon\downarrow0}
\liminf_{L\to\infty}
\frac{1}{V_L}
\log
\fun{K_{\sminvtemperature,0,\smchemicalpotential,L}}{G}
\geq
-I_{\txtfr,\sminvtemperature,\smchemicalpotential}(\rho_{\txtcritical})
-q_\ast(r-\rho_{\txtcritical}).\] Therefore the lower bound holds at \(r\) with \[I_{\txtfr,\sminvtemperature,\smchemicalpotential}(r)
=
I_{\txtfr,\sminvtemperature,\smchemicalpotential}(\rho_{\txtcritical})
+q_\ast(r-\rho_{\txtcritical}),\] which is exactly the endpoint affine part of the Legendre--Fenchel transform for this free density law. Thus the large-deviation principle holds with rate function \(I_{\txtfr,\sminvtemperature,\smchemicalpotential}\). The goodness of this rate function has already been obtained above from Proposition \ref{expedition0012646}, because \(\mathcal{D}=\openinterval{-\infty}{-\sminvtemperature\smchemicalpotential}\) contains positive numbers.

\subsection{\texorpdfstring{Proof for Theorem \ref{expedition0012647}}{Proof for Theorem }}\label{expedition0018011}

Set \[J(r)
=
I_{\txtfr,\sminvtemperature,\smchemicalpotential}(r)
+\frac{\sminvtemperature\lambda r^2}{2}.\] By Proposition \ref{expedition0012645}, with the goodness statement proved in Section \ref{expedition0018010}, \(I_{\txtfr,\sminvtemperature,\smchemicalpotential}\) is a good rate function in the sense of Definition \ref{expedition0018009}. Hence it is lower semicontinuous and has compact level sets. The function \(J\) is lower semicontinuous, and its sublevel sets are contained in sublevel sets of \(I_{\txtfr,\sminvtemperature,\smchemicalpotential}\). Set \[r_{\ast}
=
\fnrestr{\frac{d}{dq}
\Lambda_{\txtfr,\sminvtemperature,\smchemicalpotential}(q)}
{q=0}
=
\frac{1}{(2\pi)^d}
\int_{\fldreal^d}
\frac{1}
{\napiernum^{\sminvtemperature(\physham[h]_{\txtparticle,0}(k)-\smchemicalpotential)}-1}
\opdmsr{k}.\] The differentiability established in Section \ref{expedition0018010} gives \(r_{\ast}\in\fldreal_{\geq0}\). Since \(\Lambda_{\txtfr,\sminvtemperature,\smchemicalpotential}(0)=0\), the convexity of \(\Lambda_{\txtfr,\sminvtemperature,\smchemicalpotential}\) implies \(I_{\txtfr,\sminvtemperature,\smchemicalpotential}(r_{\ast})=
0\). Then \(J(r_{\ast})<\infty\), and the sublevel set \[C_{\ast}
=
\set{r\in\fldreal_{\geq0}}{J(r)\leq J(r_{\ast})}\] is nonempty and compact. Since \(J\) is lower semicontinuous, it attains its minimum on \(C_{\ast}\). For \(r\notin C_{\ast}\) one has \(J(r) > J(r_{\ast})\), so the minimum on \(C_{\ast}\) is the minimum on all of \(\fldreal_{\geq0}\). The rate function \(I_{\txtfr,\sminvtemperature,\smchemicalpotential}\) is convex because it is the Legendre--Fenchel transform of the limiting logarithmic moment generating function in Proposition \ref{expedition0012645}. Since \(\lambda>0\), the function \(r\mapsto \frac{\sminvtemperature\lambda r^2}{2}\) is strictly convex. Therefore \(J\) is strictly convex on \(\fldreal_{\geq0}\) and can have at most one minimizer. Thus the minimizer exists and is unique; denote it by \(\bar{\smnumberdensity}_{\txtbsn}\).

Put \(m
=
\inf_{s\geq0}
J(s)\). By \eqref{expedition0012603}, for every Borel set \(B\subset\fldreal_{\geq0}\), \[\fun{K_{\sminvtemperature,\lambda,\smchemicalpotential,L}}{B}
=
\frac{\int_B
\fnexp{-V_L\frac{\sminvtemperature\lambda r^2}{2}}
\opdmsr{K_{\sminvtemperature,0,\smchemicalpotential,L}(r)}}
{\int_{\fldreal_{\geq0}}
\fnexp{-V_L\frac{\sminvtemperature\lambda r^2}{2}}
\opdmsr{K_{\sminvtemperature,0,\smchemicalpotential,L}(r)}}.\] Proposition \ref{expedition0018015}, applied to \(\mu_L = K_{\sminvtemperature,0,\smchemicalpotential,L}\) and \(h(r)
=
\frac{\sminvtemperature\lambda r^2}{2}\), gives \[\lim_{L\to\infty}
\frac{1}{V_L}
\log
\int_{\fldreal_{\geq0}}
\fnexp{-V_L\frac{\sminvtemperature\lambda r^2}{2}}
\opdmsr{K_{\sminvtemperature,0,\smchemicalpotential,L}(r)}
=
-m.\] Consequently, for every closed set \(F\subset\fldreal_{\geq0}\), \[\limsup_{L\to\infty}
\frac{1}{V_L}
\log
\fun{K_{\sminvtemperature,\lambda,\smchemicalpotential,L}}{F}
\leq
-\inf_{r\in F}J(r)+m,\] and, for every open set \(G\subset\fldreal_{\geq0}\), \[\liminf_{L\to\infty}
\frac{1}{V_L}
\log
\fun{K_{\sminvtemperature,\lambda,\smchemicalpotential,L}}{G}
\geq
-\inf_{r\in G}J(r)+m.\] Thus the tilted rate function is \[I_{\txtmeanfield,\sminvtemperature,\smchemicalpotential,\lambda}(r)
=
J(r)-m
=
I_{\txtfr,\sminvtemperature,\smchemicalpotential}(r)
+\frac{\sminvtemperature\lambda r^2}{2}
-\inf_{s\geq0}
\rbk{
I_{\txtfr,\sminvtemperature,\smchemicalpotential}(s)
+\frac{\sminvtemperature\lambda s^2}{2}}.\]

Let \(U\) be an open neighborhood of \(\bar{\smnumberdensity}_{\txtbsn}\) and put \(F=\fldreal_{\geq0}\setminus U\). If \(F=\emptyset\), then \(\fun{K_{\sminvtemperature,\lambda,\smchemicalpotential,L}}{U}=1\). If \(F\ne\emptyset\), then the goodness of \(J\) and the uniqueness of its minimizer give \[\delta_U
=
\inf_{r\in F}J(r)-m
>0.\] The closed-set upper bound gives \[\limsup_{L\to\infty}
\frac{1}{V_L}
\log
\fun{K_{\sminvtemperature,\lambda,\smchemicalpotential,L}}{F}
\leq
-\delta_U,\] hence \(\fun{K_{\sminvtemperature,\lambda,\smchemicalpotential,L}}{\fldreal_{\geq0}\setminus U}\to0\). Therefore \(\fun{K_{\sminvtemperature,\lambda,\smchemicalpotential,L}}{U}\to1\) for every open neighborhood \(U\) of \(\bar{\smnumberdensity}_{\txtbsn}\). For a bounded continuous function \(\varphi\) on \(\fldreal_{\geq0}\) and every \(\varepsilon>0\), choose such a neighborhood \(U\) so that \(\abs{\varphi(r)-\varphi(\bar{\smnumberdensity}_{\txtbsn})}<\varepsilon\) for \(r\in U\). Then \[\begin{aligned}
\abs{\int_{\fldreal_{\geq0}}\varphi(r)
\opdmsr{K_{\sminvtemperature,\lambda,\smchemicalpotential,L}(r)}
-\varphi(\bar{\smnumberdensity}_{\txtbsn})}
\leq
\varepsilon
+2\norm{\varphi}_{\infty}
\fun{K_{\sminvtemperature,\lambda,\smchemicalpotential,L}}
{\fldreal_{\geq0}\setminus U}.
\end{aligned}\] Letting \(L\to\infty\) and then \(\varepsilon\downarrow0\) proves \(K_{\sminvtemperature,\lambda,\smchemicalpotential,L}
\Rightarrow
\delta_{\bar{\smnumberdensity}_{\txtbsn}}\). The final sentence identifies this finite-volume density-law limit with the thermodynamic-limit Kac-function statement in \cite[Eq. (4.50) and Section 4.4]{AndreVerbeure001}.

\subsection{Heat Kernel and Brownian-Bridge Measures}\label{expedition0019802}

The Brownian-loop representation uses only the finite-volume heat kernel of \eqref{expedition0019756}, the associated unnormalized bridge measures, and the resulting trace formulas.

\begin{prop}[Finite-volume periodic momentum basis]\label{expedition0019758}
Let $\physham[h]_{\txtparticle,0,L}$ be the finite-volume one-particle
Hamiltonian in \eqref{expedition0019756}.
The periodic momentum eigenfunctions are
\begin{equation}\label{expedition0019763}
e_k(x)
=
V_L^{-\onehalf}\napiernum^{\imunit k\cdot x},
\quad
\physham[h]_{\txtparticle,0,L}e_k
=
\frac{\abs{k}^{2}}{2}e_k,
\quad
k\in\setlattice_L^d.
\end{equation}
\end{prop}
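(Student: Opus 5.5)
The plan is to verify directly that each $e_k$ lies in the periodic domain of $\physham[h]_{\txtparticle,0,L}=P_L\physham[h]_{\txtparticle,0}P_L$ in \eqref{expedition0019756}, compute the Laplacian on it, and then check that the family $\fml{e_k}{k\in\setlattice_L^d}$ is an orthonormal basis of $\sphilb{H}_{\txtparticle,L}=\fun{\lp^2}{I_L^d}$. The Laplacian is taken with the periodic boundary conditions fixed in Subsection \ref{expedition0012145}.

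\textbf{Periodicity and eigenvalue.} For $k=\frac{2\pi}{L}m$ with $m\in\ringratint^d$, the function $x\mapsto\napiernum^{\imunit k\cdot x}$ satisfies $\napiernum^{\imunit k\cdot(x+Le_j)}=\napiernum^{\imunit k\cdot x}\napiernum^{2\pi\imunit m_j}=\napiernum^{\imunit k\cdot x}$. It is therefore smooth and periodic on the torus $I_L^d$ with opposite faces identified, and so lies in the form and operator domain of the periodic Laplacian. Differentiating gives $-\onehalf\laplacian\napiernum^{\imunit k\cdot x}=\frac{\abs{k}^2}{2}\napiernum^{\imunit k\cdot x}$. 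Multiplying by the constant $V_L^{-\onehalf}$ preserves this identity, which yields the eigenvalue relation in \eqref{expedition0019763}.

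\textbf{Orthonormality.} The integral $\bkt{e_k}{e_{k'}}=V_L^{-1}\int_{I_L^d}\napiernum^{\imunit(k'-k)\cdot x}\opdmsr{x}$ factorizes into one-dimensional integrals $L^{-1}\int_{-L/2}^{L/2}\napiernum^{2\pi\imunit(m'_j-m_j)x_j/L}\opdmsr{x_j}$. Each such factor equals $\kroneckerdelta_{m_j,m'_j}$, so the inner product is $\kroneckerdelta_{k,k'}$.

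\textbf{Completeness.} This is the main step requiring an external input rather than a computation. The rescaling $x\mapsto 2\pi x/L$ identifies $\fun{\lp^2}{I_L^d}$ with $\fun{\lp^2}{\mathbb{T}^d}$, and under it the family $\fml{e_k}{k}$ becomes the standard Fourier basis $\fml{(2\pi)^{-d/2}\napiernum^{\imunit m\cdot y}}{m\in\ringratint^d}$. Completeness of that basis is Fourier series theory: trigonometric polynomials are uniformly dense in continuous periodic functions by Stone--Weierstrass, and continuous periodic functions are dense in $\lp^2$. Hence the $e_k$ form an orthonormal eigenbasis. The periodic self-adjoint realization of $\physham[h]_{\txtparticle,0,L}$ is then diagonal in this basis with spectrum $\set{\abs{k}^2/2}{k\in\setlattice_L^d}$, consistent with the dispersion $\physham[h]_{\txtparticle,0,L}(k)$ used throughout the paper.
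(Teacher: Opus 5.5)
Your proposal is correct and follows the same route as the paper: the direct orthonormality computation $\langle e_k, e_q\rangle = \delta_{kq}$, and periodic differentiation giving the eigenvalue $|k|^2/2$, with $h_{\mathrm{p},0,L}$ read, as the paper does, as the Laplacian with periodic boundary conditions. The one difference is that you justify completeness (by rescaling to the torus and using Stone--Weierstrass), which the paper only asserts when it calls $\{e_k\}$ the periodic Fourier orthonormal basis.
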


\begin{proof}
For $k,q\in\setlattice_L^d$,
$$\begin{aligned}
\bkt{e_k}{e_q}_{\sphilb{H}_{\txtparticle,L}}
&=
\frac{1}{V_L}
\int_{I_L^d}
\napiernum^{\imunit(q-k)\cdot x}
\opdmsr{x}
=
\delta_{kq}.
\end{aligned}$$
Thus $\setone{e_k}_{k\in\setlattice_L^d}$ is the periodic Fourier
orthonormal basis of $\sphilb{H}_{\txtparticle,L}$.
Since $\physham[h]_{\txtparticle,0}=-\onehalf\laplacian$ by
\eqref{expedition0019396}, periodic differentiation gives
$$\physham[h]_{\txtparticle,0,L}e_k
=
-\frac{1}{2}\laplacian
\rbk{V_L^{-\onehalf}\napiernum^{\imunit k\cdot x}}
=
\frac{\abs{k}^{2}}{2}e_k,$$
which proves \eqref{expedition0019763}.
\end{proof}

\begin{prop}[Finite-volume heat kernel]\label{expedition0019765}
For $t>0$, the heat kernel of
$\napiernum^{-t\physham[h]_{\txtparticle,0,L}}$ is
\begin{equation}\label{expedition0019759}
\Gamma^{(L)}_{t}(x,y)
=
\fun{\napiernum^{-t\physham[h]_{\txtparticle,0,L}}}{x,y},
\quad
x,y\in I_L^d.
\end{equation}
It has the spectral representation
\begin{equation}\label{expedition0019760}
\Gamma^{(L)}_{t}(x,y)
=
\sum_{k\in\setlattice_L^d}
\napiernum^{-t\physham[h]_{\txtparticle,0,L}(k)}
e_k(x)\cmpconj{e_k(y)}.
\end{equation}
Equivalently, the heat kernel has the explicit form
\begin{equation}\label{expedition0019764}
\Gamma^{(L)}_{t}(x,y)
=
\frac{1}{V_L}
\sum_{k\in\setlattice_L^d}
\fnexp{-\frac{t}{2}\abs{k}^{2}}
\napiernum^{\imunit k\cdot(x-y)}.
\end{equation}
\end{prop}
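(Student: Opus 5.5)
The three formulas \eqref{expedition0019759}, \eqref{expedition0019760} and \eqref{expedition0019764} will be proved in that order, using only the periodic Fourier basis of Proposition \ref{expedition0019758} and the finite-dimensional-like spectral calculus of the self-adjoint operator $\physham[h]_{\txtparticle,0,L}$ with purely discrete spectrum $\set{\abs{k}^2/2}{k\in\setlattice_L^d}$. Formula \eqref{expedition0019759} is the definition of $\Gamma^{(L)}_t$ as the integral kernel of $\napiernum^{-t\physham[h]_{\txtparticle,0,L}}$. The only content there is that such a kernel exists as an honest function, which will follow from the series in the next step.

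For \eqref{expedition0019760}, I will expand an arbitrary $f\in\sphilb{H}_{\txtparticle,L}$ in the orthonormal basis $\setone{e_k}$. The spectral theorem gives
$$\napiernum^{-t\physham[h]_{\txtparticle,0,L}}f=\sum_{k}\napiernum^{-t\abs{k}^2/2}\bkt{e_k}{f}e_k.$$
Writing $\bkt{e_k}{f}=\int_{I_L^d}\cmpconj{e_k(y)}f(y)\opdmsr{y}$ and exchanging sum and integral identifies the kernel with the series in \eqref{expedition0019760}.

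The exchange is the one step needing care, and it is the mild obstacle. I will justify it by absolute convergence. Since $\abs{e_k(x)\cmpconj{e_k(y)}}=V_L^{-1}$ and, for fixed $t>0$,
$$\sum_{k\in\setlattice_L^d}\napiernum^{-t\abs{k}^2/2}=\rbk{\sum_{j\in\ringratint}\napiernum^{-2\pi^2tj^2/L^2}}^d<\infty,$$
the series converges uniformly on $I_L^d\times I_L^d$ to a bounded continuous function. Fubini then applies because $f\in L^2\subset L^1$ on the bounded box. Uniform convergence also shows the kernel is jointly continuous, so the diagonal values used later in the trace formula \eqref{expedition0019762} are meaningful.

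Finally, \eqref{expedition0019764} follows by substituting \eqref{expedition0019763} into \eqref{expedition0019760}. One has $e_k(x)\cmpconj{e_k(y)}=V_L^{-1}\napiernum^{\imunit k\cdot(x-y)}$ and $\physham[h]_{\txtparticle,0,L}(k)=\abs{k}^2/2$, and the substitution gives exactly the stated expression.
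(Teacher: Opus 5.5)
Your proposal is correct and follows the paper's route: expand the heat semigroup in the periodic eigenbasis $\{e_k\}$ of Proposition \ref{expedition0019758} via the spectral theorem to get \eqref{expedition0019760}, then substitute \eqref{expedition0019763} to get \eqref{expedition0019764}. The only difference is that you justify the sum--integral interchange and the joint continuity of the kernel through the uniformly convergent Gaussian lattice sum, which the paper leaves implicit.
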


\begin{proof}
The spectral theorem and \eqref{expedition0019763} give
\eqref{expedition0019760}.
Substituting \eqref{expedition0019763} into \eqref{expedition0019760} gives
\eqref{expedition0019764}, and this kernel is precisely the integral kernel in
\eqref{expedition0019759}.
\end{proof}

\begin{prop}[Brownian-bridge normalization and closed-loop trace]\label{expedition0019766}
There is an unnormalized Brownian-bridge measure
$\msrbb{W}^{t}_{xy}$ on paths $\omega:[0,t]\to I_L^d$ with
$\omega(0)=x$ and $\omega(t)=y$ such that
\begin{equation}\label{expedition0019761}
\int\opdmsr{\msrbb{W}^{t}_{xy}}
=
\Gamma^{(L)}_{t}(x,y).
\end{equation}
Consequently
\begin{equation}\label{expedition0019762}
\sqfun{\trace_{P_L\sphilb{H}_{\txtparticle}}}
{\napiernum^{-t\physham[h]_{\txtparticle,0,L}}}
=
\int_{I_L^d}
\Gamma^{(L)}_{t}(x,x)\opdmsr{x}
=
\int_{I_L^d}\opdmsr{x}
\int\opdmsr{\msrbb{W}^{t}_{xx}}.
\end{equation}
\end{prop}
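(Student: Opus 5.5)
We construct \(\msrbb{W}^{t}_{xy}\) as the pinned periodic Brownian motion on the torus \(I_L^d\) with opposite faces identified. The trace formula \eqref{expedition0019762} then follows from \eqref{expedition0019761} together with the spectral representation \eqref{expedition0019760}.

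\emph{Step 1: heat kernel on \(\fldreal^d\) and the bridge.} The generator is \(\physham[h]_{\txtparticle,0}=-\onehalf\laplacian\), so the free Brownian motion on \(\fldreal^d\) has variance \(t\) per coordinate. Its transition density is the Gaussian \(p_t(u)=(2\pi t)^{-d/2}\napiernum^{-\abs{u}^2/(2t)}\). For \(x,y\in\fldreal^d\), let \(\msrbb{W}^{t,\fldreal^d}_{xy}\) be the standard unnormalized Brownian-bridge measure on continuous paths \(\omega:[0,t]\to\fldreal^d\) with \(\omega(0)=x\) and \(\omega(t)=y\). It is defined by the Kolmogorov-consistent finite-dimensional marginals
\[
\prod_{j=0}^{m} p_{t_{j+1}-t_j}(u_{j+1}-u_j)\,\opdmsr{u_1}\cdots\opdmsr{u_m},\quad u_0=x,\ u_{m+1}=y,
\]
and it is carried by continuous paths by Kolmogorov--Chentsov. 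Its total mass is \(p_t(y-x)\), because the Chapman--Kolmogorov identity collapses the marginals.

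\emph{Step 2: periodization.} Let \(\pi_L:\fldreal^d\to I_L^d\) be the quotient map onto the torus. Define
\[
\msrbb{W}^{t}_{xy}=\sum_{n\in\ringratint^d}(\pi_L)_\ast\msrbb{W}^{t,\fldreal^d}_{x,\,y+Ln}.
\]
Each summand is a measure on torus paths from \(x\) to \(y\), and the total mass is \(\sum_{n}p_t(y-x+Ln)\). Poisson summation, applied to the Gaussian with Fourier transform \(\napiernum^{-t\abs{k}^2/2}\), identifies this sum with
\[
V_L^{-1}\sum_{k\in\setlattice_L^d}\napiernum^{-t\abs{k}^2/2}\napiernum^{\imunit k\cdot(x-y)}.
\]
This is \eqref{expedition0019764}, and Proposition \ref{expedition0019765} identifies it with \(\Gamma^{(L)}_t(x,y)\). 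This proves \eqref{expedition0019761}.

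\emph{Step 3: the trace.} The operator \(\napiernum^{-t\physham[h]_{\txtparticle,0,L}}\) is positive and trace class, since \(\sum_k\napiernum^{-t\abs{k}^2/2}<\infty\). By \eqref{expedition0019760} its trace is \(\sum_k\napiernum^{-t\abs{k}^2/2}\). Since \(\abs{e_k(x)}^2=V_L^{-1}\), this equals \(\int_{I_L^d}\Gamma^{(L)}_t(x,x)\opdmsr{x}\). Integrating \(\sum_k\napiernum^{-t\abs{k}^2/2}\abs{e_k(x)}^2\) termwise over \(x\) is justified by absolute convergence and monotone convergence. Substituting \eqref{expedition0019761} with \(y=x\) gives \eqref{expedition0019762}.

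The only step with real content is the Poisson summation in Step 2. Convergence there is uniform because the Gaussian decays rapidly. The rest is bookkeeping with the phase convention, namely that \(e_k(x)\cmpconj{e_k(y)}\) produces \(\napiernum^{\imunit k\cdot(x-y)}\).
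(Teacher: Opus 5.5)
Your argument is correct, but it is more constructive than the paper's. The paper never builds the bridge measure. It cites the finite-volume Feynman--Kac formula to identify $\Gamma^{(L)}_t(x,y)$ with the total mass of an unnormalized bridge integral, and then does exactly your Step~3: the trace is $\sum_{k}\mathrm{e}^{-t|k|^2/2}=\int_{I_L^d}\Gamma^{(L)}_t(x,x)\,dx$, and one then sets $y=x$. You instead construct $\mathbb{W}^t_{xy}$ by the method of images: you push the free $\mathbb{R}^d$ bridge to the torus and sum over the lifts $y+Ln$. The only real input is then the Poisson summation identity $\sum_{n\in\mathbb{Z}^d}p_t(y-x+Ln)=\Gamma^{(L)}_t(x,y)$, which is exactly the content the paper leaves inside its citation; the $k\mapsto -k$ symmetry of the lattice sum makes the sign of the phase immaterial. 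Your route gives a self-contained proof and a concrete path space, namely the torus, which is the right reading of ``paths into $I_L^d$'' under periodic boundary conditions. The paper's route is shorter and comes with the full Feynman--Kac identity for general path functionals, not only the total mass. To show your measure is the Brownian bridge of $\mathrm{e}^{-t h_{\mathrm{p},0,L}}$ and not merely some measure with the right mass, add one line. Periodizing your finite-dimensional marginals the same way and regrouping the lattice sums over increments turns them into $\prod_{j}\Gamma^{(L)}_{t_{j+1}-t_j}(u_j,u_{j+1})$ on the torus, which is the Markov structure Feynman--Kac asserts.
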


For every integer \(\ell\geq1\), the specialization \(t=\ell\sminvtemperature\) gives the closed-loop term used in \eqref{expedition0019605}, and \eqref{expedition0019761} gives the open-path factor used in \eqref{expedition0019609}.

\begin{proof}
The finite-volume Feynman--Kac formula identifies the same kernel with the
unnormalized bridge integral over paths from $x$ to $y$ in time $t$.
Taking the test functional equal to $1$ gives \eqref{expedition0019761}.
Taking the operator trace means integrating the diagonal of the kernel:
$$\sqfun{\trace_{P_L\sphilb{H}_{\txtparticle}}}
{\napiernum^{-t\physham[h]_{\txtparticle,0,L}}}
=
\sum_{k\in\setlattice_L^d}
\napiernum^{-t\physham[h]_{\txtparticle,0,L}(k)}
=
\int_{I_L^d}
\Gamma^{(L)}_{t}(x,x)\opdmsr{x}.$$
Substituting \eqref{expedition0019761} with $x=y$ gives the last expression in
\eqref{expedition0019762}.
\end{proof}

\bibliography{myref.bib}

\end{document}